\documentclass[11pt]{article}

\usepackage[T1]{fontenc}
\usepackage{bm, commath}
\usepackage{natbib}
\usepackage{caption}
\usepackage{graphicx}
\usepackage{amsmath, amsfonts, amsthm, amsfonts, dsfont, mathrsfs}
\usepackage{float}
\usepackage{subcaption}
\usepackage{booktabs,siunitx}
\usepackage{url}
\usepackage[colorlinks,citecolor=blue,urlcolor=blue,linkcolor=blue]{hyperref}
\usepackage{multirow}
\usepackage{amssymb}
\usepackage{bm}
\usepackage{bbm}
\usepackage{authblk}

\usepackage{custom_tex} %LOAD CUSTOM SYMBOLS
\allowdisplaybreaks % making equations more compact

% organize the appendix
\usepackage[page]{appendix}

\usepackage{listings}
\lstset{language=R,
    basicstyle=\small\ttfamily,
    breaklines=true,
    showstringspaces=false
}

\usepackage{bbm}
\usepackage{textcomp}
\usepackage[margin=1in]{geometry}
\usepackage{authblk}
\usepackage[ruled, vlined, lined, boxed, commentsnumbered]{algorithm2e}
\SetKwInput{KwParam}{Parameter}
\SetAlgoCaptionLayout{centerline}

\usepackage{sectsty}
\setcounter{MaxMatrixCols}{30}
\usepackage[onehalfspacing]{setspace}
\linespread{1.5}

\usepackage[usenames,dvipsnames]{color}
\usepackage[utf8]{inputenc}
\usepackage{float}
\usepackage{tikz}
\usetikzlibrary{shapes,decorations,arrows,calc,arrows.meta,fit,positioning}
\tikzset{
    -Latex,auto,node distance =1 cm and 1 cm,semithick,
    state/.style ={ellipse, draw, minimum width = 0.7 cm},
    point/.style = {circle, draw, inner sep=0.04cm,fill,node contents={}},
    bidirected/.style={Latex-Latex,dashed},
    el/.style = {inner sep=2pt, align=left, sloped}
}

\newtheorem{theorem}{Theorem}[section]
\newtheorem{lemma}[]{Lemma}[section]
\newtheorem{proposition}[]{Proposition}[section]
\newtheorem{corollary}[]{Corollary}[section]
\newtheorem{definition}[]{Definition}[section]
\newtheorem{assumption}{Assumption}

\usepackage{mathtools}

\usepackage{xcolor}

\numberwithin{equation}{section}

\makeatletter
\renewcommand{\algocf@captiontext}[2]{#1\algocf@typo. \AlCapFnt{}#2} % text of caption
% default definition
\def\@algocf@capt@plain{top}
\renewcommand{\algocf@makecaption}[2]{%
  \addtolength{\hsize}{\algomargin}%
  \sbox\@tempboxa{\algocf@captiontext{#1}{#2}}%
  \ifdim\wd\@tempboxa >\hsize%     % if caption is longer than a line
  \hsK,ip .5\algomargin%
  \parbox[t]{\hsize}{\algocf@captiontext{#1}{#2}}% then caption is not centered
  \else%
  \global\@minipagefalse%
  \hbox to\hsize{\box\@tempboxa}% else caption is centered
  \fi%
  \addtolength{\hsize}{-\algomargin}%
}
\makeatother

%%%% CUSTOM SYMBOLS %%%%

\newcommand{\pibase}{\pi^{\textnormal{b}}}
\newcommand{\piimp}{\pi^{\uparrow}}
\newcommand{\hpiimp}{\hat{\pi}^{\uparrow}}
\newcommand{\hpiimpstar}{\hat{\pi}^{\uparrow,\star}}
\newcommand{\tpiimp}{\tilde{\pi}^{\uparrow}}
\newcommand{\nuca}{\texttt{NUCA}}

\newcommand{\sra}{\textnormal{\texttt{SRA}}}

\newcommand{\rel}{/}
\newcommand{\Qfunc}{Q}
\newcommand{\Vfunc}{V}

\newcommand{\scrP}{\mathscr{P}}

\newcommand{\obs}{\textnormal{\texttt{obs}}}

\newcommand{\vc}{\textnormal{\texttt{vc}}}
\newcommand{\opt}{\textnormal{\texttt{opt}}}
\newcommand{\prox}{\textnormal{\texttt{approx}}}
\newcommand{\gen}{\textnormal{\texttt{gen}}}

\newcommand{\risk}{\calR}

\newcommand{\contr}{\mathscr{C}}
\newcommand{\imp}{\textnormal{Imp}}
\newcommand{\pirisk}{\pi^{\textnormal{\texttt{R}}}}
\newcommand{\pival}{\pi^{\textnormal{\texttt{V}}}}
\newcommand{\Lval}{L^{\textnormal{\texttt{V}}}}
\newcommand{\Uval}{U^{\textnormal{\texttt{V}}}}
\newcommand{\IV}{\textnormal{\texttt{IV}}}
\newcommand{\sens}{\textnormal{\texttt{sens}}}
\newcommand{\std}{\textnormal{\texttt{std}}}
\newcommand{\prosp}{\textnormal{\texttt{prosp}}}
\newcommand{\train}{\textnormal{\texttt{train}}}
\newcommand{\hpiq}{\hat\pi^{\textnormal{\texttt{Q}}}}
\renewcommand{\hat}{\widehat}
\renewcommand{\tilde}{\widetilde}

\newcommand{\Cimp}{\contr^{\piimp\rel\pibase}}

\newcommand{\numberthis}{\addtocounter{equation}{1}\tag{\theequation}}

\title{\sffamily\bfseries Estimating and Improving Dynamic Treatment Regimes With a Time-Varying Instrumental Variable
}
\author{\textsf{Shuxiao Chen}\thanks{Email:
\href{mailto:shuxiaoc@wharton.upenn.edu}{shuxiaoc@wharton.upenn.edu}}~}
\author{\textsf{Bo Zhang}\thanks{Email:
\href{mailto:bozhan@wharton.upenn.edu}{bozhan@wharton.upenn.edu}}}

\affil{{Department of Statistics, The Wharton School, University of Pennsylvania}} %, Philadelphia, Pennsylvania, U.S.A.}

\date{\today}

\begin{document}
\maketitle

\sectionfont{\bfseries\large\sffamily}%

\subsectionfont{\bfseries\sffamily\normalsize}%

% \begin{center}
% \noindent
% {
% \sffamily\bfseries\LARGE Estimating and Improving Dynamic Treatment Regimes With a Time-Varying Instrumental Variable
% % Or\\
% % IV-Aided Estimating and Improving Dynamic Treatment Regimes With a Time-Varying Instrumental Variable
% }%
% \end{center}

% \begin{center}
% \noindent
% $\textsf{Shuxiao Chen}$ and $\textsf{Bo Zhang}$
% \end{center}

% \begin{center}
% Department of Statistics, The Wharton School, University of Pennsylvania, Philadelphia, Pennsylvania, U.S.A.
% \end{center}

% \abstract{\noindent\textsf{\input{abstract}}}

\vspace{0.5 cm}
\noindent
\textsf{{\bf Abstract}: %!TEX root = main.tex
Estimating dynamic treatment regimes (DTRs) from retrospective observational data is challenging as some degree of unmeasured confounding is often expected. 
% In this work, we introduce a framework of estimating properly defined optimal DTRs with a time-varying instrumental variable (IV) when the treatment and outcome are confounded by unmeasured covariates and the potential outcome distributions are only partially identified. 
In this work, we develop a framework of estimating properly defined ``optimal'' DTRs with a time-varying instrumental variable (IV) when unmeasured covariates confound the treatment and outcome, rendering the potential outcome distributions only partially identified. 
We derive a novel Bellman equation under partial identification, use it to define a generic class of estimands (termed \emph{IV-optimal DTRs}), and study the associated estimation problem. 
We then extend the IV-optimality framework to tackle the policy improvement problem, 
% problem of improving upon a baseline DTR using IV information, 
delivering \emph{IV-improved DTRs} that are guaranteed to perform no worse and potentially better than a pre-specified baseline DTR.
% tackles the policy improvement problem, where the delivers IV-improved DTRs that guarantee to improve upon baseline DTRs; 
Importantly, our IV-improvement framework opens up the possibility of 
% can deliver DTRs that 
\emph{strictly} improving upon DTRs that are optimal under the no unmeasured confounding assumption (NUCA). 
We demonstrate via extensive simulations the superior performance of IV-optimal and IV-improved DTRs over the DTRs that are optimal only under the NUCA. 
In a real data example, we embed retrospective observational registry data into a natural, two-stage experiment with noncompliance using a time-varying IV and estimate useful IV-optimal DTRs that assign mothers to high-level or low-level neonatal intensive care unit based on their prognostic variables.}%

\vspace{0.3 cm}
\noindent
\textsf{{\bf Keywords}: Causal inference, Dynamic treatment regime, Instrumental variable, Offline reinforcement learning, Retrospective observational data}

% \begin{enumerate}
  % \item \sxc{TODO: resolve terminological conflicts: regime, policy, ITR, DTR}
  % \item \sxc{TODO: remove the explicit dependence in $\Pi$ in major definitions if necessary}
  % \item \sxc{TODO: notation: use capital letters for R.V.s and lower-case letters for the realized values}
% \end{enumerate}

\addtocontents{toc}{\protect\setcounter{tocdepth}{0}} % don't add main text to TOC

%!TEX root = main.tex

\section{Introduction}
% \subsection{Policy, Instrumental Variable, IV-Optimal Policy, Policy Improvement with an IV}
% \sxc{I feel this subtitle is a little bit redundant}
Estimating single-stage individualized treatment rules (ITRs) and the more general multiple-stage dynamic treatment regimes (DTRs) 
% multiple-stage setting, 
has attracted a lot of interest from diverse disciplines. 
Estimating optimal policies (ITRs or DTRs) can be challenging when data come from retrospective observational databases where some degree of unmeasured confounding is often expected. 
In these scenarios, an instrumental variable (IV) is a useful tool to infer the treatment effect.
Motivated by recent works on estimating optimal ITRs using an instrumental variable \citep{cui2019semiparametric, qiu2020optimal, pu2020estimating} and literature on policy improvement (\citealp{kallus2018interval,kallus2020confounding,kallus2020minimax}), we study in this article how to leverage information contained in a time-varying instrumental variable to estimate properly-defined ``optimal'' DTRs and improve upon pre-specified baseline DTRs. 

An instrumental variable is valid if it is associated with the treatment, affects the outcome only through its association with the treatment, and is independent of the unobserved treatment-outcome confounding variables, possibly conditional on a rich set of observed covariates. 
One subtlety in IV-based analysis lies in that even a valid IV cannot always identify the mean potential outcome; rather, a valid IV along with appropriate, application-driven IV identification assumptions places certain restrictions on the potential outcome distributions. 
This line of research is known as \emph{partial identification of probability distributions} (\citealp{manski2003partial}). 
% The nature of partial identification brings complications to the policy estimation problem with an IV. 
This subtlety is inherited by the policy estimation problem with an IV. 
In particular, when the conditional average treatment effect (CATE) is not point identified from data, the optimal policy that maximizes the value function cannot be identified either, necessitating researchers to target alternative optimality criteria. 
While such criteria have been proposed in the single-stage setting from different perspectives (see, e.g., \citealt{cui2019semiparametric,cui2021machine, pu2020estimating}), the literature on the more complicated, multiple-stage setting is scarce.

Our first primary interest in this article is to extend optimality criteria in single-stage settings (\citealp{murphy2003optimal,cui2019semiparametric,cui2021machine,pu2020estimating}) and develop an optimality criterion that is tailored to general \emph{sequential} decision problems and incorporates the rich information contained in a time-varying IV.
Our optimality criterion, termed \emph{IV-optimality}, is based on a carefully weighted version of the partially identified $Q$-function and value function subject to the distributional constraints imposed by the IV.
This criterion is distinct from the framework of \citet{han2019optimal}, who endows the collection of partially identified DTRs with a partial order and characterizes the set of maximal elements (see \citealt{zhang2020selecting} for similar ideas).
It also distinct from the recent work in the reinforcement learning literature \citep{liao2021instrumental} that directly models the transition dynamics and uses instrumental variables to identify the relevant causal parameters.
In particular, we do not impose Markovian assumptions and do not pose parametric models a priori.
We then take a hybrid approach of $Q$-learning \citep{watkins1992q,schulte2014q} and weighted classification \citep{zhang2012estimating,zhao2012estimating} to target this optimality criterion and establish non-asymptotic rate of convergence of the proposed estimators.

% IV-optimality framework for dynamic treatment regimes based on Q-learning (\citealp{watkins1992q,schulte2014q}) and study its estimation problem. Our framework is distinct from the framework of \citet{han2019optimal}, who endows the collection of partially identified DTRs with a partial order and characterizes the set of maximal elements in the partial order; similar partial ordering idea was also leveraged in \citet{zhang2020selecting}. Our framework is also distinct from the recent work in the reinforcement learning (RL) literature (\citealp{liao2021instrumental}) that directly models the transition dynamics and uses instrumental variables to identify the relevant causal parameters.

The IV-optimality framework also motivates a conceptually simple yet highly informative variant framework that allows researchers to leverage a time-varying IV to improve upon a baseline DTR. 
The policy improvement problem was first considered in a series of papers by \citet{kallus2018interval,kallus2020confounding,kallus2020minimax} under a ``Rosenbaum-bounds-type'' sensitivity analysis model \citep{rosenbaum2002observational}. 
Despite its novelty and usefulness in a range of application scenarios,% for certain baseline policies, 
a sensitivity-analysis-based policy improvement framework does have a few limitations. 
First, each improved policy is indexed by a sensitivity parameter $\Gamma$ that controls the degree of unmeasured confounding. 
Since the sensitivity parameter is not identified from the observed data, it is often unclear which improved policy best serves the purpose. 
Second, as pointed out by \citet{heng2020interactions}, ``Rosenbaum-bounds-type'' sensitivity analysis model does not fully take into account unmeasured confounding heterogeneity; see also \citet{bonvini2019sensitivity}. 
% This problem is partially alleviated though not completely solved by introducing additional sensitivity parameters as in \citet{kallus2020minimax}. 
More importantly, one major limitation of a sensitivity-analysis-based policy improvement framework is that it \emph{cannot} improve upon the \emph{NUCA-optimal policy}, i.e. the policy that is optimal under the no unmeasured confounding assumption \citep{rosenbaum1983central, Robins1992}. 
Intuitively, this is because a sensitivity analysis model and a fixed sensitivity parameter only introduce a partial order (rather than a total order) among all candidate policies and the NUCA-optimal policy always remains a maximal element in this partial order \citep{zhang2020selecting}. 
Therefore, their framework provably cannot improve upon the NUCA-optimal policy, a policy of major interest in many application scenarios.
See Section \ref{sec: ITR IV improve} for a detailed discussion.
% although the sensitivity-analysis-based policy improvement framework guarantees not to do worse than any baseline policy it improves upon, it also provably cannot improve upon an important class of baseline policies, namely the NUCA-optimal policies. 
As we will demonstrate in this article, an IV-based policy improvement framework solves all aforementioned problems simultaneously.

The rest of the article is organized as follows. 
We describe a real data application in Section \ref{subsec: intro application}. 
Section \ref{sec: review} reviews alternative optimality criteria in the single-stage setting and provides a general IV-optimality framework that is amenable to being extended to the multiple-stage setting.
Section \ref{sec: ITR IV improve} considers improving upon a baseline ITR with an IV. 
Building upon the preparations in Sections \ref{sec: review} and \ref{sec: ITR IV improve}, we describes the IV-optimality framework for policy estimation
% a Q-learning-based framework under partial identification, uses it as a basis to define a notion of IV-optimality 
in the multiple-stage setting in Section \ref{sec: extension to DTR} and extends this framework to tackle the policy improvement problem in Section \ref{sec:dtr_improve}.
% and studies how to target this criterion an IV-optimal DTR with a time-varying IV. 
% Section \ref{sec:dtr_improve} extends the idea to improving upon a baseline DTR with a time-varying IV. 
Section \ref{sec: theory} studies the theoretical properties of the proposed methods. 
We conducted extensive simulations in Section \ref{sec: simulation} and revisited the application in Section \ref{sec: application}. Section \ref{sec: discussion} concludes with a brief discussion. For brevity, all proofs are deferred to the Supplementary Material.

\subsection{Application: A Natural, Two-Stage Experiment Derived from Retrospective Registry Data}
\label{subsec: intro application}

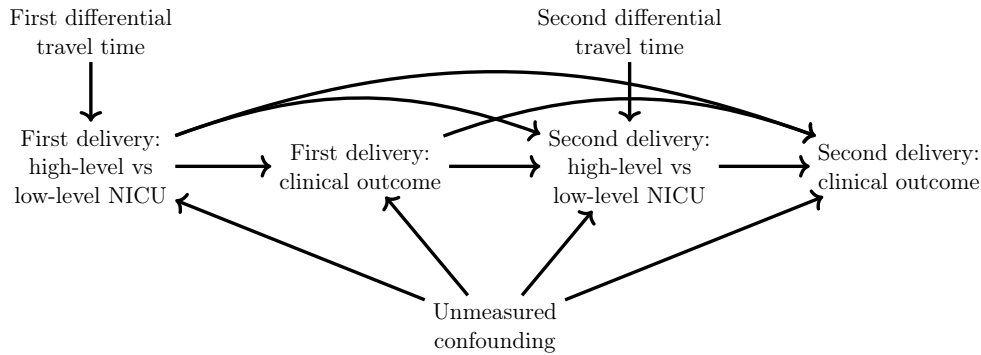
\begin{figure}[t!]
    \centering
    \resizebox{0.8\textwidth}{!}{
    \begin{tikzpicture}[scale=0.9, every text node part/.style={align=center}
]
%Nodes
\node at (0,0) (A1) {First delivery: \\high-level vs \\ low-level NICU};
\node at (5,0) (R1) {First delivery: \\ clinical outcome};
\node at (10,0) (A2) {Second delivery: \\high-level vs \\ low-level NICU};
\node at (15,0) (R2) {Second delivery: \\ clinical outcome};
\node at (7.5,-3) (U) {Unmeasured \\confounding};

\node at (0,2.5) (Z1) {First differential\\ travel time};
\node at (10,2.5) (Z2) {Second differential \\ travel time};

%Lines
\draw[ultra thick, ->] (A1) -- (R1);
\draw[ultra thick, ->] (R1) -- (A2);
\draw[ultra thick, ->] (A2) -- (R2);
\draw[ultra thick, ->] (A1) to [out=20, in=160] (A2);
\draw[ultra thick, ->] (A1) to [out=20, in=160](R2);
\draw[ultra thick, ->] (R1) to [out=20, in=160](R2);

\draw[ultra thick, ->] (U) -- (A1);
\draw[ultra thick, ->] (U) -- (R1);
\draw[ultra thick, ->] (U) -- (A2);
\draw[ultra thick, ->] (U) -- (R2);

\draw[ultra thick, ->] (Z1) -- (A1);
\draw[ultra thick, ->] (Z2) -- (A2);

\end{tikzpicture}}
    \caption{\small A DAG illustrating the NICU application. Observed baseline covariates like mothers' demographics, neighborhood characteristics, etc, are omitted for clearer presentation.}
    \label{fig: IV DAG application}
\end{figure}

\citet{lorch2012differential} constructed a retrospective cohort study to investigate the effect of delivery hospital on premature babies (gestational age between $23$ and $37$ weeks) and found a significant benefit to neonatal outcomes when premature babies were delivered at hospitals with high-level neonatal intensive care units (NICU) compared to those without NICUs. \citet{lorch2012differential} used the differential travel time to the nearest high-level versus low-level NICU as an IV so that the outcome analysis is less confounded by mothers' self-selection into high-level NICUs. Put another way, the differential travel time creates a natural experiment with noncompliance: mothers who live relatively close to a high-level NICU were encouraged to deliver, although not necessarily delivered, at a high-level NICU. More recently, \citet{michael2020instrumental} considered mothers who delivered exactly two babies from 1996 to 2005 in Pennsylvania, and investigated the cumulative effect of delivering at high-level NICUs on neonatal survival probability using the same differential travel time IV.

Currently, there is still limited capacity at high-level NICUs, so it is not practical to direct all mothers to these high-technology, high-volume hospitals. Understanding which mothers would most significantly benefit from delivering at a high-level NICU helps design optimal perinatal regionalization systems that designate hospitals by the scope of perinatal service provided and designate where infants are born or transferred according to the level of care they need at birth (\citealp{lasswell2010perinatal}; \citealp{kroelinger2018comparison}). Indeed, previous studies seemed to suggest that although high-level NICUs significantly reduced deaths for babies of small gestational age, they made little difference for almost mature babies like 37 weeks (\citealp{yang2014estimation}).

Mothers who happened to \emph{relocate} during two consecutive deliveries of babies constitute a \emph{natural} two-stage, two-arm randomized controlled trial with noncompliance and present a unique chance to investigate an optimal dynamic treatment regime. 
See Figure \ref{fig: IV DAG application} for the directed acyclic graph (DAG) illustrating this application.
We will revisit the application after developing theory and methodology precisely suited for this purpose.

%!TEX root = main.tex
\section{
% Review: Individualized Treatment Rule, Instrumental Variable, IV-Optimality
Estimating Individualized Treatment Rules with an Instrumental Variable
}
% \sxc{for better alignment with the titles in later sections: maybe use ``Single-Stage Policy Estimation with an Instrument Variable'', or ``Estimation of Individualized Treatment Rule with an Instrument Variable''?}
\label{sec: review}
%\subsection{Estimation in the Absence of Unmeasured Confounding: Optimal and NUCA-Optimal Policies (\textcolor{red}{May be shortened.})}
\subsection{Optimal and NUCA-Optimal ITRs} \label{subsec:opt_and_nuca_opt_ITR}
We first define the estimand of interest, an \emph{optimal ITR}, under the potential outcome framework (\citealp{neyman1923application, rubin1974estimating}), and briefly review how to estimate the optimal ITR under the no unmeasured confounding assumption. 

Consider a single-stage decision problem where one observes the covariates $\bsX = \bsx \in \calX \subseteq\bbR^{d}$, takes a binary action $a\in\{\pm 1\}$, and receives the response $Y (a)\in \calY \subseteq\bbR$. 
Here, $Y(a)$ denotes the potential outcome under action $a$.
This decision-making process is formalized by the notion of individualized treatment rule (or a single-stage policy), which is a map $\pi$ from available prognostic variables $\bsx$ to a treatment decision $a = \pi(\bsx)$.
% Let $Y(a)$ be the potential outcome of the response if action $a$ has been taken, and 
Denote by $p^\star_a(\cdot|\bsx)$ the conditional law of $Y(a)$ given the realization of $\bsX = \bsx$. 
% That is, for any integrable function $f$, we have
% \begin{equation}
%   \label{eq:po_distribution_single_stage}
%   % Y \sim p^\star_a(\cdot|\bsx) 
%   % \iff
%   % \forall f~\textnormal{integrable},~
%   \bbE[f(Y(a))] = \int f(y) p^\star_a(dy|\bsx).
% \end{equation}
The quality of $\pi$ is quantified by its \emph{value}:
\begin{equation}
\label{eq: value in single stage}
  V^\pi =  \bbE[Y(\pi(\bsX))] = 
  \bbE_{\bsX}\bbE_{Y \sim p^\star_{\pi(\bsX)}(\cdot|\bsX)}[Y],
  % \bbE_{\bsX} \left[\bbE_{p^\star_{a}(\cdot|\bsX)} \left[Y(\pi(\bsX))\right]\right],
\end{equation}  
% where data are i.i.d.~samples from the law of the random vector $(\bsX, A, Y)$ that takes value in $\calX\times\calA\times\calY$. We assume $\calX$ is a Euclidean space, $\calA = \{\pm 1\}$, and $\calY$ is either $\bbR$ or some discrete space. 
% We adopt the Neyman-Rubin potential outcome framework (\citealp{neyman1923application, rubin1974estimating}) and let $Y(\cdot)$ denote the potential outcomes. 
% It is helpful to denote $p^\star_a(\cdot|\bsx)$ as the conditional density of the potential outcome $Y(a)$ given observed covariates $\bsX$ with respect to some dominating measure $\mu$ on $\calY$. 
% $d \textnormal{Law}\big(Y(a)~|~\bsX= \bsx\big) \sim p^\star_a(\cdot|\bsx),~a = \pm 1$. 
% An individualized treatment rule $\pi$, or a single-stage policy, is a map from available prognostic variables $\bsX \in \calX$ to a treatment decision $A \in \calA$. All policies considered in this article are deterministic. Let $\calF$ denote a function class of interest. The \emph{value} associated with policy $\pi \in \calF$ is:
% \begin{equation}
% \label{eq: value in single stage}
%   V^\pi = \bbE_{\bsX} \left[\bbE_{p^\star_{a}(\cdot|\bsX)} \left[Y(\pi(\bsX))\right]\right],
% \end{equation}  
where the outer expectation is taken with respect to the law of the covariates $\bsX$ and the inner expectation with respect to the potential outcome distribution $p^\star_{a}(\cdot|\bsX)$ with $a$ set to $\pi(\bsX)$. Intuitively, $V^\pi$ measures the expected value of the response if the population were to follow $\pi$. 
Given a class of candidate ITRs $\Pi$, which we refer to as a policy class, let $V^\star = \max_{\pi \in \Pi} V^\pi$ denote the maximal value of an ITR when restricted to $\Pi$.
An ITR is said to be optimal with respect to $\Pi$ if it achieves $V^\star$. 

A well-known result of \citet{zhang2012estimating} (see also \citealt{zhao2012estimating}) asserts the {duality} between value maximization and risk minimization, in the sense that any optimal ITR that maximizes the value $V^\pi$ also minimizes the following \emph{risk} (and vice versa):
\begin{equation}
\label{eq: risk in single stage}
\risk^\pi= \bbE_{\bsX}\left[ |\contr^\star(\bsX)|\cdot \indc{ \pi(\bsX) \neq \sgn(\contr^\star(\bsX)) } \right],
\end{equation}
where $\contr^\star(\bsx) = \bbE_{Y\sim p^\star_{+1}(\cdot| \bsx)}\left[Y\right] - \bbE_{Y\sim p^\star_{-1}(\cdot|\bsx)}\left[Y\right]$ is the CATE.

%\begin{equation}
%\label{eq:itr_pop_estimand_contrast}
%\contr^\star(\bsx) = \bbE_{Y\sim p^\star_{+1}(\cdot| \bsx)}\left[Y\right] - \bbE_{Y\sim p^\star_{-1}(\cdot|\bsx)}\left[Y\right] %= \int \bigg(y\cdot p^\star_{+1}(y|\bsX) - y\cdot p^\star_{-1}(y|\bsX) \bigg)d\mu(y)
%\end{equation}
% \sxc{I'm using $\contr$ here so that we can use $C$ for absolute constants}
% Here, the risk $\risk^\pi$ admits a natural interpretation as a weighted misclassification error, as explained as follows. 
It is not hard to show that the sign of the CATE, $\sgn(\contr^\star(\bsx))$, is the Bayes ITR (i.e., the optimal ITR when $\Pi$ consists of all Boolean functions). 
Thus, the risk $\risk^\pi$ admits a natural interpretation as a weighted misclassification error: if the decision made by $\pi$ disagrees with the Bayes ITR, then $|\contr^\star(\bsx)|$-many units of loss are incurred.
% where the sign and the magnitude of the CATE constitute the label and weight, respectively. 

Suppose we have a dataset consisting of i.i.d.~samples from the law of the triplet $(\bsX^\obs, A^{\obs}, Y^\obs)$.
In parallel to the potential outcome distribution $p^\star_{a}(\cdot|\bsx)$, let $p^\obs_a(\cdot|\bsx)$ denote the conditional law of $\{Y^\obs|A^\obs=a, \bsX^\obs = \bsx\}$. One can then define counterparts of the value and the risk as in \eqref{eq: value in single stage} and \eqref{eq: risk in single stage} respectively, but with $p^\star_a(\cdot|\bsx)$ replaced by $p^\obs_a(\cdot|\bsx)$. For instance, we can define
\begin{equation}
\label{eq:nuca_opt_single_stage}
  \risk^{\obs, \pi}= \bbE_{\bsX}\left[ |\contr^\obs(\bsX)|\cdot \indc{ \pi(\bsX) \neq \sgn(\contr^\obs(\bsX)) } \right], 
\end{equation}
where 
$\contr^\obs(\bsx) = \bbE_{Y\sim p^\obs_{+1}(\cdot| \bsx)}\left[Y\right] - \bbE_{Y\sim p^\obs_{-1}(\cdot|\bsx)}\left[Y\right]$.
% \sxc{I changed to superscript \obs~instead of \nuca~to make notations shorter, if we feel some essential info has lost we can revert back}
Unlike $p^\star_{a}(\cdot|\bsx)$, which is defined on the potential outcomes, 
% and thus may not be identified from the observed data, 
the distribution $p^{\obs}_{a}(\cdot|\bsx)$ and hence $\risk^{\obs, \pi}$ are \emph{always} identified from the observed data, thus rendering the task of minimizing $\calR^{\obs, \pi}$ feasible using the observed data. 
Under a version of the no unmeasured confounding assumption, the two distributions $p^\star_a(\cdot|\bsX)$ and $p^\obs_a(\cdot|\bsX)$ agree, and thus a minimizer of $\risk^{\obs, \pi}$ is indeed an optimal ITR in that it also minimizes $\risk^\pi$.
To make the distinction clear, we refer to any minimizer of $\calR^{\obs, \pi}$ as a {NUCA-optimal} ITR (i.e., it is only optimal in the conventional sense under the NUCA) and denote it as $\pi_\nuca$.
Many estimation strategies targeting the NUCA-optimal ITR have been proposed in the literature; see, e.g., structural equation models and its variants (\citealp{murphy2001marginal,murphy2003optimal}), outcome weighted learning and its variants (\citealp{zhao2012estimating, zhao2019efficient, athey2020policy}), tree-based methods (\citealp{laber2015tree, zhang2018interpretable}), among others.

% A policy $\pi_\nuca \in \mathcal{F}$ is said to be \emph{NUCA-optimal} if it minimizes $\mathcal{R}^\pi_\nuca$. Unlike $p^\ast_{a}(\cdot\mid\bsX)$, $p^{\nuca}_{a}(\cdot\mid\bsX)$ and hence $\mathcal{R}^\pi_\nuca$ are \emph{always} identified from observed data; moreover, under a version of the no unmeasured confounding assumption (NUCA), we have $ p^\star_a(\cdot|\bsX) = p^\nuca_a(\cdot|\bsX)$ and hence the NUCA-optimal $\pi_\nuca$ also minimizes $\mathcal{R}^\pi$ and is optimal. In practice, data analysts estimate $\pi_\nuca$ using some popular optimal ITR estimation routines, e.g., OWL and its variants (\citealp{zhao2012estimating, zhao2019efficient, athey2020policy}), tree-based methods (\citealp{laber2015tree, zhang2018interpretable}), etc.

%\subsection{Estimation in the Presence of Unmeasured Confounding: Instrumental Variables and IV-Optimal Policies}
\subsection{Instrumental Variables and IV-Optimal ITRs}
\label{subsec: IV optimality review}
The no unmeasured confounding assumption is often a heroic assumption when data come from retrospective observational studies and should be made with caution. 
Indeed, \citet{pu2020estimating} demonstrates via extensive simulations that a NUCA-optimal ITR could have poor generalization performance when the NUCA fails. 

In classical causal inference literature, an instrumental variable is a widely-used tool to infer the causal effect from noisy observational data (\citealp{AIR1996, rosenbaum2002covariance, Imbens2004, hernan2006instruments}). 
A random variable $Z$ is said to be a valid IV if it satisfies the {core IV assumptions}: Stable Unit Treatment Value Assumption (SUTVA), correlation between IV and treatment, exclusion restriction (ER), and IV unconfoundedness conditional on the observed covariates (\citealp{AIR1996, baiocchi2014instrumental}). 
In general, even a valid IV cannot point identify the mean conditional potential outcomes $\mathbb{E}[Y(\pm 1)|\bsX=\bsx]$ or the CATE $\contr^\star(\bsx)$ (\citealp{robins1996identification, balke1997bounds, manski2003partial,swanson2018partial}); 
hence, neither the value nor the risk of an ITR can be point identified with an IV without additional assumptions. \cite{wang2018bounded}, \cite{cui2019semiparametric} and \cite{qiu2020optimal} establish a set of sufficient conditions, under which the CATE can be point identified and the optimal ITR can be estimated with an IV.

Despite the incapability of point identifying the causal effects, a valid IV can still be useful
 % when researchers find additional 
% IV identification 
% assumptions (beyond the core IV assumptions) inappropriate for their particular applications,
% because 
in that even under minimal identification assumptions, it can produce meaningful \emph{partial identification intervals/bounds} for the CATE. 
That is, one can construct two functions $L(\bsx)$ and $U(\bsx)$, both of which are functionals of the observed data distribution (and thus estimable from the data), such that the CATE satisfies $\contr^\star(\bsx)\in [L(\bsx), U(\bsx)]$ almost surely.
% \footnote{Similar intervals can also be constructed for the mean conditional potential outcomes $\bbE[Y(\pm 1) | \bsX = \bsx]$.}.
Important examples include the Balke-Pearl bounds \citep{balke1997bounds} and the Manski-Pepper bounds \citep{manski1998monotone}. 

% Let $[L(\mathbf{X}), U(\mathbf{X})]$ denote a partial identification interval of the true yet unidentifiable conditional treatment effect $C^\ast(\bsx)$ obtained via classical IV bounds, e.g., the Balke-Pearl bounds, Manski-Pepper bounds, etc, under application-specific IV identification assumptions. 
Given the partial identification interval $[L(\bsx), U(\bsx)]$, the risk of an ITR $\pi$ defined in \eqref{eq: risk in single stage} is bounded between 
% (\citealp{pu2020estimating}) 
\begin{equation*}
  \underline{\risk}^\pi = \bbE_{\bsX}\bigg[ \inf_{\contr(\bsX)\in [L(\bsX), U(\bsX)]}|\contr(\bsX)|\cdot \indc{ \pi(\bsX) \neq \sgn(\contr(\bsX)) } \bigg]
\end{equation*}  
% \begin{equation*}
%    \underline{\mathcal{R}}(\pi; L, U) = \mathbb{E}\left[\inf_{\contr^\star(\mathbf{X})\in [L(\mathbf{X}), U(\mathbf{X})]} |C^\ast(\mathbf{X})|\cdot \indc{\pi(\mathbf{X}) \neq \text{sgn}\{C^\ast(\mathbf{X})\}} \right]
% \end{equation*}
and 
\begin{equation}
  \label{eq:worst_case_risk_single_stage}
  \overline{\risk}^\pi = \bbE_{\bsX}\bigg[ \sup_{\contr(\bsX)\in [L(\bsX), U(\bsX)]}|\contr(\bsX)|\cdot \indc{ \pi(\bsX) \neq \sgn(\contr(\bsX)) } \bigg].
\end{equation}
% \begin{equation*}
%    \overline{\mathcal{R}}(\pi; L, U) = \mathbb{E}\left[\sup_{C^\ast(\mathbf{X})\in [L(\mathbf{X}), U(\mathbf{X})]} |C^\ast(\mathbf{X})|\cdot \mathbbm{1}\left\{\pi(\mathbf{X}) \neq \text{sgn}\{C^\ast(\mathbf{X})\}\right\} \right].
% \end{equation*}
% Under additional identification assumptions that render the CATE identified with an IV, we have $\underline{\mathcal{R}}(\pi; L, U) = \overline{\mathcal{R}}(\pi; L, U)$ and an optimal policy is then identified (\citealp{qiu2020optimal, cui2019semiparametric}). 
\citet{pu2020estimating} argued that a sensible criterion is to minimize the expected worst-case risk $\overline{\mathcal{R}}^{\pi}$, and the resulting minimizer is termed an \emph{IV-optimal ITR}, as this ITR is ``worst-case risk-optimal'' with respect to the partial identification interval induced by an IV and its associated identification assumptions. 
Note that an IV-optimal ITR is not an optimal ITR without further assumptions.

\cite{cui2021machine} proposed an alternative set of optimality criteria from the perspective of the partially identified value function. 
Instead of constructing two functions $L(\bsx), U(\bsx)$ that contains the CATE, one may alternatively construct functions $\underline{Q}(\bsx, a)$, $\overline{Q}(\bsx, a)$ that sandwiches the conditional mean potential outcome $\bbE[Y(a)|\bsX=\bsx]$, and the value defined in \eqref{eq: value in single stage} satisfies $\bbE[\underline{Q}(\bsX, \pi(\bsX))]\leq V^\pi \leq \bbE[\overline{Q}(\bsX, \pi(\bsX))]$. 
% Write $g^\ast_{a}(\bsX) = \bbE^\ast[Y(a) \mid \bsX]$ and let $[L_{a}(\bsX), U_{a}(\bsX)]$ denote partial identification intervals for $g^\ast_{a}(\bsX)$. The value function of $\pi$ is bounded between $\mathbb{E}\left[\inf_{g^\ast_a(\bsX) \in [L_a(\bsX), U_a(\bf X)} \bbE\left\{Y(\pi(\bf X))\right\}\right]$ and $\mathbb{E}\left[\sup_{g^\ast_a(\bsX) \in [L_a(\bsX), U_a(\bf X)} \bbE\left\{Y(\pi(\bf X))\right\}\right]$. 
\cite{cui2021machine} advocated maximizing some carefully-chosen middle ground between the lower and upper bounds of the value: %$V^{\textsf{mid}}(\pi; L_{\pm 1}, U_{\pm 1}, \alpha)$,i.e.,
\begin{equation}
\label{eqn: middle ground of value interval}
  V^{\pi}_{\lambda} = \bbE_{\bsX}\bigg[ \lambda(\bsX, \pi(\bsX))\cdot \underline{Q}(\bsX, \pi(\bsX)) + [1-\lambda(\bsX, \pi(\bsX))] \cdot  \overline{Q}(\bsX, \pi(\bsX))\bigg],
\end{equation} 
% \begin{equation}
% \begin{split}
%    &V^{\textsf{mid}}(\pi; L_{\pm 1}, U_{\pm 1}, \alpha)\\
%    = &\mathbb{E}\left[\alpha(\bsX) \cdot \inf_{g^\ast_{a}(\bsX) \in [L_{a}(\bsX), U_{a}(\bsX)]} \bbE\left\{Y(\pi(\bf X))\right\} + \left\{1 - \alpha(\bsX)\right\} \cdot \sup_{g^\ast_{a}(\bsX) \in [L_{a}(\bsX), U_{a}(\bsX)]} \bbE\left\{Y(\pi(\bf X))\right\}\right],
% \end{split}
% \end{equation}
where $\lambda(\bsx, a)$ is a pre-specified function that captures a second-level individualism, i.e., how optimistic/pessimistic individuals with covariates $\bsX = \bsx$ are. 
% (\citealp{cui2021machine}). 

As pointed out by \cite{cui2021machine}, minimizing the maximum risk $\overline{R}^\pi$ is \emph{not} equivalent to maximizing the minimum value $V^\pi_{\lambda=0}$, even when the bounds $L(\bsx), U(\bsx)$ of the CATE are obtained via bounds on the conditional values, i.e., $L(\bsx) = \overline{Q}(\bsx, +1) - \underline{Q}(\bsx, -1)$ and $U(\bsx) = \underline{Q}(\bsx, +1) - \overline{Q}(\bsx, -1)$.
Rather, minimizing $\overline{R}^\pi$ is equivalent to maximizing the midpoint of the minimum and maximum values, namely $V^\pi_{\lambda= 1/2}$.

\subsection{A General IV-Optimality Framework}
\label{subsec:ITR general framework}
%This framwork is mathematically equivalent to the formulation in \eqref{eqn: middle ground of value interval} under certain conditions, but its conceptual simplicity makes it especially amenable when we shift our focus to (much more complicated) sequential deicision-making problems.

In this section, we present a general framework that incorporates the extra information in IVs for better policy estimation. Conceptually, a valid IV and the associated identification assumptions impose \emph{distributional constraints} on potential outcome distributions $p^\star_a(\cdot|\bsx)$. For example, under assumptions leveraged in \cite{cui2019semiparametric} and \cite{qiu2020optimal}, $p^\star_a(\cdot|\bsx)$ can be expressed as functionals of the observed data distribution. 
As another example, if less stronger assumptions are imposed so that point identification is impossible, the partial identification results assert that $p^\star_a(\cdot|\bsx)$ is ``weakly bounded'', in the sense that for a sufficiently regular function $f$, we can find two functions $\underline{Q}(\bsx, a; f)$ and $\overline{Q}(\bsx, a; f)$ such that 
\begin{equation}
    \label{eq:weakly_boundedness}
    \underline{Q}(\bsx, a; f) \leq \int f(y) p^\star_a(dy|\bsx) \leq \overline{Q}(\bsx, a; f).
\end{equation}    
If $f$ is the identity function, then the above display is precisely the partial identification intervals of the mean conditional potential outcome $\bbE[Y(a)|\bsX=\bsx]$ that appeared in \eqref{eqn: middle ground of value interval}.
In both examples, a valid IV along with the identification assumptions allows us to specify a collection of distributions $\calP_{\bsx, a}$, so that 
$p^\star_a(\cdot|\bsx) \in \calP_{\bsx, a}$. In the first example, the set $\calP_{\bsx, a}$ is a singleton consisting of the ground truth potential outcome distribution $p^\star_a(\cdot|\bsx)$, whereas in the second example, the set $\calP_{\bsx, a}$ consists of all distributions that are weakly bounded in the sense of \eqref{eq:weakly_boundedness}. 
In words, $\calP_{\bsx, a}$ is the collection of all possible potential outcome distributions $p^\star_{a}(\cdot|\bsx)$ that are compatible with the putative IV and the associated IV identification assumptions, and we refer to it as an \emph{IV-constrained set}.
% instrumental variable puts constraints on the potential outcome distributions. Let $\calP_{\bsX, a}$, $a = \pm 1$, denote the collection of all possible potential outcome distributions $p^\ast_{a}(\cdot|\bsX)$ compatible with the putative IV $Z$ and the set of IV identification assumptions $\calC$. We will refer to $\calP_{\bsX, a}$ as the \emph{constraint sets}. Let $\calG_{\bsX, a} := \left\{\bbE_{p_{a}(\cdot|\bsX)}\left[Y(a)\right],~p_{a}(\cdot|\bsX) \in \calP_{\bsX, a}\right\}$ denote the collection of conditional average potential outcomes compatible with the constraint set $\calP_{\bsX, a}$, and we will refer to $\calG_{\bsX, a}$, $a = \pm1$, as the \emph{value sets}.

We may further equip an IV-constrained set $\calP_{\bsx, a}$ with a prior distribution $\scrP_{\bsx, a}$. Here, $\scrP_{\bsx,a}$ is a probability distribution on $\calP_{\bsx, a}$, the latter of which itself is a collection of probability distributions.  
For readers with a machine learning background, this is reminiscent of Baxter's model of inductive bias learning \citep{baxter2000model}: in his language, $\calP_{\bsx, a}$ is called an \emph{environment}, whose elements are called \emph{tasks}, and it is assumed that nature can sample a task from $\scrP_{\bsx, a}$, which is a probability distribution on the environment.
% denote probability measures on the constraint sets $\calP_{\bsX, a}$. 
To have a fully rigorous treatment, we equip $\calP_{\bsx, a}$ with a metric (e.g., Wasserstein metric) and work with the induced Borel sigma algebra (\citealp{parthasarathy2005probability}). 
% In the rest of the paper, we will not worry about measure-theoretic complications, as we will only invoke elementary properties of the expectation operator (e.g., Jensen's inequality). 

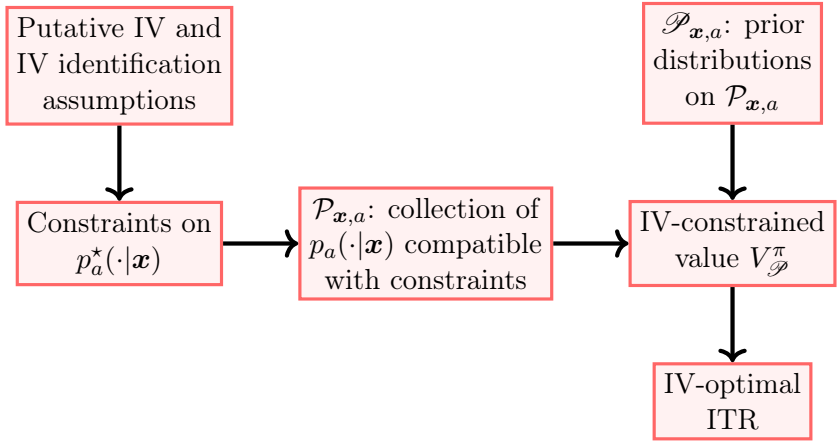
\begin{figure}[t!]
    \centering
    \begin{tikzpicture}[
squarednode/.style={rectangle, draw=red!60, fill=red!5, very thick}, 
every text node part/.style={align=center}
]
%Nodes
\node[squarednode] at (1,4) (IV) {Putative IV and \\ IV identification \\assumptions};
\node[squarednode] (constraint) [below=of IV] {Constraints on \\ $p^\star_{a}(\cdot|\bsx)$};
\node[squarednode] (P) [right=of constraint] {$\calP_{\bsx, a}$: collection of \\$p_{a}(\cdot |\bsx)$ compatible \\ with constraints};
\node[squarednode] (Estimand) [right=of P] {IV-constrained \\ value $V^\pi_{\scrP}$};
\node[squarednode] (Prior) [above=of Estimand]{$\scrP_{\bsx,a}$: prior \\ distributions \\ on $\calP_{\bsx, a}$};
\node[squarednode] (Estimator) [below=of Estimand] {IV-optimal \\ ITR};
%Lines
\draw[ultra thick, ->] (IV) -- (constraint);
\draw[ultra thick, ->] (constraint) -- (P);
\draw[ultra thick, ->] (P) -- (Estimand);
\draw[ultra thick, ->] (Estimand) -- (Estimator);
\draw[ultra thick, ->] (Prior) -- (Estimand);
\end{tikzpicture}
    \caption{\small A schematic plot of IV-optimality. A collection of prior distributions $\{\scrP_{\bsx, a}: \bsx\in \bbR^d, a = \pm 1\}$ is imposed on the IV-constrained sets $\{\calP_{\bsx, a}: \bsx\in \bbR^d, a = \pm 1\}$, which gives rise to the IV-constrained value $V^\pi_{\scrP}$. An ITR is IV-optimal with respect to this collection of priors if it maximizes $V^\pi_{\scrP}$ over a policy class $\Pi$.}
    \label{fig: scheme flowchat}
\end{figure}

Given the prior distributions $\{\scrP_{\bsx, a}: \bsx\in\calX, a =\pm 1\}$, we can define the \emph{IV-constrained value} of a policy $\pi$ as
% Associated with the constraint sets $\calP_{\bsX, a}$, the prior distributions $\scrP_{\textsf{prior},a}(\cdot|\bsX)$, and a policy $\pi(\cdot)$ is the following value function
\begin{equation}
  \label{eq:itr_general_estimand}
 V^\pi_{\scrP} = %\bbE[Y(\pi(\bsX))] = 
  \bbE_{\bsX}\bbE_{p_{\pi(\bsX)} \sim \scrP_{\bsX, \pi(\bsX)}}\bbE_{Y \sim p_{\pi(\bsX)}}[Y],
 % \bbE_{\bsX} \bigg[\bbE_{\scrP_{\textsf{prior}, a}(\cdot\mid\bsX)} \bbE_{p_{a}(\cdot|\bsX)}\left[Y(\pi(\bsX))\right]\bigg],
\end{equation}
where we write $p_{\pi(\bsx)} = p_{\pi(\bsx)}(\cdot|\bsx)$ for notational simplicity.
% The following definition extends the original notion of IV-optimality in \cite{pu2020estimating}.
% \begin{definition}
    % Let $\{\scrP_{\bsx, a}: \bsx\in\bbR^d, a =\pm 1\}$ a collection of prior distributions on the IV-constrained sets $\{\calP_{\bsx, a}: \bsx\in\bbR^d, a = \pm 1\}$. A policy is said to be IV-optimal with respect to 
% \end{definition}
From now on, we will refer to the collection of criteria given by maximizing $V^\pi_{\scrP}$ as \emph{IV-optimality}. 
An ITR is said to be \emph{IV-optimal} with respect to the prior distributions $\{\scrP_{\bsx, a}: \bsx\in\bbR^d, a =\pm 1\}$ and a policy class $\Pi$ if it maximizes $V^\pi_\scrP$ among all $\pi\in\Pi$. Flow chart in Figure \ref{fig: scheme flowchat} summarizes this conceptual framework.

It is clear that the formulation in \eqref{eqn: middle ground of value interval} can be recovered by carefully choosing the prior distributions. 
In particular, let $\underline{p}_a(\cdot|\bsx)$ and $\overline{p}_a(\cdot|\bsx)$ be distributions that witness the partial identification bounds $\underline{Q}(\bsx, a)$ and $\overline{Q}(\bsx, a)$, respectively:
\begin{equation}
    \label{eq:witness_ub_and_lb_single_stage}
\underline{Q}(\bsx, a) = \int y \underline{p}_a(dy|\bsx), \qquad \underline{Q}(\bsx, a) = \int y \overline{p}_a(dy|\bsx).
\end{equation}
Then the criterion \eqref{eqn: middle ground of value interval} is recovered by considering the following two-point priors:
% Let $p_{\textsf{worst},a}(\cdot|\bsX) \in \calP_{\bsX, a}$ and $p_{\textsf{best},a}(\cdot|\bsX)\in \calP_{\bsX, a}$ denote the potential outcome distributions corresponding to $L_a(\bsX)$ and $U_a(\bsX)$, respectively. The estimand $V^{\textsf{mid}}(\pi; L_{\pm 1}, U_{\pm 1}, \alpha)$ is then equivalent to \eqref{eq:itr_general_estimand} with the following point mass prior distributions:
\begin{equation*}
    \scrP^{\texttt{two-point}}_{\bsx,a}
    =
    \lambda(\bsx, a)\cdot \delta_{\underline{p}_a(\cdot|\bsx)}
    + 
    [1-\lambda(\bsx, a)]\cdot \delta_{\overline{p}_a(\cdot|\bsx)},
    % \begin{cases}
    % p_{\textsf{worst},+1}(\cdot|\bsX),~\text{w.p.}~\alpha(\bsX),~\\
    % p_{\textsf{best},+1}(\cdot|\bsX),~\text{w.p.}~1 - \alpha(\bsX).
    % \end{cases}
    % \scrP^{\textsf{two-point}}_{\textsf{prior},-1}(\cdot|\bsX)=
    % \begin{cases}
    % p_{\textsf{worst},-1}(\cdot|\bsX),~\text{w.p.}~\alpha(\bsX),~\\
    % p_{\textsf{best},-1}(\cdot|\bsX),~\text{w.p.}~1 - \alpha(\bsX).
    % \end{cases}
\end{equation*}
where $\delta_p$ is a point mass at $p$.
As discussed near the end of Section \ref{subsec: IV optimality review}, setting $\lambda(\bsx, a)$ uniformly equal to $1/2$ recovers the original IV-optimality criterion considered in \citet{pu2020estimating} that minimizes the worst-case risk \eqref{eq:worst_case_risk_single_stage}. In fact, under certain regularity conditions, one can show that the reverse is also true: the formulation \eqref{eq:itr_general_estimand} for a specified collection prior distributions can also be recovered from \eqref{eqn: middle ground of value interval} by a careful choice of $\lambda$. 
In view of such an equivalence, the criterion \eqref{eq:itr_general_estimand} should \emph{not} be regarded as a generalization of \eqref{eqn: middle ground of value interval}. Rather, it is a convenient tool amenable to being generalized to the multiple-stage setting. A proof of this equivalence statement will appear in Section \ref{sec: extension to DTR}. We defer how to estimate an IV-optimal ITR to Section \ref{sec: extension to DTR}.

\section{Improving Individualized Treatment Rules with an Instrumental Variable}
\label{sec: ITR IV improve}
Compared to estimating an optimal ITR, a less ambitious goal it to improve upon a baseline ITR $\pibase$, so that the improved ITR is no worse and potentially better than $\pibase$. 
In this section, we show how to achieve this goal with an IV in the single-stage setup, and prepare readers for our main results concerning policy improvement in the general multiple-stage setup in Section \ref{sec:dtr_improve}.

%This being said, the strategy to be presented in this section is in fact interesting in its own right compared to existing approaches in \sxc{CITE}.
% Given auxiliary data (e.g., one or more instrumental variables), additional information (e.g., IV identification assumptions) and/or modeling (e.g., a sensitivity analysis model), an interesting question is whether we can improve the baseline policy $\pibase$.
% One way to look at the policy improvement problem is to regard the improvement algorithm as an operator $\mathcal{IP}: \mathcal{H}\mapsto\mathcal{F}$ that maps a given baseline rule $\pibase \in \mathcal{H}$ to some ``optimally-improved" rule $\piimp = \mathcal{IP}(\pibase) \in \calF$. We will refer to such operators as \emph{policy improvement operators} or \emph{PIOs}. When a policy improvement operator leverages auxiliary instrumental variable data, we refer to it as an IV-informed policy improvement operator.

The goal of ``never being worse'' is reminiscent of the min-max risk criterion in \eqref{eq:worst_case_risk_single_stage}.
In view of this, it is natural to consider minimizing
% To make sure the improved policy is never worse than the baseline policy $\pibase$, a sensible criterion is to
the maximum \emph{excess risk} with respect to the baseline ITR $\pibase$, subject to the IV-informed partial identification constraints. This strategy is summarized in the following definition.
% \sxc{In order not to flood the readers with too many definitions, I move the ``policy improvement operator till the end of this section where we need to compare different operators'', we can move it back if we feel it's necessary to introduce this operator here.}
% gives rise to what we call \emph{risk-based IV-improved policy}

\begin{definition}[Risk-based IV-improved ITR]
\label{def:risk_based_improvement}
Let $\pibase$ denote a baseline ITR, $\Pi$ a policy class, and $[L(\bsx), U(\bsx)]$ an IV-informed partial identification interval of the CATE $\contr^\star(\bsx)$.
A \emph{risk-based IV-improved ITR} is any solution to the following optimization problem:
% A classification-based policy improvement operator, denoted as $\mathcal{IP}^{\textsf{C}}$, maps $\pibase \in \calH$ to a policy $\piimp\in\calF$ such that the maximum excess misclassification error with respect to $\pibase$ is minimized, i.e.,
\begin{equation}
% \begin{split}
    \label{eq:pi_improved_wrt_any_baseline}
    % \piimp &= \mathcal{IP}^{\textsf{C}}(\pibase) \\
    % \min_{\pi\in\Pi} \bbE_{\bsX} \bigg[ \sup_{L(\bsX)\leq C(\bsX)\leq U(\bsX)} |C(\bsX)| \cdot \bigg(\indc{\pi(\bsX)\neq \sgn(C(\bsX))}-\indc{\pibase(\bsX)\neq \sgn(C(\bsX))} \bigg)\bigg].
    \min_{\pi\in\Pi}\bbE_{\bsX}\bigg[ \sup_{\contr(\bsX)\in [L(\bsX), U(\bsX)]}|\contr(\bsX)|
    \cdot \bigg(\indc{ \pi(\bsX) \neq \sgn(\contr(\bsX)) } - \indc{ \pibase(\bsX) \neq \sgn(\contr(\bsX)) }\bigg) \bigg].
% \end{split}
\end{equation}
\end{definition}

% Suppose that we have IV-informed partial identification bounds on the conditional average treatment effect $C(\bsX) \in [L(\bsX), U(\bsX)]$. One sensible policy improvement operator minimizes the maximum excess misclassification error with respect to the baseline policy $\pibase$ under the putative IV and its imposed bounds on the CATE.

% \sxc{Maybe we don't need to explicitly introduce a notation for the improvement operator as it doesn't appear elsewhere, or maybe we can locally introduce the notation when we compare all the improvement operators}
% \begin{definition}[Classification-Based Policy Improvement Operator]
% Let $\pibase \in \calH$ denote a baseline policy, $\calF$ a target function class, and $[L(\bsX), U(\bsX)]$ an IV-informed partial identification interval of the conditional average treatment effect (CATE). A classification-based policy improvement operator, denoted as $\mathcal{IP}^{\textsf{C}}$, maps $\pibase \in \calH$ to a policy $\piimp\in\calF$ such that the maximum excess misclassification error with respect to $\pibase$ is minimized, i.e.,
% \begin{equation}
% \begin{split}
%     \label{eq:pi_improved_wrt_any_baseline}
%     \piimp &= \mathcal{IP}^{\textsf{C}}(\pibase) \\
%     &= \argmin_{\pi\in\calF} \bbE_{\bsX} \bigg[ \sup_{L(\bsX)\leq C(\bsX)\leq U(\bsX)} |C(\bsX)| \cdot \bigg(\indc{\pi(\bsX)\neq \sgn(C(\bsX))}-\indc{\pibase(\bsX)\neq \sgn(C(\bsX))} \bigg)\bigg].
% \end{split}
% \end{equation}
% \end{definition}

When $\Pi$ consists of all Boolean functions, \eqref{eq:pi_improved_wrt_any_baseline} admits the following explicit solution.
% of all functions from $\calX$ to $\calA = \{\pm 1\}$, $\mathcal{IP}^{\textsf{C}}(\pibase)$ yields an explicit solution characterized by Proposition \ref{prop:itr_policy_imp_bayes_policy_classification}.

\begin{proposition}[Formula for risk-based IV-improved ITR]
\label{prop:itr_policy_imp_bayes_policy_classification}
Define
% Let $\calF$ consist of all measurable functions from $\calX$ to $\{\pm 1\}$ and $[L(\bsX)$. For any baseline rule $\pibase$ and the policy improvement operator defined in \eqref{eq:pi_improved_wrt_any_baseline}, we have
    \begin{equation}
    \label{eq: bayes rule}
        % \piimp(\bsX) = \mathcal{IP}^{\textsf{C}}(\pibase) =
        % \pi^{\textnormal{\texttt{R}}}(\bsx)
        \pirisk(\bsx)= \indc{L(\bsx) > 0} - \indc{U(\bsx)< 0} + \indc{L(\bsx)\leq 0\leq U(\bsx)}\cdot \pibase(\bsx).
    \end{equation}
    Then $\pirisk$ is a risk-based IV-improved ITR when $\Pi$ consists of all Boolean functions.
\end{proposition}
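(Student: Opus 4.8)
The plan is to use the fact that when $\Pi$ is the class of all Boolean functions the problem \eqref{eq:pi_improved_wrt_any_baseline} decouples across covariate values. For a fixed $\bsx$, set
\begin{equation*}
  h(a;\bsx) \;:=\; \sup_{\contr\in[L(\bsx),U(\bsx)]} |\contr|\cdot\Big(\indc{a\neq\sgn(\contr)} - \indc{\pibase(\bsx)\neq\sgn(\contr)}\Big), \qquad a\in\{+1,-1\},
\end{equation*}
so that the objective in \eqref{eq:pi_improved_wrt_any_baseline} is exactly $\bbE_{\bsX}\big[h(\pi(\bsX);\bsX)\big]$. I would first argue that the infimum over $\pi\in\Pi$ equals $\bbE_{\bsX}\big[\min_{a\in\{\pm1\}}h(a;\bsX)\big]$: the map $\bsx\mapsto\pirisk(\bsx)$ defined in \eqref{eq: bayes rule} is measurable, being a Boolean combination of the measurable events $\{L(\bsx)>0\}$, $\{U(\bsx)<0\}$ together with $\pibase$, and hence lies in $\Pi$; so it suffices to check that $\pirisk(\bsx)$ minimizes $a\mapsto h(a;\bsx)$ for every $\bsx$. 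Throughout I assume $[L(\bsx),U(\bsx)]$ is a bounded interval, so the supremum defining $h$ is finite and, since the integrand is piecewise linear in $\contr$, attained at an endpoint, and that $\bbE_{\bsX}[\,|h(\pibase(\bsX);\bsX)|\,]<\infty$, which makes the excess-risk objective well-defined.

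Next I would carry out the three-way case analysis dictated by the formula \eqref{eq: bayes rule}. \emph{Case $L(\bsx)>0$:} then $\sgn(\contr)=+1$ on the whole interval, so the bracket is the constant $-\indc{\pibase(\bsx)=-1}$ when $a=+1$ and the constant $\indc{\pibase(\bsx)=+1}$ when $a=-1$; taking the supremum of $|\contr|$ against these constants yields $h(+1;\bsx)=-L(\bsx)\,\indc{\pibase(\bsx)=-1}\le 0$ and $h(-1;\bsx)=U(\bsx)\,\indc{\pibase(\bsx)=+1}\ge 0$, so $a=+1$ is a minimizer, in agreement with $\pirisk(\bsx)=1$. \emph{Case $U(\bsx)<0$:} symmetric, giving $\pirisk(\bsx)=-1$. \emph{Case $L(\bsx)\le 0\le U(\bsx)$:} if $a=\pibase(\bsx)$ the bracket vanishes identically and $h(\pibase(\bsx);\bsx)=0$; if instead $a\neq\pibase(\bsx)$, a direct computation, checking the signs $\contr>0$, $\contr<0$, $\contr=0$ separately, collapses the integrand to $-\contr$ when $(a,\pibase(\bsx))=(+1,-1)$ and to $+\contr$ when $(a,\pibase(\bsx))=(-1,+1)$, whose suprema over $[L(\bsx),U(\bsx)]$ are $-L(\bsx)\ge 0$ and $U(\bsx)\ge 0$ respectively; hence $h(\pibase(\bsx);\bsx)=0\le h(-\pibase(\bsx);\bsx)$ and $a=\pibase(\bsx)$ is a minimizer, in agreement with $\pirisk(\bsx)=\pibase(\bsx)$. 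Combining the three cases shows $\pirisk(\bsx)$ minimizes $a\mapsto h(a;\bsx)$ for every $\bsx$, which is exactly what is needed.

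I do not anticipate a serious obstacle. The only computation worth doing carefully is the supremum over $\contr$ in the mixed-sign case: the cancellation that turns $|\contr|$ times the bracket into the linear function $\mp\contr$ is what makes the argument go through, and it also shows that the convention for $\sgn(0)$ is irrelevant since $|\contr|=0$ there. The other mildly delicate point is the bookkeeping needed to move the pointwise minimum inside the expectation; the clean way to handle it is to exhibit the closed-form measurable candidate $\pirisk$ directly, as above, rather than to invoke a measurable-selection theorem, after which the identity $\min_{\pi\in\Pi}\bbE_{\bsX}[h(\pi(\bsX);\bsX)]=\bbE_{\bsX}[\min_a h(a;\bsX)]$ is elementary.
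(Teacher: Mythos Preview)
Your proposal is correct and follows essentially the same approach as the paper: a pointwise three-case analysis based on the sign configuration of $L(\bsx)$ and $U(\bsx)$. The only cosmetic difference is in the mixed-sign case, where the paper simply observes that plugging in $\contr=0$ shows the supremum is nonnegative and then notes this is attained at $\pi(\bsx)=\pibase(\bsx)$, whereas you compute $h(-\pibase(\bsx);\bsx)$ explicitly via the linearization $|\contr|\cdot(\text{bracket})=\mp\contr$; both arguments are equally valid and yield the same conclusion.
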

% \begin{proof}
%     See Appendix \ref{prf:prop:itr_policy_imp_bayes_policy_classification}. \sxc{we can remove this if we feel it's redundant}
% \end{proof}

The ITR in \eqref{eq: bayes rule} admits a rather intuitive explanation: it takes action $+1$ when the partial identification interval is positive (i.e., $0 < L(\bsx) \leq  U(\bsx)$), and it takes action $-1$ when the interval is negative (i.e., $L(\bsx) \leq U(\bsx) < 0$), and it follows the baseline ITR $\pibase$ otherwise. 
% \sxc{I move the comparison with the Mao-Zhou-Kallus rule to the end of this section to let the readers focus on our rule, we can move it back if necessary}

% A second sensible policy improvement operator would maximize the minimum excess value with respect to the baseline policy $\pibase$.
A closely related criterion to that appeared in Definition \ref{def:risk_based_improvement} is to maximize the minimum ``excess value'' with respect to $\pibase$, subject to the distributional constraints imposed by the putative IV and its associated identification assumptions.
% partial identification intervals for the conditional values $\bbE[Y(\pm 1)|\bsX= \bsx]$.
This strategy is detailed as follows.

\begin{definition}[Value-based IV-improved ITR]
\label{def:value_based_improvement}
Let $\pibase$ denote a baseline ITR, $\Pi$ a policy class, and $\{\calP_{\bsx, a}: \bsx\in \bbR^d, a\pm 1\}$ a collection of IV-constrained sets.
% $[\underline{Q}(\bsx,a ), \overline{Q}(\bsx, a)]$ an IV-informed partial identification interval of the conditional value $\bbE[Y(a)|\bsX=x]$ \sxc{we do need a name for this}.
A \emph{value-based IV-improved ITR} is any solution to the following optimization problem:
\begin{equation}
\label{eq:PIO_value}
    \max_{\pi\in\Pi}
    \bbE_{\bsX}\bigg[ %\bsx \sim \textnormal{Law}(\bsX)
    \inf_{ \substack{
    p_{\pi(\bsX)}\in\calP_{\bsX, \pi(\bsX)}
    \\ p_{\pibase(\bsX)}\in\calP_{\bsX, \pibase(\bsX)} }
    }
    \bigg\{\bbE_{\substack{Y\sim p_{\pi(\bsX)} , Y'\sim p_{\pibase(\bsX)}}} \left[Y - Y'\right] \bigg\} \bigg].
\end{equation}
% Let $\pibase \in \calH$ denote a baseline policy and $\calF$ a target function class, and $[L_a(\bsX), U_a(\bsX)]$, $a = \pm 1$, partial identification bounds on $\bbE[Y(a)\mid \bsX]$. A value-based policy improvement operator, denoted as $\mathcal{IP}^{\textsf{V}}$, maps $\pibase \in \calH$ to a policy $\piimp\in\calF$ such that the minimum excess value of $\piimp$ with respect to $\pibase$ is maximized, i.e.,
% \begin{equation}
    % \piimp(\bsX) = \mathcal{IP}^{\textsf{V}}(\pibase) =
    % \max_{\pi\in\Pi} \bbE_{\bsX} \left[ \inf_{p_a(\cdot|\bsX) \in \calP_{\bsX, a}} \bbE_{p_a(\cdot|\bsX)}\left[Y(\pi(\bsX)) - Y(\pibase(\bsX))\right]\right].
% \end{equation}
\end{definition}

The above formulation is known as distributionally robust optimization in the optimization literature \citep{delage2010distributionally}. 
When the IV-constrained sets are derived from partial identification intervals $[\underline{Q}(\bsx, a), \overline{Q}(\bsx, a)]$ and $\Pi$ consists of all Boolean functions, the optimization problem \eqref{eq:PIO_value} admits the explicit solution below, analogous to the one given in Proposition \ref{prop:itr_policy_imp_bayes_policy_classification}.

% The next proposition is analogous to Proposition \ref{prop:itr_policy_imp_bayes_policy_classification}, and it characterizes the solution of \eqref{eq:PIO_value} when $\Pi$ is the set of all boolean functions

% $\mathcal{IP}^{\textsf{V}}(\pibase)$ when $\mathcal{F}$ consists of all measurable functions from $\calX$ to $\calA = \{\pm 1\}$. \sxc{insert more explanations}

\begin{proposition}[Formula for value-based IV-improved ITR]
\label{prop:itr_policy_imp_bayes_policy_value}
    % Assume witness the partial identification bounds $\underline{Q}(\bsx, a)$ and $\overline{Q}(\bsx, a)$
    Assume $\underline{p}_a(\cdot|\bsx)$ and $\overline{p}_a(\cdot|\bsx)$, the two distributions defined in \eqref{eq:witness_ub_and_lb_single_stage} that witness the partial identification bounds $\underline{Q}(\bsx, a)$, $\overline{Q}(\bsx, a)$, are both inside $\calP_{\bsx, a}$.
    % the constraint set $\calP_{\bsx, a}$ is derived from the partial identification intervals for the the conditional value $\bbE[Y(a)|\bsX=x]$ \sxc{we do need a name for this}, in the sense that
    % \subseteq \bigg\{p_{a}(\cdot|\bsx):
    % \underline{Q}(\bsx, a), \overline{Q}(\bsx, a) \textnormal{ s.t. }
    % \underline{Q}(\bsx, a) \leq \int y p_a(dy |\bsx) \leq \overline{Q}(\bsx, a)\bigg\}.
  % where we recall that $\underline{p}_a(\cdot|\bsx), \overline{p}_a(\cdot|\bsx)$
  % Let $\Lval(\bsx) = \underline{Q}(\bsx, +1) - \overline{Q}(\bsx, -1)$, $\Uval(\bsx) = \overline{Q}(\bsx, +1) - \underline{Q}(\bsx, -1)$, and define
  % For any baseline policy $\pibase$, define
  % For any baseline rule $\pibase$ and the policy improvement operator defined in \eqref{eq:PIO_value}, we have
    % \begin{equation}
    % \label{eq: bayes rule II}
        % \piimp(\bsX) = \mathcal{IP}^{\textsf{V}}(\pibase)
        % \indc{L_{+1}(\bsX) > U_{-1}(\bsX)} - \indc{L_{-1}(\bsX) > U_{+1}(\bsX)} + \indc{U_{-1}(\bsX)\geq L_{+1}(\bsX)}\cdot \pibase(\bsX).
        % \pival(\bsx)
        % = \indc{\underline{Q}(\bsx, +1) > \overline{Q}}
        % - \indc{\underline{Q}(\bsx, -1)> \overline{Q}(\bsx, +1)} + \indc{U_{-1}(\bsX)\geq L_{+1}(\bsX)}\cdot \pibase(\bsX).
    % \end{equation}
    Define
    \begin{equation}
    \label{eq: bayes rule II}
        \pival(\bsx)= \indc{\Lval(\bsx) > 0} - \indc{\Uval(\bsx)< 0} + \indc{\Lval(\bsx)\leq 0\leq \Uval(\bsx)}\cdot \pibase(\bsx).
    \end{equation}
    where
        $\Lval(\bsx) = \underline{Q}(\bsx, +1) - \overline{Q}(\bsx, -1)$ and
        $\Uval(\bsx) = \overline{Q}(\bsx, +1) - \underline{Q}(\bsx, -1)$.
    Then $\pival$ is a value-based IV-improved policy when $\Pi$ consists of all Boolean functions.
\end{proposition}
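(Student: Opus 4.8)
The plan is to reduce the optimization in \eqref{eq:PIO_value} to a pointwise problem over actions and then read off the maximizer. Because $\Pi$ is the class of all Boolean functions, the outer expectation over $\bsX$ is maximized by choosing, for every $\bsx$, the action $a=\pi(\bsx)\in\{\pm1\}$ that maximizes the inner distributionally robust excess value
\[
G(\bsx,a):=\inf_{\substack{p_a\in\calP_{\bsx,a}\\ p_{\pibase(\bsx)}\in\calP_{\bsx,\pibase(\bsx)}}}\bbE_{Y\sim p_a,\,Y'\sim p_{\pibase(\bsx)}}[Y-Y'].
\]
So the work is to compute $G(\bsx,+1)$ and $G(\bsx,-1)$, compare them, and match the answer to the stated formula \eqref{eq: bayes rule II}.

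First I would evaluate $G(\bsx,\cdot)$. If $a=\pibase(\bsx)$, the two potential outcomes being differenced, $Y(\pi(\bsx))$ and $Y(\pibase(\bsx))$, are literally the same random variable, so the integrand vanishes and $G(\bsx,\pibase(\bsx))=0$. If $a=-\pibase(\bsx)$, then by linearity of expectation the joint infimum separates as $G(\bsx,a)=\inf_{p_a\in\calP_{\bsx,a}}\bbE_{p_a}[Y]-\sup_{p'\in\calP_{\bsx,\pibase(\bsx)}}\bbE_{p'}[Y']$; since $\calP_{\bsx,a}$ is derived from the partial identification interval $[\underline{Q}(\bsx,a),\overline{Q}(\bsx,a)]$ and, by hypothesis, contains the witnessing distributions $\underline{p}_a,\overline{p}_a$ of \eqref{eq:witness_ub_and_lb_single_stage}, these one-dimensional extrema equal $\underline{Q}(\bsx,a)$ and $\overline{Q}(\bsx,\pibase(\bsx))$ and are attained. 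Hence $G(\bsx,-\pibase(\bsx))=\underline{Q}(\bsx,-\pibase(\bsx))-\overline{Q}(\bsx,\pibase(\bsx))$.

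Next I would compare the two values: it is optimal to switch away from $\pibase(\bsx)$ exactly when $\underline{Q}(\bsx,-\pibase(\bsx))>\overline{Q}(\bsx,\pibase(\bsx))$, and otherwise to retain $\pibase(\bsx)$ (ties may be resolved in favor of $\pibase(\bsx)$, which is permitted since only \emph{a} solution is claimed). A short case split on $\pibase(\bsx)\in\{\pm1\}$ then identifies this rule with \eqref{eq: bayes rule II}: with $\Lval(\bsx)=\underline{Q}(\bsx,+1)-\overline{Q}(\bsx,-1)$ and $\Uval(\bsx)=\overline{Q}(\bsx,+1)-\underline{Q}(\bsx,-1)$, the switching condition when $\pibase(\bsx)=-1$ reads $\Lval(\bsx)>0$, and when $\pibase(\bsx)=+1$ it reads $\Uval(\bsx)<0$. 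Since $\underline{Q}(\bsx,a)\le\overline{Q}(\bsx,a)$ for each action we also have $\Lval(\bsx)\le\Uval(\bsx)$, so the three events $\{\Lval>0\}$, $\{\Uval<0\}$, $\{\Lval\le0\le\Uval\}$ are disjoint and exhaustive, and comparing the pointwise maximizer with \eqref{eq: bayes rule II} region by region shows they agree for every $\bsx$. One then notes $\pival$ is measurable, being built from the data functionals $\underline{Q},\overline{Q}$ and the measurable $\pibase$, hence a legitimate member of the Boolean policy class, which finishes the proof.

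The step I expect to require the most care is the evaluation of $G(\bsx,-\pibase(\bsx))$ --- specifically, justifying that the infimum over the product of the two IV-constrained sets decouples and is attained. That is clean here because the objective is separately linear in each distribution and the $\inf$ (resp.\ $\sup$) of the mean over a moment-constrained family is attained at $\underline{p}_a$ (resp.\ $\overline{p}_a$) by assumption. Everything else is bookkeeping: the pointwise reduction is standard once $\Pi$ is all of Boolean functions, and the concluding case analysis runs exactly parallel to that behind Proposition \ref{prop:itr_policy_imp_bayes_policy_classification}.
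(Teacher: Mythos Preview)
Your proposal is correct and follows essentially the same route as the paper's proof: compute the inner robust excess value as $0$ when $\pi(\bsx)=\pibase(\bsx)$ and as $\underline{Q}(\bsx,-\pibase(\bsx))-\overline{Q}(\bsx,\pibase(\bsx))$ otherwise, then case-split on $\pibase(\bsx)\in\{\pm1\}$ and rewrite using $\Lval,\Uval$. One small point of precision: your justification for $G(\bsx,\pibase(\bsx))=0$ appeals to the potential outcomes $Y(\pi(\bsx))$ and $Y(\pibase(\bsx))$ being the same random variable, but the objective in \eqref{eq:PIO_value} is phrased in terms of draws from distributions in $\calP_{\bsx,a}$, not the true potential outcomes; the cleaner argument (which the paper uses implicitly) is that when the two action subscripts coincide the symbols $p_{\pi(\bsx)}$ and $p_{\pibase(\bsx)}$ denote the \emph{same} optimization variable, so $\bbE[Y]-\bbE[Y']=0$ identically.
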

%\begin{proof}
%    See Section \ref{prf:prop:itr_policy_imp_bayes_policy_value}.
%\end{proof}

Propositions \ref{prop:itr_policy_imp_bayes_policy_classification} and \ref{prop:itr_policy_imp_bayes_policy_value} together reveal an interesting duality between worst-case excess risk minimization and worst-case excess value maximization. 
When the partial identification interval for the CATE is derived directly from the partial identification intervals for the mean conditional potential outcomes, so that $\Lval= L$ and $\Uval = U$, then the two ITRs defined in \eqref{eq: bayes rule} and \eqref{eq: bayes rule II} agree, and they simultaneously satisfy the two IV-improvement criteria presented in Definitions \ref{def:risk_based_improvement} and \ref{def:value_based_improvement}. Hence, we will not distinguish between two types of IV-improved ITRs.

Such a duality is reminiscent of the duality between risk minimization and value maximization in the classical policy estimation problem under the NUCA discussed in Section \ref{subsec:opt_and_nuca_opt_ITR}. 
%\citep{zhang2012estimating, zhao2012estimating}.
Moreover, as discussed in Section \ref{subsec: IV optimality review}, minimizing maximum risk and maximizing minimum value subject to IV-informed partial identification intervals are not equivalent. 
Thus, it is curious to see that such a duality is restored in the policy improvement problem. We defer a discussion of how to estimate IV-improved ITRs to the more general dynamic treatment regimes setting studied in Section \ref{sec:dtr_improve}.

% When $[L(\bsX), U(\bsX)]$ are derived directly from $[L_a(\bsx), U_a(\bsX)]$, i.e., $L(\bsX) := L_{+1}(\bsX) - U_{-1}(\bsX)$ and $U(\bsX) := U_{+1}(\bsX) - L_{-1}(\bsX)$, both classification-based policy improvement operator $\mathcal{IP}^{\textsf{C}}$ and value-based policy improvement operator $\mathcal{IP}^{\textsf{V}}$ map the baseline policy $\pibase$ to the same improved policy, i.e., $\mathcal{IP}^{\textsf{C}}(\pibase) = \mathcal{IP}^{\textsf{V}}(\pibase)$. This duality is reminiscent of the value-risk duality in classical non-IV ITR literature (\citealp{zhang2012estimating, zhao2012estimating}), and is curious because such a duality is missing when defining IV-optimality; see Section \ref{subsec: IV optimality review}.

We conclude this section with a comparison between the IV-improved ITR and the improved ITR derived in a series of works by \citet{kallus2018interval,kallus2018confounding,kallus2020minimax}.
In this line of work, the authors considered minimizing the maximum excess risk subject to a ``Rosenbaum-bounds-type'' sensitivity analysis model, and the expression of their improved ITR is similar to \eqref{eq:pi_improved_wrt_any_baseline} and \eqref{eq:PIO_value}, but with $L(\bsx), U(\bsx)$ replaced by the bounds under the sensitivity analysis model.
Despite the apparent similarity, there is a profound, practical difference between sensitivity-analysis-based and IV-based policy improvement. 
Since a sensitivity analysis model only \emph{relaxes} the NUCA, the CATE derived under NUCA (i.e., $\contr^\obs$) is always contained in a sensitivity analysis model. Hence, their framework can never improve upon the NUCA-optimal rule $\pi^\nuca$, defined as the minimizer of \eqref{eq:nuca_opt_single_stage}.
More explicitly, let $\imp^\IV$ be a (potentially multi-valued) \emph{policy improvement operator} that sends a baseline ITR $\pibase$ to its IV-improved counterparts, and let $\imp^{\sens}$ be the corresponding policy improvement operator considered in \cite{kallus2018confounding}. 
We necessarily have $\pi^\nuca \in \imp^\sens(\pi^\nuca)$, meaning that the NUCA-optimal rule can never be improved by $\imp^\sens$.
On the other hand, we will demonstrate via extensive simulations in Section \ref{sec: simulation} that $\imp^{\IV}(\pi^\nuca)$ may yield a strictly better ITR than $\pi^\nuca$ as an IV contains additional information.

\section{Estimating Dynamic Treatment Regimes with an Instrumental Variable}
\label{sec: extension to DTR}

% \subsection{Review: Optimal and SRA-Optimal Dynamic Treatment Regimes}
%\subsection{Estimation in the Absence of Unmeasured Confounding: Optimal and SRA-Optimal Policies}
\subsection{Optimal and SRA-Optimal DTRs}
\label{subsec: DTR setup}
We now consider a general $K$-stage decision making problem \citep{murphy2003optimal,schulte2014q}. 
At the first stage we are given baseline covariates $\bsX_1 = \bsx_1 \in \calX$, make a decision $a_1 = \pi_1(\bsx_1)$, and observe a reward $R_1(a_1) = r_1$. 
At stage $k = 2, \cdots, K$, let $\vec{a}_{k-1} = (a_1, \cdots, a_{k-1})$ denote treatment decisions from stage $1$ up to $k-1$, and $\vec{R}_{k - 1}(\vec{a}_{k-1}) = (R_1(a_1), R_2(\vec{a}_2), \cdots, R_{k-1}(\vec{a}_{k-1}))$ the rewards from stage $1$ to $k-1$. 
Meanwhile, denote by $\bsX_k(\vec{a}_{k - 1})$ new covariate information (e.g., time-varying covariates, auxiliary outcomes, etc) that would arise after stage $k-1$ but before stage $k$ if the patient were to follow the treatment decisions $\vec{a}_{k-1}$, and $\Vec{\bsX}_k(\vec{a}_{k-1}) = (\bsX_1, \bsX_2(a_1), \cdots, \bsX_k(\vec{a}_{k-1}))$ the entire covariate history up to stage $k$. 
Given the realizations of $\vec R_{r-1} = \vec r_{k-1}, \vec\bsX_{k} = \vec\bsx_k$,
a possibly data-driven decision $a_k = \pi_k(\vec{a}_{k - 1}, \vec{r}_{k-1}, \vec{\bsx}_k)$ is then made and we observe reward $R_k(\vec a_k) = r_k$. 
% \sxc{we need to mention somewhere that we will use policy to refer generically either an ITR or a DTR}
Our goal is to estimate a dynamic treatment regime $\pi = \{\pi_k\}_{k=1}^K$, such that the expected value of cumulative rewards $\sum_{k=1}^K r_k$ is maximized if the population were to follow $\pi$.
% \footnote{In this paper, we focus on maximization of cumulative rewards. The generalization to maximizing $f(r_1, \cdots, r_K)$ for an arbitrary but known function $f$ is immediate.}.

To simplify the notations, the historical information available for making the $k$-th decision at stage $k$ is denoted by
% \begin{equation}
  % \label{eq: hist_info}
  $
  \vec{\bsH}_1 = \bsX_1
  $
  for $k = 1$
  and 
  $
  \vec\bsH_k = \big(\vec a_{k-1}, \vec R_{k-1}(\vec a_{k-1}), \vec\bsX_k(\vec a_{k-1})\big)
  $  
  for any 
  $
  2\leq k \leq K.
  $
% \end{equation}
The information $\vec\bsH_k$ consists of $\vec \bsH_{k-1}$ and $\bsH_k $: $\vec\bsH_{k-1}$ is the information available at the previous stage and $\bsH_k = \big(a_{k-1}, R_{k-1}(\vec{a}_{k-1}), \bsX_k(\vec a_{k-1})\big)$ is the new information generated after decision $a_{k-1}$ is made. 
Let $p^\star_{a_K}(\cdot|\vec\bsh_K)$ be the conditional law of $R(\vec a_K)$ given a specific realization of the historical information $\vec\bsH_K = \vec\bsh_k$.
% \begin{equation}
%   \label{eq: dtr_real_distr_K}
%   \Big (R_K(\vec a_{K-1}, a_K)  ~|~ \vec\bsH_K\Big) \sim p^\star_{a_K}(\cdot|\vec\bsH_K).
% \end{equation}  
% with respect to the Lebesgue measure on $\bbR$. We note that generalizing to other dominating measures when the reward is not continuous is immediate. 
We define the following \emph{action-value function} (or $Q$-function) at stage $K$:
% For a policy $\pi = (\pi_1,\hdots, \pi_K)$, its \emph{action-value function} (or Q-function) and \emph{value function} at stage $K$ are defined as 
\begin{equation}
  \label{eq:true_Q_stage_K}
  \Qfunc_K(\vec\bsh_K, a_K)  = \bbE[R_K(\vec a_K) \mid \vec \bsH_K=\vec\bsh_k] = \bbE_{R_K \sim p^\star_{a_K}(\cdot |\vec\bsh_K)}[R_K],
\end{equation}
where we emphasize that the expectation is taken over the potential outcome distribution $p^\star_{a_K}(\cdot|\vec\bsh_K)$.
For a policy $\pi$, its \emph{value function} at stage $K$ is then defined as 
\begin{equation*}
%   \label{eq:true_V_stage_K}
  % \Qfunc_K(\vec\bsH_K, a_K)  = \bbE[R_K(\vec a_K) \mid \vec \bsH_K] = \bbE_{R_K \sim p^\star_{a_K}(\cdot |\vec\bsH_K)}[R_K],
  \Vfunc_K^\pi(\vec\bsh_K)  = Q_K\big( \vec\bsh_K, \pi_K(\vec\bsh_K) \big).
\end{equation*}
% \begin{align}
%   \label{eq:true_Q_and_V_func_K}
%   \Qfunc_K(\vec\bsH_K, a_K) & = \bbE^\star[R_K \mid \vec \bsH_K] = \bbE_{R_K \sim p^\star_{a_K}(\cdot |\vec\bsH_K)}[R_K], 
%   ~\text{and}~
%   \Vfunc_K^\pi(\vec\bsH_K)  = Q_K^\pi\big( \vec\bsH_K, \pi_K(\vec\bsH_K) \big),
% \end{align}
% respectively. We emphasize that in the above display, the expectation operator $\bbE^\star[\cdot|\vec\bsH_K]$ is defined with respect to \eqref{eq: dtr_real_distr_K}: for a generic function $f(r)$, we have
% $
%   \bbE^\star[f(r) |\vec\bsH_{K}] =  \int f(r) \cdot p^\star_{a_K}(r|\vec\bsH_K) dr.
% $
Note that the value function at stage $K$ depends on $\pi$ only through $\pi_K$.

Next, we define $Q$-functions and value functions at a generic stage $k\in[K]$ recursively (denote $[K] = \{1, \hdots, K\}$). In particular, for $k = K-1, \hdots, 1$, we let $p^\star_{a_k}(\cdot, \cdot | \vec\bsh_k)$ denote the joint law of the reward $R_k(\vec a_k)$ and new covariate information $\bsX_{k+1}(\vec a_k)$ observed immediately after decision $a_k$ has been made conditional on $\vec\bsH_k = \vec\bsh_k$.
% \begin{equation}
%   \label{eq: dtr_real_distri_k}
%   \Big( R_k(\vec{a}_{k-1}, a_k), \bsX_{k+1}(\vec{a}_{k-1}, a_k) ~|~ \vec\bsH_{k}\Big) \sim p^\star_{a_k}(\cdot, \cdot | \vec\bsH_k).
% \end{equation}  
The $Q$-function of $\pi$ at stage $k$ are then defined as: 
\begin{align}
  \Qfunc^\pi_k(\vec\bsh_k, a_k) & = \bbE [R_k(\vec a_k) + {\Vfunc^\pi_{k+1}(\vec\bsH_{k+1})}
  % ^{\textnormal{function of } (R_k, \bsX_{k+1})}  
  \mid {\vec\bsH_k = \vec\bsh_k}]\nonumber\\
  \label{eq:true_Q_func_k}
  & = \bbE_{(R_k, \bsX_{k+1})\sim p^\star_{a_k}(\cdot, \cdot|\vec\bsh_k)}[R_k + \Vfunc^\pi_{k+1}(\vec\bsH_{k+1})],
\end{align}
  % \label{eq:true_V_func_k}
% \end{align}
where $\vec \bsH_{k+1} = (\vec\bsh_k , a_{k}, R_{k}, \bsX_{k+1})$ is a function of $R_k$ and $\bsX_{k+1}$.
The corresponding value function of $\pi$ is taken to be 
$\Vfunc^\pi_k(\vec\bsh_k)  =  \Qfunc^\pi_k\big(\vec\bsh_k, \pi_k(\vec\bsh_k)\big).$
% \sxc{use $p^\star_{a_k}$, omit dependence on history when there is no ambiguity}
Similar to \eqref{eq:true_Q_stage_K}, the expectation is taken over the potential outcome distribution $p^\star_{a_k}(\cdot, \cdot|\vec\bsh_k)$. 
% defined with respect to \eqref{eq: dtr_real_distri_k}: for a generic function $f(r, \bsx)$, we have
% $
% \bbE^\star[f(r, \bsx)|\vec\bsH_k] = \int f(r, \bsx) \cdot p^\star_{a_k}(r, \bsx|\vec\bsH_k) drd\bsx 
% $.
We interpret the $Q$-function $Q^\pi_k(\vec\bsh_k, a_k)$ as the cumulative rewards collected by executing $a_k$ at stage $k$ and follow $\pi$ from stage $k+1$ and onwards. 
In contrast, the value function $V^\pi_k(\vec\bsh_k)$ is the cumulative rewards collected by executing $\pi$ from stage $k$ and onwards.
Therefore, $Q^\pi_k$ depends on $\pi$ only through $\pi_{(k+1):K} = (\pi_{k+1},\hdots, \pi_K)$ and $V^\pi_k$ depends on $\pi$ only through $\pi_{k:K}=(\pi_k, \hdots, \pi_K)$.\footnote{For notational simplicity we interpret quantities whose subscripts do not make sense (e.g., $a_0, r_0$ and $\bsx_{K+1}$) as ``null'' quantities and their occurrences in mathematical expressions will be disregarded.}

With a slight abuse of notation, we let $\Pi = \Pi_1\times \cdots \Pi_K$ be a policy class.
A DTR is said to be optimal with respect to $\Pi$ if it maximizes $V^\pi_1(\bsx_1)$ for all fixed $\bsx_1$ (and hence $\bbE[V^\pi_1(\bsX_1)]$ for an arbitrary covariate distribution) over the policy class $\Pi$. 

Assume for now that $\Pi$ consists of all DTRs (i.e., each $\Pi_k$ consists of all Boolean functions). 
% To find an optimal policy, define the \emph{optimal value function} at stage $k$ as
% \begin{equation}
% \label{eq:opt_V_func_k}
% \Vfunc^\star_k(\vec\bsH_k) = \max_{\pi_k, \hdots, \pi_K} \Vfunc^\pi_k(\vec\bsH_k),
% \end{equation}
% where the maximization is taken over all possible policies. 
A celebrated result from control theory states that the \emph{dynamic programming} approach below, also known as \emph{backward induction}, yields an optimal DTR $\pi^\star$ (\citealp{murphy2003optimal, sutton2018reinforcement}):
\begin{align}
  \label{eq:opt_dtr_dynamic_prog_expression}
  \pi^\star_K(\vec\bsh_K) & = \argmax_{a_K\in\{\pm 1\}} \Qfunc_K(\vec\bsh_K, a_K), %, \\
  % \label{eq:opt_dtr_dynamic_prog_expression_K}
  ~~~~\pi^\star_k(\vec\bsh_k) = \argmax_{a_k\in\{\pm 1\}} \Qfunc^{\pi^\star}_k(\vec\bsh_k, a_k), ~~ k = K-1, \hdots, 1.
\end{align}
% \begin{enumerate}
    % \item At stage $K$, define
% \begin{equation}
  % \label{eq:opt_regime_K}
% $
  % \pi^\star_K(\vec\bsH_K) = \argmax_{a_K\in\{\pm 1\}} \Qfunc_K(\vec\bsH_K, a_K);
% $
% \end{equation}
% \item For $k = K-1, \hdots, 1$, recursively define
% \begin{equation}
  % \label{eq:opt_regime_k}
% $
%   \pi^\star_k(\vec\bsH_k) = \argmax_{a_k\in\{\pm 1\}} \Qfunc^{\pi^\star}_k(\vec\bsH_k, a_k).
% $
% \end{equation}
% \end{enumerate}
More explicitly, $\pi^\star$ satisfies $\Vfunc^{\pi^\star}_k(\vec\bsh_k) = \max_{\pi} \Vfunc^\pi_k(\vec\bsh_k)$ for any $k\in[K]$ and any configuration of the historical information $\vec\bsh_k$, and is always well-defined as $Q^{\pi^\star}_k$ depends on $\pi^\star$ only through $\pi^\star_{(k+1):K}$.

The foregoing discussion is based on the potential outcome distributions. Suppose that we have collected i.i.d.~data from the law of the random trajectory $(\bsX_k^\obs, A_k^\obs, R_k^\obs)_{k = 1}^K$. %where the superscript $\obs$ emphasizes the retrospective nature of the data. 
% We will sometimes drop the superscript $\obs$ when there is no ambiguity.
% The historical information $\vec{\bsH}^\obs_k$ available at stage $k$ is defined in a similar fashion to \eqref{eq: hist_info}. 
% \textcolor{red}{May want to add obs to $H_k$ as well.}
%Reminiscent of \eqref{eq:true_Q_stage_K} and \eqref{eq:true_Q_func_k}
Let %$p^\obs_{a_K}(\cdot | \vec\bsh_K), 
$\{p^\obs_{a_k}(\cdot, \cdot|\vec\bsh_k)\}_{k=1}^{K-1}$ 
be the conditional laws of the observed rewards and new covariate information identified from the observed data: 
\begin{equation*}
  % \label{eq:dtr_obs_distr}
  (R_K^\obs \mid \vec\bsH_K^\obs=\vec\bsh_k, A^\obs_K = a_K ) \sim p^\obs_{a_K}(\cdot |\vec\bsh_K ), 
  ~~ (R_k^\obs, \bsX_{k+1}^\obs \mid  \vec\bsH_k^\obs = \vec\bsh_k, A_k^\obs = a_k ) \sim p^\obs_{a_k}(\cdot, \cdot | \vec\bsh_k),
\end{equation*}  
for $k\leq K-1$. Here, $\vec\bsH_k^\obs$ denotes the observed historical information observed up to stage $k$.
Suppose that we obtain a DTR using the dynamic programming approach described in \eqref{eq:opt_dtr_dynamic_prog_expression} with $p^\obs_{a_k}$ in place of $p^\star_{a_k}$.
Under a version of the sequential randomization assumptions (\citealp{robins1998marginal}) and with additional assumptions of consistency and positivity, we have
$
  p^\star_{a_k} = p^\obs_{a_k},
$
from which it follows that the DTR obtained is in fact an optimal DTR \citep{murphy2003optimal, schulte2014q}.
In view of this fact, we will refer to this policy as an \emph{SRA-optimal} DTR and denote it as $\pi_\sra$. Methods that estimate SRA-optimal DTRs have been well studied in the literature. See \cite{murphy2003optimal, zhao2015new,tao2018tree,zhang2018c}, among many others.

\subsection{IV-Optimality for DTRs and Dynamic Programming Under Partial Identification}%\label{subsec:IV Optimality DTR and bellman equation}

Suppose that in addition to the observed trajectory $(\bsX_k^\obs, A_k^\obs, R_k^\obs)_{k = 1}^K$, we have access to a time-varying instrumental variable $\{Z_k\}_{k = 1}^K$.
Similar to the single-stage setting in Section \ref{subsec:ITR general framework}, the IV, along with its associated identification assumptions, imposes distributional constraints on the potential outcome distributions $p^\star_{a_k}(\cdot, \cdot|\vec\bsh_k)$. 
That is, we can specify, for each action $a_k$ and historical information $\vec\bsh_k$ at stage $k$, an IV-constrained set $\calP_{\vec\bsh_k, a_k}$, which contains the ground truth potential outcome distribution $p^\star_{a_k}(\cdot, \cdot | \vec\bsh_k)$.
% $p^\star_{a_k}(\cdot, \cdot | \vec\bsH_k)\in \calP_{\vec\bsH_k, a_k}$.
Again, two primary examples are that $\calP_{\vec\bsh_k, a_k}$ is a singleton under point identification (\citealt{michael2020instrumental}), and that $\calP_{\vec\bsh_k, a_k}$ contains weakly bounded distributions by the partial identification intervals in the sense of \eqref{eq:weakly_boundedness}. For ease of exposition, we treat $\calP_{\vec\bsh_k, a_k}$ as a generic set of distributions compatible with the IV and identification assumptions for now; examples of $\calP_{\vec\bsh_k, a_k}$ will be given in Section \ref{subsec: estimate IV-optimal DTR} when we formally describe estimation procedures.
 
It is essential to have a time-varying IV (e.g., daily precipitation) rather than a time-independent IV (e.g., sickle cell trait) in the multiple-stage setting; see Supplementary Materials \ref{subsec: appendix DAG} for details.
 
Given the IV-constrained set $\calP_{\vec\bsh_k, a_k}$, we impose a prior distribution $\scrP_{\vec\bsh_k, a_k}$ on it. This notion generalizes the single-stage setting in Section \ref{subsec:ITR general framework}.
We use $p_{a_k}(\cdot,\cdot|\vec\bsh_k) \sim \scrP_{\vec\bsh_k, a_k}$ to denote sampling a distribution $p_{a_k}(\cdot, \cdot |\vec\bsh_k )\in \calP_{\vec\bsh_k, a_k}$ from $\scrP_{\vec\bsh_k, a_k}$. 
% Notations here generalize those introduced in the single-stage setting in Section \ref{subsec:ITR general framework}. 
We will use the shorthand $p_{a_k}$ for $p_{a_k}(\cdot,\cdot|\vec\bsh_k)$ where there is no ambiguity.

Under these notation, we introduce \emph{IV-constrained} counterparts of the conventional $Q$- and value functions defined in \eqref{eq:true_Q_stage_K}--\eqref{eq:true_Q_func_k}.

\begin{definition}[IV-constrained $Q$- and value function]\label{def:DTR IV-constrained Q and value function}
% Let $\vec\bsH_k$ denote historical information and $a_k$ an action at stage $k$ for $k \in [K]$. 
For each stage $k\in[K]$, each action $a_k\in\{\pm 1\}$, and each configuration the historical information $\vec\bsH_k$, let $\scrP_{\vec\bsh_k, a_k}$ be a prior distribution on the IV-constrained set $\calP_{\vec\bsh_k, a_k}$. 
The IV-constrained $Q$-function and the corresponding value function of a DTR $\pi$ with respect to the collection prior distributions $\{\scrP_{\vec\bsh_k, a_k}\}$ at stage $K$ are
\begin{align*}
% \label{eq:prior_Q_func_K}
\Qfunc_{\scrP, K}(\vec\bsh_K, a_K) %= \bbE_{\scrP}[R_K | \vec\bsH_K] 
&= \bbE_{p_{a_K} \sim \scrP_{\vec\bsh_K, a_K}}\bbE_{R_K \sim p_{a_K}}[R_K],\\
% \label{eq:prior_V_func_K}
\text{and}\qquad\Vfunc_{\scrP, K}^\pi(\vec\bsh_K) &= \Qfunc_{\scrP, K} \big(\vec\bsh_K,  \pi_K(\vec\bsh_K) \big),
\end{align*}
respectively. 
Recursively, at stage $k = K - 1, K - 2, \cdots, 1$, the IV-constrained $Q$-function and the corresponding value function are
\begin{align*}
  \Qfunc^\pi_{\scrP, k}(\vec\bsh_k, a_k) %&= \bbE_{\scrP}[R_k + {\Vfunc^\pi_{\scrP, k+1}(\vec\bsH_{k+1})}  \mid {\vec\bsH_k}] \nonumber\\
%   \label{eq:prior_Q_func_k}
   &= 
   \bbE_{p_{a_k} \sim \scrP_{\vec\bsh_k, a_k}}\bbE_{(R_k, \bsX_{k+1})\sim p_{a_k}}[R_k + \Vfunc^\pi_{\scrP, k+1}(\vec\bsH_{k+1})],\\
%   \label{eq:prior_V_func_k}
  \text{and}\qquad\Vfunc^\pi_{\scrP, k}(\vec\bsh_k) &=  \Qfunc^\pi_{\scrP, k}\big(\vec\bsh_k, \pi_k(\vec\bsh_k)\big),
\end{align*}
respectively, where we recall that $\vec \bsH_{k+1} = (\vec\bsh_k , a_{k}, R_{k}, \bsX_{k+1})$ is a function of $R_k$ and $\bsX_{k+1}$.
\end{definition}

% \sxc{given the space we have probably we don't need to introduce $\bbE_{\scrP}$}
% In Definition \ref{def:DTR IV-constrained Q and value function}, the expectation operator $\bbE_{\scrP}[\cdot]$ is a shorthand for first sampling a distribution $p_{a_k}$ from $\scrP_{\vec\bsH_k, a_k}$ and then sampling the reward $R_k$ and new covariate information $\bsX_{k+1}$ from $p_{a_k}$. 
Definition \ref{def:DTR IV-constrained Q and value function} generalizes the notion of IV-constrained value in the single-stage setting.
Similar to the conventional $Q$- and value functions, the IV-constrained $Q$-function $Q^\pi_{\scrP,k}$ depends on $\pi$ only through $\pi_{(k+1):K}$ and the IV-constrained value function $V^\pi_{\scrP, k}$ depends on $\pi$ only through $\pi_{k:K}$.

Definition \ref{def:IV-Optimal DTR} follows from Definition \ref{def:DTR IV-constrained Q and value function} and defines the notion of IV-optimality for DTRs.

\begin{definition}[IV-optimal DTR]\label{def:IV-Optimal DTR}
% Let $\{Z_k\}_{k = 1}^K$ be a time-varying instrumental variable, $\big\{\calP_{\vec\bsH_k, a_k},~k\in[K]\big\}$ constraint sets, $\big\{\scrP_{\vec\bsH_k, a_k},~k\in[K]\big\}$ prior distributions on constraint sets, $\bsx_1$ baseline covariates at stage $1$, and $\Vfunc^\pi_{\scrP, 1}(\bsx_1)$ the IV-constrained value function at stage $1$ as defined in Definition \ref{def:DTR IV-constrained Q and value function}. 
% 
A DTR $\pi^\star$ is said to be IV-optimal with respect to the collection of prior distributions $\{\scrP_{\vec\bsh_k, a_k}\}$ and a policy class $\Pi$ if it satisfies 
% A dynamic treatment regime $\pi^\star$ is said to be IV-optimal if for any $\bsx_1\in\calX$, it satisifes
\begin{equation}
  \label{eq:iv_opt_dtr}
  \pi^\star \in \argmax_{\pi\in\Pi} \Vfunc^\pi_{\scrP, 1}(\bsx_1)
\end{equation}
for every fixed $\bsh_1 = \bsx_1\in\calX$.
\end{definition}

Note that a priori, we do not know if an IV-optimal policy $\pi^\star$ exists, as the above definition requires $\pi^\star$ to maximize the IV-constrained value function for every fixed $\bsx_1$ (and thus $\bbE_{\bsX_1}[V^\pi_{\scrP, 1}(\bsX_1)]$ for any law of $\bsX_1$). However, as we will see in our first main result below, the optimization problem \eqref{eq:iv_opt_dtr} can be solved via a modified dynamic programming algorithm if $\Pi$ consists of all policies.

\begin{theorem}[Dynamic programming for the IV-optimal DTR]
\label{thm:iv_opt_dtr_dynamic_prog}
Let $\pi^\star$ be recursively defined as follows:
\begin{align*}
%   \label{eq:iv_opt_dtr_dynamic_prog_expression}
  \pi^\star_K(\vec\bsh_K) & = \argmax_{a_K\in\{\pm 1\}} \Qfunc_{\scrP, K}(\vec\bsh_K, a_K), %\\
  % \label{eq:iv_opt_dtr_dynamic_prog_expression_k}
  ~~~~\pi^\star_k(\vec\bsh_k) = \argmax_{a_k\in\{\pm 1\}} \Qfunc^{\pi^\star}_{\scrP, k}(\vec\bsh_k, a_k), ~k \leq K-1.
\end{align*}
% \begin{enumerate}
  % \item At stage $K$, define $\pi^\star_K(\vec\bsH_K) = \argmax_{a_K\in\{\pm 1\}} \Qfunc_{\scrP, K}(\vec\bsH_K, a_K)$;
  % \item For $k = K-1, \hdots, 1$, recursively define
  % $\pi^\star_k(\vec\bsH_k) = \argmax_{a_k\in\{\pm 1\}} \Qfunc^{\pi^\star}_{\scrP, k}(\vec\bsH_k, a_k)$.
% \end{enumerate}
Then the DTR $\pi^\star$ satisfies
\begin{equation}
  \label{eq:iv_opt_dtr_k}
  V^{\pi^\star}_{\scrP, k}(\vec\bsh_k) = \max_{\pi} V^\pi_{\scrP, k}(\vec\bsh_k)
\end{equation}
for any stage $k$ and any configuration of the historical information $\vec\bsh_k$, where the maximization is taken over all DTRs.
\end{theorem}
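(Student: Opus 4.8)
The plan is to prove \eqref{eq:iv_opt_dtr_k} by backward induction on $k$, mirroring the classical dynamic programming argument but tracking the extra layer of expectation over the prior distributions $\scrP_{\vec\bsh_k, a_k}$. The key conceptual point is that the construction of $\pi^\star$ is well-posed: since $Q^{\pi}_{\scrP, k}$ depends on $\pi$ only through $\pi_{(k+1):K}$ (as noted after Definition \ref{def:DTR IV-constrained Q and value function}), once $\pi^\star_{k+1}, \hdots, \pi^\star_K$ have been fixed by earlier steps of the backward recursion, the quantity $\Qfunc^{\pi^\star}_{\scrP, k}(\vec\bsh_k, a_k)$ is a well-defined function of $(\vec\bsh_k, a_k)$ alone, so $\pi^\star_k(\vec\bsh_k) = \argmax_{a_k} \Qfunc^{\pi^\star}_{\scrP, k}(\vec\bsh_k, a_k)$ makes sense. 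I would state this explicitly as a preliminary remark before running the induction.

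The base case is stage $K$. Here for \emph{any} DTR $\pi$ we have $V^\pi_{\scrP, K}(\vec\bsh_K) = \Qfunc_{\scrP, K}(\vec\bsh_K, \pi_K(\vec\bsh_K)) \le \max_{a_K \in \{\pm 1\}} \Qfunc_{\scrP, K}(\vec\bsh_K, a_K) = \Qfunc_{\scrP, K}(\vec\bsh_K, \pi^\star_K(\vec\bsh_K)) = V^{\pi^\star}_{\scrP, K}(\vec\bsh_K)$, which is exactly \eqref{eq:iv_opt_dtr_k} at $k = K$; note that $\Qfunc_{\scrP, K}$ carries no policy superscript, so there is no subtlety here. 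For the inductive step, assume \eqref{eq:iv_opt_dtr_k} holds at stage $k+1$, i.e., $V^{\pi^\star}_{\scrP, k+1}(\vec\bsh_{k+1}) = \max_\pi V^\pi_{\scrP, k+1}(\vec\bsh_{k+1})$ for every $\vec\bsh_{k+1}$. Fix an arbitrary DTR $\pi$ and an arbitrary $\vec\bsh_k$. The chain of inequalities I want is: $V^\pi_{\scrP, k}(\vec\bsh_k) = \Qfunc^\pi_{\scrP, k}(\vec\bsh_k, \pi_k(\vec\bsh_k)) = \bbE_{p_{\pi_k(\vec\bsh_k)} \sim \scrP_{\vec\bsh_k, \pi_k(\vec\bsh_k)}} \bbE_{(R_k, \bsX_{k+1}) \sim p_{\pi_k(\vec\bsh_k)}}[R_k + V^\pi_{\scrP, k+1}(\vec\bsH_{k+1})]$; now apply the induction hypothesis pointwise inside the expectation to replace $V^\pi_{\scrP, k+1}(\vec\bsH_{k+1})$ by the larger $V^{\pi^\star}_{\scrP, k+1}(\vec\bsH_{k+1})$, giving $\le \bbE_{p \sim \scrP_{\vec\bsh_k, \pi_k(\vec\bsh_k)}} \bbE_{(R_k, \bsX_{k+1}) \sim p}[R_k + V^{\pi^\star}_{\scrP, k+1}(\vec\bsH_{k+1})] = \Qfunc^{\pi^\star}_{\scrP, k}(\vec\bsh_k, \pi_k(\vec\bsh_k)) \le \max_{a_k} \Qfunc^{\pi^\star}_{\scrP, k}(\vec\bsh_k, a_k) = \Qfunc^{\pi^\star}_{\scrP, k}(\vec\bsh_k, \pi^\star_k(\vec\bsh_k)) = V^{\pi^\star}_{\scrP, k}(\vec\bsh_k)$. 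Since $\pi$ and $\vec\bsh_k$ were arbitrary, this gives $\max_\pi V^\pi_{\scrP, k}(\vec\bsh_k) \le V^{\pi^\star}_{\scrP, k}(\vec\bsh_k)$, and the reverse inequality is trivial, completing the induction.

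Two points deserve care, and I expect the second to be the main obstacle. First, in the key step I replaced $\Qfunc^\pi_{\scrP, k}(\vec\bsh_k, \pi_k(\vec\bsh_k))$ by $\Qfunc^{\pi^\star}_{\scrP, k}(\vec\bsh_k, \pi_k(\vec\bsh_k))$ — i.e., I kept $\pi$'s stage-$k$ action but swapped in $\pi^\star$'s continuation; this is legitimate precisely because $\Qfunc^{\,\cdot}_{\scrP, k}$ depends on the policy only through stages $k+1, \hdots, K$, so the notation $\Qfunc^{\pi^\star}_{\scrP, k}(\vec\bsh_k, a_k)$ with the given action $a_k$ is unambiguous. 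I would spell this dependence out carefully since the monotonicity substitution is the crux. Second, and more delicate, is the measure-theoretic justification for pushing the induction-hypothesis inequality $V^\pi_{\scrP, k+1}(\vec\bsh_{k+1}) \le V^{\pi^\star}_{\scrP, k+1}(\vec\bsh_{k+1})$ \emph{inside} the double expectation $\bbE_{p \sim \scrP}\bbE_{(R_k, \bsX_{k+1}) \sim p}[\cdot]$: this requires that $\vec\bsh_{k+1} \mapsto V^\pi_{\scrP, k+1}(\vec\bsh_{k+1})$ and $\vec\bsh_{k+1} \mapsto V^{\pi^\star}_{\scrP, k+1}(\vec\bsh_{k+1})$ be measurable (so that the inner integrals are defined) and integrable (so that the inequality is preserved under integration), and that the map $p \mapsto \bbE_{p}[\cdot]$ be measurable with respect to the Borel $\sigma$-algebra on $\calP_{\vec\bsh_k, a_k}$ induced by the chosen metric — exactly the regularity setup flagged in Section \ref{subsec:ITR general framework} (the Wasserstein metric and induced Borel $\sigma$-algebra, citing \citealp{parthasarathy2005probability}). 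I would invoke these regularity conditions as standing assumptions, note that measurability of the value functions propagates through the backward recursion because each is built from measurable maps by integration (Fubini--Tonelli), and defer the detailed verification to avoid grinding through it, since it is routine given the framework already laid out.
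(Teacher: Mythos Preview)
Your proposal is correct and follows essentially the same backward-induction argument as the paper. The only cosmetic difference is that the paper starts from $V^{\pi^\star}_{\scrP,k}$ and shows it dominates $\max_\pi V^\pi_{\scrP,k}$ (using the induction hypothesis to write $V^{\pi^\star}_{\scrP,k+1} = \max_\pi V^\pi_{\scrP,k+1}$ inside the expectation and then pulling the $\max$ outside), whereas you start from an arbitrary $\pi$ and bound $V^\pi_{\scrP,k}$ above by $V^{\pi^\star}_{\scrP,k}$; these are the same chain read in opposite directions, and your added remarks on well-posedness and measurability are sound but not treated explicitly in the paper's proof.
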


The above result is a generalization of the classical dynamic programming algorithm \eqref{eq:opt_dtr_dynamic_prog_expression} under partial identification. Note that the policy $\pi^\star$ in the above theorem is always well-defined as $\Qfunc^{\pi^\star}_{\scrP_k}$ depends on $\pi^\star$ only through $\pi^\star_{(k+1):K}$.

\subsection{An Alternative Characterization of IV-Optimal DTRs} 
\label{subsec:alternative characterization}
% and A \texorpdfstring{$Q$}{Q}-Learning Algorithm}
Estimating an IV-optimal DTR based on Theorem \ref{thm:iv_opt_dtr_dynamic_prog} alone is difficult, primarily because the definitions of the IV-constrained $Q$- and value functions involve integration over the prior distributions $\{\scrP_{\vec\bsh_k, a_k}\}$. 
In this subsection, we provide an alternative characterization of an IV-optimal DTR which illuminates a practical estimation strategy. 

Recall that in the single-stage setting, there is an equivalence between the IV constrained value \eqref{eq:itr_general_estimand} and the convex combination of the lower and upper bounds of the value \eqref{eqn: middle ground of value interval}. The latter expression \eqref{eqn: middle ground of value interval} is easier to compute as long as the weights of the convex combination are specified. Below, we extend such an equivalence relationship to the multiple-stage setting and use it to get rid of the intractable integration over prior distributions. 

We start by defining the \emph{worst-case} and \emph{best-case} IV-constrained $Q$- and value functions as well as their weighted versions as follows.
% $\pi$.
% \sxc{I changed $\vec\bslam$ to $\vec\lambda$ as we have been using VEC to denote temporal vectors and bold symbol, like $\bsx$, to denote vectors that are multi-dimensional even without any temporal structure.}

\begin{definition}[Worst-case, best-case, and weighted $Q$- and value functions]\label{def:worst/best Q and value function}
Let the prior distributions $\{\scrP_{\vec\bsh_k, a_k}\}$ be specified and let $\{\lambda_k(\vec\bsh_k, a_k)\}$ be a collection of weighting functions taking values in $[0, 1]$.
The worst-case, best-case, and weighted $Q$-functions at stage $K$ with respect to the specified prior distributions and weighting functions are respectively defined as
\begin{align*}
  \label{eq:worst Q stage K}
  &\underline{\Qfunc}_{K}(\vec\bsh_K, a_K) = \inf_{\substack{p_{a_K}\in \calP_{\vec\bsh_K, a_K}}} \bbE_{R_K\sim p_{a_K}} [R_K] , \numberthis\\
  \label{eq:best Q stage K}
  &\overline{\Qfunc}_{K}(\vec\bsh_K, a_K) = \sup_{\substack{p_{a_K} \in \calP_{\vec\bsh_K, a_K} }} \bbE_{R_K\sim p_{a_K}} [R_K], \numberthis\\
%   \label{eq:wted_Q_func_K}
  &Q_{\vec{{\lambda}}, K}(\vec\bsh_K, a_K) = \lambda_K(\vec\bsh_K, a_K) \cdot \underline\Qfunc_K(\vec\bsh_K, a_K) + \big(1 - \lambda_K(\vec\bsh_K, a_K) \big)\cdot \overline{\Qfunc}_K(\vec\bsh_K, a_K),
\end{align*}
% For a function $\lambda_K(\vec\bsH_K, \pm 1)\in [0, 1]$, the weighted $Q$-function at stage $K$ is
% \begin{equation}
% \label{eq:wted_Q_func_K}
  % Q_{\vec{{\lambda}}, K}(\vec\bsH_K, a_K) = \lambda_K(\vec\bsH_K, a_K) \cdot \underline\Qfunc_K(\vec\bsH_K, a_K) + \big(1 - \lambda_K(\vec\bsH_K, a_K) \big)\cdot \overline{\Qfunc}_K(\vec\bsH_K, a_K).
% \end{equation}
% where $\lambda_K(\vec\bsH_K, \pm 1)\in [0, 1]$ are some constants between $0$ and $1$. 
The corresponding worst-case, best case, and weighted value functions, denoted as $\underline{\Vfunc}_K^\pi(\vec\bsh_K)$, $\overline{\Vfunc}_K^\pi(\vec\bsh_K)$, and $\Vfunc_{\vec{{\lambda}}, K}^\pi(\vec\bsh_K)$, respectively, are
% , and $\Vfunc_{\vec{{\lambda}}, K}^\pi(\vec\bsH_K)$ are 
obtained by setting $a_k = \pi_K(\vec\bsh_K)$ in the above displays. 
Recursively, at stage $k = K - 1, \cdots, 1$, we define the worst-case, best-case, and weighted $Q$-functions as
\begin{align}
  \label{eq:worst_Q_func_k}
  \underline\Qfunc^\pi_{\vec{\lambda}, k}(\vec\bsh_k, a_k) & = \inf_{p_{a_k} \in \calP_{\vec\bsh_k, a_k}} \bbE_{(R_k, \bsX_{k+1})\sim p_{a_k}}[R_k + \Vfunc^\pi_{\vec{\lambda}, k+1}(\vec\bsH_{k+1})], \\
  \label{eq:best_Q_func_k}
  \overline\Qfunc^\pi_{\vec{\lambda}, k}(\vec\bsh_k, a_k) & = \sup_{p_{a_k}\in \calP_{\vec\bsh_k, a_k}} \bbE_{(R_k, \bsX_{k+1})\sim p_{a_k}}[R_k + \Vfunc^\pi_{\vec{\lambda}, k+1}(\vec\bsH_{k+1})],\\
%   \label{eq:wted_Q_func_k}
  Q_{\vec{\lambda}, k}^\pi(\vec\bsh_k, a_k) 
  & = \lambda_k(\vec\bsh_k, a_k) \cdot \underline\Qfunc_{\vec{\lambda}, k}^\pi(\vec\bsh_k, a_k) + \big(1 - \lambda_k(\vec\bsh_k, a_k) \big)\cdot \overline{\Qfunc}_{\vec{\lambda}, k}^\pi(\vec\bsh_k, a_k), \nonumber
\end{align}
respectively.
% where $\lambda(\vec\bsH_k, \pm 1) \in [0, 1]$ are some constants between $0$ and $1$. 
Again, replacing $a_k$ in the above $Q$-functions with $\pi_k(\vec\bsh_k)$ yields their corresponding value functions $\underline{\Vfunc}^\pi_{\vec{\lambda}, k}, \overline\Vfunc^\pi_{\vec{\lambda}, k}$, and $\Vfunc^\pi_{\vec{\lambda}, k}$.
\end{definition}

Definition \ref{def:worst/best Q and value function} generalizes the criterion \eqref{eqn: middle ground of value interval}. According to Definition \ref{def:worst/best Q and value function}, the worst-case and best-case $Q$-functions $\underline\Qfunc^\pi_{\vec{\lambda}, k}$ and $\overline\Qfunc^\pi_{\vec{\lambda}, k}$ depend on the weighting functions only through $\lambda_t(\vec\bsh_{t}, a_t)$ for $k+1 \leq t \leq K\}$, whereas the corresponding value functions depend on the weighting functions only through $\lambda_t(\vec\bsh_{t}, a_t)$ for $k \leq t \leq K$. 
For notational simplicity, in the rest of the paper, we add superscript $\pi$ and subscript $\vec\lambda$ in $Q$-functions at stage $K$ (e.g., we write $\underline{Q}_K = Q^\pi_{\vec\lambda, K}$), although they have no dependence on $\pi$ and the weighting functions.

Proposition \ref{prop:struc_of_iv-opt_dtr} below establishes the equivalence between the weighted $Q$-functions (resp.~value functions) and their IV-constrained counterparts in Definition \ref{def:DTR IV-constrained Q and value function}.

\begin{proposition}[Equivalence between weighted and IV-constrained $Q$- and value functions]
%Alternative Characterization and Construction of IV-Optimal DTR]
\label{prop:struc_of_iv-opt_dtr}
The following two statements hold:
\begin{enumerate}
  \item Fix any collection of weighting functions $\{\lambda_k(\vec\bsh_k, a_k)\}$. Assume that in Definition \ref{def:worst/best Q and value function}, the infimums and supremums when defining weighted $Q$- and value functions are all attained. 
  Then there exists a collection of prior distributions $\{\scrP_{\vec\bsh_k, a_k}\}$ such that
  \begin{equation*}
%   \label{eq:equiv_iv_opt_dtr}
  \Qfunc^\pi_{\vec{\lambda}, k} = \Qfunc^\pi_{\scrP, k}, \qquad \Vfunc^\pi_{\vec{\lambda}, k} = \Vfunc^\pi_{\scrP, k}, ~~\forall k\in[K].
  \end{equation*}
  \item Reversely, if we fix any collection of prior distributions $\{\scrP_{\vec\bsh_k, a_k}\}$, then there exists a collection of weighting functions $\{\lambda_k(\vec\bsh_k, a_k)\}$ such that the above display holds true.
\end{enumerate}
 % and any collection of prior distributions $\{\scrP_{\vec\bsH_k, a_k}\}$.
% and construct $\{\underline Q^\pi_{\vec{\lambda}, k}, \overline Q^\pi_{\vec{\lambda}, k}$ as in Definition \ref{def:worst/best Q and value function} recursively for $k\in [K]$\footnote{For notational simplicity, we write $\underline{\Qfunc}_{K} = \underline{\Qfunc}_{\vec{\lambda}, K}^\pi$ and $\overline{\Qfunc}_{K} = \overline{\Qfunc}_{\vec{\lambda}, K}^\pi$, although they have no dependence on $\lambda$ and $\pi$. }.
% $\underline Q^\pi_{\vec{\lambda}, k}$ and $\overline Q^\pi_{\vec{\lambda}, k}$ are attained for $k\in [K]$. 
\end{proposition}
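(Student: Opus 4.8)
The argument proceeds stage by stage, from $K$ down to $1$, and the key observation is that at each stage the weighted $Q$-function is a convex combination of the best-case and worst-case $Q$-functions, while the IV-constrained $Q$-function is an average (under a prior) of quantities that all lie between the best and worst cases. Since averaging under a prior is itself a convex-combination operation, matching the two reduces to matching a single scalar (the weight $\lambda$) against the ``position'' of the prior mean within the interval $[\underline{Q}_k, \overline{Q}_k]$. The plan is to make this precise recursively, carrying along the inductive hypothesis that the value functions agree at stage $k+1$, so that the two integrands defining $Q^\pi_{\vec\lambda,k}$ and $Q^\pi_{\scrP,k}$ are literally the same function of $(R_k,\bsX_{k+1})$.

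\textbf{Direction 1 (priors from weights).} First I would handle the base case $k=K$. By the attainment hypothesis in Definition \ref{def:worst/best Q and value function}, there exist distributions $\underline{p}_{a_K}, \overline{p}_{a_K}\in\calP_{\vec\bsh_K, a_K}$ achieving $\underline\Qfunc_K$ and $\overline\Qfunc_K$ respectively. Set $\scrP_{\vec\bsh_K, a_K} = \lambda_K(\vec\bsh_K, a_K)\,\delta_{\underline{p}_{a_K}} + (1-\lambda_K(\vec\bsh_K, a_K))\,\delta_{\overline{p}_{a_K}}$; then $\bbE_{p_{a_K}\sim\scrP_{\vec\bsh_K,a_K}}\bbE_{R_K\sim p_{a_K}}[R_K]$ is exactly $\lambda_K\underline\Qfunc_K + (1-\lambda_K)\overline\Qfunc_K = Q_{\vec\lambda, K}$, so $\Qfunc_{\scrP,K} = \Qfunc_{\vec\lambda,K}$ and hence $\Vfunc^\pi_{\scrP,K} = \Vfunc^\pi_{\vec\lambda,K}$. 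For the inductive step, suppose $\Vfunc^\pi_{\scrP,k+1} = \Vfunc^\pi_{\vec\lambda,k+1}$. Then in \eqref{eq:worst_Q_func_k} and \eqref{eq:best_Q_func_k} the objective $R_k + \Vfunc^\pi_{\vec\lambda,k+1}(\vec\bsH_{k+1})$ equals $R_k + \Vfunc^\pi_{\scrP,k+1}(\vec\bsH_{k+1})$, and by attainment we can again pick maximizing/minimizing distributions $\overline{p}_{a_k}, \underline{p}_{a_k}\in\calP_{\vec\bsh_k,a_k}$ for this common objective; taking the two-point prior $\scrP_{\vec\bsh_k,a_k} = \lambda_k\,\delta_{\underline{p}_{a_k}} + (1-\lambda_k)\,\delta_{\overline{p}_{a_k}}$ gives $\Qfunc^\pi_{\scrP,k} = \Qfunc^\pi_{\vec\lambda,k}$ and therefore $\Vfunc^\pi_{\scrP,k} = \Vfunc^\pi_{\vec\lambda,k}$. (One should note that nothing forces the $\scrP_{\vec\bsh_k,a_k}$ to be consistent across different $\pi$; the statement fixes the weighting functions but allows the priors and the equivalence to hold per-$\pi$, which is all that is asked.)

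\textbf{Direction 2 (weights from priors).} Again induct from $k=K$ downward. At stage $K$, given $\scrP_{\vec\bsh_K,a_K}$, define $m_K(\vec\bsh_K,a_K) := \bbE_{p_{a_K}\sim\scrP_{\vec\bsh_K,a_K}}\bbE_{R_K\sim p_{a_K}}[R_K]$. Since every $p_{a_K}$ in the support lies in $\calP_{\vec\bsh_K,a_K}$, its mean lies in $[\underline\Qfunc_K, \overline\Qfunc_K]$, so $m_K\in[\underline\Qfunc_K,\overline\Qfunc_K]$; if $\underline\Qfunc_K < \overline\Qfunc_K$ set $\lambda_K := (\overline\Qfunc_K - m_K)/(\overline\Qfunc_K - \underline\Qfunc_K)\in[0,1]$ (and if the interval is a point, any $\lambda_K$ works). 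Then $Q_{\vec\lambda,K} = \lambda_K\underline\Qfunc_K + (1-\lambda_K)\overline\Qfunc_K = m_K = \Qfunc_{\scrP,K}$, hence the value functions agree. For the inductive step, assume $\Vfunc^\pi_{\scrP,k+1} = \Vfunc^\pi_{\vec\lambda,k+1}$ with the weights $\lambda_{k+1},\dots,\lambda_K$ already constructed; because the best/worst-case $Q$-functions at stage $k$ in \eqref{eq:worst_Q_func_k}--\eqref{eq:best_Q_func_k} are built from $\Vfunc^\pi_{\vec\lambda,k+1}$, which now coincides with $\Vfunc^\pi_{\scrP,k+1}$, the prior mean $m_k(\vec\bsh_k,a_k) := \bbE_{p_{a_k}\sim\scrP_{\vec\bsh_k,a_k}}\bbE_{(R_k,\bsX_{k+1})\sim p_{a_k}}[R_k + \Vfunc^\pi_{\scrP,k+1}(\vec\bsH_{k+1})]$ is sandwiched between $\underline\Qfunc^\pi_{\vec\lambda,k}$ and $\overline\Qfunc^\pi_{\vec\lambda,k}$, and the same ratio formula defines $\lambda_k(\vec\bsh_k,a_k)$, giving $Q^\pi_{\vec\lambda,k} = m_k = \Qfunc^\pi_{\scrP,k}$ and hence $\Vfunc^\pi_{\vec\lambda,k} = \Vfunc^\pi_{\scrP,k}$. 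This closes the induction.

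\textbf{Main obstacle.} The routine content above hides one genuine subtlety: in Direction 2, the ``sandwiching'' $m_k\in[\underline\Qfunc^\pi_{\vec\lambda,k},\overline\Qfunc^\pi_{\vec\lambda,k}]$ must be argued with care, because $\Vfunc^\pi_{k+1}$ inside the expectation depends on the random $\vec\bsH_{k+1}=(\vec\bsh_k,a_k,R_k,\bsX_{k+1})$, so one is not simply bounding the mean of $R_k$ but of a functional of the whole conditional law $p_{a_k}$; the inf and sup in \eqref{eq:worst_Q_func_k}--\eqref{eq:best_Q_func_k} are over the \emph{same} functional, so $m_k$ (being an average over $\scrP_{\vec\bsh_k,a_k}$ of values of that functional on members of $\calP_{\vec\bsh_k,a_k}$) does lie in the interval — but this requires the inductive hypothesis $\Vfunc^\pi_{\scrP,k+1}=\Vfunc^\pi_{\vec\lambda,k+1}$ to be invoked at exactly the right moment so that the functional being inf/supped matches the one being averaged. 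A secondary technical point, which I would mention but not belabor, is measurability: one must check that the two-point priors in Direction 1 (and the functions $\lambda_k$ in Direction 2) can be chosen to depend measurably on $\vec\bsh_k$, which is where the Borel structure on $\calP_{\vec\bsh_k,a_k}$ from the metric/Wasserstein setup of Section \ref{subsec:ITR general framework} is used, together with a measurable-selection argument for the witnessing distributions.
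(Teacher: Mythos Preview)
Your proposal is correct and follows essentially the same approach as the paper: two-point priors built from the witnessing distributions for Part 1, and the sandwiching $\underline{Q}^\pi_{\vec\lambda,k}\le Q^\pi_{\scrP,k}\le\overline{Q}^\pi_{\vec\lambda,k}$ for Part 2. Your write-up is in fact more careful than the paper's own proof, which dispatches Part 2 in a single sentence (``holds trivially since $\underline{Q}_{\vec\lambda,k}\le Q^\pi_{\scrP,k}\le\overline{Q}_{\vec\lambda,k}$''); you correctly identify that this sandwiching requires the inductive hypothesis $V^\pi_{\scrP,k+1}=V^\pi_{\vec\lambda,k+1}$ so that the functional being inf/supped in \eqref{eq:worst_Q_func_k}--\eqref{eq:best_Q_func_k} matches the one being averaged under the prior, and you also flag the per-$\pi$ nature of the construction and the measurable-selection issue, neither of which the paper addresses.
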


The above theorem effectively translates the problem of specifying a collection of prior distributions to specifying a collection of weighting functions, thus allowing one to bypass the integration over the prior distributions. This theorem, along with the dynamic programming algorithm presented in Theorem \ref{thm:iv_opt_dtr_dynamic_prog}, leads to the following alternative characterization of the IV-optimal DTR.

\begin{corollary}[Alternative characterization of the IV-optimal DTR]
\label{cor:struc_of_iv-opt_dtr}
Under the setting in Part 1 of Proposition \ref{prop:struc_of_iv-opt_dtr}, if we recursively define $\pi^\star$ as
\begin{equation}
  \label{eq:struc_of_iv-opt_dtr}
  \pi^\star_k(\vec\bsh_k) = \sgn\left\{\contr^{\pi^\star}_{\vec \lambda, k}(\vec\bsh_k) \right\} = \sgn\left\{Q_{\vec{\lambda}, k}^{\pi^\star}(\vec\bsh_k, +1) - Q_{\vec{\lambda}, k}^{\pi^\star}(\vec\bsh_k, -1) \right\}, ~~ k = K, K - 1, \hdots, 1,
\end{equation}  
where the two quantities
\begin{align*}
%   \label{eq:lb_wted_cate}
  Q_{\vec{\lambda}, k}^{\pi^\star}(\vec\bsh_k, +1)
  & = \lambda_k(\vec\bsh_k, +1) \cdot \underline{\Qfunc}_{\vec{\lambda} ,k}^{\pi^\star} (\vec\bsh_k, +1) + \big(1-\lambda_k(\vec\bsh_k, +1)\big) \cdot \overline{\Qfunc}_{\vec{\lambda}, k}^{\pi^\star} (\vec\bsh_k, +1),\\
%   \label{eq:ub_wted_cate}
  Q_{\vec{\lambda}, k}^{\pi^\star}(\vec\bsh_k, -1)
  & = \lambda_k(\vec\bsh_k, -1) \cdot \underline{\Qfunc}_{\vec{\lambda}, k}^{\pi^\star} (\vec\bsh_k, -1) + \big(1-\lambda_k(\vec\bsh_k, -1)\big) \cdot \overline{\Qfunc}_{\vec{\lambda}, k}^{\pi^\star} (\vec\bsh_k, -1) 
\end{align*}
depend on $\pi^\star$ only through $\pi^\star_{(k+1):K}$ and are therefore well-defined, then $\pi^\star$ satisfies \eqref{eq:iv_opt_dtr_k} for any stage $k$ and any configuration of the historical information $\vec\bsh_k$.
In particular, $\pi^\star$ is an IV-optimal DTR in the sense of Definition \ref{def:IV-Optimal DTR} when $\Pi$ consists of all Boolean functions.
\end{corollary}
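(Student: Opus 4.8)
The plan is to reduce the claim to the two results already in hand: the dynamic-programming characterization of Theorem~\ref{thm:iv_opt_dtr_dynamic_prog} and the weighted-versus-IV-constrained equivalence of Proposition~\ref{prop:struc_of_iv-opt_dtr}. First I would fix the weighting functions $\{\lambda_k(\vec\bsh_k, a_k)\}$ appearing in \eqref{eq:struc_of_iv-opt_dtr}. Since the weighted $Q$-functions $\Qfunc^{\pi}_{\vec\lambda,k}$ are built by a self-contained backward recursion (Definition~\ref{def:worst/best Q and value function}) that does not reference any prior, the policy $\pi^\star$ of \eqref{eq:struc_of_iv-opt_dtr} is well-defined outright, exactly as the corollary's statement already notes. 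Then, under the attainment hypothesis of Part~1 of Proposition~\ref{prop:struc_of_iv-opt_dtr}, I obtain a collection of prior distributions $\{\scrP_{\vec\bsh_k, a_k}\}$ with $\Qfunc^{\pi}_{\vec\lambda,k} = \Qfunc^{\pi}_{\scrP,k}$ and $\Vfunc^{\pi}_{\vec\lambda,k} = \Vfunc^{\pi}_{\scrP,k}$ for all $k\in[K]$ --- in particular at the policy $\pi = \pi^\star$, and at stage $K$ where there is no policy dependence, which is all that is needed below.

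Next I would show that $\pi^\star$ coincides with the DTR produced by the dynamic-programming recursion of Theorem~\ref{thm:iv_opt_dtr_dynamic_prog} relative to this $\{\scrP_{\vec\bsh_k, a_k}\}$; call the latter $\tilde\pi$. This is a backward induction over the stages $k = K, K-1, \ldots, 1$. Because the action space is binary, $\argmax_{a_k\in\{\pm 1\}} g(a_k) = \sgn\{g(+1) - g(-1)\}$ for any function $g$ of the action, under any fixed tie-breaking rule (which is immaterial, since on the tie set both actions are maximizers). For $k = K$ this gives $\tilde\pi_K = \sgn\{\Qfunc_{\scrP,K}(\vec\bsh_K, +1) - \Qfunc_{\scrP,K}(\vec\bsh_K, -1)\} = \sgn\{\Qfunc_{\vec\lambda,K}(\vec\bsh_K, +1) - \Qfunc_{\vec\lambda,K}(\vec\bsh_K, -1)\} = \pi^\star_K$, using the stage-$K$ equivalence. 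For the inductive step, assuming $\tilde\pi_t = \pi^\star_t$ for all $t > k$, the fact (noted after Definition~\ref{def:DTR IV-constrained Q and value function}) that $\Qfunc^{\pi}_{\scrP,k}$ depends on $\pi$ only through $\pi_{(k+1):K}$ gives $\Qfunc^{\tilde\pi}_{\scrP,k} = \Qfunc^{\pi^\star}_{\scrP,k}$, which equals $\Qfunc^{\pi^\star}_{\vec\lambda,k}$ by Proposition~\ref{prop:struc_of_iv-opt_dtr}; hence $\tilde\pi_k = \sgn\{\Qfunc^{\pi^\star}_{\vec\lambda,k}(\vec\bsh_k, +1) - \Qfunc^{\pi^\star}_{\vec\lambda,k}(\vec\bsh_k, -1)\} = \pi^\star_k$ by the defining recursion \eqref{eq:struc_of_iv-opt_dtr}.

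Having identified $\pi^\star = \tilde\pi$, Theorem~\ref{thm:iv_opt_dtr_dynamic_prog} applied to the collection $\{\scrP_{\vec\bsh_k, a_k}\}$ yields $\Vfunc^{\pi^\star}_{\scrP,k}(\vec\bsh_k) = \max_{\pi} \Vfunc^{\pi}_{\scrP,k}(\vec\bsh_k)$ for every stage $k$ and every configuration $\vec\bsh_k$, which is precisely \eqref{eq:iv_opt_dtr_k}; specializing to $k = 1$ and recalling that the maximization in Definition~\ref{def:IV-Optimal DTR} ranges over all Boolean DTRs gives the concluding sentence. I expect no analytic difficulty here; the main point requiring genuine care is the bookkeeping of the interlocking backward recursions --- one must run the induction so that the equivalence $\Qfunc^{\pi^\star}_{\vec\lambda,k} = \Qfunc^{\pi^\star}_{\scrP,k}$ is invoked only after the stages $(k+1){:}K$ of $\tilde\pi$ and $\pi^\star$ have been matched, and one must keep the single collection of priors supplied by Proposition~\ref{prop:struc_of_iv-opt_dtr} as exactly the one relative to which Theorem~\ref{thm:iv_opt_dtr_dynamic_prog} is subsequently applied (so that its conclusion, stated purely in $\scrP$-terms, transfers without any further ``translation back'' to $\vec\lambda$). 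The benign $\sgn(0)$ ambiguity is disposed of by noting that any measurable selection on the tie set is simultaneously a maximizer of $\Qfunc_{\scrP,k}$, so both characterizations remain valid whichever selection is fixed.
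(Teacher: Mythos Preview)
Your proof is correct and follows exactly the approach the paper takes: the paper explicitly states that the proof ``is an immediate consequence of Theorem~\ref{thm:iv_opt_dtr_dynamic_prog} and Proposition~\ref{prop:struc_of_iv-opt_dtr}, and omitted.'' You have simply filled in the backward-induction bookkeeping that makes this immediate consequence explicit, together with the minor observation that $\argmax_{a_k\in\{\pm 1\}} g(a_k) = \sgn\{g(+1)-g(-1)\}$ for binary actions.
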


% that for any collection of prior distributions $\big\{\scrP_{\vec\bsh_k, a_k}, 1 \leq k\leq K\big\}$, there exists a sequence of constants $\{\lambda_k(\vec\bsh_k, a_k),~k\in [K]\}\subseteq [0, 1]^K$ such that the weighted Q-functions and corresponding value functions at each stage $k \in [K]$ agree with the IV-constrained $Q$-functions and corresponding value functions in Definition \ref{def:DTR IV-constrained Q and value function}. Moreover, the reverse is also true so that specifying a sequence of priors $\big\{\scrP_{\vec\bsh_k, a_k}, k\in [K]\big\}$ is equivalent to specifying a sequence of constants $\{\lambda_k(\vec\bsh_k, a_k),~k\in [K]\}\subseteq [0, 1]^K$ in the sense that the collection of all IV-constrained $Q$-functions $\Qfunc^\pi_{\scrP, k}$'s (resp. $\Vfunc^\pi_{\scrP, k}$'s) in Definition \ref{def:DTR IV-constrained Q and value function} is the same as the collection of all weighted Q-functions $\Qfunc^\pi_{\vec{\lambda}, k}$'s (resp. $\Vfunc^\pi_{\vec{\lambda}, k}$'s) in Definition \ref{def:worst/best Q and value function}. 

% \sxc{This theorem is a bit too long, maybe split it into one theorem (the equivalence part) and one corollary (the formula, and this is a corollary of both this theorem and the theorem in the previous section).}

Compared to Theorem \ref{thm:iv_opt_dtr_dynamic_prog}, Corollary \ref{cor:struc_of_iv-opt_dtr} gives an alternative, analytic and constructive characterization of the IV-optimal DTR. 
The proof is an immediate consequence of Theorem \ref{thm:iv_opt_dtr_dynamic_prog} and Proposition \ref{prop:struc_of_iv-opt_dtr}, and omitted.

\paragraph{An illustrative example when $K=2$.}
Figure \ref{fig: illustrate two stages} illustrates the decision process when $K = 2$. At the second stage, given $\vec{\bsh}_2^+ = ({\bsh}_1, a_1=+1, R_1(+1) = r_1^+, \bsX_2(+1) = \bsx_2^+)$ 
% (or $\vec{\bsh}'_2 = (\vec{\bsh}_1, a_1 = -1, R_1(-1) \mid \vec{\bsh}_1)$) 
and $\lambda_2(\vec\bsh_2^+, \pm 1)$, an IV-optimal action is made based on comparing two weighted $Q$-functions $Q_{\vec\lambda, 2}(\vec\bsh_2^+, +1) = 0.8$ and $Q_{\vec\lambda, 2}(\vec\bsh_2^+, -1) = 0.5$. Since $0.8>0.5$, we have $\pi^\star_2(\vec\bsh_2^+) = +1$. 
% From the figure, since 
The decision given $\vec\bsh_2^- = (\bsh_1, a_1 = -1, R_1(-1) = r_1^-, \bsX_2(-1) = \bsx_2^-)$ and $\lambda_2(\vec\bsh_2^-, \pm 1)$ is similar, and we have $\pi^\star_2(\vec\bsh_2^-) = -1$ in this case.

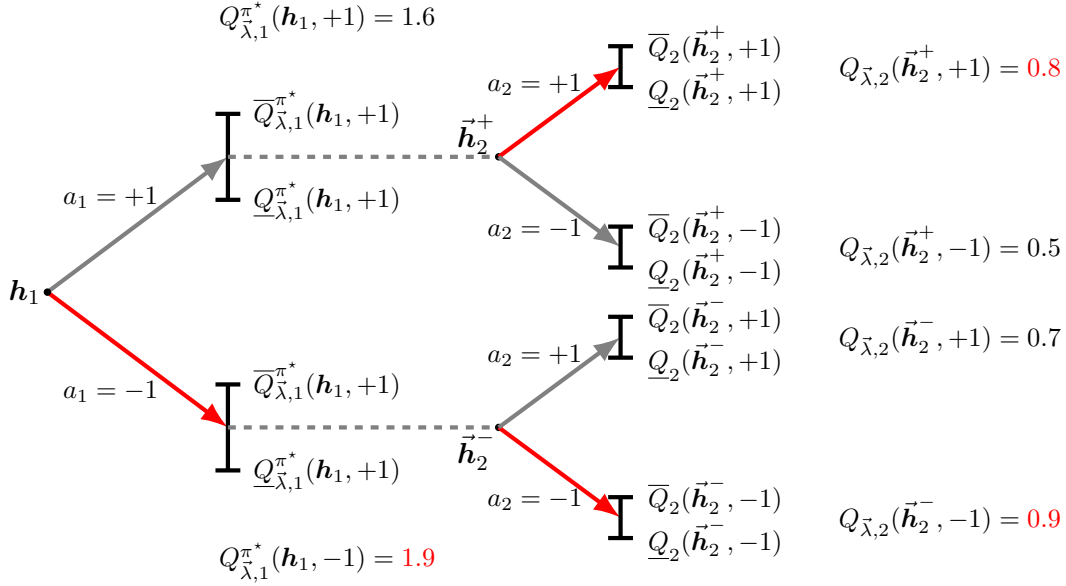
\begin{figure}[t!]
    \centering
    \begin{tikzpicture}[scale=0.6, every text node part/.style={align=center},
    node distance=2cm]
% nodes

% Stage 1
\draw[gray, ultra thick] (-2,0) -- (2,3);
\draw[red, ultra thick] (-2,0) -- (2,-3);
\filldraw[black] (-2,0) circle (2pt);
\node at (-2.5, 0) {${\bsh_1}$};
\node[align=left] at (-0.6,2.1) {\small $a_1 = +1$};
\node[align=left] at (-0.6,-2.2) {\small $a_1 = -1$};
\draw[|-|, ultra thick] (2,2) -- (2,4);
\draw[|-|, ultra thick] (2,-2) -- (2,-4);

\node at (4.2, 6) {\small${\Qfunc}_{\vec\lambda, 1}^{\pi^\star} (\bsh_1, +1) =\textcolor{black}{1.6} $};
\node at (4.2, 4) {\small $\overline{\Qfunc}^{\pi^\star}_{\vec{\lambda}, 1} (\bsh_1, +1) $};
\node at (4.2, 2) {\small $\underline{\Qfunc}^{\pi^\star}_{\vec{\lambda}, 1} (\bsh_1, +1)  $};

\node at (4.2, -2) {\small $\overline{\Qfunc}^{\pi^\star}_{\vec{\lambda}, 1} (\bsh_1, +1)  $};
\node at (4.2, -4) {\small $\underline{\Qfunc}^{\pi^\star}_{\vec{\lambda}, 1} (\bsh_1, +1)  $};
\node at (4.2, -6) {\small${\Qfunc}_{\vec\lambda, 1}^{\pi^\star} (\bsh_1, -1) =\textcolor{red}{1.9} $};

% Stage 2
\draw[-, gray, ultra thick, dashed] (2, 3) -- (8, 3);
\draw[-, gray, ultra thick, dashed] (2, -3) -- (8, -3);

\filldraw[black] (8, 3) circle (2pt);
\node at (7.5, 3.5) {$\vec{\bsh}_2^+$};
\draw[red, ultra thick] (8,3) -- (10.7,5);
\draw[gray, ultra thick] (8,3) -- (10.7,1);
\draw[|-|, ultra thick] (10.7,4.5) -- (10.7,5.5);
\draw[|-|, ultra thick] (10.7,0.5) -- (10.7,1.5);
\node[align=left] at (8.8,4.6) {\small $a_2 = +1$};
\node[align=left] at (8.8,1.4) {\small $a_2 = -1$};
\node at (12.8, 5.5) {\small$\overline{\Qfunc}_{2} (\vec\bsh_2^+, +1)$};
\node at (18, 5) {\small${\Qfunc}_{\vec\lambda, 2} (\vec\bsh_2^+, +1) =\textcolor{red}{0.8} $};
\node at (12.8, 4.5) {\small$\underline{\Qfunc}_{2} (\vec\bsh_2^+, +1)  $};
\node at (12.8, 1.5) {\small$\overline{\Qfunc}_{2} (\vec\bsh^+_2, -1)  $};
\node at (18, 1) {\textcolor{black}{\small${\Qfunc}_{\vec\lambda, 2} (\vec\bsh_2^+, -1) = 0.5  $}};
\node at (12.8, 0.5) {\small$\underline{\Qfunc}_{2}(\vec\bsh^+_2, -1)  $};

\filldraw[black] (8, -3) circle (2pt);
\node at (7.5, -3.5) {$\vec{\bsh}^-_2$};
\draw[red, ultra thick] (8,-3) -- (10.7,-5);
\draw[gray, ultra thick] (8,-3) -- (10.7,-1);
\draw[|-|, ultra thick] (10.7,-4.5) -- (10.7,-5.5);
\draw[|-|, ultra thick] (10.7,-0.5) -- (10.7,-1.5);
\node[align=left] at (8.8,-4.6) {\small $a_2 = -1$};
\node[align=left] at (8.8,-1.4) {\small $a_2 = +1$};
\node at (12.8, -4.5) {\small$\overline{ \Qfunc}_{2} (\vec\bsh^-_2, -1)  $};
\node at (18, -5) {\small${\Qfunc}_{\vec\lambda, 2} (\vec\bsh_2^-, -1) =\textcolor{red}{0.9} $};
\node at (12.8, -5.5) {\small$\underline{\Qfunc}_{2} (\vec\bsh^-_2, -1)  $};
\node at (12.8, -0.5) {\small$\overline{ \Qfunc}_{2} (\vec\bsh^-_2, +1)  $};
\node at (18, -1) {\small${\Qfunc}_{\vec\lambda, 2} (\vec\bsh_2^-, +1) =\textcolor{black}{0.7} $};
\node at (12.8, -1.5) {\small$\underline{\Qfunc}_{2} (\vec\bsh^-_2, +1)  $};

\end{tikzpicture}
\caption{
\small
An illustrative example of an IV-optimal DTR when $K=2$. 
According to the values of the weighted $Q$-functions at the second stage, we have $\pi^\star_2(\vec\bsh_2^+) = +1$ and $\pi^\star_2(\vec\bsh_2^-)= -1$.
Given $\pi^\star_2$, the weighted $Q$-functions at the first stage can be computed, and the IV-optimal action is $\pi^\star_1(\bsh_1) = -1$.
% as $Q_{\vec\lambda, 2}(\vec\bsh_2^+, +1) > Q_{\vec\lambda, 2}(\vec\bsh_2^+, -1)$.
% The stage-$2$ information $\vec{\bsh}_2 = (\vec{\bsh}_1, a_1 = + 1, R_1(+1) \mid \vec{\bsh}_1)$ and $\vec{\bsh}'_2 = (\vec{\bsh}_1, a_1 = -1, R_1(-1) \mid \vec{\bsh}_1)$. The superscript $\pi^\star$ of stage-$1$ Q-functions represent their dependence on the IV-optimal stage-$2$ decision rule. The stage-$1$ interval $\big[\underline{\Qfunc}^{\pi^\star}_{\vec{\lambda}, 2} (\vec\bsh_2, +1), \overline{\Qfunc}^{\pi^\star}_{\vec{\lambda}, 2} (\vec\bsh_2, +1)\big]$ is the partial identification interval of the cumulative rewards of interest when we take $a_1 = +1$ and set $a_2$ to the optimal decision rule according to $\vec{\bsh}_2 = (\vec{\bsh}_1, a_1 = + 1, R_1(+1) \mid \vec{\bsh}_1)$}
}
\label{fig: illustrate two stages}
\end{figure}

Given the knowledge of $\pi^\star_2$, we can compute the weighted value function $V^{\pi^\star}_{\vec\lambda, 2}(\cdot)$ of $\pi^\star$ at the second stage.
% Let $\pi^\star_2$ denote the IV-optimal decision at the second stage and $V_2^{\pi^\star}(\vec{\bsh}_2)$ the corresponding value. 
% Importantly, $V_2^{\pi^\star}(\vec{\bsh}_2)$ is a function of the first-stage covariate $\bsh_1$ and first-stage reward $R_1(a_1)$, because 
 % conditional on $\vec{\bsh}_1$, and depends on $\lambda_2(\vec\bsh_2, \pm 1)$ and $\pi^\star_2$.
This allows us to construct a ``pseudo-outcome'' at the first stage, denoted as $PO_1(r_1, \bsx_2|{\bsh}_1, a_1) = r_1 + V_2^{\pi^\star}({\bsh}_1, a_1, r_1, \bsx_2)$.
Importantly, $PO_1$ depends on $\pi^\star_2$, as it is the cumulative rewards if we observe $\bsh_1$ at the first stage, take an immediate action $a_1$, and then act according to the IV-optimal decision $\pi^\star_2$ at the second stage. 
% at the first stage if we were to act $a_2 = \pi^\star_2(\cdot)$

The partial identification interval for the expected value of $PO_1(R_1(a_1), \bsX_2(a_1)|\bsh_1, a_1)$, where the expectation is taken over the potential outcome distribution of $R_1(a_1)$ and $\bsX_2(a_1)$, is precisely the worst-case and best-case $Q$-functions $\underline{Q}^{\pi^\star}_{\vec\lambda, 1}(\bsh_1, a_1)$ and $\overline{Q}^{\pi^\star}_{\vec\lambda, 1}(\bsh_1, a_1)$. To this end, we can compute the weighted $Q$-function $Q^{\pi^\star}_{\vec\lambda,1}(\bsh_1, a_1)$ if $\lambda_1(\bsh_1, a_1)$ is specified, from which we can decide $\pi^\star_1$. Reading the numbers off Figure \ref{fig: illustrate two stages}, we conclude that $\pi^\star_1(\bsh_1) = -1$.

\begin{figure}[t!]
\centering
% FIGURE AT LOCATION [0, 0] 
\begin{subfigure}[b]{0.30\textwidth}
  \centering
  \begin{tikzpicture}[scale=0.6, node distance=2cm]
  \draw[gray, ultra thick, red] (0,0) -- (2,3);
  \draw[gray, ultra thick] (0,0) -- (2,-3);
  \filldraw[black] (0,0) circle (2pt);
  \node at (-0.5, 0) {\small $\vec{\bsh_k}$};
  \node[align=left] at (0.5,1.8) {\small $+1$};
  \node[align=left] at (0.5,-1.8) {\small $-1$};
  \draw[|-|, ultra thick] (2,2) -- (2,4);
  \draw[|-|, ultra thick] (2,-2) -- (2,-4);
  \node at (2.5, 5.5) {\small ${\Qfunc}^{\pi^\star}_{\vec\lambda, k} (\vec\bsh_k, +1) =\textcolor{red}{1.0} $};
  \node at (5.3, 2) {\small $\underline{\Qfunc}_{\vec{\lambda} ,k}^{ \pi^\star} (\vec\bsh_k, +1) = 1.0$};
  \node at (5.3, 4) {\small $\overline{\Qfunc}_{\vec{\lambda} ,k}^{  \pi^\star} (\vec\bsh_k, +1) = 1.5$};
  \node at (5.3, -2) {\small $\overline{\Qfunc}_{\vec{\lambda} ,k}^{ \pi^\star} (\vec\bsh_k, -1) = 5.0$};
  \node at (5.3, -4) {\small $\underline{\Qfunc}_{\vec{\lambda} ,k}^{\pi^\star} (\vec\bsh_k, -1) = 0.8$};
  \node at (2.5,  -5.5) {\small ${\Qfunc}^{\pi^\star}_{\vec\lambda, k} (\vec\bsh_k, -1) =\textcolor{black}{0.8} $};
  \filldraw[red] (2, 2) circle (3pt);
  \filldraw[red] (2, -4) circle (3pt);
  \end{tikzpicture}
  \caption[]
  {{\footnotesize Worst-Case: $\lambda_k(\vec\bsh_k, \pm 1) = 1$.}}
  \label{fig: illustrate stage k worst case}
\end{subfigure}
\hfill
% FIGURE AT LOCATION [0, 1] 
\begin{subfigure}[b]{0.3\textwidth}  
  \centering 
  \begin{tikzpicture}[scale=0.6, node distance=2cm]
  \draw[gray, ultra thick] (0,0) -- (2,3);
  \draw[gray, ultra thick, red] (0,0) -- (2,-3);
  \filldraw[black] (0,0) circle (2pt);
  \node at (-0.5, 0) {\small $\vec{\bsh_k}$};
  \node[align=left] at (0.5,1.8) {\small $+1$};
  \node[align=left] at (0.5,-1.8) {\small $-1$};
  \draw[|-|, ultra thick] (2,2) -- (2,4);
  \draw[|-|, ultra thick] (2,-2) -- (2,-4);
  \node at (2.5,  5.5) {\small ${\Qfunc}^{\pi^\star}_{\vec\lambda, k} (\vec\bsh_k, +1) =\textcolor{black}{1.5} $};
  \node at (5.3, 2) {\small $\underline{\Qfunc}_{\vec{\lambda} ,k}^{ \pi^\star} (\vec\bsh_k, +1) = 1.0$};
  \node at (5.3, 4) {\small $\overline{\Qfunc}_{\vec{\lambda} ,k}^{  \pi^\star} (\vec\bsh_k, +1) = 1.5$};
  \node at (5.3, -2) {\small $\overline{\Qfunc}_{\vec{\lambda} ,k}^{ \pi^\star} (\vec\bsh_k, -1) = 5.0$};
  \node at (5.3, -4) {\small $\underline{\Qfunc}_{\vec{\lambda} ,k}^{\pi^\star} (\vec\bsh_k, -1) = 0.8$};
  \node at (2.5, -5.5) {\small ${\Qfunc}^{\pi^\star}_{\vec\lambda, k} (\vec\bsh_k, -1) =\textcolor{red}{5.0} $};
  \filldraw[red] (2, 2) circle (3pt);
  \filldraw[red] (2, -2) circle (3pt);
  \end{tikzpicture}
  \caption[]%
  {{\footnotesize Best-Case: $\lambda_k(\vec\bsh_k, \pm 1) = 0$.}}
  \label{fig: illustrate stage k best case}
 \end{subfigure}
 \hfill
% FIGURE LOCATED AT [1, 0]
\begin{subfigure}[b]{0.3\textwidth}   
  \centering 
  \begin{tikzpicture}[scale=0.6, node distance=2cm]
  \draw[gray, ultra thick] (0,0) -- (2,3);
  \draw[gray, ultra thick, red] (0,0) -- (2,-3);
  \filldraw[black] (0,0) circle (2pt);
  \node at (-0.5, 0) {\small $\vec{\bsh_k}$};
  \node[align=left] at (0.5,1.8) {\small $+1$};
  \node[align=left] at (0.5,-1.8) {\small $-1$};
  \draw[|-|, ultra thick] (2,2) -- (2,4);
  \draw[|-|, ultra thick] (2,-2) -- (2,-4);
  \filldraw[red] (2, 3) circle (3pt);
  \filldraw[red] (2, -3) circle (3pt);
  \node at (2.5,  5.5) {\small ${\Qfunc}^{\pi^\star}_{\vec\lambda, k} (\vec\bsh_k, +1) =\textcolor{black}{1.25} $};
  \node at (5.3, 2) {\small $\underline{\Qfunc}_{\vec{\lambda} ,k}^{\pi^\star} (\vec\bsh_k, +1) = 1.0$};
  \node at (5.3, 4) {\small $\overline{\Qfunc}_{\vec{\lambda} ,k}^{ \pi^\star} (\vec\bsh_k, +1) = 1.5$};
  \node at (5.3, -2) {\small $\overline{\Qfunc}_{\vec{\lambda} ,k}^{\pi^\star} (\vec\bsh_k, -1) = 5.0$};
  \node at (5.3, -4) {\small $\underline{\Qfunc}_{\vec{\lambda},k}^{\pi^\star} (\vec\bsh_k, -1) = 0.8$};
  \node at (2.5,  -5.5) {\small ${\Qfunc}^{\pi^\star}_{\vec\lambda, k} (\vec\bsh_k, -1) =\textcolor{red}{2.90} $};
  \end{tikzpicture}
  \caption[]%
  {{\footnotesize Min-Max: $\lambda_k(\vec\bsh_k, \pm 1) =\frac{1}{2}$.}}
  \label{fig: illustrate stage k min max}
\end{subfigure}
% CAPTION AND LABELS FOR THE WHOLE FIGURE
\caption[]
{\small IV-optimal DTR at stage $k$ with different choices of weighting functions.
% $\lambda_k(\vec\bsh_k, \pm 1)$. 
The weighted $Q$-functions are represented by red dots in each arm, and the corresponding IV-optimal action is colored in red. Figure \ref{fig: illustrate stage k worst case} corresponds to the worst-case perspective, Figure \ref{fig: illustrate stage k best case} the best-case perspective, and Figure \ref{fig: illustrate stage k min max} the min-max perspective. In each case, the IV-optimal action is $\sgn\{Q_{\vec{\lambda}, k}^{\pi^\star}(\vec\bsh_k, +1) - Q_{\vec{\lambda}, k}^{\pi^\star}(\vec\bsh_k, -1) \}$.} 
\label{fig: illustrate stage k}
\end{figure}
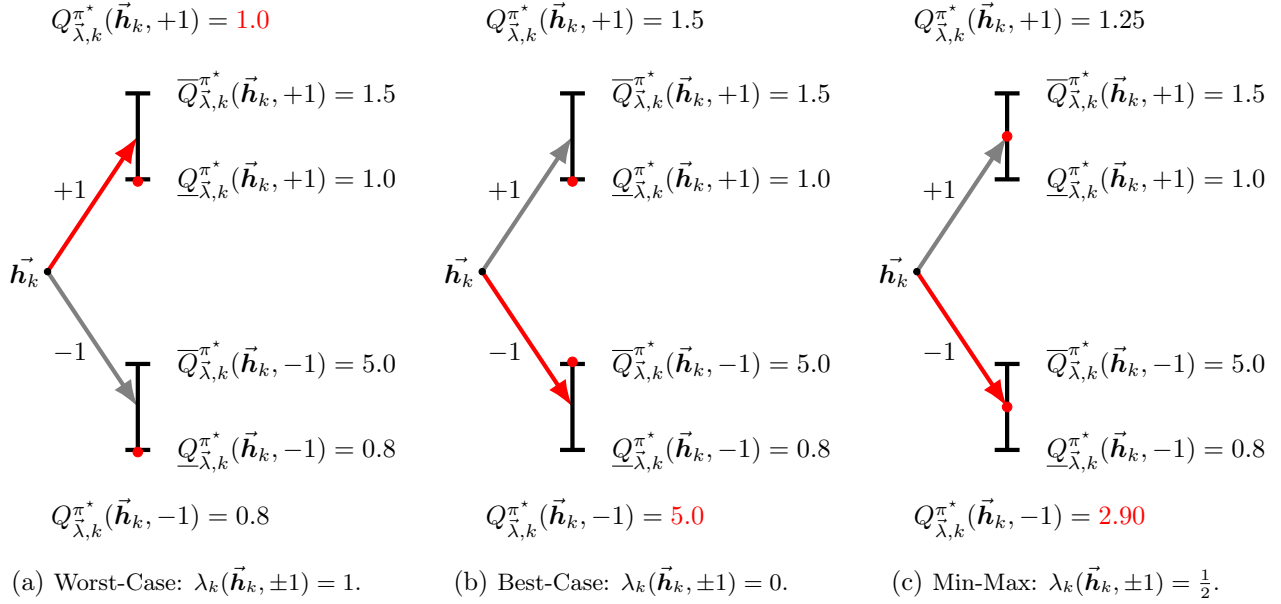

\paragraph{Choosing weighting functions.}
% \begin{remark}[Choosing $\lambda_k(\vec{\bsh}_k, a_k)$]\label{remark: DTR choose lambda}\rm
%We conclude this subsection with a remark on the choice of weighting functions.
% Corollary \ref{cor:struc_of_iv-opt_dtr} makes choosing the stage-$k$ constant $\lambda_k(\vec\bsh_{k}, a_k)$ very transparent. 
The specification of weighting functions reveals one's level of optimism.
Suppose that the future weighting functions $\{\lambda_t: t\geq k+1\}$ has been specified. At stage $k$, if one adopts a worst-case perspective and would like to maximize the worst-case gain at this stage (fixing the weighting function specifications at all future stages),
% (fixing future constants $\{\lambda_t(\vec\bsh_{t}, a_t),~k+1 \leq t \leq K\}$), 
then it suffices to compare the two worse-case $Q$-functions at stage $k$, namely $\underline{\Qfunc}_{\vec{\lambda} ,k}^{\pi^\star} (\vec\bsh_k, +1)$ and $\underline{\Qfunc}_{\vec{\lambda} ,k}^{\pi^\star} (\vec\bsh_k, -1)$.
And the pessimistic action is taken to be sign of the difference of the two worst-case $Q$-functions,
% $a_k = \text{sgn}\big\{\underline{\Qfunc}_{\vec{\lambda} ,k}^{\pi} (\vec\bsh_k, +1) - \underline{\Qfunc}_{\vec{\lambda} ,k}^{\pi} (\vec\bsh_k, -1)\big\}$, 
which corresponds to taking $\lambda_k(\vec\bsh_k, \pm 1) = 1$ (see Figure \ref{fig: illustrate stage k worst case}). Alternatively, if one adopts a best-case perspective at stage $k$ and would like to maximize the best-case gain at this stage, then one shall compare the two best-case $Q$-functions, namely $\overline{\Qfunc}_{\vec{\lambda} ,k}^{\pi^\star} (\vec\bsh_k, +1)$ and $\overline{\Qfunc}_{\vec{\lambda} ,k}^{\pi^\star} (\vec\bsh_k, -1)$. The optimistic action is then taken to be the difference between the two best-case $Q$-functions, which corresponds to taking $\lambda_k(\vec\bsh_k, \pm 1) = 0$ (see Figure \ref{fig: illustrate stage k best case}). Recall that in the single-stage setting, the min-max risk criterion \ref{eq:worst_case_risk_single_stage} corresponds to maximizing the weighted value \eqref{eqn: middle ground of value interval} with weights set to be $1/2$. This criterion can be seamlessly generalized to the current multiple-stage setting by setting $\lambda_k(\vec\bsh_k, \pm 1)  = 1/2$ (see Figure \ref{fig: illustrate stage k min max}). Other choices of weighting functions can be made to incorporate domain knowledge and user preference, as suggested by \cite{cui2021machine}.

% Alternatively, one may first form the partial identification interval for the contrast function at stage $k$, i.e., $\big[\underline{\Qfunc}_{\vec{\lambda} ,k}^{\pi} (\vec\bsh_k, +1) - \overline{\Qfunc}_{\vec{\lambda} ,k}^{\pi} (\vec\bsh_k, -1), \overline{\Qfunc}_{\vec{\lambda} ,k}^{\pi} (\vec\bsh_k, +1) - \underline{\Qfunc}_{\vec{\lambda} ,k}^{\pi} (\vec\bsh_k, -1)\big]$, and then minimize the maximum stage-$k$ risk by taking $a_k = \text{sgn}\big\{\overline{\Qfunc}_{\vec{\lambda} ,k}^{\pi} (\vec\bsh_k, +1) + \underline{\Qfunc}_{\vec{\lambda} ,k}^{\pi} (\vec\bsh_k, +1) - \big(\overline{\Qfunc}_{\vec{\lambda} ,k}^{\pi} (\vec\bsh_k, -1) + \underline{\Qfunc}_{\vec{\lambda} ,k}^{\pi} (\vec\bsh_k, -1)\big)\big\}$, which corresponds to taking $\lambda_k(\vec\bsh_k, +1) = \lambda_k(\vec\bsh_k, -1) = 1/2$ in \eqref{eq:struc_of_iv-opt_dtr}. These strategies have their counterparts in a single-stage ITR setting discussed in XX and reviewed in Section \ref{subsec: IV optimality review}. Figure \ref{fig: illustrate stage k} illustrates each strategy with a concrete example.
% \end{remark}

\subsection{Estimating IV-Optimal DTRs}
\label{subsec: estimate IV-optimal DTR}
% \sxc{Our algorithm bears similarities with the literature on ``policy distillation'', where the goal is to derive a simple policy from a complicated one while preserving the performance, maybe we can give our algorithm a name like ``distilled $Q$-learning. In terms of presentation we can divide into two parts: (1) first introduce a $Q$-learning algorithm (essentially we are estimating all the contrasts recursively), and then (2) introduce how to distill the policy learned by $Q$-learning via weighted classification (using the estimated contrasts).}

%The foregoing discussion has focused on how to obtain an IV-optimal DTR given the knowledge of certain population quantities. In this subsection, we present a strategy to estimate an IV-optimal DTR given i.i.d.~samples from the law of the random trajectory $(\bsX_k^\obs, A_k^\obs, R_k^\obs)_{k = 1}^K$ and the time-varying instrument variable $\{Z_k\}_{k=1}^K$.

We discuss how to estimate an IV-optimal DTR given i.i.d.~samples from the law of the random trajectory $(\bsX_k^\obs, A_k^\obs, R_k^\obs)_{k = 1}^K$ and a time-varying instrument variable $\{Z_k\}_{k=1}^K$. The IV-constrained sets $\{\calP_{\vec\bsh_k, a_k}\}$ are specified via partial identification intervals for the $Q$-functions. Specifically, let 
% \sxc{move it to earlier sections??}
\begin{align}
  \calP_{\vec\bsh_k, a_k}
  & = 
  \bigg\{
    p_{a_k}(\cdot, \cdot|\vec\bsh_k)
    % ~\bigg|~
    :
    \forall %\textnormal{ sufficiently regular }
    f \in \calF_k,~
    \exists~
    \underline{Q}(\vec\bsh_k, a_k; f),
    \overline{Q}(\vec\bsh_k, a_k; f) \nonumber \\
  & \qquad
    \textnormal{ s.t. }
    \underline{Q}(\vec\bsh_k, a_k; f) 
    \leq 
    \int f(r_k , \bsx_{k+1}) p_{a_k}(d r_k, d\bsx_{k+1}|\vec\bsh_k)
    \leq 
    \overline{Q}(\vec\bsh_k, a_k; f) \nonumber \\
\label{eq:IV_contrained_set_general}
  & \qquad
    \textnormal{ and }
    R_k\in[\underline{C}_k, \overline{C}_k] 
    \textnormal{ almost surely for }
    (R_k, \bsX_{k+1}) \sim p_{a_k}(\cdot, \cdot|\vec\bsh_k)
  \bigg\}.
\end{align}

The boundedness of $R_k$ ensures that partial identification intervals (e.g, Manski-Pepper bounds) have finite width. This is a plausible assumption in many applications \citep{swanson2018partial}.

%While we will provide the detailed form of $\calF_k$ as well as $\underline{Q}(\vec\bsh_k, a_k; f), \overline{Q}(\vec\bsh_k, a_k; f)$ in the final algorithm descriptions, we first take some time to explain the intuitions behind the above specification. 

At the final stage $K$, we take $\calF_K = \{\textnormal{Id}\}$, a singleton containing the identity function that sends $r_K$ to itself (note that $\bsX_{x+1}$ is a null quantity and thus disregarded). 
Then the lower and upper bounds, namely $\underline{Q}(\vec\bsh_k, a_K; \textnormal{Id}) = \underline{Q}_K(\vec\bsh_k, a_K)$ and $\overline{Q}(\vec\bsh_k, a_K; \textnormal{Id}) = \underline{Q}_K(\vec\bsh_k, a_K)$, are precisely the endpoints of the partial identification intervals (constructed using the IV $Z_K$) of the expected final-stage reward $R_K$, where the expectation is taken over the potential outcome distribution $R_K\sim p^\star_{a_K}(\cdot|\vec\bsh_k)$. 
By construction, we have $p^\star_{a_K}(\cdot|\vec\bsh_k) \in \calP_{\vec\bsh_k, a_K}$.

For $k\leq K-1$, we will take $\calF_k$ to be a class of properly defined functions (the specific forms to be specified later) of the current reward $r_k$ and the next-stage covariate $\bsx_{k+1}$. 
The lower and upper bounds $\underline{Q}(\vec\bsh_k, a_k; f), \overline{Q}(\vec\bsh_k, a_k; f)$ constitute the partial identification intervals (constructed using the IV $Z_k$) of the expected value of $f(R_k, \bsX_{k+1})$, where the expectation is taken over the potential outcome distribution $(R_k, \bsX_{k+1})\sim p^\star_{a_k}(\cdot, \cdot|\bsh_k)$. By construction, we have $p^\star_{a_k}(\cdot, \cdot|\vec\bsh_k) \in \calP_{\vec\bsh_k, a_k}$.

% speci...fication of $\cal_F$ as well as the lower and the upper bounds is more complicated.
% This means that $\calP_{\vec\bsh_k, a_K}$ is the set of distributions so that the expected final stage reward 
% The exact specifications of the function $f$ as well as the lower and upper bound $\underline{Q}(\vec\bsh_k, a_k; f), \overline{Q}(\vec\bsh_k, a_k; f)$ will be 

Let $\{(\bsx_{k, i}, a_{k, i}, r_{k, i})_{k=1}^K \}_{i=1}^n$ denote the observed data.
Let $\bsh_{1, i}= \bsx_{1, i}$ be the baseline covariates for the $i$-th sample. 
For $2 \leq k \leq K$, let $\vec\bsh_{k, i}= (\vec a_{k-1, i}, \vec r_{k-1, i}, \vec\bsx_{k, i})$ be the historical information up to stage $k$ for the $i$-th sample. In addition, let $\{z_{k, i}\}_{i=1}^n$ denote the stage-$k$ IV data. 
% We assume that the weighting functions $\{\lambda(\vec\bsh_k, a_k)\}$ and hence the prior distributions $\{\scrP_{\vec\bsh_k, a_k}\}$ are specified.

%In addition, let $\{z_{k, i}\}_{i=1}^n \overset{i.i.d.}{\sim} \textnormal{Law}(Z_k)$ be the instrument variables at stage $k$.
%We assume that the weighting functions $\{\lambda(\vec\bsh_k, a_k)\}$, and hence the prior distributions $\{\scrP_{\vec\bsh_k, a_k}\}$, are specified.

\paragraph{Estimating the contrasts by $Q$-learning.}
Corollary \ref{cor:struc_of_iv-opt_dtr} shows that to estimate the IV-optimal DTR $\pi^\star$, it suffices to estimate the contrast functions $\{\contr^{\pi^\star}_{\vec\lambda, k}(\vec\bsh_k)\}$ defined in \eqref{eq:struc_of_iv-opt_dtr}, which in turn calls for estimating the weighted $Q$-functions $\{Q^{\pi^\star}_{\vec\lambda, k}(\vec\bsh_k, \pm 1)\}$.
We use a $Q$-learning approach to estimate the weighted $Q$-functions \citep{watkins1992q,schulte2014q}. 

For ease of exposition, we specify \eqref{eq:IV_contrained_set_general} via Manski-Pepper bounds \citep{manski1998monotone}. Generalization to other types of partial identification intervals is immediate. For Manski-Pepper bounds to hold, 
% we need the reward at each stage to be bounded, 
% We consider a concrete example of partial identification bounds and illustrate how to estimate $\widehat{Q}_{\vec{\lambda}, k, i}^{\widehat{\pi}^\star}(\vec\bsh_{k, i}, \pm 1)$.
% We consider a binary IV, a binary treatment, and a continuous but bounded reward at each stage. 
% Boundedness of the outcome 
% which is often a plausible assumption in many practical situations (\citealp{swanson2018partial}). 
we make the 
% Partial identification of the ATE (and similarly CATE) for a continuous but bounded outcome typically relies on the so-called 
\emph{mean exchangeability} assumption (\citealp{manski1990nonparametric, hernan2006instruments, swanson2018partial}) or a relaxed \emph{monotone instrumental variable} (MIV) assumption (\citealp{manski1998monotone}). The mean exchangeability assumption is automatically satisfied in a sequential randomized controlled trial with noncompliance.

At Stage $K$, define
% For instance, under the MIV assumption (\citealt{manski1998monotone}), we define: 
\begin{align*}
% \begin{split}
    & \psi_{K}(\vec{\bsh}_K, a_k, z_k; C)  = C \cdot \bbP(A_K^\obs = -a_k \mid Z_K = z_k, \vec{\bsH}_K^\obs = \vec\bsh_k) \nonumber \\
    \label{eq:mp_bound_psi}
    & ~~ + \mathbb{E}[R_K^\obs \mid \vec{\bsH}_K^\obs = \vec\bsh_k, Z_K = z_k, A_K^\obs = a_k]
    \cdot \bbP(A_K^\obs = a_k \mid \vec{\bsH}_K^\obs = \vec\bsh_k, Z_K = z).\numberthis
    % & \qquad + C \cdot \bbP(A_K^\obs = -a_k \mid Z_K = z_k, \vec{\bsH}_K^\obs = \vec\bsh_k)
% \end{split}
\end{align*}
% at stage $K$. 
Manski-Pepper bounds state that if $R_K \in [\underline{C}_K, \overline{C}_K]$ almost surely (with respect to $p^\star_{a_K}(\cdot|\vec\bsh_k)$), then its conditional mean potential outcome $\bbE_{R_K\sim p^\star_{a_K}(\cdot|\vec\bsh_k)}[R_K]$ is lower and upper bounded by
\begin{align*}
    & \bbP(Z_K = -1 |\vec{\bsH}_K^\obs= \vec\bsh_k) \cdot \psi_{K}(\vec{\bsh}_K, a_K, -1; \underline{C}_K) \\
    \label{eq:mp_lower_bound}
    & \qquad + \bbP(Z_K = +1 |\vec{\bsH}_K^\obs = \vec\bsh_k)\cdot 
    \big[\psi_{K}(\vec{\bsh}_K, a_K, -1; \underline{C}_K) \lor 
    \psi_{K}(\vec{\bsh}_K, a_K, +1; \underline{C}_K) \big]\numberthis\\
    \text{and}& \\
    & \bbP(Z_K = -1 |\vec{\bsH}_K^\obs = \vec\bsh_k) 
    \cdot 
    \big [\psi_{K}(\vec{\bsh}_K, a_K, -1; \overline{C}_K )
    \land 
    \psi_{K}(\vec{\bsh}_K, a_K, +1; \overline{C}_K )\big]\\
    % & \bbP(Z_K = -1 |\vec{\bsH}_K^\obs= \vec\bsh_k) \cdot \psi_{K}(\vec{\bsh}_K, a_K, -1; \underline{C}_K) \\
    \label{eq:mp_upper_bound}
    & \qquad + \bbP(Z_K = +1|\vec{\bsH}_K^\obs = \vec\bsh_k)\cdot \psi_{K}(R_K, a_K, +1 ; \vec{\bsh}_K; \overline{C}_K ),\numberthis
\end{align*}
where $\land$ and $\lor$ are shorthands for $\min$ and $\max$, and both bounds are tight. Therefore, as long as we take $\calF_K \supseteq \{\text{Id}\}$ when defining $\calP_{\vec\bsh_k, a_K}$ in \eqref{eq:IV_contrained_set_general}, 
% Theorem \ref{thm:struc_of_iv-opt_dtr} would hold. This means that 
the worst-case and best-case $Q$ functions at stage $K$ defined in \eqref{eq:worst Q stage K}--\eqref{eq:best Q stage K} can be set to \eqref{eq:mp_lower_bound} and \eqref{eq:mp_upper_bound}, respectively. Along with $\lambda_K(\vec\bsh_k, a_K)$ specifications, the construction of the stage-$K$ weighted $Q$-function $Q_{\vec\lambda, K}(\vec\bsh_k, a_K)$ is concluded.

Since both \eqref{eq:mp_lower_bound} and \eqref{eq:mp_upper_bound} are functionals of the observed data distribution, $Q_{\vec\lambda, K}(\vec\bsh_k, a_K)$ can be estimated from the data by fitting parametric models (e.g., linear models) or flexible machine learning models (e.g., regression trees and random forests), and then invoke the plug-in principle.

Given an estimate $\hat Q_{\vec\lambda, K}^\star$ of $Q_{\vec\lambda, K}$, 
we can estimate the contrast function $\contr_{\vec\lambda, K}(\vec\bsh_k)$ by $\hat \contr^\star_{\vec\lambda, K}(\vec\bsh_k) = \hat Q_{\vec\lambda, K}^\star(\vec\bsh_k, +1) - \hat Q_{\vec\lambda, K}^\star(\vec\bsh_k, -1)$. 
In view of Corollary \ref{cor:struc_of_iv-opt_dtr}, the IV-optimal DTR at stage $K$, $\pi^\star_K(\vec\bsh_k)$, can be estimated by $\hpiq_K(\vec\bsh_k) = \sgn(\hat\contr_{\vec\lambda, K}(\vec\bsh_k))$. 
Moreover, the weighted value function of $\pi^\star$ at stage $K$, $V^{\pi^\star}_{\vec\lambda, K}(\vec\bsh_k)$, can be estimated by $\hat V^{\star}_{\vec\lambda, K} = \hat Q^\star_{\vec \lambda, K}(\vec\bsh_k, \hpiq_K(\vec\bsh_k))$.

Now, assume for any stage $t\geq k+1$, the specification of the IV-constrained sets \eqref{eq:IV_contrained_set_general} has been made, 
and the weighted value function at stage $t$, $V^{\pi^\star}_{\vec\lambda, t}(\vec\bsh_t)$, has been estimated by $\hat V^\star_{\vec\lambda, t}(\vec\bsh_t)$. 
In addition, assume that $R_t \in [\underline{C}_t, \overline{C}_t]$ almost surely (with respect to $p^\star_{a_k}(\cdot, \cdot|\vec\bsh_k)$).
At stage $k$, define the pseudo-outcome $PO_k(r_k, \bsx_{k+1}|\vec\bsh_k, a_k) = r_k + V^{\pi^\star}_{\vec\lambda, k+1}(\vec\bsh_k, a_k, r_k, \bsx_{k+1})$.
By construction, we have $PO_k(R_k, \bsX_{k+1}|\vec\bsh_k, a_k) \in [\sum_{t\geq k}\underline{C}_t, \sum_{t\geq k} \overline{C}_t]$ almost surely for $(R_k, \bsX_{k+1}) \sim p^\star_{a_k}(\cdot, \cdot|\vec\bsh_k)$. 
Thus, we can apply Manski-Pepper bounds again to bound the expected value of $PO_k(R_k, \bsX_{k+1}|\vec\bsh_k, a_k)$, and obtain an estimate $\hat Q_{\vec\lambda, k}^\star$ of the weighted $Q$-function of $\pi^\star$ at stage $k$; see Supplementary Material \ref{subappend:alg_for_iv_opt_dtr} for detailed expressions. One nuance is that the pseudo-outcome $PO_k(r_k, \bsx_{k+1}|\vec\bsh_k, a_k)$ depends on the unknown quantity $V^{\pi^\star}_{\vec\lambda, k+1}$. At stage $k$, we have already obtained an estimate $\hat V^{\star}_{\vec\lambda, k+1}$ of $V^{\pi^\star}_{\vec\lambda, k+1}$. 
Thus, the pseudo-outcome can be estimated by $\hat{PO}_k(r_k, \bsx_{k+1}|\vec\bsh_k, a_k) = r_k + \hat V^{\star}_{\vec\lambda, k+1}(\vec\bsh_k, a_k ,r_k, \bsx_{k+1})$.

Finally, the contrast function $\contr_{\vec\lambda, k}^{\pi^\star}(\vec\bsh_k)$ is estimated by $\hat \contr^\star_{\vec\lambda, k}(\vec\bsh_k) = \hat Q_{\vec\lambda, k}^\star(\vec\bsh_k, +1) - \hat Q_{\vec\lambda, k}^\star(\vec\bsh_k, -1)$,
the IV-optimal DTR at stage $k$ is estimated by $\hpiq_k(\vec\bsh_k) = \sgn(\hat\contr_{\vec\lambda, k}(\vec\bsh_k))$,
and the weighted value function of $\pi^\star$ at stage $k$ is estimated by $\hat V^{\star}_{\vec\lambda, k} = \hat Q_{\vec \lambda, k}(\vec\bsh_k, \hpiq_k(\vec\bsh_k))$. In this way, we recursively estimate all contrast functions and obtain an estimated IV-optimal DTR $\hpiq$.

\begin{algorithm}[t!] 
\SetAlgoLined
\caption{Estimation of the IV-Optimal DTR} %\label{alg}
\vspace*{0.12 cm}
\KwIn{Trajectories and instrument variables $\{(\bsx_{k, i}, a_{k, i}, r_{k, i}, z_{k, i}): k \in[K], i\in[n]\}$, weighting functions $\{\lambda(\vec\bsh_k, a_k)\}$, policy class $\Pi$, forms of partial identification intervals.}%, a sequence of constants $\{\lambda_k(\vec\bsh_{k}, a_k), ~1 \leq k \leq K\}$, \text{and} IV identification assumptions;}
\KwOut{Estimated IV-optimal DTR $\hat\pi^\star$.}
\texttt{{\# Step \RN{1}: Q-learning}} \\
% Construct $\underline{Q}_K, \overline{Q}_K$ and estimate them by $\hat{\underline{Q}}_K, \hat{\overline{Q}}_K$ using $\{\vec\bsh_{K, i}\}_{i=1}^n$\;
Obtain an estimate $\hat Q^\star_{\vec\lambda, K}$ of $Q_{\vec\lambda, K}$ using $(\vec\bsh_{K, i}, r_{K, i}, z_{K, i})_{i=1}^n$\;
Estimate $\contr_{\vec\lambda, K}$ by $\hat \contr^\star_{\vec\lambda, K}(\vec\bsh_K) = \hat Q_{\vec\lambda, K}^\star(\vec\bsh_K, +1) - \hat Q_{\vec\lambda, K}^\star(\vec\bsh_K, -1)$\;
Set $\hpiq_K(\vec\bsh_K) = \sgn(\hat \contr^\star_{\vec\lambda, K}(\vec\bsh_K))$ and $\hat V^\star_{\vec\lambda, k}(\vec\bsh_K) = \hat Q_{\vec\lambda, K}^\star(\vec\bsh_K, \hpiq_K(\vec\bsh_K))$\;
\For{$k = K-1, \hdots, 1$}{
  For each $i\in[n]$, construct $\hat{PO}_{k, i} = r_{k, i} + \hat V^{\star}_{\vec\lambda, k+1}(\vec\bsh_{k+1, i})$\;
  Obtain an estimate $\hat Q^\star_{\vec\lambda, k}$ of $Q_{\vec\lambda, k}^{\pi^\star}$ using using $(\vec\bsh_{k, i}, \hat{PO}_{k, i}, z_{k, i})_{i=1}^n$\;
  % Construct $\underline Q_{\vec\lambda, k}^{\pi^\star}, \overline Q_{\vec\lambda, k}^{\pi^\star}$ and estimate them by $\hat{\underline Q}_{\vec\lambda, k}^{\star}, \hat{\overline Q}_{\vec\lambda, k}^{\star}$
  Estimate $\contr_{\vec\lambda, k}^{\pi^\star}$ by $\hat \contr^\star_{\vec\lambda, k}(\vec\bsh_k) = \hat Q_{\vec\lambda, k}^\star(\vec\bsh_k, +1) - \hat Q_{\vec\lambda, K}^\star(\vec\bsh_k, -1)$\;
  Set $\hpiq_k(\vec\bsh_k) = \sgn(\hat \contr^\star_{\vec\lambda, k}(\vec\bsh_k))$ and $\hat V^\star_{\vec\lambda, k}(\vec\bsh_k) = \hat Q_{\vec\lambda, k}^\star(\vec\bsh_k, \hpiq_k(\vec\bsh_k))$\;
  
}
\texttt{{\# Step \RN{2}: Weighted classification}} \\
\For{$k = K, \hdots, 1$}{
  Solve the weighted classification problem in \eqref{eq:wted_classification_dtr} to obtain $\hat\pi^\star_k$\;
}
\Return{$\hat \pi^\star$}
\label{alg:IV-optimal_DTR}
\end{algorithm}

\paragraph{Obtaining parsimonious policies via weighted classification.}
In many applications, it is desirable to impose additional constrains on the estimated DTR. For example, one may require the DTR to be parsimonious and thus more interpretable. Such constraints are usually encoded by a restricted function class $\Pi = \Pi_1 \times \cdots \times \Pi_K$, such that any $\pi\notin \Pi$ will not be considered. 

We next introduce a strategy that ``projects'' $\hpiq$, the DTR obtained by $Q$-learning and is thus not necessarily inside $\Pi$, onto the function class $\Pi$. Recall the formula for the IV-optimal DTR $\pi^\star$ given in \eqref{eq:struc_of_iv-opt_dtr}. With some algebra, one readily checks that $\pi^\star$ is a solution to the following weighted classification problem:
\begin{equation}
  \label{eq:wted_classification_dtr}
  \pi^\star_k(\vec\bsh_k) \in \argmin_{a_k\in\{\pm 1\}} |\contr^{\pi^\star}_{\vec\lambda, k}(\vec\bsh_{k})| \cdot \indc{\sgn(\contr^{\pi^\star}_{\vec\lambda, k}(\vec\bsh_{k})) \neq a_k}.
\end{equation}

The above representation %\eqref{eq:wted_classification_dtr}
% Theorem \ref{thm:struc_of_iv-opt_dtr} 
illuminates a rich class of strategies to search for a DTR $\pi$ within a desired, possibly parsimonious function class $\Pi$, via sequentially solving the following weighted classification problem:
\begin{equation*}
  \hat \pi^\star_k(\vec\bsh_k) \in \argmin_{\pi_k \in \Pi_k} \frac{1}{n}\sum_{i=1}^n |\hat\contr^\star_{\vec\lambda, k}(\vec\bsh_{k, i})| \cdot \indc{\sgn(\hat\contr^\star_{\vec\lambda, k}(\vec\bsh_{k, i})) \neq \pi_k(\vec\bsh_{k, i})},
\end{equation*}
where we recall that $\hat\contr^\star_{\vec\lambda, k}$ is an estimate of $\hat\contr^{\pi^\star}_{\vec\lambda, k}$ obtained via $Q$-learning, and $\{\vec\bsh_{k, i}\}_{i=1}^n$ is the historical information at stage $k$ in our dataset. This strategy bears similarities with the strategy proposed in \citet{zhao2015new}. The estimation procedure is summarized in Algorithm \ref{alg:IV-optimal_DTR}.

\section{Improving Dynamic Treatment Regimes with an Instrumental Variable}\label{sec:dtr_improve}

In this section, we show how the IV-optimality framework developed in Section \ref{sec: extension to DTR} can be modified to tackle the policy improvement problem under the multiple-stage setup.
Let $\pibase$ be a baseline DTR to be improved. 
Some most important baseline DTRs include the standard-of-care DTR (i.e., $\pibase_k = -1$ for any $k$) and the SRA-optimal DTR defined in Section \ref{subsec: DTR setup}. 
The goal of policy improvement, as discussed in Section \ref{sec: ITR IV improve}, is to obtain a DTR $\piimp$ so that $\pibase$ is no worse and potentially better than $\pibase$.
% In this section, we explore leveraging additional IV information to construct an improved regime $\piimp$ such that it can do \emph{no worse} and potentially improve upon $\pibase$. 

To achieve this goal, we leverage additional information encoded in the collection of IV-constrained sets $\{\calP_{\vec\bsh_k, a_k}\}$. 
We are to introduce the notion of \emph{IV-improved DTR}, generalizing the notion of IV-improved ITR in Definitions \ref{def:risk_based_improvement} and \ref{def:value_based_improvement}.
To start with, we define the following {relative $Q$-functions} and the corresponding {relative value functions} of a DTR $\pi$ with respect to the baseline DTR $\pibase$.

\begin{definition}[Relative $Q$-function and value function]
The {relative $Q$-function} of a DTR $\pi$ with respect to the baseline DTR $\pibase$ at stage $K$ is
\begin{align*}
%   \label{eq:rel_Q_func_K}
  &\Qfunc^{\pi\rel\pibase}_K (\vec\bsh_K, a_K)
  = \inf_{ \substack{
  p_{a_K}\in\calP_{\vec\bsh_K, a_K}
  \\ p_{a_K'}\in\calP_{\vec\bsh_K, a_K'} }
  }
  \bigg\{\bbE_{\substack{R_K\sim p_{a_K} \\ R_K'\sim p_{a_K'}}} \left[R_K - R_K'\right] \bigg\}
  % ,
  % \Vfunc^{\pi\rel\pibase}_K(\vec\bsh_K) = \Qfunc^{\pi\rel\pibase}_K \big(\vec\bsh_K, \pi_K(\vec\bsh_K)\big)
  ,
\end{align*}
where $a_K' = \pibase_K(\vec\bsh_k)$ is the action taken according to the baseline DTR. The corresponding relative value function is defined as $\Vfunc^{\pi\rel\pibase}_K(\vec\bsh_K) = \Qfunc^{\pi\rel\pibase}_K \big(\vec\bsh_K, \pi_K(\vec\bsh_K)\big)$. 
Recursively, at stage $k = K - 1, K - 2, \cdots, 1$, the relative $Q$-function of $\pi$ with respect to $\pibase$ is defined as
\begin{align*}
%   \label{eq:rel_Q_func_k}
  Q^{\pi\rel\pibase}_k(\vec\bsh_k, a_k)
  & = 
  \inf_{ \substack{
  p_{a_k}\in\calP_{\vec\bsh_k, a_k}
  \\ p_{a_k'}\in\calP_{\vec\bsh_k, a_k'} }
  }
  \bigg\{
  \bbE_{\substack{(R_k, \bsX_{k+1})\sim p_{a_k} \\ (R_k', \bsX'_{k+1})\sim p_{a_k'} }} 
  \big[R_k - R_k' + \Vfunc_{k+1}^{\pi\rel\pibase}(\vec\bsH_{k+1})\big]
  \bigg\},
\end{align*}
where $a_k' = \pibase_k(\vec\bsh_k)$ and $\vec\bsH_{k+1} = (\vec\bsh_k, a_k, R_k, \bsX_{k+1})$.
The corresponding relative value function is $\Vfunc^{\pi\rel\pibase}_k(\vec\bsh_k) = \Qfunc^{\pi\rel\pibase}_{k}\big(\vec\bsh_k, \pi_k(\vec\bsh_k)\big)$.
\end{definition}

The relative value function $V^{\pi\rel\pibase}_k$ at stage $k$ captures the cumulative worst-case (subject to the IV constraints) excess value of following the DTR $\pi$ over $\pibase$ from stage $k$ and onwards.
Note that the relative value function $V^{\pi\rel\pibase}_K$ at stage $K$ is analogous to the objective function in \eqref{eq:PIO_value}.
Maximizing the relative value functions would then deliver a DTR that follows the baseline regime $\pibase$ unless there is compelling evidence not to, similar to the IV-improved ITR whose explicit form is given in Propositions \ref{prop:itr_policy_imp_bayes_policy_classification} and \ref{prop:itr_policy_imp_bayes_policy_value}.

We are now ready to define the estimand of interest, generalizing Definitions \ref{def:risk_based_improvement} and \ref{def:value_based_improvement}. 

\begin{definition}[IV-improved DTR]\label{def:DTR imp operator}
Let $\pibase$ be a baseline DTR, $\Pi$ a policy class, and $\{\calP_{\vec\bsh_k, a_k}\}$ a collection of IV-constrained sets. 
% A DTR improvement operator, denoted as $\mathcal{IP}^{\textsf{D}}$, refers to an operator that maps $\pibase$ to a regime $\piimp$ such that the relative value of $\pi$ with respect to $\pibase$ at stage $1$ is maximized, i.e.,
A DTR $\piimp$ is said to be IV-improved if it satisfies
\begin{equation}
\label{eq:dtr_imp_optimization}
 \piimp \in \argmax_{\pi\in \Pi} V^{\pi\rel\pibase}_1(\bsx_1)
\end{equation}
for every fixed $\bsh_1 = \bsx_1 \in \calX$.
\end{definition}

Analogous to Theorem \ref{thm:iv_opt_dtr_dynamic_prog}, the following result solves the optimization problem \eqref{eq:dtr_imp_optimization} using a dynamic programming approach when $\Pi$ is the collection of all DTRs.
\begin{theorem}[Dynamic Programming for the IV-improved DTR]
  \label{thm:iv_imp_dtr_dynamic_prog}
Let $\piimp$ be recursively defined as follows:
\begin{align*}
%   \label{eq:iv_imp_dtr_dynamic_prog_expression}
  \piimp_K(\vec\bsh_K) & = \argmax_{a_K\in\{\pm 1\}} \Qfunc_{K}^{\piimp\rel\pibase}(\vec\bsh_K, a_K), %\\
  ~~\piimp_k(\vec\bsh_k) = \argmax_{a_k\in\{\pm 1\}} \Qfunc^{\piimp\rel\pibase}_{k}(\vec\bsh_k, a_k), ~k = K-1, \hdots, 1.
\end{align*}
Then the DTR $\piimp$ satisfies
\begin{equation}
  \label{eq:iv_imp_dtr_k}
  V^{\piimp\rel\pibase}_{k}(\vec\bsh_k) = \max_{\pi} V^{\pi\rel\pibase}_{k}(\vec\bsh_k)
\end{equation}
for any stage $k$ and any configuration of the historical information $\vec\bsh_k$, where the maximization is taken over all DTRs.
\end{theorem}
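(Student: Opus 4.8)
## Proof proposal for Theorem 4.3 (Dynamic Programming for the IV-improved DTR)

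\textbf{Overall strategy.} The plan is to mimic the proof of Theorem \ref{thm:iv_opt_dtr_dynamic_prog}, proceeding by backward induction on the stage index $k$, and showing that the recursively-defined $\piimp$ satisfies the ``pointwise optimality'' property \eqref{eq:iv_imp_dtr_k} at every stage and every history configuration. The key conceptual point that makes the argument go through is that the relative $Q$-function $\Qfunc^{\pi\rel\pibase}_k(\vec\bsh_k, a_k)$ depends on $\pi$ only through the future sub-policy $\pi_{(k+1):K}$ (and the relative value function $\Vfunc^{\pi\rel\pibase}_k$ only through $\pi_{k:K}$), exactly as in the non-relative case; this is immediate from the definition because $\pibase$ is fixed and $\Vfunc^{\pi\rel\pibase}_{k+1}$ inherits the dependence structure recursively. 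Hence $\piimp$ as defined is well-posed (each $\argmax$ only references already-constructed later-stage components).

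\textbf{Base case ($k = K$).} Here $\Vfunc^{\pi\rel\pibase}_K(\vec\bsh_K) = \Qfunc^{\pi\rel\pibase}_K(\vec\bsh_K, \pi_K(\vec\bsh_K))$, and since $\Qfunc^{\pi\rel\pibase}_K$ has no dependence on $\pi$ at all (it is a fixed function of $(\vec\bsh_K, a_K)$), choosing $\piimp_K(\vec\bsh_K) = \argmax_{a_K} \Qfunc^{\piimp\rel\pibase}_K(\vec\bsh_K, a_K)$ immediately gives $\Vfunc^{\piimp\rel\pibase}_K(\vec\bsh_K) = \max_{a_K}\Qfunc^{\pi\rel\pibase}_K(\vec\bsh_K,a_K) = \max_{\pi}\Vfunc^{\pi\rel\pibase}_K(\vec\bsh_K)$, which is \eqref{eq:iv_imp_dtr_k} at $k=K$.

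\textbf{Inductive step.} Assume \eqref{eq:iv_imp_dtr_k} holds at stage $k+1$, i.e. $\Vfunc^{\piimp\rel\pibase}_{k+1}(\vec\bsh_{k+1}) = \max_{\pi}\Vfunc^{\pi\rel\pibase}_{k+1}(\vec\bsh_{k+1})$ for every $\vec\bsh_{k+1}$. Fix $\vec\bsh_k$ and an arbitrary DTR $\pi$; I want $\Vfunc^{\piimp\rel\pibase}_k(\vec\bsh_k) \ge \Vfunc^{\pi\rel\pibase}_k(\vec\bsh_k)$. First, for any action $a_k$, because $\Vfunc^{\pi\rel\pibase}_{k+1} \le \Vfunc^{\piimp\rel\pibase}_{k+1}$ pointwise by the inductive hypothesis and the map $p_{a_k}\mapsto \bbE[\,\cdot\,]$ is monotone, one shows the key monotonicity inequality
\begin{equation*}
  \Qfunc^{\pi\rel\pibase}_k(\vec\bsh_k, a_k) \le \Qfunc^{\piimp\rel\pibase}_k(\vec\bsh_k, a_k)
  \quad\text{for every }a_k\in\{\pm1\},
\end{equation*}
where the right-hand side uses the fact that $\Qfunc^{\piimp\rel\pibase}_k$ only sees $\piimp_{(k+1):K}$. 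The one subtlety is that the infimum over the pair $(p_{a_k}, p_{a_k'})$ is taken jointly, and the $\Vfunc_{k+1}$ term appears only inside the first expectation $\bbE_{(R_k,\bsX_{k+1})\sim p_{a_k}}$, so raising $\Vfunc^{\pi\rel\pibase}_{k+1}$ to $\Vfunc^{\piimp\rel\pibase}_{k+1}$ raises the integrand pointwise and hence raises the infimum — this needs to be spelled out carefully but is routine. Then evaluate at $a_k = \pi_k(\vec\bsh_k)$ and take the max over $a_k$:
\begin{equation*}
  \Vfunc^{\pi\rel\pibase}_k(\vec\bsh_k) = \Qfunc^{\pi\rel\pibase}_k(\vec\bsh_k, \pi_k(\vec\bsh_k)) \le \Qfunc^{\piimp\rel\pibase}_k(\vec\bsh_k, \pi_k(\vec\bsh_k)) \le \max_{a_k}\Qfunc^{\piimp\rel\pibase}_k(\vec\bsh_k, a_k) = \Qfunc^{\piimp\rel\pibase}_k(\vec\bsh_k, \piimp_k(\vec\bsh_k)) = \Vfunc^{\piimp\rel\pibase}_k(\vec\bsh_k).
\end{equation*}
Since $\pi$ was arbitrary, this gives \eqref{eq:iv_imp_dtr_k} at stage $k$, completing the induction.

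\textbf{Main obstacle.} The delicate step is the monotonicity inequality for the relative $Q$-function: because the relative $Q$-function is defined through a \emph{joint} infimum over two distributions coupling the ``$\pi$-arm'' and the ``$\pibase$-arm,'' one must verify that replacing the continuation value $\Vfunc^{\pi\rel\pibase}_{k+1}$ by the larger $\Vfunc^{\piimp\rel\pibase}_{k+1}$ inside only the first expectation indeed does not decrease the infimum, and that measurability/integrability is preserved (the continuation values are bounded, by the $[\sum_t \underline C_t, \sum_t \overline C_t]$ argument sketched in Section \ref{subsec: estimate IV-optimal DTR}, so dominated convergence / finiteness is not an issue). A secondary point worth a sentence is verifying that $\piimp$ is well-defined, i.e. that the backward recursion never references a not-yet-constructed object — this follows from the dependence-structure remark above. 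Everything else is a verbatim adaptation of the dynamic-programming argument behind Theorem \ref{thm:iv_opt_dtr_dynamic_prog}.
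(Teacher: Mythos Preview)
Your proposal is correct and follows essentially the same backward-induction argument as the paper. The only cosmetic difference is direction: the paper starts from $\Vfunc^{\piimp\rel\pibase}_k$ and chains down to $\max_\pi \Vfunc^{\pi\rel\pibase}_k$ (invoking what it calls ``Jensen's inequality followed by max-min inequality'' at the key step), whereas you fix an arbitrary $\pi$ and chain upward to $\Vfunc^{\piimp\rel\pibase}_k$ via the monotonicity inequality $\Qfunc^{\pi\rel\pibase}_k \le \Qfunc^{\piimp\rel\pibase}_k$ --- but the content of that step (pointwise domination of the continuation value survives the joint infimum) is identical in both versions.
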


The following corollary is similar to Corollary \ref{cor:struc_of_iv-opt_dtr}, and gives an alternative characterization of the IV-improved DTR.

\begin{corollary}[Alternative characterization of the IV-improved DTR]
\label{cor:struc_of_iv-imp_dtr}
Recursively define $\piimp$ as
\begin{equation}
  \label{eq:struc_of_iv-imp_dtr}
  \piimp_k(\vec\bsh_k) = \pibase_k(\vec\bsh_k)
   \cdot 
   \sgn\big\{\contr^{\piimp\rel\pibase}_k(\vec\bsh_k) \big\}
   % \sgn\big(L_{\lambda, k}^{\pi^\star}(\vec\bsH_k) +U_{\lambda, k}^{\pi^\star}(\vec\bsH_k) \big)
   , ~~ k = K, \hdots, 1,
\end{equation}  
where for $a_{k}' = \pibase_k(\vec\bsh_k), \vec\bsH_{k+1} = (\vec\bsh_k, a_k, R_k, \bsX_{k+1})$ and $\vec\bsH_{k+1}' = (\vec\bsh_k, a_k', R_k', \bsX_{k+1}')$, the quantities
\begin{align}
  \contr^{\piimp\rel\pibase}_K(\vec\bsh_K) & =  
  \sup_{p_{a_K'}\in \calP_{\vec\bsh_K, a_K'}} 
  \bbE_{R_K'\sim p_{a_K'}}
  [R_K'] 
  - 
  \inf_{p_{-a_K'}\in \calP_{\vec\bsh_K, -a_K'}} 
  \bbE_{R_K\sim p_{-a_K'} }
  [ R_K  ], ~\text{and} 
  \nonumber \\
  \contr^{\piimp\rel\pibase}_k(\vec\bsh_k) & = 
  \sup_{p_{a_k'}\in \calP_{\vec\bsh_k, a_k'}} 
  \bbE_{(R_k', \bsX_{k+1}')\sim p_{a_k'}}
  [R_k'] 
  +
  \inf_{p_{a_k'}\in \calP_{\vec\bsh_k, a_k'}}
  \bbE_{(R_k', \bsX_{k+1}') \sim p_{a_k'}}[V^{\piimp\rel\pibase}_k(\vec\bsH_{k+1}')]
  \nonumber \\
  \label{eq:iv-imp_cate}
  & \qquad - 
  \inf_{p_{-a_k'}\in \calP_{\vec\bsh_k, -a_k'}} 
  \bbE_{(R_k, \bsX_{k+1})\sim p_{-a_k'} }
  [ R_k + V^{\piimp\rel\pibase}_{k+1}(\vec\bsH_{k+1}) ] 
  , ~ 1 \leq k \leq K-1,
\end{align}
depend on $\piimp$ only through $\piimp_{(k+1):K}$ and are therefore well-defined.
Then $\piimp$ satisfies \eqref{eq:iv_imp_dtr_k} for any stage $k$ and any configuration of the historical information $\vec\bsh_k$.
In particular, 
$\piimp$ is an IV-improved DTR in the sense of Definition \ref{def:DTR imp operator} when $\Pi$ is the set of all DTRs.
\end{corollary}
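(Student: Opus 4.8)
\noindent\emph{Proof proposal.}
By Theorem~\ref{thm:iv_imp_dtr_dynamic_prog}, the DTR obtained by the backward recursion $\pi_k(\vec\bsh_k) = \argmax_{a_k\in\{\pm1\}}\Qfunc^{\pi\rel\pibase}_k(\vec\bsh_k, a_k)$ already satisfies \eqref{eq:iv_imp_dtr_k}. Hence the plan is to show that the policy $\piimp$ built in \eqref{eq:struc_of_iv-imp_dtr} coincides with this dynamic-programming policy stage by stage and at every history $\vec\bsh_k$; the assertion that $\piimp$ is IV-improved when $\Pi$ is the set of all DTRs then follows by taking $k=1$ in \eqref{eq:iv_imp_dtr_k} and comparing with \eqref{eq:dtr_imp_optimization}. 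Throughout fix a stage $k$ and history $\vec\bsh_k$ and write $b = \pibase_k(\vec\bsh_k)$, so the two candidate actions are $b$ and $-b$.

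The core of the argument is to evaluate the two relative $Q$-values $\Qfunc^{\piimp\rel\pibase}_k(\vec\bsh_k, b)$ and $\Qfunc^{\piimp\rel\pibase}_k(\vec\bsh_k, -b)$ in closed form. When the policy plays the baseline action $b$, the two rewards compared in the definition of the relative $Q$-function are the \emph{same} potential outcome $R_k(\vec a_{k-1}, b)$ (and the induced next-stage covariates likewise coincide), so the stage-$k$ increment $R_k-R_k'$ is identically zero and $\Qfunc^{\piimp\rel\pibase}_k(\vec\bsh_k, b)$ collapses to the worst-case continuation value $\inf_{p_b\in\calP_{\vec\bsh_k,b}}\bbE_{(R_k,\bsX_{k+1})\sim p_b}[\Vfunc^{\piimp\rel\pibase}_{k+1}(\vec\bsh_k, b, R_k,\bsX_{k+1})]$, which is $0$ at stage $K$. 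When the policy switches to $-b$, the two potential outcomes are distinct and their laws are constrained independently, so linearity of expectation lets the joint infimum factorize into $\inf_{p_{-b}}\bbE_{p_{-b}}[R_k + \Vfunc^{\piimp\rel\pibase}_{k+1}(\vec\bsH_{k+1})] - \sup_{p_b}\bbE_{p_b}[R_k']$, with $\vec\bsH_{k+1}=(\vec\bsh_k,-b,R_k,\bsX_{k+1})$; at stage $K$ this is $\underline{\Qfunc}_K(\vec\bsh_K,-b)-\overline{\Qfunc}_K(\vec\bsh_K,b)$. The same linearity remark shows both relative $Q$-values depend on $\piimp$ only through $\piimp_{(k+1):K}$, so the recursion is well posed. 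Subtracting the two expressions and matching term by term against \eqref{eq:iv-imp_cate} (the $\sup_{p_b}\bbE[R_k']$ contributions cancel, and one must track which of the two histories, $\vec\bsH_{k+1}$ or $\vec\bsH_{k+1}'$, each continuation value is evaluated at) yields the key identity
\[
  \contr^{\piimp\rel\pibase}_k(\vec\bsh_k) \;=\; \Qfunc^{\piimp\rel\pibase}_k(\vec\bsh_k, b)\;-\;\Qfunc^{\piimp\rel\pibase}_k(\vec\bsh_k, -b),
\]
valid at every stage (at stage $K$ the right side equals $0-[\underline{\Qfunc}_K(\vec\bsh_K,-b)-\overline{\Qfunc}_K(\vec\bsh_K,b)]=\overline{\Qfunc}_K(\vec\bsh_K,b)-\underline{\Qfunc}_K(\vec\bsh_K,-b)$, matching \eqref{eq:iv-imp_cate}).

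Given the key identity the conclusion drops out: $\sgn\{\contr^{\piimp\rel\pibase}_k(\vec\bsh_k)\}=+1$ precisely when $\Qfunc^{\piimp\rel\pibase}_k(\vec\bsh_k,b)\ge\Qfunc^{\piimp\rel\pibase}_k(\vec\bsh_k,-b)$, in which case $b\cdot\sgn\{\contr^{\piimp\rel\pibase}_k(\vec\bsh_k)\}=b\in\argmax_{a_k\in\{\pm1\}}\Qfunc^{\piimp\rel\pibase}_k(\vec\bsh_k,a_k)$; symmetrically, a negative contrast gives $-b$, again the maximizer; and a zero contrast means both actions maximize, consistently with breaking the $\sgn$ tie toward the baseline. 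Thus $\piimp$ from \eqref{eq:struc_of_iv-imp_dtr} agrees with the dynamic-programming policy of Theorem~\ref{thm:iv_imp_dtr_dynamic_prog} everywhere, inherits \eqref{eq:iv_imp_dtr_k}, and at $k=1$ solves \eqref{eq:dtr_imp_optimization} when $\Pi$ is all DTRs.

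The main obstacle I anticipate is the first half of the middle paragraph: making rigorous the claim that $R_k-R_k'$ vanishes identically whenever the policy plays the baseline action, i.e., that the infimum defining $\Qfunc^{\pi\rel\pibase}_k(\vec\bsh_k,b)$ effectively ranges over a \emph{single} potential-outcome law in $\calP_{\vec\bsh_k,b}$ rather than over two independently chosen ones --- the latter reading would inject a spurious nonpositive term of the form $\inf-\sup$ over $\calP_{\vec\bsh_k,b}$ and destroy the key identity. Once the conventions surrounding the relative $Q$- and value functions are pinned down, the remaining steps are routine manipulations using linearity of expectation together with careful bookkeeping of the histories $\vec\bsH_{k+1}$ versus $\vec\bsH_{k+1}'$ at which the relative value functions are evaluated.
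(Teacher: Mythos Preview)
Your approach is correct and matches the paper's: its terse proof of the corollary invokes Theorem~\ref{thm:iv_imp_dtr_dynamic_prog} and then passes directly from $\argmax_{a_k} Q^{\piimp\rel\pibase}_k$ to the indicator form, while the explicit computation of $Q^{\piimp\rel\pibase}_k(\vec\bsh_k,\pm b)$ and the identity $\contr^{\piimp\rel\pibase}_k(\vec\bsh_k)=Q^{\piimp\rel\pibase}_k(\vec\bsh_k,b)-Q^{\piimp\rel\pibase}_k(\vec\bsh_k,-b)$ you sketch appears verbatim in the paper's proof of Lemma~\ref{lemma:risk_and_regret_improvement}. Your anticipated obstacle is a non-issue under the paper's convention: the infimum is indexed by the symbol $p_{a_k}$ with the \emph{action} as subscript, so when $a_k=a_k'=b$ there is a single variable $p_b$ (hence one distribution), $\bbE[R_k-R_k']=0$ automatically, and the spurious $\inf-\sup$ term never arises (minor slip: the lone $\sup_{p_b}\bbE[R_k']$ term does not ``cancel'' but survives as the first summand of \eqref{eq:iv-imp_cate}).
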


%While the expressions in \eqref{eq:iv_imp_dtr_k} and \eqref{eq:iv-imp_cate} look complicated, they 

Expressions in \eqref{eq:struc_of_iv-imp_dtr} and \eqref{eq:iv-imp_cate} admit a rather intuitive explanation. The contrast function $\contr^{\piimp\rel\pibase}_k$ measures the cumulative worst-case gain of following $\piimp$ over $\pibase$, starting from stage $k$, and one would only flip the decision made by $\pibase_k$ if this worse-case gain is positive. In fact, with some algebra, one readily checks that $\piimp_K$ can be expressed in a similar form as \eqref{eq: bayes rule} and \eqref{eq: bayes rule II}. 
% The expressions for $\piimp_k$ is more complicated when $k\leq K-1$.

Given Corollary \ref{cor:struc_of_iv-imp_dtr} and the strategy in Section \ref{subsec: estimate IV-optimal DTR}, Algorithm \ref{alg:IV-optimal_DTR} can be modified \emph{mutatis mutandis} to yield an algorithm for estimating an IV-improved DTR. For brevity, we refer the readers to Supplementary Material \ref{subappend:alg_for_iv_imp_dtr} for details.

% In practice, researchers may recast the problem of searching for $\piimp_k$ in a pre-specified function class $\Pi_k$ into a classification problem again; we defer details to the Supplementary Material XX. 

%!TEX root = main.tex

\section{Theoretical Properties}
\label{sec: theory}
In this section, we prove non-asymptotic bounds on the deviance between the estimated IV-optimal as well as the IV-improved DTRs to their population counterparts. To do this, we need several standard assumptions, which we detail below.

First of all, we need a proper control on the complexity of the policy class $\Pi = \Pi_1\times \cdots \Pi_K$.
% To enable theoretical analysis, we need to exert some control over the complexity of the function class $\Pi_k$ at each stage under consideration. 
Note that any $\pi_k \in \Pi_k$ is a Boolean functions that sends a specific configuration of the historical information $\vec\bsh_k$ to a binary decision $\pi_k(\vec\bsh_k)$. Let $\calH_k$ be the collection of all possible $\vec\bsh_k$s.
A canonical measure of complexity of Boolean functions is the Vapnik–Chervonenkis (VC) dimension \citep{vapnik1968uniform}. 
The VC-dimension of $\Pi_k$, denoted as $\vc(\Pi_k)$, is the largest positive integer $d$ such that there exists a set of $d$ points $\{\vec\bsh_{k}^{(1)}, \cdots, \vec\bsh_{k}^{(d)}\}\subseteq \calH_k$ \emph{shattered} by $\Pi_k$, in the sense that for any binary vector $\bsv \in \{\pm 1\}^d$, there exists $\pi_{k}^{(\bsv)} \in \Pi_k$ such that $\pi_{k}^{(\bsv)}(\vec\bsh_k^{(j)}) = v_j$ for $j\in[d]$. 
If no such $d$ exists, then $\vc(\Pi_k) = \infty$.

In practice, a DTR is most useful when it is parsimonious. For this reason, the class of linear decision rules and the class of decision trees with a fixed depth are popular in various application fields \citep{tao2018tree,speth2020assessment}.
% are the most useful for our purpose. 
It is well-known that when the domain is a subset of $\bbR^d$, the VC-dimension of the class of linear decision rules is at most $d+1$ (see, e.g., Example 4.21 of \citealt{wainwright2019high}) and the VC dimension of the class of decision tree with $L$ leaves is $\calO(L \log (Ld))$ \citep{leboeuf2020decision}.
% , both of which are finite and to which our analysis will apply.

Note that a DTR $\pi_{1:(k-1)}$, along with the distributions $\{p_{a_t}(\cdot, \cdot|\vec\bsh_t): a_t\in\{\pm 1\}, \vec\bsh_t \in \calH_t\}_{t=1}^{k-1}$, induces a probability distribution on $\calH_k$, the set of stage-$k$ historical information. Let $\scrH_k$ denote the set of all such probability distributions when we vary $\pi_{1:(k-1)} \in \Pi_{1:(k-1)}$ and $p_{a_t}(\cdot, \cdot|\vec\bsh_t) \in \calP_{\vec\bsh_t, a_t}$ for $t\in[k-1]$. An element in $\scrH_k$ will be denoted as $q_k$, and the law of $\vec\bsH_k^\obs$ will be denoted as $q_k^\obs$. The next assumption concerns how much information on $q_k^\obs$ can be generalized to the information on a specific $q_k \in \scrH_k$.

\begin{assumption}[Bounded concentration coefficients]
\label{assump:bdd_concentration_coef}
Suppose there exist positive constants $\{\sfc_k\}_{k=1}^K$ such that 
$$
	\sup_{q_k\in \scrH_{k}} \sup_{\vec\bsh_k \in \calH_k} \frac{d q_k}{d q^\obs_k}(\vec\bsh_k) \leq \sfc_k.
$$
\end{assumption}

Assumption \ref{assump:bdd_concentration_coef} is often made in the 
% Assumption \ref{assump:bdd_concentration_coef} is a common assumption in the 
reinforcement learning literature (see, e.g., \citealt{munos2003error,szepesvari2005finite,chen2019information}) and is closely related to the ``overlap'' assumption in causal inference.
% and is referred to as the \emph{bounded concentration coefficients} assumption. 
A sufficient condition for the above assumption to hold is that the probability of seeing any historical information (according to the observed data distribution or any distribution in $\scrH_k$) is strictly bounded away from $0$ and $1$.

Recall that our algorithms for estimating IV-optimal and IV-improved DTRs is a two-step procedure, where in Step \RN{1}, the contrast functions are estimated via $Q$-learning, and in Step \RN{2}, a parsimonious DTR is obtained via weighted classification.
As mentioned in Section \ref{subsec: estimate IV-optimal DTR}, in the first stage, there are a lot of flexibilities in choosing the specific models and algorithms for estimating the contrast functions. As the result, a fine-grained understanding of this stage would require a case-by-case analysis. In order to not over-complicate the discussion, we make the following assumption, which asserts the existence of a ``contrast estimation oracle''.

\begin{assumption}[Existence of the contrast estimation oracle]
\label{assump:contrast_estimation_oracle}
	Suppose in $Q$-learning, we use an algorithm that takes $n$ i.i.d.~samples and outputs $\{\hat \contr^{\star}_{\vec\lambda, k}\}_{k=1}^K$ and $\{\hat \contr^{\uparrow}_{k}\}_{k=1}^K$ that satisfy
	\begin{align}
		\label{eq:contr_est_err_policy_estimation}
		& \bbP\bigg(\bbE |\hat \contr^\star_{\vec\lambda, k}(\vec\bsH_k^\obs) - \contr^{\pi^\star}_{\vec\lambda, k}(\vec\bsH_{k}^\obs)| \geq \sfC_{k, \delta} \cdot  n^{-\alpha_{k}} \bigg) \leq \delta \\
		\label{eq:contr_est_err_policy_improvement}
		& \bbP\bigg(\bbE |\hat \contr^\uparrow_{k}(\vec\bsH_k^\obs) - \contr^{\piimp\rel\pibase}_{k}(\vec\bsH_{k}^\obs)| \geq \sfC_{k, \delta} \cdot  n^{-\alpha_{k}} \bigg) \leq \delta 
	\end{align}	
	for any $k \in[K], \delta > 0$.
	In the above display, the probability is taken over the randomness in the $n$ samples and the expectation is taken over the randomness in a fresh trajectory $\vec\bsH_k^\obs \sim q_k^\obs$.
\end{assumption}

We will use \eqref{eq:contr_est_err_policy_estimation} when we analyze the estimated IV-optimal DTR and we will invoke \eqref{eq:contr_est_err_policy_improvement} when we analyze the estimated IV-improved DTR.

Assumption \ref{assump:contrast_estimation_oracle} 
% has widely appeared in the literation on estimating SRA-optimal DTRs and 
is usually a relatively mild assumption. Indeed, the ground truth contrast functions $C^{\pi^\star}_{\vec\lambda, k}$ are superimpositions of several conditional expectations of the observed data distribution (see \eqref{eq:mp_bound_psi}--\eqref{eq:mp_upper_bound}), and it is reasonable to assume that they can be estimated at a vanishing rate as the sample size tends to infinity.
While this assumption simplifies the analysis by allowing us to bypass the case-by-case analyses of $Q$-learning, the analysis of the weighted classification remains highly non-trivial.

To proceed further, we adopt a form of sample splitting procedure called cross-fitting (\citealp{chernozhukov2016double, athey2020policy}). Specifically, we split the $n$ samples into $m$ equally-sized batches: $[n] = \cup_{j\in[m]} B_j$, where $m\geq 2$ and $m\asymp 1$.
Each batch has size $|B_j| = n_j \asymp n/m \asymp n$. 
For each $i\in[n]$, let $j_i\in[m]$ be the index of its batch, so that $i\in B_{j_i}$. 
% Recall that we define $C^{\pi}_{\vec\lambda, k}(\vec\bsH_k) = Q_{\vec{\boldsymbol\lambda}, k}^{\pi}(\vec\bsH_k, +1) - Q_{\vec{\boldsymbol\lambda}, k}^{\pi}(\vec\bsH_k, -1)$. 
For the $i$-the sample, we apply the contrast estimation oracle in Assumption \ref{assump:contrast_estimation_oracle} on the out-of-batch samples $B_{-j_i} = [n]\setminus B_{j_i}$ to obtain either an estimate $\hat\contr^\star_{\vec\lambda, k}(\vec\bsh_k; B_{-j_i})$ of $\contr^{\pi^\star}_{\vec\lambda, k}(\vec\bsh_k)$ for policy estimation, or an estimate $\hat \contr^{\uparrow}_k(\vec\bsh_k; B_{-{j_i}})$ of $\contr^{\piimp \rel\pibase}_{\vec\lambda, k}$ for policy improvement. 
For the policy estimation problem, we solve the following optimization problem:
\begin{equation}
\label{eq:cross_fit}
    \min_{\pi_k \in \Pi_k} \frac{1}{n} \sum_{i=1}^n |\hat \contr^{\star}_{\vec\lambda, k}(\vec\bsh_{k, i}; B_{-j_i})| \cdot \Indc\bigg\{\sgn(\hat \contr^{\star}_{\vec\lambda, k}(\vec\bsh_{k, i}; B_{-j_i}))\neq \pi_k(\vec\bsh_{k, i}) \bigg\}.
\end{equation}	
The corresponding optimization problem for the policy improvement problem is given by 
\begin{equation}
	\label{eq:cross_fit_improvement}
		\min_{\pi_k \in \Pi_k} \frac{1}{n} \sum_{i=1}^n |\hat \contr^{\uparrow}_{k}(\vec\bsh_{k, i}; B_{-j_i})| \cdot \Indc\bigg\{\pibase(\vec\bsh_{k, i})\cdot \sgn(\hat \contr^{\uparrow}_{k}(\vec\bsh_{k, i};B_{-j_i}))\neq \pi_k(\vec\bsh_{k, i}) \bigg\}.
\end{equation}
% Replacing $\hat \contr^{\star}_{\vec\lambda, k}(\vec\bsh_{k, i}; B_{-j_i})$ by $\hat \contr^{\uparrow}_{\vec\lambda, k}(\vec\bsh_{k, i}; B_{-j_i})$ in the above display gives the corresponding optimization problem for policy improvement.

{We emphasize that cross-fitting is mostly for theoretical convenience. The performance of our algorithm with and without cross-fitting is similar; see simulations in Section \ref{sec: simulation}.}

In practice, given a limited computational budget, we may only solve the optimization problem \eqref{eq:cross_fit} up to a certain precision. 
Our analysis will be conducted for the approximate minimizers $\hat \pi^\star_k, \hpiimp_k\in \Pi_k$ that satisfy
\begin{align}
	\label{eq:cross_fit_approx_min}
	& \frac{1}{n} \sum_{i=1}^n |\hat \contr^{\star}_{\vec\lambda, k}(\vec\bsh_{k, i}; B_{-j_i})| \cdot \Indc\bigg\{\sgn(\hat \contr^{\star}_{\vec\lambda, k}(\vec\bsh_{k, i}; B_{-j_i}))\neq \hat \pi^\star_k(\vec\bsh_{k, i}) \bigg\} \leq \textnormal{\eqref{eq:cross_fit}} + \ep^{\opt}_k,\\
	\label{eq:cross_fit_approx_min_improvement}
	&\frac{1}{n} \sum_{i=1}^n |\hat \contr^{\uparrow}_{k}(\vec\bsh_{k, i}; B_{-j_i})| \cdot \Indc\bigg\{\pibase(\vec\bsh_{k, i})\cdot \sgn(\hat \contr^{\uparrow}_{k}(\vec\bsh_{k, i};B_{-j_i}))\neq \hpiimp_k(\vec\bsh_{k, i}) \bigg\}
	\leq \textnormal{\eqref{eq:cross_fit_improvement}} + \ep^{\opt}_k,
\end{align}	
respectively, where $\ep^\opt_k$ is the optimization error when solving \eqref{eq:cross_fit} and \eqref{eq:cross_fit_improvement}. 
% In parallel, we let $\hat\pi_k^\uparrow$ to be the approximate minimizer of \eqref{eq:cross_fit} when $\hat \contr^{\star}_{\vec\lambda, k}(\vec\bsh_{k, i}; B_{-j_i})$ is replaced by $\hat \contr^{\uparrow}_{\vec\lambda, k}(\vec\bsh_{k, i}; B_{-j_i})$.

To quantify the loss of information from restriction to the policy class $\Pi$, we define 
\begin{align*}
% \label{eq:restricted_to_Pi_IV_opt_DTR}
    & \tilde \pi^\star_k \in \argmin_{\pi_k \in \Pi_k} \bbE \bigg[|\contr^{\pi^\star}_{\vec\lambda, k}(\vec\bsH_k^\obs)| \cdot \Indc\bigg\{\sgn(\contr^{\pi^\star}_{\vec\lambda, k}(\vec\bsH_{k}^\obs)\neq \pi_k(\vec\bsH_{k}^\obs) \bigg\}\bigg],\\
% \label{eq:restricted_to_Pi_IV_imp_DTR}
    & \tpiimp_k \in \argmin_{\pi_k\in\Pi_k} \bbE\bigg[ |\Cimp_k(\vec\bsH^\obs_k)| 
	\cdot \Indc\bigg\{\pibase_k(\vec\bsH_{k}^\obs)\cdot \sgn(\Cimp_{k}(\vec\bsH^\obs_{k}))\neq \pi_k(\vec\bsH_{k}^\obs) \bigg\}
	\bigg].
\end{align*}	
Note that according to Corollaries \ref{cor:struc_of_iv-opt_dtr} and \ref{cor:struc_of_iv-imp_dtr}, when $\Pi_k$ is the set of all Boolean functions, we have $\tilde \pi_k^\star = \pi^\star_k$ and $\tpiimp_k = \piimp_k$.
Otherwise, the difference between $\tilde\pi^\star_k$ and $\pi^\star_k$ (resp. $\tpiimp_k$ and $\piimp_k$) measures the approximation error when we restrict ourselves to $\Pi_k$ instead of the set of all Boolean functions in the policy estimation (resp. policy improvement) problem.
% Analogously, we let $\tilde \pi^\uparrow_k$ to be the solution of \eqref{eq:restricted_to_Pi_IV_opt_DTR} when $\contr^{\pi^\star}_{\vec\lambda, k}$ is replaced by $\contr^{\piimp\rel\pibase}_k$, which, by Corollary \ref{cor:struc_of_iv-imp_dtr}, is equal to $\piimp$ when $\Pi_k$ consists of all Boolean functions.

We are now ready to present the main result of this section.
\begin{theorem}[Performance of the estimated IV-optimal and IV-improved DTRs]
\label{thm: DTR performance}
Let Assumptions \ref{assump:bdd_concentration_coef} and \ref{assump:contrast_estimation_oracle} hold. Fix any $\delta\in(0, 1)$. Let the optimization error be $\calE_\opt = \sum_{k=1}^K \sfc_k \cdot \ep_k^\opt$. Let the approximation errors for the policy estimation problem and the policy improvement problem be 
\begin{align*}
\calE_{\prox} & = \sum_{k=1}^K \sfc_k \cdot \bbE[V^{\pi^\star_{k:K}}_{\vec\lambda, k}(\vec\bsH^\obs_{k}) - V^{\tilde\pi^\star_k \pi^{\star}_{(k+1):K}}_{\vec\lambda, k}(\vec\bsH^\obs_k)],\\
\calE_{\prox}' & = \sum_{k=1}^K \sfc_k \cdot \bbE[V^{\piimp_{k:K}\rel\pibase}_{k}(\vec\bsH_{k}) - V^{\tpiimp_k \piimp_{(k+1):K} \rel\pibase}_{k}(\vec\bsH_k^\obs)]
\end{align*}
respectively, where $\pi_k\pi'_{(k+1):K}$ denotes the DTR that acts according to $\pi_k$ at stage $k$ and follows $\pi'$ from stage $k+1$ to $K$. Finally, define the generalization error 
\begin{equation*}
% \label{eq:gen_err}
	\calE_\gen = \sum_{k=1}^K \sfc_k \cdot \bigg(\sfC_{k, \frac{\delta}{4mK}} \cdot n^{-\alpha_k} + (K-k+1)\cdot \sqrt{\frac{\vc(\Pi_k) + \log(K/\delta)}{n}}\bigg) 	
\end{equation*}
Then, there exists an absolute constant $C>0$ such that the following two statements hold:
\begin{enumerate}
	\item For the policy estimation problem, with probability at least $1-\delta$, we have
	\begin{equation}
		\label{eq:policy_estimation_bound}
		\bbE [V^{\pi^\star}_{\vec\lambda, 1}(\bsX_1^\obs) - V^{\hat \pi^\star}_{\vec\lambda, 1}(\bsX_1^\obs)] \leq \calE_{\opt} + \calE_{\prox} + C \cdot \calE_{\gen};
	\end{equation}
	\item For the policy improvement problem, with probability $1 -\delta$, we have
	\begin{equation}
		\label{eq:policy_improvement_bound}
		\bbE [V^{\pi^\star\rel\pibase}_1(\bsX_1^\obs) - V^{\hpiimp\rel\pibase}_1(\bsX_1^\obs)] \leq \calE_{\opt} + \calE_{\prox}' + C \cdot \calE_{\gen}.
	\end{equation}
\end{enumerate}
The expectations in \eqref{eq:policy_estimation_bound} and \eqref{eq:policy_improvement_bound} are taken over a fresh sample of the observed first stage covariates $\bsX_1^\obs$.
\end{theorem}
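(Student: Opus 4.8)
The plan is to combine a stage-by-stage \emph{performance-difference decomposition} with a standard excess-risk analysis of the weighted classification step. I describe the argument for the policy-estimation bound \eqref{eq:policy_estimation_bound}; the policy-improvement bound \eqref{eq:policy_improvement_bound} follows \emph{mutatis mutandis} after replacing the quadruple $(\pi^\star,\ \contr^{\pi^\star}_{\vec\lambda,k},\ \tilde\pi^\star_k,\ Q^\pi_{\vec\lambda,k})$ by $(\piimp,\ \Cimp_k,\ \tpiimp_k,\ Q^{\pi\rel\pibase}_k)$ and invoking Theorem \ref{thm:iv_imp_dtr_dynamic_prog} and Corollary \ref{cor:struc_of_iv-imp_dtr} in place of Theorem \ref{thm:iv_opt_dtr_dynamic_prog} and Corollary \ref{cor:struc_of_iv-opt_dtr}.

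\textbf{Step 1 (telescoping over stages).} For $j=0,1,\hdots,K$ let $\pi^{(j)} = \hat\pi^\star_{1:j}\pi^\star_{(j+1):K}$ be the DTR that uses $\hat\pi^\star$ at stages $1,\hdots,j$ and $\pi^\star$ afterwards, so that $\pi^{(0)}=\pi^\star$ and $\pi^{(K)}=\hat\pi^\star$, and write the excess value as $\sum_{k=1}^K \bbE[V^{\pi^{(k-1)}}_{\vec\lambda,1}(\bsX_1^\obs) - V^{\pi^{(k)}}_{\vec\lambda,1}(\bsX_1^\obs)]$. The $k$-th summand compares two DTRs differing only at stage $k$; since $Q^\pi_{\vec\lambda,k}$ and $V^\pi_{\vec\lambda,k+1}$ depend on $\pi$ only through $\pi_{(k+1):K}$, at stage $k$ both DTRs have the same weighted $Q$-function $Q^{\pi^\star}_{\vec\lambda,k}$, and by Corollary \ref{cor:struc_of_iv-opt_dtr} the pointwise stage-$k$ value gap equals $|\contr^{\pi^\star}_{\vec\lambda,k}(\vec\bsh_k)|\,\Indc\{\hat\pi^\star_k(\vec\bsh_k)\neq\sgn(\contr^{\pi^\star}_{\vec\lambda,k}(\vec\bsh_k))\}\geq 0$. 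I would then propagate this nonnegative gap back to stage $1$ along the shared prefix $\hat\pi^\star_{1:(k-1)}$: because the worst- and best-case Bellman updates in Definition \ref{def:worst/best Q and value function} are monotone in the downstream value function, the propagated contribution is bounded by $\bbE_{q_k}[\,|\contr^{\pi^\star}_{\vec\lambda,k}(\vec\bsH_k)|\,\Indc\{\hat\pi^\star_k(\vec\bsH_k)\neq\sgn(\contr^{\pi^\star}_{\vec\lambda,k}(\vec\bsH_k))\}\,]$ for a suitable law $q_k$ of $\vec\bsH_k$ induced by $\hat\pi^\star_{1:(k-1)}$ together with the witnessing worst-case transition distributions from the sets $\calP_{\vec\bsh_t,a_t}$; hence $q_k\in\scrH_k$. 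Applying Assumption \ref{assump:bdd_concentration_coef} to change measure from $q_k$ to $q_k^\obs$, and recognizing that $\bbE_{q_k^\obs}$ of this indicator-weighted quantity is exactly $R_k(\hat\pi^\star_k)$ where $R_k(\pi_k):=\bbE[\,|\contr^{\pi^\star}_{\vec\lambda,k}(\vec\bsH_k^\obs)|\,\Indc\{\sgn(\contr^{\pi^\star}_{\vec\lambda,k}(\vec\bsH_k^\obs))\neq\pi_k(\vec\bsH_k^\obs)\}\,]$ (so $R_k(\pi^\star_k)=0$), we get $\bbE[V^{\pi^\star}_{\vec\lambda,1}(\bsX_1^\obs)-V^{\hat\pi^\star}_{\vec\lambda,1}(\bsX_1^\obs)]\leq \sum_{k=1}^K\sfc_k\,R_k(\hat\pi^\star_k)$. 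Inserting the in-class optimum $\tilde\pi^\star_k$ and noting $R_k(\tilde\pi^\star_k)=\bbE[V^{\pi^\star_{k:K}}_{\vec\lambda,k}(\vec\bsH_k^\obs)-V^{\tilde\pi^\star_k\pi^\star_{(k+1):K}}_{\vec\lambda,k}(\vec\bsH_k^\obs)]$ identifies $\sum_k\sfc_k\,R_k(\tilde\pi^\star_k)$ with $\calE_\prox$, leaving $\sum_k\sfc_k[R_k(\hat\pi^\star_k)-R_k(\tilde\pi^\star_k)]$ to be controlled.

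\textbf{Step 2 (excess risk of the cross-fitted weighted classifier).} Fix a stage $k$ and let $\hat R_n(\pi_k)$ denote the cross-fitted empirical objective in \eqref{eq:cross_fit}. By the near-optimality \eqref{eq:cross_fit_approx_min} of $\hat\pi^\star_k$ and $\tilde\pi^\star_k\in\Pi_k$ we have $\hat R_n(\hat\pi^\star_k)\leq\hat R_n(\tilde\pi^\star_k)+\ep^\opt_k$, so $R_k(\hat\pi^\star_k)-R_k(\tilde\pi^\star_k)\leq\ep^\opt_k+2\sup_{\pi_k\in\Pi_k}|\hat R_n(\pi_k)-R_k(\pi_k)|$. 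I would split the supremum as $|\hat R_n(\pi_k)-\bar R_n(\pi_k)|+|\bar R_n(\pi_k)-R_k(\pi_k)|$, where $\bar R_n$ is the analogous average built with the \emph{true} contrast $\contr^{\pi^\star}_{\vec\lambda,k}$. The first piece is bounded uniformly in $\pi_k$ by $\frac1n\sum_i|\hat\contr^\star_{\vec\lambda,k}(\vec\bsh_{k,i};B_{-j_i})-\contr^{\pi^\star}_{\vec\lambda,k}(\vec\bsh_{k,i})|$ via the elementary inequality $\big|\,|u|\Indc\{\sgn(u)\neq a\}-|v|\Indc\{\sgn(v)\neq a\}\,\big|\leq|u-v|$; averaging within each batch conditionally on its complement, invoking Assumption \ref{assump:contrast_estimation_oracle} at level $\delta/(4mK)$, and adding a bounded-differences step for the within-batch fluctuation (the summands lie in an interval of width $\asymp K-k+1$ by the boundedness of the rewards, hence of the stage-$k$ pseudo-outcome) gives a bound of order $\sfC_{k,\delta/(4mK)}\,n^{-\alpha_k}+(K-k+1)\sqrt{\log(mK/\delta)/n}$. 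The second piece is a classical uniform deviation over the VC-type class $\{\vec\bsh_k\mapsto|\contr^{\pi^\star}_{\vec\lambda,k}(\vec\bsh_k)|\,\Indc\{\sgn(\contr^{\pi^\star}_{\vec\lambda,k}(\vec\bsh_k))\neq\pi_k(\vec\bsh_k)\}:\pi_k\in\Pi_k\}$, whose envelope is $\|\contr^{\pi^\star}_{\vec\lambda,k}\|_\infty\asymp K-k+1$ and whose indicator part has VC dimension at most $\vc(\Pi_k)$; a Rademacher/Dudley bound with a bounded-differences tail yields $O\big((K-k+1)\sqrt{(\vc(\Pi_k)+\log(K/\delta))/n}\big)$.

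\textbf{Step 3 (assembly and the improvement case).} Summing the stage-wise bounds against the weights $\sfc_k$ and union-bounding over the $O(mK)$ failure events (the uniform deviation and the oracle event for each of the $m$ batches at each of the $K$ stages, each at level $\delta/(4mK)$), the three groups of terms aggregate to $\calE_\opt=\sum_k\sfc_k\ep^\opt_k$, to $\calE_\prox$, and to $C\,\calE_\gen$ for an absolute constant $C$ absorbing $m\asymp 1$ and the Rademacher constants, which proves \eqref{eq:policy_estimation_bound} with probability at least $1-\delta$. The bound \eqref{eq:policy_improvement_bound} is obtained by rerunning Steps 1--3 verbatim with the relative objects, the only substantive change being that the stage-$k$ comparison now uses that $\piimp_k=\pibase_k\cdot\sgn(\Cimp_k)$ pointwise maximizes $Q^{\piimp\rel\pibase}_k(\vec\bsh_k,\cdot)$ (Corollary \ref{cor:struc_of_iv-imp_dtr}), and the weighted classification \eqref{eq:cross_fit_improvement}--\eqref{eq:cross_fit_approx_min_improvement} with label $\pibase_k(\vec\bsh_k)\cdot\sgn(\Cimp_k(\vec\bsh_k))$ replaces \eqref{eq:cross_fit}--\eqref{eq:cross_fit_approx_min}. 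I expect the main obstacle to be Step 1 under the \emph{worst-case} (infimum over $\calP_{\vec\bsh_k,a_k}$) Bellman recursion: one must verify that flipping a single stage-$k$ action produces a nonnegative value gap that propagates monotonically to stage $1$ and, crucially, that the transition law assembled from the witnessing worst-case distributions along the prefix $\hat\pi^\star_{1:(k-1)}$ is a genuine element of $\scrH_k$ so that Assumption \ref{assump:bdd_concentration_coef} applies; in the improvement case this is further complicated by the two coupled distributions (one for the candidate action, one for the baseline action) entering the relative $Q$-function. The statistical content of Step 2 is comparatively routine once the function classes and the $|u|\Indc\{\cdot\}$ contraction inequality are in place.
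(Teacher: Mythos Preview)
Your proposal is correct and follows essentially the same route as the paper's proof: telescoping over stages, a peeling inequality through the $\inf/\sup$ Bellman recursion to reduce to stage-$k$ value gaps under $q_k^\obs$ via Assumption~\ref{assump:bdd_concentration_coef}, conversion of those gaps to weighted classification risks, and then a five-term (equivalently, your ``$2\sup$'') decomposition handled by a coupling bound (your $|u-v|$ inequality) plus a VC/Dudley uniform deviation. Your identification of the peeling step under the $\inf/\sup$ recursion (and its coupled-distribution variant for the relative $Q$-function) as the crux is exactly right; the paper isolates precisely these as separate lemmas for the estimation and improvement cases.
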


Theorem \ref{thm: DTR performance} shows that the weighted value function of $\hat\pi^\star$ (resp. relative value function of $\hpiimp$) at the first stage converges to that of the IV-optimal DTR (resp. IV-improved) up to three sources of errors: the optimization error that stems from only approximately solving the weighted classification problem, the approximation error that results from the restriction to a parsimonious policy class $\Pi$, and a vanishing generalization error. We defer the proof to Supplementary Material \ref{append:proof_estimators}.

%!TEX root = main.tex
\section{Simulation studies}
\label{sec: simulation}
\subsection{Goal, data-generating process and simulation structure}
We verify that the IV-optimal DTRs indeed have superior performance compared to the baseline DTRs and investigate the performance of the IV-optimal DTRs for assorted choices of $\lambda_k(\Vec{\bsh}_k, a_k)$ via simulations. 
We consider a data-generating process with two time-independent covariates $X_1$, $X_2 \sim \text{Unif}[-1, 1]$. At stage one, there exists an unmeasured confounder $U_1 \sim \textnormal{Bernoulli}(1/2)$. 
The instrumental variable $Z_1$ is independent of $U_1$ and follows $\textnormal{Rademacher}(1/2)$, 
the observed $A_1$ is Rademacher with head probability $\text{expit}\{C_1\cdot (Z_1+1) - \xi U_1 - 2\}$, and the reward $R_1$ is Bernoulli with head probability $\text{expit}\{0.5(\sgn\{X_1 - 1\} - \lambda U_1 + 0.2)\cdot (A_1 + 1)$, where $C_1, \xi, \lambda$ are constants to be specified later, and $\textnormal{expit}\{x\} = 1/(1+e^{-x})$.
% \begin{equation}%\small
% \begin{split}
%     &Z_1 = \begin{cases} +1,~\text{w.p.}~0.5,\\
%     -1,~\text{w.p.}~0.5,\\
%     \end{cases}
%     \quad A_1 =\begin{cases}
%     +1,~\text{w.p.}~\text{expit}\{C_1\cdot (Z_1+1) - \xi U_1 - 2\},\\
%     -1,~\text{w.p.}~1-\text{expit}\{C_1\cdot (Z_1+1) - \xi U_1 - 2\},\\
%     \end{cases} \\
%     &R_1 \sim\text{Bernoulli with} ~p = \text{expit}\{0.5(\sgn\{X_1 - 1\} - \lambda U_1 + 0.2)\cdot (A_1 + 1)\}.
% \end{split}
% \end{equation}
At stage two, there exists a second unmeasured confounder $U_2 \sim \textnormal{Bernoulli}(1/2)$. The instrumental variable is again independent of $Z_2$ and follows $\textnormal{Rademacher}(1/2)$, the action $A_2$ is Rademacher with head probability $\text{expit}\{C_1\cdot (Z_1+1) + X_1 - 7(R_1 - 0.5) - \xi(1 + X_1)(2U_2 - 1)\}$, and the reward $R_2$ is Bernoulli with head probability $\text{expit}\{0.1(A_1+1) + 0.4[1 - X_1 + R_1 - \lambda (2U_2 - 1)]\cdot (A_2 + 1)\}$.
% \begin{equation}%\small
% \begin{split}
%     &Z_2 = \begin{cases} +1,~\text{w.p.}~0.5,\\
%     -1,~\text{w.p.}~0.5,\\
%     \end{cases}
%     \quad A_2 =\begin{cases}
%     +1,~\text{w.p.}~\text{expit}\{C_1\cdot (Z_1+1) + X_1 - 7(R_1 - 0.5) - \xi(1 + X_1)(2U_2 - 1)\},\\
%     -1,~\text{w.p.}~1-\text{expit}\{C_1\cdot (Z_1+1) + X_1 - 7(R_1 - 0.5) - \xi(1 + X_1)(2U_2 - 1)\},\\
%     \end{cases} \\
%     &R_2 \sim\text{Bernoulli with} ~p = \text{expit}\{0.1(A_1+1) + 0.4[1 - X_1 + R_1 - \lambda (2U_2 - 1)]\cdot (A_2 + 1)\}.
% \end{split}
% \end{equation}

According to this data-generating process, $Z_1$ and $Z_2$ are valid instrumental variables. There are two unmeasured confounders $U_1$ and $U_2$, one at each stage, and both unmeasured confounders are effect modifiers. We will interpret the treatment option $A_1, A_2 = -1$ as the standard-of-care, e.g., low-level NICU, and $+1$ as the prospective treatment, e.g., high-level NICU. One may check that for large $\xi$, the prospective treatment has a negative effect on $R_1$; however, the unmeasured confounding $U_1$ may create a spurious, positive treatment effect in an analysis that omits $U_1$. On the other hand, the prospective treatment at the second stage may have a positive or negative treatment effect on $R_2$ depending on the baseline covariate $X_1$ and the first stage outcome $R_1$.

Our simulation can be compactly summarized as a $3\times 3 \times 3 \times 2 \times 2$ factorial design with the following five factors:
\begin{description}
\item[Factor 1:] instrumental variable strength $C_1 = 3$, $4$, and $5$;
\item[Factor 2:]level of unmeasured confounding $\lambda = 1$, $2$, and $3$.
\item[Factor 3:] baseline policies $\pibase$. We consider three baseline policies: the standard-of-care regime $\pibase_{\std}$ that assigns $\pibase_{\std, 1}=\pibase_{\std, 2}= -1$ to everyone, a prospective treatment regime $\pibase_{\prosp}$ that assigns $\pibase_{\std, 1}=\pibase_{\std, 2} = +1$ to everyone, and $\pibase_{\sra}$ that is ignorant of the unmeasured confounding and is optimal under the sequential randomization assumption. 
\item[Factor 4:] training samples $n_{\train} = 500$ or $1000$.
\item[Factor 5:] procedures used to estimate the relevant conditional expectations in the partial identification intervals. We consider using either simple parametric models (linear, logistic, and multinomial regression) or random forests (\citealp{breiman2001random}). 
\end{description}

The observed data consist of $\mathcal{D}_{\text{obs}} = \{(X_1, X_2, A_1, Z_1, R_1, A_2, Z_2, R_2),~i = 1, \cdots, n_{\train}\}$. An SRA-optimal DTR $\pibase_\sra$ does not leverage the IV data ($Z_1$ and $Z_2$) while IV-improved DTR does and uses this information to improve upon baseline rules. We estimated the conditional average treatment effect involved in estimating $\pibase_{\sra}$ using a robust augmented inverse probability weighting estimator (AIPW), and then applied a weighted classification routine. This is known as C-learning in the literature (\citealp{zhang2018c}), and can also be considered a variant of the backward outcome weighted learning (BOWL) procedure (\citealp{zhao2015new}). All classification problems involved in estimating the $\pibase_{\sra}$ and the IV-improved DTRs were implemented using a classification tree with a maximum depth of $2$, which is meant to replicate the real data application where parsimonious rules are more useful and can deliver most insight (\citealp{laber2015tree, speth2020assessment}). Three IV-improved DTRs are denoted by $\piimp_{\std}$, $\piimp_{\prosp}$, and $\piimp_{\sra}$, respectively. Finally, for each data-generating process, we further estimated three IV-optimal DTRs with $\lambda_k(\Vec{\bsh}_k, \pm 1) = 0$, $1/2$, and $1$ for $k = 1, 2$, and these three IV-optimal DTRs are referred to as $\pi_{\texttt{IV}, 0}$, $\pi_{\texttt{IV}, 1/2}$, and $\pi_{\texttt{IV}, 1}$, respectively. Therefore, we have a total of nine regimes (three baseline regimes, three IV-improved regimes, and three IV-optimal regimes) under consideration. We evaluated each estimated regime $\widehat{\pi}$ by calculating its value function using $1,000,000$ fresh samples $(X_1, X_2) \sim \text{Unif}[-1, 1]$ and integrating out the binary unmeasured confounders $U_1, U_2$ using Monte Carlo.

\subsection{Simulation results}

\begin{table}[t]
\caption{Simulation results: all relevant conditional probabilities were estimated using random forests implemented in the \texttt{R} package \texttt{randomForest} with node size equal to $5$ and $n_{\train} = 1000$.}
\label{tbl: simulation rf n=1000}
\centering
\resizebox{\textwidth}{!}{\begin{tabular}{cccccccccc}
  \hline
$n_{\train} = 1000$ & $\pibase_{\std}$ & $\piimp_{\std}$  & $\pibase_{\prosp}$ & $\piimp_{\prosp}$  & $\pibase_{\sra}$ & $\piimp_{\sra}$ & $\pi_{\texttt{IV},1}$ & $\pi_{\texttt{IV},0}$ & $\pi_{\texttt{IV},1/2}$ \\ \\
   
   \multicolumn{10}{c}{$\xi = 1$} \\
   
 \multirow{2}{*}{$C_1 = 3$} & 1.00 & 1.14 & 1.03 & 1.23 & 1.13 & 1.22 & 1.13 & 1.18 & 1.18 \\  
   & [1.00,1.00] & [1.13,1.15] & [1.03,1.03] & [1.20,1.29] & [1.10,1.17] & [1.19,1.26] & [1.09,1.15] & [1.14,1.23] & [1.14,1.23] \\ 
  \multirow{2}{*}{$C_1 = 4$}  & 1.00 & 1.15 & 1.03 & 1.24 & 1.13 & 1.22 & 1.13 & 1.18 & 1.18 \\ 
   & [1.00,1.00] & [1.14,1.15] & [1.03,1.03] & [1.20,1.29] & [1.09,1.16] & [1.18,1.26] & [1.09,1.15] & [1.14,1.23] & [1.14,1.23] \\ 
  \multirow{2}{*}{$C_1 = 5$}  & 1.00 & 1.15 & 1.03 & 1.23 & 1.13 & 1.22 & 1.12 & 1.18 & 1.17 \\ 
   & [1.00,1.00] & [1.14,1.15] & [1.03,1.03] & [1.20,1.29] & [1.10,1.17] & [1.18,1.26] & [1.09,1.15] & [1.14,1.23] & [1.14,1.23] \\   \\
   
   \multicolumn{10}{c}{$\xi = 2$} \\
   
  \multirow{2}{*}{$C_1 = 3$} & 1.00 & 1.10 & 0.94 & 1.14 & 1.10 & 1.14 & 1.10 & 1.14 & 1.14 \\ 
   & [1.00,1.00] & [1.10,1.11] & [0.94,0.94] & [1.13,1.14] & [1.08,1.12] & [1.12,1.16] & [1.08,1.11] & [1.12,1.17] & [1.12,1.17] \\ 
  \multirow{2}{*}{$C_1 = 4$}  & 1.00 & 1.11 & 0.94 & 1.14 & 1.10 & 1.14 & 1.10 & 1.14 & 1.14 \\ 
   & [1.00,1.00] & [1.10,1.12] & [0.94,0.94] & [1.13,1.17] & [1.08,1.12] & [1.12,1.16] & [1.08,1.11] & [1.12,1.17] & [1.12,1.17] \\ 
  \multirow{2}{*}{$C_1 = 5$}  & 1.00 & 1.11 & 0.94 & 1.14 & 1.09 & 1.14 & 1.10 & 1.14 & 1.14 \\ 
   & [1.00,1.00] & [1.10,1.12] & [0.94,0.94] & [1.13,1.17] & [1.07,1.12] & [1.12,1.16] & [1.08,1.11] & [1.12,1.17] & [1.12,1.17] \\     \\
  
   \multicolumn{10}{c}{$\xi = 3$} \\
   
  \multirow{2}{*}{$C_1 = 3$} & 1.00 & 1.06 & 0.88 & 1.10 & 1.06 & 1.10 & 1.07 & 1.10 & 1.11 \\ 
   & [1.00,1.00] & [1.05,1.07] & [0.88,0.88] & [1.10,1.11] & [1.05,1.09] & [1.09,1.10] & [1.07,1.07] & [1.10,1.11] & [1.10,1.11] \\ 
  \multirow{2}{*}{$C_1 = 4$} & 1.00 & 1.07 & 0.88 & 1.10 & 1.07 & 1.10 & 1.07 & 1.10 & 1.10 \\ 
   & [1.00,1.00] & [1.06,1.08] & [0.88,0.88] & [1.10,1.11] & [1.05,1.09] & [1.10,1.11] & [1.07,1.07] & [1.10,1.11] & [1.10,1.11] \\ 
  \multirow{2}{*}{$C_1 = 5$} & 1.00 & 1.07 & 0.88 & 1.10 & 1.06 & 1.10 & 1.06 & 1.10 & 1.10 \\ 
   & [1.00,1.00] & [1.06,1.08] & [0.88,0.88] & [1.10,1.11] & [1.05,1.09] & [1.10,1.11] & [1.06,1.07] & [1.10,1.11] & [1.10,1.11] \\ 
   \hline
\end{tabular}}
\end{table}

\begin{figure}[t]
    \centering
    \includegraphics[width=\textwidth]{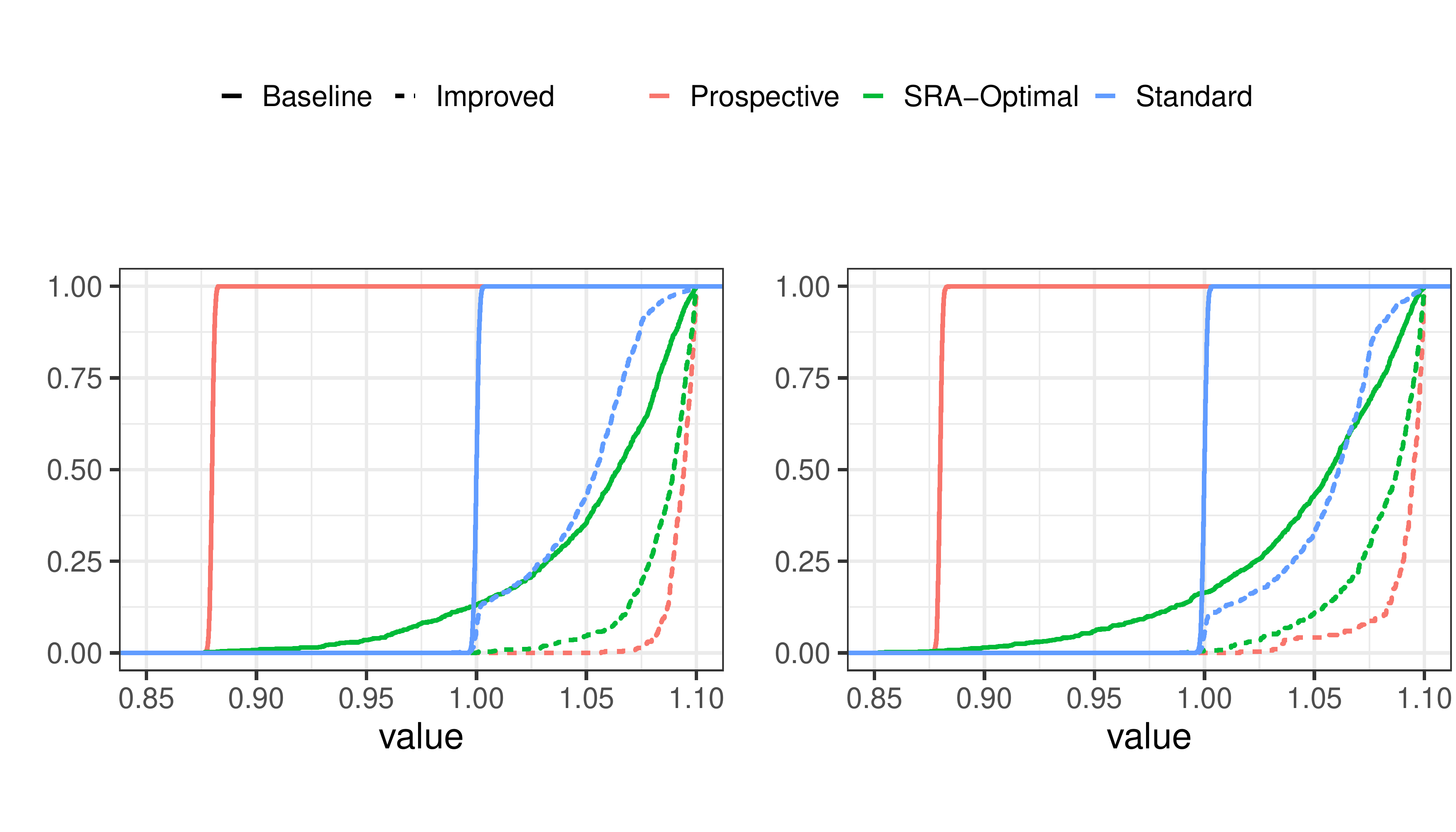}
    \caption{\small Cumulative distribution functions of value functions of three baseline policies $\pibase_{\std}$, $\pibase_{\prosp}$ and $\pibase_{\texttt{SRA}}$, and their respective IV-improved policies $\piimp_{\std}$, $\piimp_{\prosp}$ and $\piimp_{\texttt{SRA}}$. Left panel: $\xi = 3$ and $C_1 = 3$. Right panel: $\xi = 3$ and $C_1 = 5$.}
    \label{fig: CDFs two IV strengths}
\end{figure}

Table \ref{tbl: simulation rf n=1000} summarizes the estimated mean and interquartile range of the value functions for baseline DTRs, their corresponding IV-improved DTRs, and three different IV-optimal DTRs, when $n_{\train} = 1000$, all relevant conditional expectations estimated via random forests (\citealp{breiman2001random}) as implemented in the \texttt{R} package \texttt{randomForest}, and for various $(\xi, C_1)$ combinations. 

%Simulation results when relevant conditional expectations were estimated via parametric models or $n_{\train} = 500$ are qualitatively similar and can be found in the Supplementary Materials.

There are three trends consistent with our theory and intuition upon examining the simulation results. First and foremost, we observed that the IV-improved DTRs indeed had superior performance compared to their corresponding baseline DTRs including the SRA-optimal DTR. Although the extent of improvement depends on the specifics of data-generating processes, the improvement was uniform across all data-generating processes. Figure \ref{fig: CDFs two IV strengths} plots the cumulative distribution functions (CDFs) of three baseline DTRs and their IV-improved DTRs across $1,000$ simulations in two data-generating processes. It is evident that in either data-generating process and for any of the three baseline DTRs, the value functions of IV-improved DTRs always stochastically dominate those of corresponding baseline DTRs. We observed the same stochastic dominance phenomenon in each of the $108$ data-generating processes considered in the simulation studies. Second, when comparing three IV-optimal DTRs corresponding to different choices of weighting functions $\lambda_k(\Vec{\bsh}_k, a_k)$, we observed that the choices of $1/2$ and $1$, corresponding to the min-max and the worst-case perspectives, had better performance compared to the best-case DTR and the SRA-optimal DTR. We further plot CDFs of the value functions of each IV-optimal DTR and the SRA-optimal DTR (Figures \ref{fig: C=4 xi = 1 IV-optimal rules} and \ref{fig: C=4 xi = 3 IV-optimal rules} in Supplementary Material \ref{sec: appendix simu results}) and observed that the min-max and worst-case DTRs stochastically dominated the SRA-optimal DTRs in all sampling situations considered in the simulation studies. Lastly, we found that estimating relevant conditional expectations using simple parametric models (Tables \ref{tbl: simulation n=500} and \ref{tbl: simulation n=1000} in Supplementary Material \ref{sec: appendix simu results}) and the cross-fitting version of the algorithm (Tables \ref{tbl: simulation cross fit n=500} and \ref{tbl: simulation cross fit n=1000} in Supplementary Material \ref{sec: appendix simu results}) yielded slightly inferior, but qualitatively similar results.

%Second, for a fixed $\xi$ and hence the level of unmeasured confounding, we observe that the gain from using an IV is most pronounced when $C_1$ grows larger, i.e., the IV strength increases. For instance, when $\xi = 3$, two panels in Figure \ref{fig: CDFs two IV strengths} contrast the performance when $C_1 = 3$ and $5$. Compared to the top panel, the bottom panel corresponds to a stronger IV and a much larger extent of improvement. In fact, when $C_1 = 3$, the IV barely improves $\pibase_{\std}$ or $\pibase_\sra$; however, when $C_1 = 5$, the gain becomes transparent; for instance, the median and mean of $\piimp_{\std}$ are both $1.05$ compared to $1.00$ of $\pibase_{\std}$. Lastly, when comparing three IV-optimal policies with different choices of weighting functions $\lambda_k(\Vec{\bsh}_k, a_k)$, we observe that the choices of $1/2$ and $1$, corresponding to the min-max and the worst-case policy, usually have better performance than the best-case policy and the SRA-optimal policy.

%!TEX root = main.tex
\section{Application}
\label{sec: application}

\begin{figure}[t!]
\centering
% FIGURE AT LOCATION [0, 0] 
% no penalty
\begin{subfigure}[b]{\textwidth}
  \centering
  \begin{tikzpicture}[
every text node part/.style={align=center},
squarednode/.style={rectangle, draw=black, fill=white, very thick, minimum size=5mm}
]

% Stage 1
\node at (0, 0) {First birth:};
\node[squarednode, draw = red] at (3,0)  (1_1_root) {All High-Level NICU};
% Stage 2
\node at (7.8, 0) {Second birth:};
\node[squarednode, draw = red] at (11,0)  (1_2_root) {All High-Level NICU};
\end{tikzpicture}
  \caption[]
  {{\small No penalty attached to attending a high-level NICU. All mothers are assigned to high-level NICUs.}}
%   \label{fig: real data no penalty}
\end{subfigure}
\\ \vspace{0.5 cm}

\begin{subfigure}[b]{\textwidth}
  \centering
  \begin{tikzpicture}[
every text node part/.style={align=center},
squarednode/.style={rectangle, draw=black, fill=white, very thick, minimum size=5mm}
]
% moderate penalty
% Nodes Stage 1
\node at (0, 0) {First birth:};
\node[squarednode, draw = black, fill = white] at (3,0)  (2_1_root) {White = 0};
\node[align=left] at (2.2,-0.5) {\small $\boldsymbol y$};
\node[align=left] at (3.8,-0.5) {\small $\boldsymbol n$};
\node[squarednode, draw = red] at (1.5, -1.5) (2_1_root_yes) {High-Level\\ NICU};
%\node[squarednode] at (0, 2) (1_R1_yes) {$R_1 \geq -1.3$};
\node[squarednode, draw = black, fill = white] at (4.5, -1.5) (2_1_root_no) {Age $\leq 25$};
\node[squarednode, draw = black, fill = white] at (3, -3) (2_1_age_yes) {GA $\leq 37$};
\node[squarednode, draw = red] at (6, -3) (2_1_age_no) {High-Level \\ NICU};
\node[squarednode, draw = red] at (1.5, -4.5) (2_1_ga_yes) {High-Level \\ NICU};
\node[squarednode, draw = red] at (4.5, -4.5) (2_1_ga_no) {Low-Level \\ NICU};

%Nodes Stage 2
\node at (8, 0) {Second birth:};
\node[squarednode, draw = black, fill = white] at (11,0)  (2_2_root) {White = 0};
\node[align=left] at (10.2,-0.5) {\small $\boldsymbol y$};
\node[align=left] at (11.8,-0.5) {\small $\boldsymbol n$};
\node[squarednode, draw = red] at (9.5, -1.5) (2_2_root_yes) {High-Level\\ NICU};
%\node[squarednode] at (0, 2) (1_R1_yes) {$R_1 \geq -1.3$};
\node[squarednode, draw = black, fill = white] at (12.5, -1.5) (2_2_root_no) {Age $\leq 30$};
\node[squarednode, draw = black, fill = white] at (11, -3) (2_2_age_yes) {GA $\leq 35$};
\node[squarednode, draw = red] at (14, -3) (2_2_age_no) {High-Level \\ NICU};
\node[squarednode, draw = red] at (9.5, -4.5) (2_2_ga_yes) {High-Level \\ NICU};
\node[squarednode, draw = red] at (12.5, -4.5) (2_2_ga_no) {Low-Level \\ NICU};

%Lines Stage 1
\draw[ultra thick, ->] (2_1_root) -- (2_1_root_yes);
\draw[ultra thick, ->] (2_1_root) -- (2_1_root_no);
\draw[ultra thick, ->] (2_1_root_no) -- (2_1_age_yes);
\draw[ultra thick, ->] (2_1_root_no) -- (2_1_age_no);
\draw[ultra thick, ->] (2_1_age_yes) -- (2_1_ga_yes);
\draw[ultra thick, ->] (2_1_age_yes) -- (2_1_ga_no);

%Lines Stage 2
\draw[ultra thick, ->] (2_2_root) -- (2_2_root_yes);
\draw[ultra thick, ->] (2_2_root) -- (2_2_root_no);
\draw[ultra thick, ->] (2_2_root_no) -- (2_2_age_yes);
\draw[ultra thick, ->] (2_2_root_no) -- (2_2_age_no);
\draw[ultra thick, ->] (2_2_age_yes) -- (2_2_ga_yes);
\draw[ultra thick, ->] (2_2_age_yes) -- (2_2_ga_no);
\end{tikzpicture}%
\caption[]
  {{\small Moderate penalty attached to attending a high-level NICU. $68.3\%$ of all mothers are assigned to high-level NICUs for their first deliveries and $59.9\%$ of all mothers are assigned to high-level NICUs for their second deliveries.}}
  \label{fig: real data moderate penalty}
\end{subfigure}\\
\vspace{0.5 cm}

\begin{subfigure}[b]{\textwidth}
  \centering
  \begin{tikzpicture}[
every text node part/.style={align=center},
squarednode/.style={rectangle, draw=black, fill=white, very thick, minimum size=5mm}
]
% large penalty
% Nodes Stage 1
\node at (0, 0) {First birth:};
\node[squarednode, draw = black, fill = white] at (3,0)  (3_1_root) {White = 0};
\node[align=left] at (2.2,-0.5) {\small $\boldsymbol y$};
\node[align=left] at (3.8,-0.5) {\small $\boldsymbol n$};
\node[squarednode, draw = black, fill = white] at (1.5, -1.5) (3_1_root_yes) {GA $\leq 37$};
%\node[squarednode] at (0, 2) (1_R1_yes) {$R_1 \geq -1.3$};
\node[squarednode, draw = red] at (4.5, -1.5) (3_1_root_no) {Low-Level \\ NICU};
\node[squarednode, draw = red] at (0, -3) (3_1_ga_yes) {High-Level \\NICU};
\node[squarednode, draw = red] at (3, -3) (3_1_ga_no) {Low-Level \\ NICU};

%Nodes Stage 2
\node at (8, 0) {Second birth:};
\node[squarednode, draw = black, fill = white] at (12.5,0)  (3_2_root) {White = 0};
\node[align=left] at (11.7,-0.5) {\small $\boldsymbol y$};
\node[align=left] at (13.3,-0.5) {\small $\boldsymbol n$};
\node[squarednode, draw = black, fill = white] at (11, -1.5) (3_2_root_yes) {GA $\leq 36$};
\node[squarednode, draw = red] at (14, -1.5) (3_2_root_no) {Low-Level \\ NICU};
\node[squarednode, draw = black, fill = white] at (9.5, -3) (3_2_ga_yes) {Age $\leq 32$};
\node[squarednode, draw = red] at (12.5, -3) (3_2_ga_no) {Low-Level \\ NICU};
\node[squarednode, draw = red] at (8, -4.5) (3_2_age_yes) {Low-Level \\ NICU};
\node[squarednode, draw = red] at (11, -4.5) (3_2_age_no) {High-Level \\ NICU};

%Lines Stage 1
\draw[ultra thick, ->] (3_1_root) -- (3_1_root_yes);
\draw[ultra thick, ->] (3_1_root) -- (3_1_root_no);
\draw[ultra thick, ->] (3_1_root_yes) -- (3_1_ga_yes);
\draw[ultra thick, ->] (3_1_root_yes) -- (3_1_ga_no);

%Lines Stage 2
\draw[ultra thick, ->] (3_2_root) -- (3_2_root_yes);
\draw[ultra thick, ->] (3_2_root) -- (3_2_root_no);
\draw[ultra thick, ->] (3_2_root_yes) -- (3_2_ga_yes);
\draw[ultra thick, ->] (3_2_root_yes) -- (3_2_ga_no);
\draw[ultra thick, ->] (3_2_ga_yes) -- (3_2_age_yes);
\draw[ultra thick, ->] (3_2_ga_yes) -- (3_2_age_no);
\end{tikzpicture}%
\caption[]
  {{\small Large penalty attached to attending a high-level NICU. $4.53\%$ of all mothers are assigned to high-level NICUs for their first deliveries and $8.56\%$ of all mothers are assigned to high-level NICUs for their second deliveries.}}
  \label{fig: real data large penalty}
\end{subfigure}
\caption{\small IV-optimal dynamic treatment regimes estimated using the NICU data. Left arrow corresponds to `yes' and right `no'. GA stands for gestational age.}
\label{fig: real data DTR trees}
\end{figure}

We considered a total of $183,487$ mothers who delivered exactly two births during $1995$ and $2009$ in the Commonwealth of Pennsylvania, and relocated at their second deliveries so that their ``excess-travel-time'' IVs at two deliveries were different. We considered covariates that measured mothers' neighborhood circumstances including poverty rate, median income, etc, mothers' demographic information including race (white or not), age, years of education, etc, and variables related to delivery including gestational age in weeks and length of prenatal care in months, and eight congenital diseases. The ``excess-travel-time" IVs in both stages were then dichotomized: $1$ if above the median and $0$ otherwise. Mothers' treatment choice and their babies' mortality status at the first delivery were included as covariates for studying the second delivery. We used the \emph{multiple imputation by chained equations} method (\citealp{buuren2010mice}) implemented in the \textsf{R} package \textsf{MICE} to impute missing covariate data, and repeat analysis on $5$ imputed datasets.

We assume that high-level NICUs do no harm compared to low-level NICUs; therefore, all partial identification intervals in this application were estimated under the \emph{monotone treatment response} (MTR) assumption (\citealp[Chp.8]{manski2003partial}). We considered estimating an IV-optimal DTR minimizing the maximum risk at each delivery; see Section \ref{subsec:alternative characterization}, and explored the trade-off between minimizing the maximum risk and the cost/capacity constraint by adding a generic penalty to the value function. We performed weighted classification using a classification tree with maximum depth equal to $3$ so that the resulting DTR is interpretable. Figure \ref{fig: real data DTR trees} plots three estimated DTRs corresponding to no penalty attached, a moderate penalty, and a large penalty attached to attending a high-level NICU. When there is no penalty attached, all mothers are assigned to high-level NICUs. As we increase the penalty, fewer mothers (albeit mothers who benefit most from attending a high-level NICU) are assigned to high-level NICUs. For instance, Figure \ref{fig: real data moderate penalty} corresponds to sending $68\%$ mothers to a high-level NICU at their first deliveries and $59.9\%$ at their second deliveries. Mothers who are assigned to high-level NICUs according to this DTR \emph{either} belong to racial and ethnic minority groups \emph{or} are older and have premature gestational age. Similarly, Figure \ref{fig: real data large penalty} plots a regime where less than $10\%$ of mothers are assigned to a high-level NICU. Mothers who are assigned to high-level NICUs according to this DTR belong to racial and ethnicity minority groups and have premature births. Our analysis here seems to suggest that in general race/ethnicity, age, and gestational age are the most significant effect modifiers. Gestational age has long been hypothesized as an effect modifier; see \citet{lorch2012differential, yang2014estimation, michael2020instrumental}; more recently, \citet{yannekis2020differential} found a differential effect between different race/ethnic groups. On the other hand, mother's age appears to be a new discovery that worth looking into. Overall, our method both complemented previous published results and generated new insights.

%!TEX root = main.tex
\section{Discussion}%: Summary and Extensions}
\label{sec: discussion}
We systematically study the problem of estimating an dynamic treatment regime from retrospective observational data using a time-varying instrumental variable. We formulate the problem under a generic partial identification framework, derive a counterpart of the classical $Q$-learning and Bellman equation under partial identification, and use it as the basis for generalizing a notion of IV-optimality to the dynamic treatment regimes. One important variant of the developed framework is a strategy to improve upon a baseline dynamic treatment regime. As demonstrated via extensive simulations, IV-improved DTRs indeed have favorable performance compared to the baseline DTRs, including baseline DTRs that are optimal under the no unmeasured confounding assumption.

With the increasing availability of administrative databases that keep track of clinical data, it is tempting to estimate some useful, real-world-evidence-based dynamic treatment regimes from such retrospective data. To make any causal/treatment effect statements from non-RCT data, an instrumental variable analysis is often better-received by clinicians. Fortunately, many reasonably good instrumental variables are available, e.g., daily precipitation, geographic distances, service providers' preference, etc. Many of these IVs are intrinsically time-varying and could be leveraged to estimate a dynamic treatment regime using the framework proposed in this article.

In practice, to deliver a most useful policy intervention, it is important to take into account various practical constraints, e.g., those arising from limited facility capacity or increased cost. Our framework can be readily extended to incorporating various constraints.

We conclude this article by mentioning a few open problems. 
First, our analysis depends on the assumption of bounded concentration coefficients (Assumption \ref{assump:bdd_concentration_coef}). 
It will be interesting to see if this assumption can be relaxed by imposing additional structural assumptions. There are some recent process in the reinforcement learning literature (see, e..g, \citealt{jiang2017contextual,sun2019model,du2021bilinear}), but whether those structural assumptions can be adapted to the current setting remains a question. 
Meanwhile, our proofs bypasses case-by-case analyses of $Q$-learning by assuming the existence of the contrast estimating oracle (Assumption \ref{assump:contrast_estimation_oracle}). 
It is an interesting future direction to conduct more fine-grained analyses and characterize the optimal rate of convergence of those contrast functions. 
Finally, the current article considers estimating DTRs from a historical dataset. It might be of interest to extend our framework to the online interactive setup such as the one considered in \cite{liao2021instrumental}.

\small{
\setlength{\bibsep}{0.2pt plus 0.3ex}
\bibliographystyle{apalike}
\bibliography{paper-ref}
}
\newpage
\appendixtitleon
\appendixtitletocon
\addtocontents{toc}{\protect\setcounter{tocdepth}{2}}
\begin{appendices}
\tableofcontents
%!TEX root = main.tex

\clearpage
\section{Proofs Regarding the Estimands}

\subsection{Proof of Propositions \ref{prop:itr_policy_imp_bayes_policy_classification} and \ref{prop:itr_policy_imp_bayes_policy_value}}
% \label{prf:prop:itr_policy_imp_bayes_policy_classification}
\begin{proof}[Proof of Proposition \ref{prop:itr_policy_imp_bayes_policy_classification}]
If $L(\bsX) > 0$, then 
\begin{align*}
  & \sup_{\contr(\bsX)\in [L(\bsX), U(\bsX)]}|\contr(\bsX)|
  \cdot \bigg(\indc{ \pi(\bsX) \neq \sgn(\contr(\bsX)) } - \indc{ \pibase(\bsX) \neq \sgn(\contr(\bsX)) }\bigg)\\
  & =\sup_{\contr(\bsX)\in [L(\bsX), U(\bsX)]} |\contr(\bsX)| \cdot \bigg(
  \indc{ \pi(\bsX) = -1} - \indc{ \pibase(\bsX)  = -1 }
  \bigg) \\
  & = \indc{\pi(\bsX) = -1, \pibase(\bsX) = +1} \cdot |U(\bsX)| - \indc{\pi(\bsX) = +1, \pibase(\bsX) = -1}\cdot |L(\bsX)|  \\
  & = \indc{\pi(\bsX) = -1} \bigg(\indc{\pibase(\bsX) = +1}\cdot|U(\bsX)| + \indc{\pibase(\bsX) = -1}\cdot|L(\bsX)|\bigg) - \indc{\pibase(\bsX) = -1} \cdot |L(\bsX),
\end{align*}
where the third line is by $|U(\bsX)| \geq |L(\bsX)|$ when $L(\bsX) > 0$.
The above display is minimized when $\pi(\bsX) = +1$.
% Hence, on the event $\{L(\bsX) > 0\}$, we have $\piimp(\bsX) = 1$. 
Meanwhile, if $U(\bsX) < 0$, then
\begin{align*}
  & \sup_{\contr(\bsX)\in [L(\bsX), U(\bsX)]}|\contr(\bsX)|
  \cdot \bigg(\indc{ \pi(\bsX) \neq \sgn(\contr(\bsX)) } - \indc{ \pibase(\bsX) \neq \sgn(\contr(\bsX)) }\bigg)\\
  & =\sup_{\contr(\bsX)\in [L(\bsX), U(\bsX)]} |\contr(\bsX)| \cdot \bigg(
  \indc{ \pi(\bsX) = +1} - \indc{ \pibase(\bsX)  = +1 }
  \bigg)\\
  & = \indc{\pi(\bsX) = + 1, \pibase(\bsX) = -1}\cdot |L(\bsX)| - \indc{\pi(\bsX) = -1, \pibase(\bsX) = + 1}\cdot |U(\bsX)| \\
  & = \indc{\pi(\bsX) = -1} \bigg( -\indc{\pibase(\bsX) = -1}\cdot |L(\bsX)| - \indc{\pibase(\bsX) = +1}\cdot |U(\bsX)| \bigg) + \indc{\pibase(\bsX) = -1} \cdot |L(\bsX)|,
\end{align*}
where the third line is by $|L(\bsX)| \geq |U(\bsX)|$ when $U(\bsX) < 0$. The above display is minimized when $\pi(\bsX) = -1$.
Finally, if $L(\bsX)\leq 0 \leq U(\bsX)$, then taking $\contr(\bsX) = 0$ gives
\begin{align*}
  & \sup_{\contr(\bsX)\in [L(\bsX), U(\bsX)]}|\contr(\bsX)|
  \cdot \bigg(\indc{ \pi(\bsX) \neq \sgn(\contr(\bsX)) } - \indc{ \pibase(\bsX) \neq \sgn(\contr(\bsX)) }\bigg) \geq 0.
\end{align*}
Moreover, the above lower bound can be attained by taking $\pi(\bsX) = \pibase(\bsX)$, meaning that the left-hand side above is minimized when $\pi(\bsX) = \pibase(\bsX)$. 
% 
% \begin{align*}
%   &\sup_{C^*(\bsX)\in[L(\bsX), U(\bsX)]}\bigg\{0 \lor\bigg[ \indc{C^*(\bsX)>0} \cdot |C^*(\bsX)| \cdot \bigg(\indc{\pi(\bsX) = -1} - \indc{\pibase(\bsX) = -1}\bigg)\bigg] \\
%   & \qquad \qquad \qquad \qquad \qquad \lor 
%   \bigg[\indc{C^*(\bsX)<0} \cdot |C^*(\bsX)| \cdot \bigg(\indc{\pi(\bsX) = 1} - \indc{\pibase(\bsX) = 1}\bigg) \bigg]\bigg\}\\
%   & = \bigg(\indc{\pi(\bsX) = -1, \pibase(\bsX) = 1} \cdot |U(\bsX)| - \indc{\pi(\bsX) = 1, \pibase(\bsX) = -1} \cdot |L(\bsX)|\bigg)\\
%   & \qquad \lor
%   \bigg( \indc{\pi(\bsX) = 1, \pibase(\bsX) = -1} \cdot |L(\bsX)| - \indc{\pi(\bsX) = -1, \pibase(\bsX) = 1} \cdot |U(\bsX)|\bigg)\\
%   & =   \bigg|\indc{\pi(\bsX) = -1, \pibase(\bsX) = 1} \cdot |U(\bsX)| - \indc{\pi(\bsX) = 1, \pibase(\bsX) = -1} \cdot |L(\bsX)| \bigg|,
% \end{align*}
% which is minimized when $\pi(\bsX) = \pibase(\bsX)$. 
Combining the three cases above concludes the proof.
\end{proof}

% \subsection{Proof of Proposition \ref{prop:itr_policy_imp_bayes_policy_value}}\label{prf:prop:itr_policy_imp_bayes_policy_value}
\begin{proof}[Proof of Proposition \ref{prop:itr_policy_imp_bayes_policy_value}]
We have
\begin{align*}
  & \inf_{ \substack{
  p_{\pi(\bsX)}\in\calP_{\bsX, \pi(\bsX)}
  \\ p_{\pibase(\bsX)}\in\calP_{\bsX, \pibase(\bsX)} }
  }
  \bigg\{\bbE_{\substack{Y\sim p_{\pi(\bsX)} , Y'\sim p_{\pibase(\bsX)}}} \left[Y - Y'\right] \bigg\} \\
  & = \indc{\pi(\bsX) = \pibase(\bsX)} \cdot 0 
  + \indc{\pi(\bsX) \neq \pibase(\bsX)} \cdot \bigg( \inf_{p_{\pi(\bsX)} \in \calP_{\bsX, \pi(\bsX)}} \bbE_{Y\sim p_{\pi(\bsX)}}[Y] - \sup_{p_{\pibase(\bsX)} \in \calP_{\bsX, \pibase(\bsX)}} \bbE_{Y' \sim p_{\pibase(\bsX)}}[Y'] \bigg)\\
  & = \indc{\pi(\bsX)  = - \pibase(\bsX)} \cdot ( \underline{Q}(\bsX, -\pibase(\bsX)) - \overline{Q}(\bsX, \pibase(\bsX)) ) \\
  & = \indc{\pibase(\bsX) = +1, \pi(\bsX)  = - 1} \cdot  ( \underline{Q}(\bsX, -1) - \overline{Q}(\bsX, +1))  \\
  & \qquad + 
  \indc{\pibase(\bsX) = -1, \pi(\bsX)  = + 1} \cdot  ( \underline{Q}(\bsX, +1) - \overline{Q}(\bsX, -1))  \\
  & = - \indc{\pibase(\bsX) = +1, \pi(\bsX)  = - 1} \cdot \Uval(\bsX) + 
  \indc{\pibase(\bsX) = -1, \pi(\bsX)  = + 1} \cdot \Lval(\bsX)
\end{align*}
where the third line is by the assumption that $\underline{p}_{a}(\cdot|\bsx), \overline{p}_a(\cdot|\bsx) \in \calP_{\bsx, a}$. 
If $\pibase(\bsX) = +1$, then the above display becomes $-\indc{\pi(\bsX) = -1} \cdot \Uval(\bsX)$, which is maximized by taking $\pi(\bsX) = -1$ when $\Uval(\bsX) < 0$ and $\pi(\bsX) = +1$ when $\Uval(\bsX) \geq 0$. 
On the other hand, if $\pibase(\bsX) = -1$, then the above display becomes $\indc{\pi(\bsX) = +1} \cdot \Lval(\bsX)$, which is maximized by taking $\pi(\bsX) = +1$ when $\Lval(\bsX) > 0$ and $\pi(\bsX) = -1$ when $\Lval(\bsX) \leq 0$. 

Now, note that if $\Lval(\bsX) > 0$, then $\Uval(\bsX) > 0$ by construction, and thus the optimal $\pi(\bsX) = +1$ regardless of which action $\pibase$ takes. Similarly, if $\Uval(\bsX) < 0$, then $\Lval(\bsX) < 0$, and hence the optimal $\pi(\bsX) = -1$. Finally, if $\Lval(\bsX) \leq 0 \leq \Uval(\bsX)$, one readily checks that the optimal $\pi(\bsX)$ is $+1$ when $\pibase(\bsX) = +1$ and is $-1$ when $\pibase(\bsX) = -1$. The proof is concluded.
\end{proof}

\subsection{Proof of Theorem \ref{thm:iv_opt_dtr_dynamic_prog}}%\label{prf:prop:iv_opt_dtr_dynamic_prog}
  We proceed by induction. At stage $K$, \eqref{eq:iv_opt_dtr_k} holds by construction. Suppose \eqref{eq:iv_opt_dtr_k} holds for any stage $t\geq k+1$ and any $\vec\bsh_t$. At stage $k$, we have
  \begin{align*}
    \Vfunc^{\pi^\star}_{\scrP,k}(\vec\bsh_k) 
    & = \Qfunc_{\scrP,k}^{\pi^\star}\big(\vec\bsh_k, \pi^\star_{k}(\vec\bsh_k)\big) \\
    & = \max_{a_k\in\{\pm 1\}} \Qfunc_{\scrP,k}^{\pi^\star}\big(\vec\bsh_k, a_k\big)  \\
    & = \max_{a_k\in\{\pm 1\}}  \bbE_{p_{a_k}(\cdot,\cdot|\vec\bsh_k) \sim \scrP_{\vec\bsh_k, a_k}}\bbE_{(R_k, \bsX_{k+1})\sim p_{a_k}(\cdot, \cdot|\vec\bsh_k)}[R_k + \Vfunc^{\pi^\star}_{\scrP, k+1}(\vec\bsH_{k+1})] \\
    & \overset{(*)}{=} \max_{a_k\in\{\pm 1\}}  \bbE_{p_{a_k}(\cdot,\cdot|\vec\bsh_k) \sim \scrP_{\vec\bsh_k, a_k}}\bbE_{(R_k, \bsX_{k+1})\sim p_{a_k}(\cdot, \cdot|\vec\bsh_k)}[R_k + \max_{\pi}\Vfunc^{\pi}_{\scrP, k+1}(\vec\bsH_{k+1})] \\
    & \overset{(**)}{\geq} \max_{\substack{a_k\in\{\pm 1\} \\ \pi_{(k+1):K}}} 
    \bbE_{p_{a_k}(\cdot,\cdot|\vec\bsh_k) \sim \scrP_{\vec\bsh_k, a_k}}\bbE_{(R_k, \bsX_{k+1})\sim p_{a_k}(\cdot, \cdot|\vec\bsH_k)}[R_k + \Vfunc^{\pi}_{\scrP, k+1}(\vec\bsH_{k+1})] \\
    & = \max_{\substack{a_k\in\{\pm 1\} \\ \pi_{(k+1):K}}} \Qfunc^{\pi}_{\scrP, k} (\vec\bsh_{k}, a_k)\\
    & = \max_{\pi_{k:K}} \Vfunc_{\scrP,k}^\pi(\vec\bsh_k)
  \end{align*}
  where $(*)$ is by our inductive hypothesis and $(**)$ is by Jensen's inequality. Note that the reverse inequality $\Vfunc^{\pi^\star}_{\scrP,k}(\vec\bsH_k)  \leq \max_{\pi} \Vfunc_{\scrP,k}^\pi(\vec\bsH_k)$ holds by construction. Thus, the desired result follows.

\subsection{Proof of Proposition \ref{prop:struc_of_iv-opt_dtr}}
\begin{proof}[Proof of Proposition \ref{prop:struc_of_iv-opt_dtr}]
Part 1 of this proposition is proved by recursively constructing $\scrP_{\vec\bsh_k, a_k}$ be the two-point prior:
$$
  \scrP_{\vec\bsh_k, a_k} = \lambda(\vec\bsh_k, a_k) \cdot \delta_{\underline p_{a_K}(\cdot | \vec\bsh_k)} + \big(1 - \lambda(\vec\bsh_k, a_k)\big) \cdot \delta_{\overline p_{a_K}(\cdot | \vec\bsh_k)},
$$
where $\underline p_{a_K}(\cdot | \vec\bsh_k)$ and $\overline p_{a_K}(\cdot | \vec\bsh_k)$ are the distributions that attain the infimum and the supremum of \eqref{eq:worst_Q_func_k} and \eqref{eq:best_Q_func_k}, respectively. 
Part 2 holds trivially since $\underline{Q}_{\vec\lambda, k} \leq Q^\pi_{\scrP, k} \leq \overline{Q}_{\vec\lambda, k}$ for any $k\in[K]$.
\end{proof}

\subsection{Proof of Theorem \ref{thm:iv_imp_dtr_dynamic_prog}}%\label{prf:thm:iv_imp_dtr_dynamic_prog}
  The proof is similar to the proof of Theorem \ref{thm:iv_opt_dtr_dynamic_prog}.
  We proceed by induction. At stage $K$, the desired result holds by construction. Suppose \eqref{eq:iv_imp_dtr_k} holds for any stage $t\geq k+1$ and any $\vec\bsh_t$. 
  At stage $k$, for $a_k' = \pibase_k(\vec\bsh_k)$ and $\vec\bsH_{k+1} = (\vec\bsh_k, a_k, R_k, \bsX_{k+1})$ we have
  \begin{align*}
    \Vfunc^{\piimp\rel\pibase}_{k}(\vec\bsh_k) 
    & = \Qfunc_{k}^{\piimp\rel\pibase}\big(\vec\bsh_k, \piimp_{k}(\vec\bsh_k)\big) \\
    & = \max_{a_k\in\{\pm 1\}} \Qfunc_{k}^{\piimp\rel\pibase}\big(\vec\bsh_k, a_k\big)  \\
    & = \max_{a_k\in\{\pm 1\}}  
    \inf_{ \substack{
    p_{a_k}(\cdot|\vec\bsh_k)\in\calP_{\vec\bsh_k, a_k}
    \\ p_{a_k'}(\cdot|\vec\bsh_k)\in\calP_{\vec\bsh_k, a_k'} }
    }
    \bigg\{
    \bbE_{\substack{(R_k, \bsX_{k+1})\sim p_{a_k}(\cdot, \cdot|\vec\bsh_k)\\ (R_k', \bsX'_{k+1})\sim p_{a_k'}(\cdot, \cdot|\vec\bsh_k)}} 
    [R_k - R_k' + \Vfunc_{k+1}^{\piimp\rel\pibase}(\vec\bsH_{k+1})] 
    % - 
    % \bbE_{(R_k', \bsX'_{k+1})\sim p_{a_k'}(\cdot, \cdot|\vec\bsH_k)} [R_k']
    \bigg\}
    \\
    & \overset{(*)}{=} 
    \max_{a_k\in\{\pm 1\}}  
    \inf_{ \substack{
    p_{a_k}(\cdot|\vec\bsh_k)\in\calP_{\vec\bsh_k, a_k}
    \\ p_{a_k'}(\cdot|\vec\bsh_k)\in\calP_{\vec\bsh_k, a_k'} }
    }
    \bigg\{
    \bbE_{\substack{(R_k, \bsX_{k+1})\sim p_{a_k}(\cdot, \cdot|\vec\bsh_k)\\ (R_k', \bsX'_{k+1})\sim p_{a_k'}(\cdot, \cdot|\vec\bsh_k)}}
    [R_k - R_k' +  \max_{\pi} \Vfunc_{k+1}^{\pi\rel\pibase}(\vec\bsH_{k+1})]
    % & \qquad \qquad\qquad\qquad\qquad\qquad\qquad\qquad\qquad\qquad- 
    % \bbE_{(R_k', \bsX'_{k+1})\sim p_{a_k'}(\cdot, \cdot|\vec\bsH_k)} [R_k' ]
    \bigg\}
    \\
    & \overset{(**)}{\geq} 
    \max_{\substack{a_k\in\{\pm 1\} \\ \pi_{(k+1):K}}}
    \inf_{ \substack{
    p_{a_k}(\cdot|\vec\bsh_k)\in\calP_{\vec\bsh_k, a_k}
    \\ p_{a_k'}(\cdot|\vec\bsh_k)\in\calP_{\vec\bsh_k, a_k'} }
    }
    \bigg\{
    \bbE_{\substack{(R_k, \bsX_{k+1})\sim p_{a_k}(\cdot, \cdot|\vec\bsh_k)\\ (R_k', \bsX'_{k+1})\sim p_{a_k'}(\cdot, \cdot|\vec\bsh_k)}}
    [R_k  - R_k'+  \Vfunc_{k+1}^{\pi\rel\pibase}(\vec\bsH_{k+1})]
    % & \qquad \qquad\qquad\qquad\qquad\qquad\qquad\qquad\qquad\qquad- 
    % \bbE_{(R_k', \bsX'_{k+1})\sim p_{a_k'}(\cdot, \cdot|\vec\bsH_k)} [R_k' ]
    \bigg\} \\
    & = \max_{\substack{a_k\in\{\pm 1\} \\ \pi_{k+1}, \hdots, \pi_K}} \Qfunc^{\pi\rel\pibase}_{k} (\vec\bsh_{k}, a_k)\\
    & = \max_{\pi_k, \hdots, \pi_K} \Vfunc_{k}^{\pi\rel\pibase}(\vec\bsh_k),
  \end{align*}
  where $(*)$ is by our inductive hypothesis and $(**)$ is by Jensen's inequality followed by max-min inequality. Since the reverse inequality $V^{\piimp\rel\pibase}_{k}(\vec\bsh_k) = \max_{\pi} V^{\pi\rel\pibase}_{k}(\vec\bsh_k)$ is trivial, the proof is concluded.

\subsection{Proof of Corollary \ref{cor:struc_of_iv-imp_dtr}} %\label{prf:cor:struc_of_iv-imp_dtr}

By Theorem \ref{thm:iv_imp_dtr_dynamic_prog}, an IV-improved DTR satisfies (with $V^{\pi\rel\pibase}_{K+1} = 0$)
\begin{align*}
  \piimp_k(\vec\bsh_k)
  & \in \argmax_{a_k\in\{\pm 1\}} Q^{\piimp\rel\pibase}_k(\vec\bsh_k, a_k) \\
  & = \argmax_{a_k \in \{\pm 1\}]} 
  \inf_{ \substack{
  p_{a_k}(\cdot|\vec\bsh_k)\in\calP_{\vec\bsh_k, a_k}
  \\ p_{a_k'}(\cdot|\vec\bsh_k)\in\calP_{\vec\bsh_k, a_k'} }
  }
  \bigg\{
  \bbE_{\substack{(R_k, \bsX_{k+1})\sim p_{a_k}(\cdot, \cdot|\vec\bsh_k) \\ (R_k', \bsX'_{k+1})\sim p_{a_k'}(\cdot, \cdot|\vec\bsh_k) }} 
  [R_k - R_k' + \Vfunc_{k+1}^{\piimp\rel\pibase}(\vec\bsH_{k+1})]
  \bigg\} \\
  & = \argmax_{a_k \in \{\pm 1\}}
  \bigg\{
  \indc{a_k = - \pibase_k(\vec\bsh_k)}
  \times 
  \big(- \contr^{\piimp\rel\pibase}_{k}(\vec\bsh_{k})\big)
  \bigg\}\\
  & = \argmin_{a_k\in\{\pm 1\}}
  \bigg\{
  \indc{a_k = - \pibase_k(\vec\bsh_k)}
  \times 
  \contr^{\piimp\rel\pibase}_{k}(\vec\bsh_{k})
  \bigg\},
\end{align*}
from which \eqref{eq:struc_of_iv-imp_dtr} follows.

\section{Proofs Regarding the Estimators} \label{append:proof_estimators}
\subsection{Proof for Part 1 of Theorem \ref{thm: DTR performance}: Performance of the Estimated IV-Optimal DTR}

We start by presenting several useful preliminary results.

% \begin{customlemma}{C2}[Performance difference lemma]
\begin{lemma}[Performance difference lemma for policy estimation]
\label{lemma:perf_diff}
Let Assumption \ref{assump:bdd_concentration_coef} hold. For two DTRs $\pi$ and $\pi'$, we have
\begin{equation}
  \label{eq:perf_diff}
  \bbE[V^{\pi}_{\vec\lambda, 1}(\bsX_1^\obs) - V^{\pi'}_{\vec\lambda, 1}(\bsX_1^\obs)] \leq  \sum_{k  = 1}^K  \sfc_k \cdot \bbE[V_{\vec{\lambda}, k}^{\pi_{k:K}}(\vec\bsH_k^\obs) - V^{\pi'_k, \pi_{(k+1):K}}_{\vec{\lambda}, k}(\vec\bsH_k^\obs)].
\end{equation}  
% where $V^{\pi'_k, \pi_{(k+1):K}}$ is the IV-optimal value function of the policy that executes $\pi_k'$ at stage $k$ and executes $\pi$ at later stages.
\end{lemma}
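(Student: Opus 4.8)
The plan is to run the usual ``performance difference'' telescoping over hybrid policies, adapted to the adversarial (infimum/supremum over IV‑constrained sets) structure of the weighted value functions. First I would introduce, for $j=0,1,\dots,K$, the hybrid DTR $\tilde\pi^{(j)}$ that follows $\pi'$ at stages $1,\dots,j$ and $\pi$ at stages $j+1,\dots,K$, so that $\tilde\pi^{(0)}=\pi$ and $\tilde\pi^{(K)}=\pi'$, and write
\[
V^{\pi}_{\vec\lambda,1}(\bsx_1)-V^{\pi'}_{\vec\lambda,1}(\bsx_1)=\sum_{k=1}^{K}\Big(V^{\tilde\pi^{(k-1)}}_{\vec\lambda,1}(\bsx_1)-V^{\tilde\pi^{(k)}}_{\vec\lambda,1}(\bsx_1)\Big).
\]
For each $k$, the two consecutive hybrids agree at every stage except stage $k$: they follow $\pi'$ at stages $1,\dots,k-1$ and $\pi$ at stages $k+1,\dots,K$.

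The core step is a one‑step stability estimate for the weighted value function. Using the representation of the weighted $Q$‑function as $\lambda_t\cdot\inf_{p_t}\bbE_{p_t}[\,\cdot\,]+(1-\lambda_t)\cdot\sup_{p_t}\bbE_{p_t}[\,\cdot\,]$ (Definition \ref{def:worst/best Q and value function}) together with the elementary inequalities $\inf_p f-\inf_p g\le\sup_p(f-g)$ and $\sup_p f-\sup_p g\le\sup_p(f-g)$ --- applied with $f-g=\bbE_{p_t}[V^{\tilde\pi^{(k-1)}}_{\vec\lambda,t+1}-V^{\tilde\pi^{(k)}}_{\vec\lambda,t+1}]$, which makes the $\lambda_t$‑weighting irrelevant --- one obtains, for any stage $t\le k-1$ (where the two hybrids take the common action $\pi'_t(\vec\bsh_t)$),
\[
V^{\tilde\pi^{(k-1)}}_{\vec\lambda,t}(\vec\bsh_t)-V^{\tilde\pi^{(k)}}_{\vec\lambda,t}(\vec\bsh_t)\le\sup_{p_t\in\calP_{\vec\bsh_t,\pi'_t(\vec\bsh_t)}}\bbE_{(R_t,\bsX_{t+1})\sim p_t}\big[V^{\tilde\pi^{(k-1)}}_{\vec\lambda,t+1}(\vec\bsH_{t+1})-V^{\tilde\pi^{(k)}}_{\vec\lambda,t+1}(\vec\bsH_{t+1})\big].
\]
Assuming these suprema are attained (consistent with the standing hypothesis used for Proposition \ref{prop:struc_of_iv-opt_dtr}), each is realized by a measurable transition kernel; iterating the bound from $t=1$ to $t=k-1$ and then taking expectation over $\bsX_1^\obs$ shows that the $k$‑th telescoped term is at most $\bbE_{\vec\bsH_k\sim q_k}[g_k(\vec\bsH_k)]$ for some law $q_k\in\scrH_k$ (the actions being those prescribed by $\pi'$), where $g_k(\vec\bsh_k):=V^{\pi_{k:K}}_{\vec\lambda,k}(\vec\bsh_k)-V^{\pi'_k,\pi_{(k+1):K}}_{\vec\lambda,k}(\vec\bsh_k)$ is exactly the difference of the two hybrids' values at stage $k$ (since $\tilde\pi^{(k-1)}$ follows $\pi$ from stage $k$ on while $\tilde\pi^{(k)}$ uses $\pi'_k$ and then $\pi$).

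Finally I would change measure to the observed‑data law: since $q_k\in\scrH_k$, Assumption \ref{assump:bdd_concentration_coef} gives $dq_k/dq_k^\obs\le\sfc_k$, hence $\bbE_{q_k}[g_k]=\bbE_{q_k^\obs}[(dq_k/dq_k^\obs)\,g_k]\le\sfc_k\,\bbE[(g_k(\vec\bsH_k^\obs))_+]$. Summing over $k$ yields the bound, after noting that in the setting where the lemma is applied --- in particular $\pi=\pi^\star$, the IV‑optimal DTR, for which Theorem \ref{thm:iv_opt_dtr_dynamic_prog}/Corollary \ref{cor:struc_of_iv-opt_dtr} make $\pi^\star_k(\vec\bsh_k)$ a maximizer of $Q^{\pi^\star}_{\vec\lambda,k}(\vec\bsh_k,\cdot)$, so $V^{\pi^\star_{k:K}}_{\vec\lambda,k}\ge V^{\pi'_k,\pi^\star_{(k+1):K}}_{\vec\lambda,k}$ --- the integrand $g_k$ is non‑negative, so $(g_k)_+=g_k$ and the right‑hand side matches the claim.

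The step I expect to be the main obstacle is precisely this one‑step stability estimate and its iteration. Unlike the classical performance difference lemma, $V^{\pi}_{\vec\lambda,\cdot}$ is not the value of a single fixed transition kernel: the infima and suprema may be witnessed by different worst‑/best‑case distributions for $\tilde\pi^{(k-1)}$ and $\tilde\pi^{(k)}$, so there is no shared ``trajectory distribution'' against which to integrate. One must therefore contract through the \emph{difference} of value functions stage by stage, handle the resulting nested supremum by selecting attaining kernels (a measurable‑selection point), and argue that the induced stage‑$k$ history law lies in $\scrH_k$ so that the bounded‑concentration‑coefficient assumption applies; tracking the sign of $g_k$ (non‑negative exactly when the reference policy $\pi$ is stage‑wise IV‑optimal) is the remaining bookkeeping.
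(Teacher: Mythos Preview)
Your argument is correct and mirrors the paper's proof: the same hybrid telescoping, the same one-step ``peeling'' inequality (isolated in the paper as a separate lemma and proved via exactly the $\inf f-\inf g\le\sup(f-g)$ and $\sup f-\sup g\le\sup(f-g)$ device you describe), followed by the same iteration from stage $1$ to $k-1$ and the concentration-coefficient change of measure. Your caution about the positive part at the final step is in fact sharper than the paper's write-up, which simply invokes Assumption~\ref{assump:bdd_concentration_coef} without isolating the sign of $g_k$; as you observe, the lemma is only ever applied with $\pi=\pi^\star$, where Corollary~\ref{cor:struc_of_iv-opt_dtr} guarantees $g_k\ge 0$ and the positive part is immaterial.
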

% \end{customlemma}
\begin{proof}
  See Appendix \ref{prf:lemma:perf_diff}.
\end{proof}

\begin{lemma}[Duality between risk and value for policy estimation]
  \label{lemma:risk_and_regret}
  For any DTR $\pi$, any $k\in[K]$, and any historical information $\vec\bsh_{k}$, we have 
  $$
    V^{\pi}_{\vec{\lambda}, k}(\vec\bsh_k) =  \max_{a_k\in\{\pm 1\}} Q^{\pi}_{\vec{\lambda}, k}(\vec\bsh_k, a_k) - |\contr^{\pi}_{\vec{\lambda}, k}(\vec\bsh_k)| \cdot \Indc\bigg\{ \sgn(\contr^{\pi}_{\vec{\lambda}, k}(\vec\bsh_k)) \neq \pi_k(\vec\bsh_k) \bigg\}.
  $$
\end{lemma}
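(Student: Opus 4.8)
The statement is a purely algebraic identity at a fixed stage $k$ and a fixed history $\vec\bsh_k$, so the plan is a short case analysis rather than anything substantive. First I would unfold the two defining relations from Definition \ref{def:worst/best Q and value function}: namely $V^{\pi}_{\vec\lambda, k}(\vec\bsh_k) = Q^{\pi}_{\vec\lambda, k}(\vec\bsh_k, \pi_k(\vec\bsh_k))$, and $\contr^{\pi}_{\vec\lambda, k}(\vec\bsh_k) = Q^{\pi}_{\vec\lambda, k}(\vec\bsh_k, +1) - Q^{\pi}_{\vec\lambda, k}(\vec\bsh_k, -1)$. Since the action space is $\{\pm 1\}$, I would then record the two elementary facts that $\max_{a_k\in\{\pm1\}} Q^{\pi}_{\vec\lambda, k}(\vec\bsh_k, a_k) = Q^{\pi}_{\vec\lambda, k}\big(\vec\bsh_k, \sgn(\contr^{\pi}_{\vec\lambda, k}(\vec\bsh_k))\big)$ (with an arbitrary tie-break when the contrast is zero) and $\big|\contr^{\pi}_{\vec\lambda, k}(\vec\bsh_k)\big| = \max_{a_k} Q^{\pi}_{\vec\lambda, k}(\vec\bsh_k, a_k) - \min_{a_k} Q^{\pi}_{\vec\lambda, k}(\vec\bsh_k, a_k)$.

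Then I would split into two cases. If $\pi_k(\vec\bsh_k) = \sgn(\contr^{\pi}_{\vec\lambda, k}(\vec\bsh_k))$, the indicator vanishes and the right-hand side reduces to $\max_{a_k} Q^{\pi}_{\vec\lambda, k}(\vec\bsh_k, a_k) = Q^{\pi}_{\vec\lambda, k}(\vec\bsh_k, \pi_k(\vec\bsh_k)) = V^{\pi}_{\vec\lambda, k}(\vec\bsh_k)$, as required. If instead $\pi_k(\vec\bsh_k) \neq \sgn(\contr^{\pi}_{\vec\lambda, k}(\vec\bsh_k))$ — which forces $\contr^{\pi}_{\vec\lambda, k}(\vec\bsh_k)\neq 0$ and makes $\pi_k(\vec\bsh_k)$ the minimizing action — the right-hand side equals $\max_{a_k} Q^{\pi}_{\vec\lambda, k}(\vec\bsh_k, a_k) - \big(\max_{a_k} Q^{\pi}_{\vec\lambda, k}(\vec\bsh_k, a_k) - \min_{a_k} Q^{\pi}_{\vec\lambda, k}(\vec\bsh_k, a_k)\big) = \min_{a_k} Q^{\pi}_{\vec\lambda, k}(\vec\bsh_k, a_k) = Q^{\pi}_{\vec\lambda, k}(\vec\bsh_k, \pi_k(\vec\bsh_k)) = V^{\pi}_{\vec\lambda, k}(\vec\bsh_k)$. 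Combining the two cases gives the claim for every $\vec\bsh_k$.

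There is essentially no obstacle here; the only point that warrants a line of care is the degenerate case $\contr^{\pi}_{\vec\lambda, k}(\vec\bsh_k) = 0$, where $\sgn(0)$ must be assigned a convention, but then $Q^{\pi}_{\vec\lambda, k}(\vec\bsh_k, +1) = Q^{\pi}_{\vec\lambda, k}(\vec\bsh_k, -1)$, so both sides collapse to this common value irrespective of the convention and of $\pi_k(\vec\bsh_k)$. I would also remark that the identity uses nothing about the $\inf/\sup$ structure defining $\underline{\Qfunc}$ and $\overline{\Qfunc}$: it holds for an arbitrary real-valued function $Q^{\pi}_{\vec\lambda, k}(\vec\bsh_k, \cdot)$ on $\{\pm 1\}$, and is exactly the stagewise analogue of the single-stage duality between value maximization and weighted misclassification discussed in Section \ref{subsec:opt_and_nuca_opt_ITR}. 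This is the form that later feeds into the performance-difference decomposition (Lemma \ref{lemma:perf_diff}) when bounding the suboptimality of $\hat\pi^\star$.
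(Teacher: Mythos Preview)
Your proof is correct and is essentially the same elementary verification as the paper's: both reduce to the fact that for two reals $Q(+1), Q(-1)$ one has $Q(\pi_k) = \max_{a_k} Q(a_k)$ when $\pi_k$ picks the larger value and $Q(\pi_k) = \max_{a_k} Q(a_k) - |Q(+1)-Q(-1)|$ otherwise. The only cosmetic difference is that the paper writes out a chain of indicator-function manipulations starting from $V = Q(+1)\indc{\pi_k=+1} + Q(-1)\indc{\pi_k=-1}$, whereas your two-case split on $\pi_k(\vec\bsh_k) \stackrel{?}{=} \sgn(\contr^{\pi}_{\vec\lambda,k}(\vec\bsh_k))$ is a bit more direct; your explicit handling of the degenerate case $\contr^{\pi}_{\vec\lambda,k}(\vec\bsh_k)=0$ is a nice touch the paper leaves implicit.
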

\begin{proof}
  See Appendix \ref{prf:lemma:risk_and_regret}.
\end{proof}

\begin{proposition}[Risk bound for the estimated IV-optimal DTR]
\label{prop:iv_opt_risk_bound}
Fix $k\in[K]$ and $\delta\in(0, 1)$. Let Assumption \ref{assump:contrast_estimation_oracle} hold. 
% In addition, assume $\vc(\Pi_k)< \infty$, $R_t \in [0, 1]$ for any $k \leq t \leq K$, and $m\asymp 1$. 
Then, with probability at least $1-\delta$, we have
$$
  \calR_k(\hat\pi^\star_k) - \inf_{\pi_k \in \Pi_k} \calR_k(\pi_k) \leq \ep^\opt_k + C \cdot \bigg(\sfC_{k, \frac{\delta}{4m}} \cdot n^{-\alpha_k} + (K-k+1) \cdot \sqrt{\frac{\vc(\Pi_k) + \log(1/\delta)}{n}}\bigg),
$$
where $C>0$ is an absolute constant and
\begin{equation}
  \label{eq:pop_risk_iv_opt_dtr}
\calR_k(\pi_k) = \bbE\bigg[ |\contr^{\pi^\star}_{\vec{\lambda}, k}(\vec\bsH_k^\obs)| \cdot \Indc\bigg\{\sgn(\contr^{\pi^\star}_{\vec{\lambda}, k}(\vec\bsH_k^\obs)) \neq \pi_k(\vec\bsH_k^\obs)\bigg\} \bigg].
\end{equation}
\end{proposition}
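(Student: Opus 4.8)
The plan is to run the standard excess-risk decomposition for weighted classification, taking care of the cross-fitting and of the particular weights that arise here. For a generic ``contrast'' function $h$, write $\hat\calR_k(\pi_k;h) = \frac1n\sum_{i=1}^n |h(\vec\bsh_{k,i})|\,\Indc\{\sgn(h(\vec\bsh_{k,i}))\neq\pi_k(\vec\bsh_{k,i})\}$, so that the objective in \eqref{eq:cross_fit} is $\hat\calR_k(\pi_k;h)$ evaluated with $h$ replaced on sample $i$ by the out-of-batch estimate $\hat\contr^\star_{\vec\lambda,k}(\cdot;B_{-j_i})$, while $\calR_k$ in \eqref{eq:pop_risk_iv_opt_dtr} is the population version with the true contrast $h=\contr^{\pi^\star}_{\vec\lambda,k}$. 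Let $\tilde\pi^\star_k\in\argmin_{\pi_k\in\Pi_k}\calR_k(\pi_k)$. First I would telescope through the empirical risk with the true contrast and invoke the approximate-minimizer property \eqref{eq:cross_fit_approx_min} with the feasible comparator $\tilde\pi^\star_k$ to get
\[
\calR_k(\hat\pi^\star_k) - \calR_k(\tilde\pi^\star_k) \le \ep^\opt_k + 2\sup_{\pi_k\in\Pi_k}\bigl|\hat\calR_k(\pi_k;\contr^{\pi^\star}_{\vec\lambda,k}) - \calR_k(\pi_k)\bigr| + 2\sup_{\pi_k\in\Pi_k}\bigl|\hat\calR_k(\pi_k;\hat\contr) - \hat\calR_k(\pi_k;\contr^{\pi^\star}_{\vec\lambda,k})\bigr|,
\]
that is, an optimization, a generalization, and a contrast-estimation term.

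Next I would control the contrast-estimation term. Since for fixed $a\in\{\pm1\}$ the map $c\mapsto|c|\,\Indc\{\sgn(c)\neq a\} = (-ac)_+$ is $1$-Lipschitz, that term is at most $\frac1n\sum_i|\hat\contr^\star_{\vec\lambda,k}(\vec\bsh_{k,i};B_{-j_i}) - \contr^{\pi^\star}_{\vec\lambda,k}(\vec\bsh_{k,i})|$ uniformly in $\pi_k$. Here cross-fitting is what makes Assumption \ref{assump:contrast_estimation_oracle} usable: conditional on the out-of-batch data $B_{-j}$, the in-batch covariates $\{\vec\bsh_{k,i}\}_{i\in B_j}$ are i.i.d.\ copies of $\vec\bsH_k^\obs$ and independent of $\hat\contr^\star_{\vec\lambda,k}(\cdot;B_{-j})$, so the oracle bounds the conditional mean of the batch-$j$ average by $\sfC_{k,\delta/(4m)}\,n^{-\alpha_k}$ with probability at least $1-\delta/(4m)$ (using $n_{-j}\asymp n$), while a Hoeffding/Bernstein inequality conditional on $B_{-j}$ controls the fluctuation of the batch average about its conditional mean by order $B_k\sqrt{\log(1/\delta)/n}$, where $B_k:=\sum_{t=k}^K(\overline{C}_t - \underline{C}_t)$ is the a priori bound on $|\contr^{\pi^\star}_{\vec\lambda,k}|$ (truncating the oracle output to $[-B_k,B_k]$ is harmless). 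A union bound over the $m$ batches gives a bound of order $\sfC_{k,\delta/(4m)}n^{-\alpha_k} + B_k\sqrt{\log(1/\delta)/n}$, whose second summand merges into the generalization term.

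For the generalization term, because $\contr^{\pi^\star}_{\vec\lambda,k}$ is a fixed function and the $n$ samples are i.i.d., I would use a McDiarmid-plus-symmetrization argument: changing one sample moves $\hat\calR_k(\cdot;\contr^{\pi^\star}_{\vec\lambda,k})$ by at most $B_k/n$, so the supremum concentrates about its mean at rate $B_k\sqrt{\log(1/\delta)/n}$, and its mean is bounded by the Rademacher complexity of the loss class $\{\vec\bsh_k\mapsto|\contr^{\pi^\star}_{\vec\lambda,k}(\vec\bsh_k)|\,\Indc\{\sgn(\contr^{\pi^\star}_{\vec\lambda,k}(\vec\bsh_k))\neq\pi_k(\vec\bsh_k)\}:\pi_k\in\Pi_k\}$. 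Ledoux--Talagrand contraction (the multiplier $|\contr^{\pi^\star}_{\vec\lambda,k}|\le B_k$ is bounded, and the indicator is a $0$--$1$ function of $\pi_k(\vec\bsh_k)$) reduces this to $B_k$ times the Rademacher complexity of $\{\Indc\{\pi_k(\cdot)=+1\}:\pi_k\in\Pi_k\}$, which the classical VC bound controls by order $\sqrt{\vc(\Pi_k)/n}$. Collecting the three pieces, using $B_k = O(K-k+1)$ (under the standing assumption that each $\overline{C}_t-\underline{C}_t$ is $O(1)$), and allocating the $\delta$ budget as $\delta/4$ for the generalization bound together with $\delta/(4m)$ for each of the $2m$ per-batch events (which is the source of the $\sfC_{k,\delta/(4m)}$), yields the stated bound.

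The main obstacle I anticipate is the generalization term: I must argue that the weighted loss class has effective complexity governed by $\vc(\Pi_k)$ alone --- the weights and the fixed reference sign $\sgn(\contr^{\pi^\star}_{\vec\lambda,k})$ entering only through the multiplicative constant $B_k$ --- and that $B_k$ grows only linearly in the remaining number of stages, for which the recursive boundedness $PO_k\in[\sum_{t\ge k}\underline{C}_t,\sum_{t\ge k}\overline{C}_t]$ established in Section \ref{subsec: estimate IV-optimal DTR} is exactly what is needed; the optimization and contrast-estimation terms are comparatively routine once cross-fitting is in place.
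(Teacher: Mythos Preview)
Your proposal is correct and follows essentially the same route as the paper: the same three-piece decomposition (optimization error via \eqref{eq:cross_fit_approx_min}, contrast-estimation error via cross-fitting plus Hoeffding plus Assumption~\ref{assump:contrast_estimation_oracle}, and uniform generalization error via McDiarmid plus symmetrization), with the paper stating the first two pieces as Lemmas~\ref{lemma:unif_concentration} and~\ref{lemma:coupling}. The only cosmetic differences are that you use the cleaner $1$-Lipschitz identity $|c|\,\Indc\{\sgn c\neq a\}=(-ac)_+$ where the paper writes out an explicit two-term decomposition, and you invoke contraction plus the standard VC Rademacher bound where the paper carries out Dudley's entropy integral; both yield the same rate.
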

\begin{proof}
  See Appendix \ref{prf:prop:iv_opt_risk_bound}.
\end{proof}

With the above results, we are ready to present the proof for Part 1 of Theorem \ref{thm: DTR performance}.
% \begin{proof}[Proof for Part 1 of Theorem \ref{thm: DTR performance}]
  Applying Lemma \ref{lemma:perf_diff}, we have
  \begin{align*}
    \bbE[V^{\pi^\star}_1(\bsX_1^\obs) - V^{\hat \pi^\star}_1(\bsX_1^\obs)] 
    & \leq \sum_{k  = 1}^K  \sfc_k \cdot \bbE[V_{\vec{\lambda}, k}^{\pi^\star_{k:K}}(\vec\bsH_k^\obs) - V^{\hat\pi^\star_k, \pi^\star_{(k+1):K}}_{\vec{\lambda}, k}(\vec\bsH_k^\obs)].\\
    % & \overset{(*)}{\leq}  \sum_{k=1}^K \sfc_k \cdot \bbE_{\vec\bsH_k \sim q^{\obs}_k}[V_{\vec{\lambda}, k}^{\pi^\star_{k:K}}(\vec\bsH_k) - V^{\hat\pi^\star_k, \pi^\star_{(k+1):K}}_{\vec{\lambda}, k}(\vec\bsH_k)] \\
    & =  \calE_\prox
    + \sum_{k=1}^K \sfc_k \cdot \bbE[V_{\vec{\lambda}, k}^{\tilde \pi^\star_{k} \pi^\star_{(k+1):K}}(\vec\bsH_k^\obs) - V^{\hat\pi^\star_k, \pi^\star_{(k+1):K}}_{\vec{\lambda}, k}(\vec\bsH_k^\obs)],
  \end{align*}
  where we recall that $\tilde \pi_k$ is defined as the minimizer of $R(\pi_k)$ over $\Pi_k$.
  % where $(*)$ is by Assumption \ref{assump:bdd_concentration_coef}. 
  By Lemma \ref{lemma:risk_and_regret}, we have
  \begin{align*}
    & \bbE[V_{\vec{\lambda}, k}^{\tilde\pi^\star_{k}\pi^\star_{(k+1):K}}(\vec\bsH_k^\obs) - V^{\hat\pi^\star_k, \pi^\star_{(k+1):K}}_{\vec{\lambda}, k}(\vec\bsH_k^\obs)] \\
    & = \bbE \bigg[
    \max_{a_k\in\{\pm 1\}} Q^{\pi^\star_{(k+1):K}}_{\vec{\lambda}, k}(\vec\bsH_k^\obs, a_k) 
    - 
    |\contr^{\pi^\star_{(k+1):K}}_{\vec{\lambda}, k}(\vec\bsH_k^\obs)| \cdot \Indc\bigg\{ \sgn(\contr^{\pi^\star_{(k+1):K}}_{\vec{\lambda}, k}(\vec\bsH_{k}^\obs)) \neq \tilde\pi^\star_k(\vec\bsH_k^\obs) \bigg\} \\
    & \qquad - \max_{a_k\in\{\pm 1\}} Q^{\pi^\star_{(k+1):K}}_{\vec{\lambda}, k}(\vec\bsH_k^\obs, a_k) 
    +
    |\contr^{\pi^\star_{(k+1):K}}_{\vec{\lambda}, k}(\vec\bsH_k^\obs)| \cdot \Indc\bigg\{ \sgn(\contr^{\pi^\star_{(k+1):K}}_{\vec{\lambda}, k}(\vec\bsH_{k}^\obs)) \neq \hat\pi^\star_k(\vec\bsH_k^\obs) \bigg\}\bigg]\\
    & = \calR_k(\hat \pi^\star_k) - \calR_k(\tilde\pi^\star_k).
  \end{align*}
  % R_k(\hat \pi^\star_k) - R_k(\tilde\pi^\star_k).
  Hence, we have
  $$
    \bbE[V^{\pi^\star}_1(\bsX_1^\obs) - V^{\hat \pi^\star}_1(\bsX_1^\obs)] \leq \calE_\prox +\sum_{k=1}^K \sfc_k \cdot [\calR_k(\hat \pi^\star_k) - \calR_k(\tilde\pi^\star_k)]
  $$
  The desired result follows from Proposition \ref{prop:iv_opt_risk_bound} and a union bound over $[K]$.

\subsubsection{Proof of Lemma \ref{lemma:perf_diff}} \label{prf:lemma:perf_diff}

We start by proving a useful lemma.
% \begin{customlemma}{C1}[Peeling]
\begin{lemma}
  \label{lemma:peeling}
  For any three DTRs $\pi, \pi', \pi''$, any $k\in[K]$, and any historical information $\vec\bsh_k$, we have
  \begin{align*}
    & V^{\pi_k \pi'_{(k+1):K}}_{\vec{\lambda}, k}(\vec\bsh_k) - V^{\pi_k \pi''_{(k+1):K}}_{\vec{\lambda}, k}(\vec\bsh_k)\\
    & \leq \sup_{p_{\pi_k(\vec\bsh_k)} \in \calP_{\vec\bsh_k, \pi_k(\vec\bsh_k)}} \bbE_{(R_k, \bsX_{k+1})\sim p_{\pi_k(\vec\bsh_k)}} [V^{\pi'_{(k+1):K}}_{\vec{\lambda}, k+1}(\vec\bsH_{k+1}) - V^{\pi''_{(k+1):K}}_{\vec{\lambda}, k+1}(\vec\bsh_k)],
  \end{align*}  
  where $\vec\bsH_{k+1} = (\vec\bsh_k, a_{k}, R_k, \bsX_{k+1})$.
\end{lemma}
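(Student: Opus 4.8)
The plan is to unfold one stage of the recursion defining the weighted value functions and then reduce the statement to two elementary facts about infima and suprema of families of real numbers. Writing $a_k = \pi_k(\vec\bsh_k)$, Definition \ref{def:worst/best Q and value function} gives
\[
V^{\pi_k \pi'_{(k+1):K}}_{\vec{\lambda}, k}(\vec\bsh_k) = \lambda_k(\vec\bsh_k, a_k)\cdot \underline{\Qfunc}^{\pi'}_{\vec{\lambda}, k}(\vec\bsh_k, a_k) + \bigl(1 - \lambda_k(\vec\bsh_k, a_k)\bigr)\cdot \overline{\Qfunc}^{\pi'}_{\vec{\lambda}, k}(\vec\bsh_k, a_k),
\]
and the analogous identity with $\pi'$ replaced by $\pi''$. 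Subtracting the two and using that the weighting coefficients are nonnegative and sum to one, it suffices to show that each of $\underline{\Qfunc}^{\pi'}_{\vec{\lambda}, k}(\vec\bsh_k, a_k) - \underline{\Qfunc}^{\pi''}_{\vec{\lambda}, k}(\vec\bsh_k, a_k)$ and $\overline{\Qfunc}^{\pi'}_{\vec{\lambda}, k}(\vec\bsh_k, a_k) - \overline{\Qfunc}^{\pi''}_{\vec{\lambda}, k}(\vec\bsh_k, a_k)$ is bounded above by the right-hand side of the displayed inequality in the lemma.

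For this I would invoke the elementary inequalities $\inf_{s\in S} f(s) - \inf_{s\in S} g(s) \leq \sup_{s\in S}\bigl(f(s) - g(s)\bigr)$ and $\sup_{s\in S} f(s) - \sup_{s\in S} g(s) \leq \sup_{s\in S}\bigl(f(s) - g(s)\bigr)$, applied with $S = \calP_{\vec\bsh_k, a_k}$; the point is that this index set is the same for $\pi'$ and $\pi''$, since $\calP_{\vec\bsh_k, a_k}$ depends only on $(\vec\bsh_k, a_k)$ and not on the continuation policy. Here I take $f(p_{a_k}) = \bbE_{(R_k, \bsX_{k+1})\sim p_{a_k}}[R_k + V^{\pi'_{(k+1):K}}_{\vec{\lambda}, k+1}(\vec\bsH_{k+1})]$ and $g$ the same expression with $\pi''$ in place of $\pi'$, so that by Definition \ref{def:worst/best Q and value function} the infimum (resp.\ supremum) of $f$ over $S$ is $\underline{\Qfunc}^{\pi'}_{\vec{\lambda}, k}(\vec\bsh_k, a_k)$ (resp.\ $\overline{\Qfunc}^{\pi'}_{\vec{\lambda}, k}(\vec\bsh_k, a_k)$), and likewise for $g$ and $\pi''$. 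Since $R_k$ enters $f$ and $g$ identically, $f(p_{a_k}) - g(p_{a_k}) = \bbE_{(R_k, \bsX_{k+1})\sim p_{a_k}}\bigl[V^{\pi'_{(k+1):K}}_{\vec{\lambda}, k+1}(\vec\bsH_{k+1}) - V^{\pi''_{(k+1):K}}_{\vec{\lambda}, k+1}(\vec\bsH_{k+1})\bigr]$, whose supremum over $p_{a_k}\in\calP_{\vec\bsh_k, a_k}$ is precisely the claimed upper bound. Substituting this back into the convex combination above proves the lemma; the case $k=K$ is vacuous under the convention $V_{K+1}\equiv 0$, since then both sides are zero.

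I do not expect a genuine obstacle. The only step that needs a moment of care is the cancellation of the $R_k$ term, which relies on the stage-$k$ worst/best-case $Q$-functions optimizing over $\calP_{\vec\bsh_k, a_k}$ \emph{before} the continuation value is read off, and on $\vec\bsH_{k+1} = (\vec\bsh_k, a_k, R_k, \bsX_{k+1})$ being a deterministic function of $(R_k, \bsX_{k+1})$ not involving $\pi'$ or $\pi''$; both are immediate from Definition \ref{def:worst/best Q and value function}. (I also read the last argument $(\vec\bsh_k)$ in the displayed bound as a typo for $(\vec\bsH_{k+1})$; the argument above establishes the version with $(\vec\bsH_{k+1})$, which is the form needed in the subsequent proof of Lemma \ref{lemma:perf_diff}.)
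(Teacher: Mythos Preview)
Your proposal is correct and follows essentially the same approach as the paper's proof: both expand $V^{\pi_k \pi'}_{\vec\lambda,k} - V^{\pi_k \pi''}_{\vec\lambda,k}$ as a $\lambda_k$-convex combination of $\underline{Q}$- and $\overline{Q}$-differences, bound each via the inequalities $\inf f - \inf g \leq \sup(f-g)$ and $\sup f - \sup g \leq \sup(f-g)$ over the common index set $\calP_{\vec\bsh_k,a_k}$, cancel the $R_k$ term, and recombine. The paper simply writes these elementary inequalities out longhand rather than citing them, and your observation about the $(\vec\bsh_k)$ typo for $(\vec\bsH_{k+1})$ is also correct.
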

\begin{proof}
  By definition, we have
  \begin{align*}
    & V^{\pi_k \pi'_{(k+1):K}}_{\vec{\lambda}, k}(\vec\bsh_k) - V^{\pi_k \pi''_{(k+1):K}}_{\vec{\lambda}, k}(\vec\bsh_k) \\
    & = \lambda_k(\vec\bsh_k, \pi_k(\vec\bsh_k))  \cdot \bigg( \underline{Q}^{\pi'_{(k+1):K}}_{\vec{\lambda}, k}(\vec\bsh_k, \pi_k(\vec\bsh_k)) - \underline{Q}^{\pi''_{(k+1):K}}_{\vec{\lambda}, k}(\vec\bsh_k, \pi_k(\vec\bsh_k)) \bigg) \\
    & \qquad + 
    (1-\lambda_k(\vec\bsh_k, \pi_k(\vec\bsh_k))) \cdot \bigg( \overline{Q}^{\pi'_{(k+1):K}}_{\vec{\lambda}, k}(\vec\bsh_k, \pi_k(\vec\bsh_k)) - \overline{Q}^{\pi''_{(k+1):K}}_{\vec{\lambda}, k}(\vec\bsh_k, \pi_k(\vec\bsh_k)) \bigg) \\
    & = \lambda_k(\vec\bsh_k, \pi_k(\vec\bsh_k)) \cdot 
    \bigg(
      \inf_{p_{\pi_k(\vec\bsh_k)}}\bbE_{(R_k, \bsX_{k+1})\sim p_{\pi_k(\vec\bsh_k)}}[R_k + V^{\pi'_{(k+1):K}}_{\vec{\lambda}, k+1}(\vec\bsH_{k+1})] \\
      & \qquad\qquad\qquad\qquad\qquad\qquad - 
      \inf_{p_{\pi_k(\vec\bsh_k)}}\bbE_{(R_k, \bsX_{k+1})\sim p_{\pi_k(\vec\bsh_k)}}[R_k + V^{\pi''_{(k+1):K}}_{\vec{\lambda}, k+1}(\vec\bsH_{k+1})] 
    \bigg)\\
    & \qquad + (1-\lambda_k(\vec\bsh_k, \pi_k(\vec\bsh_k))) \cdot
    \bigg(
      \sup_{p_{\pi_k(\vec\bsh_k)}}\bbE_{(R_k, \bsX_{k+1})\sim p_{\pi_k(\vec\bsh_k)}}[R_k + V^{\pi'_{(k+1):K}}_{\vec{\lambda}, k+1}(\vec\bsH_{k+1})] \\
      & \qquad\qquad\qquad\qquad\qquad\qquad\qquad\qquad - 
      \sup_{p_{\pi_k(\vec\bsh_k)}}\bbE_{(R_k, \bsX_{k+1})\sim p_{\pi_k(\vec\bsh_k)}}[R_k + V^{\pi''_{(k+1):K}}_{\vec{\lambda}, k+1}(\vec\bsH_{k+1})] 
    \bigg)\\
    & \leq \lambda_k(\vec\bsh_k, \pi_k(\vec\bsh_k)) \cdot 
    \bigg(
      \inf_{p_{\pi_k(\vec\bsh_k)}}\bbE_{(R_k, \bsX_{k+1})\sim p_{\pi_k(\vec\bsh_k)}}[R_k + V^{\pi''_{(k+1):K}}_{\vec{\lambda}, k+1}(\vec\bsH_{k+1})] \\
      & \qquad\qquad\qquad\qquad\qquad\qquad + 
      \sup_{p_{\pi_k(\vec\bsh_k)}}\bbE_{(R_k, \bsX_{k+1})\sim p_{\pi_k(\vec\bsh_k)}}[V^{\pi'_{(k+1):K}}_{\vec{\lambda}, k+1}(\vec\bsH_{k+1})- V^{\pi''_{(k+1):K}}_{\vec{\lambda}, k+1}(\vec\bsH_{k+1})] \\
      & \qquad\qquad\qquad\qquad\qquad\qquad - 
      \inf_{p_{\pi_k(\vec\bsh_k)}}\bbE_{(R_k, \bsX_{k+1})\sim p_{\pi_k(\vec\bsh_k)}}[R_k + V^{\pi''_{(k+1):K}}_{\vec{\lambda}, k+1}(\vec\bsH_{k+1})] 
    \bigg)\\
    & \qquad + (1-\lambda_k(\vec\bsh_k, \pi_k(\vec\bsh_k))) \cdot
    \bigg(
      \sup_{p_{\pi_k(\vec\bsh_k)}}\bbE_{(R_k, \bsX_{k+1})\sim p_{\pi_k(\vec\bsh_k)}}[R_k + V^{\pi'_{(k+1):K}}_{\vec{\lambda}, k+1}(\vec\bsH_{k+1})] \\
      & \qquad\qquad\qquad\qquad\qquad\qquad\qquad - 
      \sup_{p_{\pi_k(\vec\bsh_k)}}\bbE_{(R_k, \bsX_{k+1})\sim p_{\pi_k(\vec\bsh_k)}}[R_k + V^{\pi'_{(k+1):K}}_{\vec{\lambda}, k+1}(\vec\bsH_{k+1})]  \\
      & \qquad\qquad\qquad\qquad\qquad\qquad\qquad - 
      \inf_{p_{\pi_k(\vec\bsh_k)}}\bbE_{(R_k, \bsX_{k+1})\sim p_{\pi_k(\vec\bsh_k)}}[V^{\pi''_{(k+1):K}}_{\vec{\lambda}, k+1}(\vec\bsH_{k+1}) - V^{\pi'_{(k+1):K}}_{\vec{\lambda}, k+1}(\vec\bsH_{k+1})] 
    \bigg)\\
    & = \sup_{p_{\pi_k(\vec\bsh_k)}}\bbE_{(R_k, \bsX_{k+1})\sim p_{\pi_k(\vec\bsh_k)}}[V^{\pi'_{(k+1):K}}_{\vec{\lambda}, k+1}(\vec\bsH_{k+1})- V^{\pi''_{(k+1):K}}_{\vec{\lambda}, k+1}(\vec\bsH_{k+1})],
  \end{align*}
  which is the desired result.
\end{proof}

To prove Lemma \ref{lemma:perf_diff}, we start by decomposing the left-hand side of \eqref{eq:perf_diff} into telescoping sums:
  \begin{align*}
  \bbE[V^{\pi}_{\lambda, 1}(\bsX_1) - V^{\pi'}_{\lambda, 1}(\bsX_1)] & = \bbE\bigg[\sum_{k = 1}^K V_{\vec{\lambda}, 1}^{\pi'_{1:(k-1)}, \pi_{k:K}}(\bsX_1) - V_{\vec{\lambda}, 1}^{\pi'_{1:k}, \pi_{(k+1):K}}(\bsX_1) \bigg].
  \end{align*}
  Thus, it suffices to show 
  \begin{equation*}
    % \label{eq:perf_diff_mid_res}
    \bbE[V_{\vec{\lambda}, 1}^{\pi'_{1:(k-1)}, \pi_{k:K}}(\bsX_1) - V_{\vec{\lambda}, 1}^{\pi'_{1:k}, \pi_{(k+1):K}}(\bsX_1)] 
    \leq \sfc_k \cdot \bbE_{\vec\bsH_k \sim q^{\obs}_k}[V_{\vec{\lambda}, k}^{\pi_{k:K}}(\vec\bsH_k) - V^{\pi'_k, \pi_{(k+1):K}}_{\vec{\lambda}, k}(\vec\bsH_k)]
  \end{equation*}
  for any $k\in[K]$.
  For $k = 1$, the above result is immediate as $\sfc_1 = 1$. For $k \geq 2$, iteratively invoking Lemma \ref{lemma:peeling} gives
  \begin{align*}
    & \bbE[V_{\vec{\lambda}, 1}^{\pi'_{1:(k-1)}, \pi_{k:K}}(\bsX_1) - V_{\vec{\lambda}, 1}^{\pi'_{1:k}, \pi_{(k+1):K}}(\bsX_1)] \\
    & \leq 
    \bbE_{\bsX_1}\sup_{p_{\pi'_1(\bsX_1)}} \bbE_{(R_1, \bsX_{2})\sim p_{\pi'_1(\bsX_1)}} [V^{\pi'_{2:(k-1)} \pi_{k:K} }_{\vec{\lambda}, 2}(\vec\bsH_{2}) - V^{\pi'_{2:k} \pi_{(k+1):K}}_{\vec{\lambda}, 2}(\vec\bsH_2)] \\
    & \leq 
    \bbE_{\bsX_1}\sup_{p_{\pi'_1(\bsX_1)}} \bbE_{(R_1, \bsX_{2})\sim p_{\pi'_1(\bsX_1)}} \sup_{p_{\pi_2'(\vec\bsH_2)}} \bbE_{(R_2, \bsX_3)\sim p_{\pi'_2(\vec\bsH_2)}} [V^{\pi'_{3:(k-1)} \pi_{k:K} }_{\vec{\lambda}, 3}(\vec\bsH_{3}) - V^{\pi'_{3:k} \pi_{(k+1):K}}_{\vec{\lambda}, 3}(\vec\bsH_3)] \\
    & \leq \cdots\\
    & \leq \bbE_{\bsX_1}\sup_{p_{\pi'_1(\bsX_1)}} \bbE_{(R_1, \bsX_{2})\sim p_{\pi'_1(\bsX_1)}} \cdots \sup_{p_{\pi_{k-1}'(\vec\bsH_{k-1})}} \bbE_{(R_{k-1}, \bsX_k)\sim p_{\pi'_{k-1}(\vec\bsH_{k-1})}} [V^{\pi_{k:K} }_{\vec{\lambda}, k}(\vec\bsH_{k}) - V^{\pi'_{k} \pi_{(k+1):K}}_{\vec{\lambda}, k}(\vec\bsH_k)]\\
    & \leq \sfc_k \cdot \bbE_{\vec\bsH_k\sim q^\obs_k} \bbE_{\vec\bsH_k \sim q^{\obs}_k}[V_{\vec{\lambda}, k}^{\pi_{k:K}}(\vec\bsH_k) - V^{\pi'_k, \pi_{(k+1):K}}_{\vec{\lambda}, k}(\vec\bsH_k)],
  \end{align*}
  where the last inequality is by Assumption \ref{assump:bdd_concentration_coef}.
  And the proof is concluded.
  % A similar calculation gives
  % \begin{equation}
  %   \label{eq:perf_diff_induction_2}
  %   \bbE[V_{\vec{\lambda}, 1}^{\pi'_{1:k}, \pi_{(k+1):K}}(\bsX_1)] 
  %   = \bbE_{\vec\bsH_2\sim q^{\pi'_1}_{2}} [ R_1 +  V^{\pi'_{2:k}, \pi_{(k+1):K}}_{\vec{\lambda}, 2}(\vec\bsH_2) ]
  % \end{equation}  
  % Thus, we have
  % \begin{align*}
  %   & \bbE\bigg[V_{\vec{\lambda}, 1}^{\pi'_{1:(k-1)}, \pi_{k:K}}(\bsX_1) - V_{\vec{\lambda}, 1}^{\pi'_{1:k}, \pi_{(k+1):K}}(\bsX_1)\bigg] 
  %   = \bbE_{\vec\bsH_2\sim q^{\pi'_1}_{2}} [ V^{\pi'_{2:(k-1)}, \pi_{k:K}}_{\vec{\lambda}, 2}(\vec\bsH_2) - V^{\pi'_{2:k}, \pi_{(k+1):K}}_{\vec{\lambda}, 2}(\vec\bsH_2)]
  % \end{align*}
  % Conducting similar calculations as those which give rise to \eqref{eq:perf_diff_induction_1} and \eqref{eq:perf_diff_induction_2} in an inductive fashion, we get
  % $$
  %   \bbE\bigg[V_{\vec{\lambda}, 1}^{\pi'_{1:(k-1)}, \pi_{k:K}}(\bsX_1) - V_{\vec{\lambda}, 1}^{\pi'_{1:k}, \pi_{(k+1):K}}(\bsX_1)\bigg] 
  %   = \bbE_{\vec\bsH_2\sim q^{\pi'_{1:(k-1)}}_{2}} [ V^{\pi'_{\pi_{k:K}}}_{\vec{\lambda}, 2}(\vec\bsH_2) - V^{\pi'_k, \pi_{(k+1):K}}_{\vec{\lambda}, 2}(\vec\bsH_2)],
  % $$
  % which is precisely \eqref{eq:perf_diff_mid_res}. 

\subsubsection{Proof of Lemma \ref{lemma:risk_and_regret}} \label{prf:lemma:risk_and_regret}
  By definition, we have  
  \begin{align*}
    V^{\pi}_{\vec{\lambda}, k}(\vec\bsh_k) & = Q^{\pi}_{\vec{\lambda}, k}(\vec\bsh_k, +1) \cdot \indc{\pi_k(\vec\bsh_k) = +1} + Q^\pi_{\vec{\lambda}, k}(\vec\bsh_k , -1) \cdot \indc{\pi_k(\vec\bsh_k) = -1} \\
    & = Q^\pi_{\vec{\lambda}, k}(\vec\bsh_k , +1) - \indc{\pi_k(\vec\bsh_k) = -1} \cdot \contr^{\pi}_{\vec{\lambda}, k}(\vec\bsh_{k}) \\
    & = Q^\pi_{\vec{\lambda}, k}(\vec\bsh_k , +1) 
    - \bigg(
      \Indc\Big\{\pi_k(\vec\bsh_k) = -1, \sgn(\contr^{\pi}_{\vec{\lambda}, k}(\vec\bsh_k)) = +1\Big\} \cdot  |\contr^{\pi}_{\vec{\lambda}, k}(\vec\bsh_k) |\\
      & \qquad \qquad \qquad \qquad \qquad  - \Indc\Big\{\pi_k(\vec\bsh_k) = -1, \sgn(\contr^{\pi}_{\vec{\lambda}, k}(\vec\bsh_k)) = -1\Big\} \cdot  |\contr^{\pi}_{\vec{\lambda}, k}(\vec\bsh_k) | 
    \bigg)\\
    & = Q^\pi_{\vec{\lambda}, k}(\vec\bsh_k , +1) 
    - \bigg(
      \Indc\Big\{\pi_k(\vec\bsh_k) = -1, \sgn(\contr^{\pi}_{\vec{\lambda}, k}(\vec\bsh_k)) = +1\Big\} \cdot  |\contr^{\pi}_{\vec{\lambda}, k}(\vec\bsh_k) |\\
      & \qquad \qquad \qquad \qquad \qquad  - \Indc\Big\{\sgn(\contr^{\pi}_{\vec{\lambda}, k}(\vec\bsh_k)) = -1\Big\} \cdot  |\contr^{\pi}_{\vec{\lambda}, k}(\vec\bsh_k) | \\
      & \qquad \qquad \qquad \qquad \qquad  + \Indc\Big\{\pi_k(\vec\bsh_k) = +1, \sgn(\contr^{\pi}_{\vec{\lambda}, k}(\vec\bsh_k)) = -1\Big\} \cdot  |\contr^{\pi}_{\vec{\lambda}, k}(\vec\bsh_k) |  
    \bigg)\\
    & = Q^\pi_{\vec{\lambda}, k}(\vec\bsh_k , +1)  
    + \Indc\Big\{\sgn(\contr^{\pi}_{\vec{\lambda}, k}(\vec\bsh_k)) = -1\Big\} \cdot  |\contr^{\pi}_{\vec{\lambda}, k}(\vec\bsh_k) | 
    - |\contr^{\pi}_{\vec{\lambda}, k}(\vec\bsh_k)| \cdot \Indc\Big\{ \sgn(\contr^{\pi}_{\vec{\lambda}, k}(\vec\bsh_{k})) \neq \pi_k(\vec\bsh_k) \Big\}\\
    & = \max_{a_k\in\{\pm 1\}} Q^{\pi}_{\vec{\lambda}, k}(\vec\bsh_k, a_k) - |\contr^{\pi}_{\vec{\lambda}, k}| \cdot \Indc\bigg\{ \sgn(\contr^{\pi}_{\vec{\lambda}, k}(\vec\bsh_{k})) \neq \pi_k(\vec\bsh_k) \bigg\},
  \end{align*}
  and the proof is concluded.

\subsubsection{Proof of Proposition \ref{prop:iv_opt_risk_bound}} \label{prf:prop:iv_opt_risk_bound}

We begin by proving two useful lemmas.

% \begin{customlemma}{B1}
\begin{lemma}[Uniform concentration of the risk]
\label{lemma:unif_concentration}
  Fix $k\in[K]$ and $\delta\in(0, 1)$. With probability at least $1-\delta$, we have
  \begin{align*}
    \sup_{\pi_k \in \Pi_k} \bigg| \hat \calR_k(\pi_k) -  \calR_k(\pi_k) \bigg| \lesssim (K - k + 1)\cdot \sqrt{\frac{\vc(\Pi_k) + \log(1/\delta)}{n}},
  \end{align*}
  where 
  $$
    \hat \calR_k(\pi_k) =  \frac{1}{n} \sum_{i =1}^n  |\contr^{\pi^\star}_{\vec{\lambda}, k}(\vec\bsh_{k, i})| \cdot \Indc\bigg\{\sgn(\contr^{\pi^\star}_{\vec{\lambda}, k}(\vec\bsh_{k, i})) \neq \pi_k(\vec\bsh_{k, i})\bigg\} 
  $$
  and $\calR_k(\pi_k)$ is defined in \eqref{eq:pop_risk_iv_opt_dtr}.
\end{lemma}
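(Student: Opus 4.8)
\textbf{Proof plan for Lemma \ref{lemma:unif_concentration}.}

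The plan is to recognize $\hat\calR_k(\pi_k) - \calR_k(\pi_k)$ as an empirical process indexed by the policy class $\Pi_k$ and bound its supremum by a standard symmetrization-plus-Rademacher-complexity argument, combined with a bounded-differences concentration inequality. First I would define, for each $\pi_k \in \Pi_k$, the function $g_{\pi_k}(\vec\bsh_k) = |\contr^{\pi^\star}_{\vec\lambda, k}(\vec\bsh_k)| \cdot \Indc\{\sgn(\contr^{\pi^\star}_{\vec\lambda, k}(\vec\bsh_k)) \neq \pi_k(\vec\bsh_k)\}$, so that $\hat\calR_k(\pi_k) = \frac1n\sum_{i=1}^n g_{\pi_k}(\vec\bsh_{k,i})$ and $\calR_k(\pi_k) = \bbE[g_{\pi_k}(\vec\bsH_k^\obs)]$, where the $\vec\bsh_{k,i}$ are i.i.d.\ draws from $q_k^\obs$. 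The key boundedness observation is that $|\contr^{\pi^\star}_{\vec\lambda, k}(\vec\bsh_k)|$ is the absolute value of a contrast of weighted $Q$-functions, each of which is sandwiched between $\sum_{t \geq k}\underline C_t$ and $\sum_{t\geq k}\overline C_t$ (as noted in Section \ref{subsec: estimate IV-optimal DTR}); hence $\|g_{\pi_k}\|_\infty \lesssim \sum_{t\geq k}(\overline C_t - \underline C_t) \lesssim (K-k+1)$, treating the per-stage reward ranges as $O(1)$ constants. This gives the $(K-k+1)$ prefactor.

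Next I would apply the bounded differences (McDiarmid) inequality to the random variable $\sup_{\pi_k\in\Pi_k}|\hat\calR_k(\pi_k) - \calR_k(\pi_k)|$: changing a single sample $\vec\bsh_{k,i}$ changes $\hat\calR_k(\pi_k)$ by at most $2\|g_{\pi_k}\|_\infty / n \lesssim (K-k+1)/n$ uniformly in $\pi_k$, so with probability at least $1-\delta$ the supremum is within $(K-k+1)\sqrt{\log(1/\delta)/n}$ (up to constants) of its expectation. It then remains to bound $\bbE\big[\sup_{\pi_k\in\Pi_k}|\hat\calR_k(\pi_k) - \calR_k(\pi_k)|\big]$ by standard symmetrization, yielding twice the Rademacher complexity of the function class $\{g_{\pi_k} : \pi_k \in \Pi_k\}$. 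Since $g_{\pi_k}(\vec\bsh_k) = |\contr^{\pi^\star}_{\vec\lambda,k}(\vec\bsh_k)|\cdot\tfrac12(1 - \pi_k(\vec\bsh_k)\cdot\sgn(\contr^{\pi^\star}_{\vec\lambda,k}(\vec\bsh_k)))$, the weight $|\contr^{\pi^\star}_{\vec\lambda,k}|$ is a fixed (non-random-in-$\pi_k$) envelope bounded by $O(K-k+1)$, and the contraction/Ledoux--Talagrand inequality reduces the Rademacher complexity of $\{g_{\pi_k}\}$ to $O(K-k+1)$ times the Rademacher complexity of $\{\vec\bsh_k \mapsto \pi_k(\vec\bsh_k) : \pi_k\in\Pi_k\}$, a class of $\{\pm1\}$-valued functions with VC dimension $\vc(\Pi_k)$. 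The classical VC bound on Rademacher complexity (e.g.\ via Dudley's entropy integral, or directly) gives $O(\sqrt{\vc(\Pi_k)/n})$ for this last term.

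Putting the two pieces together — the McDiarmid tail term $(K-k+1)\sqrt{\log(1/\delta)/n}$ and the expected-supremum term $(K-k+1)\sqrt{\vc(\Pi_k)/n}$ — and combining the square roots yields the claimed bound $(K-k+1)\sqrt{(\vc(\Pi_k) + \log(1/\delta))/n}$ up to an absolute constant. The main obstacle, if any, is bookkeeping rather than conceptual: one must be careful that the weight $|\contr^{\pi^\star}_{\vec\lambda,k}|$ is indeed functionally \emph{independent} of the index $\pi_k$ (it depends only on $\pi^\star_{(k+1):K}$, which is fixed here), so that it can legitimately be pulled out as an envelope before applying the contraction inequality; and one must justify the uniform $L^\infty$ bound on the contrast via the almost-sure boundedness of the pseudo-outcomes, which is where the $(K-k+1)$ factor genuinely enters. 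I would state the measurability of the supremum as routine (the policy class can be taken pointwise-measurable, or one works with a countable dense subclass) and not dwell on it.
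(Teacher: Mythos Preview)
Your proposal is correct and shares the same high-level architecture as the paper's proof: apply McDiarmid's bounded-differences inequality to reduce to the expected supremum, symmetrize to a Rademacher process, and then control its complexity via $\vc(\Pi_k)$. The difference is in the last step. You pull out the fixed weight $|\contr^{\pi^\star}_{\vec\lambda,k}|$ as an $O(K-k+1)$ envelope via the Ledoux--Talagrand contraction inequality, reducing directly to the Rademacher complexity of the $\{\pm 1\}$-valued class $\Pi_k$ and then invoking the classical VC bound. The paper instead applies Dudley's entropy integral to the weighted function class $\{g_{\pi_k}\}$ and bounds its $L^2(\mu_n)$ covering numbers in terms of those of $\Pi_k$ by an explicit calculation showing $\|f^{\pi_k} - f^{\pi_k'}\|_{L^2(\mu_n)} \lesssim (K-k+1)\,\|\pi_k - \pi_k'\|_{L^2(\mu_n)}$. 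Both routes are standard and yield the same bound; your contraction argument is slightly more streamlined and avoids the entropy-integral computation, while the paper's covering-number route makes more explicit where the $(K-k+1)$ factor enters the metric on the function class. Your observation that the weight depends only on $\pi^\star_{(k+1):K}$ and not on the index $\pi_k$ is exactly the point needed to make either argument go through.
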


\begin{proof}
Let $f^{\pi_k}(\{\vec\bsh_{k,i}\}):= \hat \calR_k(\pi_k) -  \calR_k(\pi_k)$. For an arbitrary index $j\in[n]$, introduce $\{\vec\bsh_{k, i}'\}_{i=1}^n$, where $\vec\bsh_{k,i}' = \vec\bsh_{k, i}$ for any $i\neq j$, and $\vec\bsh_{k, j}'$ is an independent copy of $\vec\bsh_{k, j}$. We then have
\begin{align*}
  & |f^{\pi_k}(\{\vec\bsh_{k, i}\})| - \sup_{\pi_k'\in\Pi_k}
  |f^{\pi_k'}(\{\vec\bsh_{k, i}'\})| \\
  & \leq |f^{\pi_k}(\{\vec\bsh_{k,i}\})| -|f^{\pi_k}(\{\vec\bsh_{k, i}'\})|\\
  & \leq \frac{1}{n} \bigg| |\contr^{\pi^\star}_{\vec{\lambda}, k} (\vec\bsh_{k, j})| \cdot \Indc\bigg\{ \sgn(\contr_{\vec{\lambda}, k}^{\pi^\star}(\vec\bsh_{k,j})) \neq \pi_k(\vec\bsh_{k, j}) \bigg\}
  -
  |\contr^{\pi^\star}_{\vec{\lambda}, k} (\vec\bsh_{k, j}')| \cdot \Indc\bigg\{ \sgn(\contr_{\vec{\lambda}, k}^{\pi^\star}(\vec\bsh_{k,j}')) \neq \pi_k(\vec\bsh_{k, j}') \bigg\} 
  \bigg| \\
  & \lesssim \frac{(K-k+1)}{n},
\end{align*}  
where the last step is by our assumption that the reward at each step is bounded. Thus, we have 
$$
  \sup_{\pi_k\in\Pi_k}|f^{\pi_k}(\{\vec\bsh_{k, i}\})| - \sup_{\pi_k'\in\Pi_k}|f^{\pi_k'}(\{\vec\bsh_{k, i}'\})|  \lesssim \frac{(K-k+1)}{n}.
$$
A symmetric argument further gives 
$$
\Big|\sup_{\pi_k\in\Pi_k} |f^{\pi_k}(\{\vec\bsh_{k, i}\})| - \sup_{\pi_k'\in\Pi_k} |f^{\pi_k'}(\{\vec\bsh_{k, i}'\})|  \Big| \lesssim \frac{(K-k+1)}{n},
$$
which allows us to invoke McDiarmid's inequality to conclude that
\begin{equation}
  \label{eq:unif_concentration_around_expectation}
  \sup_{\pi_k\in\Pi_k}|f^{\pi_k}(\{\vec\bsh_{k, i}\})| - \bbE \sup_{\pi_k\in\Pi_k} |f^{\pi_k}(\{\vec\bsh_{k, i}\})|\lesssim \frac{(K-k+1)\sqrt{\log(1/\delta)}}{\sqrt{n}}
\end{equation}  
with probability at least $1-\delta$. We then focus on bounding the expectation term. Using a standard symmetrization argument (see, e.g., Lemma 11.4 of \citealt{boucheron2013concentration}), we get
\begin{equation}
  \label{eq:unif_concentration_symmetrization}
  \bbE \sup_{\pi_k\in\Pi_k} |f^{\pi_k}(\{\vec\bsh_{k, i}\})| \leq \frac{2}{\sqrt{n}} \bbE_{\{\vec\bsh_{k, i}\}} \bbE_{\{\xi_i\}}\Big[ \sup_{\pi_k\in\Pi_k} |M_{\pi_k}| \Big] 
\end{equation}  
where
\begin{align*}
  M_{\pi_k} & = \frac{1}{\sqrt{n}}\sum_{i=1}^n \xi_i \bigg( |\contr^{\pi^\star}_{\vec{\lambda}, k}(\vec\bsh_{k, i})| \indc{\sgn(\contr^{\pi^\star}_{\vec{\lambda}, k}(\vec\bsh_{k, i}))\neq \pi_k(\vec\bsh_{k,i})}  - \bbE\Big[|\contr^{\pi^\star}_{\vec{\lambda}, k}(\vec\bsh_{k, i})| \indc{\sgn(\contr^{\pi^\star}_{\vec{\lambda}, k}(\vec\bsh_{k, i}))\neq \pi_k(\vec\bsh_{k,i})}\Big]\bigg)
\end{align*}  
is the empirical process indexed by $\pi_k \in \Pi_k$ and $\{\xi_i\}$ are i.i.d.~Rademacher random variables. Since $M_{\pi_k^\star} = 0$, we can further bound the right-hand side of  \eqref{eq:unif_concentration_symmetrization} by
$$
  \frac{2}{\sqrt{n}} \bbE_{\{\vec\bsh_{k, i}\}} \bbE_{\{\xi_i\}}\Big[ \sup_{\pi_k\in\Pi_k\cup\{\pi_k^\star\}} |M_{\pi_k} - M_{\pi_k^\star}| \Big].
$$
% \begin{align*}
%   \bbE \sup_{\pi_k\in\Pi_k}f^{\pi_k}(\{\vec\bsh_{k, i}\}) 
%   & \leq 2 \bbE\bigg[
%   \sup_{\pi_k\in\Pi_k} \bigg| \frac{1}{n}\sum_{i=1}^n \xi_i \bigg( |\contr^{\pi^\star}_{\vec{\lambda}, k}(\vec\bsh_{k, i})| \indc{\sgn(\contr^{\pi^\star}_{\vec{\lambda}, k}(\vec\bsh_{k, i}))\neq \pi_k(\vec\bsh_{k,i})} \\
%   & \qquad\qquad\qquad\qquad - \bbE\Big[|\contr^{\pi^\star}_{\vec{\lambda}, k}(\vec\bsh_{k, i})| \indc{\sgn(\contr^{\pi^\star}_{\vec{\lambda}, k}(\vec\bsh_{k, i}))\neq \pi_k(\vec\bsh_{k,i})}\Big]\bigg) \bigg|
%   \bigg],
% \end{align*}  
% where the expectation is taken both over the randomness of the data and the randomness in the i.i.d.~Rademacher random variables $\{\xi_i\}$. 
Conditional on $\{\xi_i\}$, $M_{\pi_k}$ is a Rademacher process satisfying
$
  \|M_{\pi_k}, M_{\pi_k}'\|_{\psi_2} \lesssim \|f^{\pi_k} - f^{\pi_k'}\|_{L^2(\mu_n)},
$
where $\|\cdot\|_{\psi_2}$ is the sub-Gaussian norm (see, e.g., Definition 2.5.6 in \citealt{vershynin2018high}) and $\mu_n$ is the empirical measure on $\{\vec\bsh_{k, i}\}_{i=1}^n$.
Using Dudley’s integral inequality (see, e.g., Theorem 8.1.3 in \citealt{vershynin2018high}), we get 
\begin{equation}
  \label{eq:unif_concentration_dudley}
  \bbE_{\{\xi_i\}}\Big[ \sup_{\pi_k\in\Pi_k\cup\{\pi_k^\star\}} |M_{\pi_k} - M_{\pi_k^\star}| \Big] \lesssim \int_0^{D} \sqrt{\log \calN(\calF, L^2(\mu_n), \ep)} d \ep,
\end{equation}  
where 
$
\calF = \big\{ f^{\pi_k}: \pi_k\in\Pi_k\cup\{\pi_k^\star\} \big\}
$
is the function class under consideration, $\calN(\calF, L^2(\mu_n), \ep)$ is the $\ep$-covering number, defined as the minimum number of $L^2(\mu_n)$ balls with radius $\ep$ required to cover $\calF$, and $D$ is the radius of $\calF$ with respect to $L^2(\mu_n)$ metric.
We then relate the covering number of $\calF$ to the VC dimension of $\Pi_k$. To do this, note that
\begin{align*}
  & \Indc\Big\{ \sgn(\hat \contr^\star_{\vec{\lambda}, k} (\vec\bsh_{k, i}; B_{-j_i})) \neq \pi_k(\vec\bsh_{k,i}) \Big\}
  -
  \Indc\Big\{ \sgn(\hat \contr^\star_{\vec{\lambda}, k} (\vec\bsh_{k, i}; B_{-j_i})) \neq \pi_k'(\vec\bsh_{k,i}) \Big\} \\
  & = \frac{1}{4} \bigg[ \bigg(  \sgn(\hat \contr^\star_{\vec{\lambda}, k} (\vec\bsh_{k, i}; B_{-j_i}))  - \pi_k(\vec\bsh_{k, i})\bigg)^2 - \bigg( \sgn(\hat \contr^\star_{\vec{\lambda}, k} (\vec\bsh_{k, i}; B_{-j_i}))  - \pi'_k(\vec\bsh_{k, i})\bigg)^2 \bigg]\\
  & = \frac{1}{4} \bigg(2 \cdot \sgn(\contr^{\pi^\star}_{\vec{\lambda}, k}(\vec\bsh_{k, i})) - \pi_k(\vec\bsh_{k, i}) - \pi_k'(\vec\bsh_{k, i})\bigg) \bigg( \pi_k'(\vec\bsh_{k, i}) - \pi_k(\vec\bsh_{k, i})\bigg).
\end{align*}
With the above display, we have
\begin{align*}
  & \|f^{\pi_k} - f^{\pi_k'}\|_{L^2(\mu_n)} \\
  & = \frac{1}{n} \sum_{i=1}^n 
  \bigg(
  |\contr^{\pi^\star}_{\vec{\lambda}, k} (\vec\bsh_{k, i})| \cdot \Indc\Big\{ \sgn(\hat \contr^\star_{\vec{\lambda}, k} (\vec\bsh_{k, i}; B_{-j_i})) \neq \pi_k(\vec\bsh_{k,i}) \Big\} 
  - 
  \bbE\Big[|\contr^{\pi^\star}_{\vec{\lambda}, k} (\vec\bsh_{k, i})| \cdot \Indc\Big\{ \sgn(\hat \contr^\star_{\vec{\lambda}, k} (\vec\bsh_{k, i}; B_{-j_i})) \neq \pi_k(\vec\bsh_{k,i}) \Big\}\Big] \\
  & \qquad - 
  |\contr^{\pi^\star}_{\vec{\lambda}, k} (\vec\bsh_{k, i})| \cdot \Indc\Big\{ \sgn(\hat \contr^\star_{\vec{\lambda}, k} (\vec\bsh_{k, i}; B_{-j_i})) \neq \pi_k'(\vec\bsh_{k,i}) \Big\} 
  +
  \bbE\Big[|\contr^{\pi^\star}_{\vec{\lambda}, k} (\vec\bsh_{k, i})| \cdot \Indc\Big\{ \sgn(\hat \contr^\star_{\vec{\lambda}, k} (\vec\bsh_{k, i}; B_{-j_i})) \neq \pi_k'(\vec\bsh_{k,i}) \Big\}\Big]
  \bigg)^2\\
  & = \frac{1}{n} \sum_{i=1}^n \bigg( \frac{|\contr^{\pi^\star}_{\vec{\lambda}, k}(\vec\bsh_{k, i})|}{4} \cdot \Big( 2 \cdot \sgn(\contr^{\pi^\star}_{\vec{\lambda}, k}(\vec\bsh_{k, i})) - \pi_k(\vec\bsh_{k, i}) - \pi_k'(\vec\bsh_{k, i}) \Big)\Big( \pi_k'(\vec\bsh_{k, i}) - \pi_k(\vec\bsh_{k, i})\Big)\\
  & \qquad 
  - 
  \bbE\Big[\frac{|\contr^{\pi^\star}_{\vec{\lambda}, k}(\vec\bsh_{k, i})|}{4} \cdot \Big( 2 \cdot \sgn(\contr^{\pi^\star}_{\vec{\lambda}, k}(\vec\bsh_{k, i})) - \pi_k(\vec\bsh_{k, i}) - \pi_k'(\vec\bsh_{k, i}) \Big)\Big( \pi_k'(\vec\bsh_{k, i}) - \pi_k(\vec\bsh_{k, i})\Big)\Big]
  \bigg)^2\\
  & \lesssim \frac{(K-k+1)^2 }{n} \sum_{i=1}^n \bigg(\Big( \pi_k'(\vec\bsh_{k, i}) - \pi_k(\vec\bsh_{k, i}) \Big)^2 +  \Big( \bbE[|\pi_k'(\vec\bsh_{k, i}) - \pi_k(\vec\bsh_{k, i})|] \Big)^2\bigg)\\
  & \leq \frac{(K-k+1)^2 }{n} \sum_{i=1}^n \Big( \pi_k'(\vec\bsh_{k, i}) - \pi_k(\vec\bsh_{k, i}) \Big)^2 
  +
  \bbE\bigg[\frac{(K-k+1)^2 }{n} \sum_{i=1}^n \Big(\pi_k'(\vec\bsh_{k, i}) - \pi_k(\vec\bsh_{k, i})\Big)^2\bigg],
\end{align*}
where the last line follows from Jensen's inequality. Recalling the definition of the covering number, we have established that
$$
  \log \calN(\calF, L^2(\mu_n), \ep) \leq \log \calN\bigg(\Pi_k\cup\{\pi^\star_k\}, L^2(\mu_n), \frac{c\cdot \ep}{(K-k+1)}\bigg),
$$
where $c>0$ is an absolute constant. Since $\vc(\Pi_{k}\cup\{\pi^\star_k\})\asymp \vc(\Pi_k)$, we can invoke Theorem 8.3.18 in \citet{vershynin2018high} to get
$$
\log \calN(\calF, L^2(\mu_n), \ep) \lesssim \vc(\Pi_k) \cdot \log\bigg(\frac{2 (K-k+1)}{c\cdot\ep}\bigg).
$$
Plugging the above display to \eqref{eq:unif_concentration_dudley} and noting that $D \lesssim {K-k+1}$ gives that ($c'$ is another absolute constant)
\begin{align*}
  \bbE_{\{\xi_i\}}\Big[ \sup_{\pi_k\in\Pi_k\cup\{\pi_k^\star\}} |M_{\pi_k} - M_{\pi_k^\star}| \Big] 
  & \lesssim \sqrt{\vc(\Pi_k)} \int_0^{c'\cdot({K-k+1})} \sqrt{ \log\bigg(\frac{2 (K-k+1)}{c\cdot\ep}\bigg)} d\ep \\
  & = \sqrt{\vc(\Pi_k)}\frac{2(K-k+1)}{c} \int_0^{\frac{cc'}{2}} \sqrt{\log(1/u)}du\\
  & \lesssim (K-k+1) \sqrt{\vc(\Pi_k)}.
\end{align*}  
Plugging the above inequality to \eqref{eq:unif_concentration_symmetrization} and recalling \eqref{eq:unif_concentration_around_expectation}, the proof is concluded.
\end{proof}

\begin{lemma}[Crossing fitting]
\label{lemma:coupling}
Fix $k\in[K]$ and $\delta\in(0, 1)$. Let Assumption \ref{assump:contrast_estimation_oracle} hold. 
% In addition, assume $R_t\in[0, 1]$ for any $k\leq t \leq K$ and $m\asymp 1$. 
With probability at least $1-\delta$, we have
\begin{align*}  
  & \sup_{\pi_k\in\Pi_k} \Bigg\{\frac{1}{n} \sum_{i=1}^n |\contr^{\pi^\star}_{\vec{\lambda}, k} (\vec\bsh_{k, i})| 
  \cdot 
  \Indc\Big\{ \sgn(\contr^{\pi^\star}_{\vec{\lambda}, k} (\vec\bsh_{k, i})) \neq \hat \pi_k^\star(\vec\bsh_{k, i}) \Big\} \\
  & \qquad \qquad - \frac{1}{n} \sum_{i=1}^n |\hat \contr^\star_{\vec{\lambda}, k} (\vec\bsh_{k, i}; B_{-j_i})| \cdot \Indc\Big\{ \sgn(\hat \contr^\star_{\vec{\lambda}, k} (\vec\bsh_{k, i}; B_{-j_i})) \neq \hat \pi^\star_k(\vec\bsh_{k, i}) \Big\} \Bigg\} \\
  \label{eq:coupling}
  & \lesssim  \sfC_{k, \frac{\delta}{2m}} \cdot n^{-\alpha_k} + (K-k+1)\sqrt{\frac{\log(1/\delta)}{n}} .\numberthis
\end{align*}
\end{lemma}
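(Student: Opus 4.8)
\textbf{Proof plan for Lemma \ref{lemma:coupling}.}
The plan is to bound the supremum over $\pi_k \in \Pi_k$ of the difference between the empirical risk computed with the \emph{true} contrast function $\contr^{\pi^\star}_{\vec\lambda, k}$ and the empirical risk computed with the cross-fitted estimate $\hat\contr^\star_{\vec\lambda, k}(\cdot; B_{-j_i})$, both evaluated at the data-dependent minimizer $\hat\pi^\star_k$. The key observation is that for a fixed classifier $\pi_k$, the integrand differs between the two risks only through (i) the magnitude $|\contr^{\pi^\star}_{\vec\lambda, k}(\vec\bsh_{k,i})|$ versus $|\hat\contr^\star_{\vec\lambda, k}(\vec\bsh_{k,i}; B_{-j_i})|$ as a multiplicative weight, and (ii) the indicator $\Indc\{\sgn(\contr^{\pi^\star}_{\vec\lambda, k}(\vec\bsh_{k,i})) \neq \pi_k(\vec\bsh_{k,i})\}$ versus $\Indc\{\sgn(\hat\contr^\star_{\vec\lambda, k}(\vec\bsh_{k,i}; B_{-j_i})) \neq \pi_k(\vec\bsh_{k,i})\}$. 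First I would add and subtract the ``hybrid'' quantity in which the weight is $|\contr^{\pi^\star}_{\vec\lambda, k}|$ but the indicator uses $\sgn(\hat\contr^\star_{\vec\lambda, k})$, splitting the difference into a \emph{weight-replacement} term and a \emph{sign-disagreement} term. For the weight-replacement term, bounding $\big||\contr^{\pi^\star}_{\vec\lambda, k}(\vec\bsh_{k,i})| - |\hat\contr^\star_{\vec\lambda, k}(\vec\bsh_{k,i}; B_{-j_i})|\big| \leq |\contr^{\pi^\star}_{\vec\lambda, k}(\vec\bsh_{k,i}) - \hat\contr^\star_{\vec\lambda, k}(\vec\bsh_{k,i}; B_{-j_i})|$ and averaging gives a quantity controlled by the contrast-estimation error.

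Next I would handle the cross-fitting mechanics. For each batch $B_j$, the estimate $\hat\contr^\star_{\vec\lambda, k}(\cdot; B_{-j})$ is computed from the out-of-batch samples $B_{-j}$, so conditionally on $B_{-j}$, the in-batch samples $\{\vec\bsh_{k,i}\}_{i\in B_j}$ are i.i.d.\ draws from $q_k^\obs$ independent of $\hat\contr^\star_{\vec\lambda, k}(\cdot; B_{-j})$. This means $\frac{1}{n_j}\sum_{i\in B_j}|\contr^{\pi^\star}_{\vec\lambda, k}(\vec\bsh_{k,i}) - \hat\contr^\star_{\vec\lambda, k}(\vec\bsh_{k,i}; B_{-j})|$ concentrates around its conditional expectation $\bbE|\contr^{\pi^\star}_{\vec\lambda, k}(\vec\bsH_k^\obs) - \hat\contr^\star_{\vec\lambda, k}(\vec\bsH_k^\obs; B_{-j})|$ — and the latter is bounded by $\sfC_{k, \delta'} n^{-\alpha_k}$ with probability $1-\delta'$ by Assumption \ref{assump:contrast_estimation_oracle} (applied with $n$ replaced by $n_j \asymp n/m$, which is absorbed into the constant). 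A Hoeffding-type bound on the in-batch average (using the boundedness of the rewards, hence of the contrasts, so the summands are bounded by $\calO(K-k+1)$) then contributes the $(K-k+1)\sqrt{\log(1/\delta)/n}$ term. Taking a union bound over the $m\asymp 1$ batches accounts for the $\delta/(2m)$ in the statement. This disposes of the weight-replacement term uniformly over $\pi_k$, since it does not actually depend on $\pi_k$ (the weight gap dominates the indicator difference pointwise).

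The main obstacle is the \emph{sign-disagreement} term: $\frac{1}{n}\sum_i |\contr^{\pi^\star}_{\vec\lambda, k}(\vec\bsh_{k,i})| \cdot \big(\Indc\{\sgn(\contr^{\pi^\star}_{\vec\lambda, k}(\vec\bsh_{k,i})) \neq \pi_k(\vec\bsh_{k,i})\} - \Indc\{\sgn(\hat\contr^\star_{\vec\lambda, k}(\vec\bsh_{k,i}; B_{-j_i})) \neq \pi_k(\vec\bsh_{k,i})\}\big)$, supremized over $\pi_k \in \Pi_k$. Here I would use the pointwise identity that the two indicators differ only when $\sgn(\contr^{\pi^\star}_{\vec\lambda, k}(\vec\bsh_{k,i})) \neq \sgn(\hat\contr^\star_{\vec\lambda, k}(\vec\bsh_{k,i}; B_{-j_i}))$, and on that event the weight $|\contr^{\pi^\star}_{\vec\lambda, k}(\vec\bsh_{k,i})| \leq |\contr^{\pi^\star}_{\vec\lambda, k}(\vec\bsh_{k,i}) - \hat\contr^\star_{\vec\lambda, k}(\vec\bsh_{k,i}; B_{-j_i})|$ (since a sign flip forces the true value to be no larger in magnitude than the estimation gap). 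This kills the $\pi_k$-dependence entirely: the whole sign-disagreement term is bounded, uniformly in $\pi_k$, by $\frac{1}{n}\sum_i |\contr^{\pi^\star}_{\vec\lambda, k}(\vec\bsh_{k,i}) - \hat\contr^\star_{\vec\lambda, k}(\vec\bsh_{k,i}; B_{-j_i})|$, which is exactly the same cross-fitted contrast-estimation average already controlled above. Combining the two bounds and re-collecting constants yields \eqref{eq:coupling}. Crucially, this argument needs no empirical-process/VC machinery — the $\vc(\Pi_k)$ dependence appears only in Lemma \ref{lemma:unif_concentration}, not here — because the ``supremum over $\pi_k$'' is neutralized by the pointwise magnitude bound on the sign-flip event; the only subtlety to get right is the bookkeeping for applying the oracle assumption per-batch and unioning over batches.
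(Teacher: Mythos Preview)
Your proposal is correct and follows essentially the same route as the paper. Both arguments reduce the supremum to a pointwise bound of the form $\lesssim |\contr^{\pi^\star}_{\vec\lambda,k}(\vec\bsh_{k,i}) - \hat\contr^\star_{\vec\lambda,k}(\vec\bsh_{k,i};B_{-j_i})|$ via a two-term decomposition (the paper adds and subtracts $|\hat\contr^\star|\cdot\Indc\{\sgn(\contr^{\pi^\star})\neq\pi_k\}$ rather than your $|\contr^{\pi^\star}|\cdot\Indc\{\sgn(\hat\contr^\star)\neq\pi_k\}$, so the roles of the weight and sign-flip pieces are swapped, but the same ``sign-flip forces the surviving magnitude below the gap'' trick applies), and then both finish with per-batch Hoeffding plus the oracle assumption and a union bound over the $m$ batches. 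One small slip: the oracle in Assumption~\ref{assump:contrast_estimation_oracle} is applied to the \emph{out-of-batch} samples $B_{-j}$ of size $n-n_j$, not the in-batch $n_j$; since $m\asymp 1$ both are $\asymp n$, so your conclusion is unaffected.
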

\begin{proof}
For each $i\in[n]$, we have
\begin{align*}
  & |\contr^{\pi^\star}_{\vec{\lambda}, k} (\vec\bsh_{k, i})| \cdot \Indc\Big\{ \sgn(\contr^{\pi^\star}_{\vec{\lambda}, k} (\vec\bsh_{k, i})) \neq \hat \pi_k^\star(\vec\bsh_{k, i}) \Big\} 
  - 
  |\hat \contr^\star_{\vec{\lambda}, k} (\vec\bsh_{k, i}; B_{-j_i})| \cdot \Indc\Big\{ \sgn(\hat \contr^\star_{\vec{\lambda}, k} (\vec\bsh_{k, i}; B_{-j_i})) \neq \hat \pi^\star_k(\vec\bsh_{k, i}) \Big\} \\
  & = 
  \frac{1}{4} 
  \bigg[
    |\contr^{\pi^\star}_{\vec{\lambda}, k}(\vec\bsh_{k, i})| \cdot \Big( \sgn(\contr^{\pi^\star}_{\vec{\lambda}, k}(\vec\bsh_{k, i})) - \hat\pi^\star(\vec\bsh_{k, i})\Big)^2
    -
    |\hat \contr^\star_{\vec{\lambda}, k} (\vec\bsh_{k, i}; B_{-j_i})| \cdot \Big(\sgn(\contr^{\pi^\star}_{\vec{\lambda}, k}(\vec\bsh_{k, i})) - \hat\pi^\star_k(\vec\bsh_{k, i})\Big)^2\\
    & ~~ + 
    % |\hat \contr^\star_{\vec{\lambda}, k} (\vec\bsh_{k, i}; B_{-j_i})| (\sgn(\contr^{\pi^\star}_{\vec{\lambda}, k}(\vec\bsh_{k, i})) - \hat\pi^\star_k(\vec\bsh_{k, i}))
    |\hat \contr^\star_{\vec{\lambda}, k} (\vec\bsh_{k, i}; B_{-j_i})| \cdot \Big(\sgn(\contr^{\pi^\star}_{\vec{\lambda}, k}(\vec\bsh_{k, i})) - \hat\pi^\star_k(\vec\bsh_{k, i})\Big)^2
    - 
    |\hat \contr^\star_{\vec{\lambda}, k} (\vec\bsh_{k, i}; B_{-j_i})| \cdot 
    \Big(\sgn(\hat \contr^\star_{\vec{\lambda}, k} (\vec\bsh_{k, i}; B_{-j_i})) - \hat \pi^\star_k(\vec\bsh_{k, i})  \Big)^2
    % \Indc\Big\{ \sgn(\hat \contr^\star_{\vec{\lambda}, k} (\vec\bsh_{k, i}; B_{-j_i})) \neq \hat \pi^\star_k(\vec\bsh_{k, i}) \Big\}
  \bigg]\\
  & = \frac{1}{4} \bigg[ \underbrace{\Big(\contr^{\pi^\star}_{\vec{\lambda}, k}(\vec\bsh_{k, i}) - \hat \contr^\star_{\vec{\lambda}, k} (\vec\bsh_{k, i}; B_{-j_i})\Big) \Big( \sgn(\contr^{\pi^\star}_{\vec{\lambda}, k}(\vec\bsh_{k, i})) - \hat\pi^\star_k(\vec\bsh_{k, i})\Big)^2}_{\RN{1}} \\
  & ~~ + \underbrace{|\hat \contr^\star_{\vec{\lambda}, k} (\vec\bsh_{k, i}; B_{-j_i})| \Big( \sgn(\contr^{\pi^\star}_{\vec{\lambda}, k}(\vec\bsh_{k, i})) + \sgn(\hat \contr^\star_{\vec{\lambda}, k} (\vec\bsh_{k, i}; B_{-j_i}))
   - 2 \hat\pi^\star_k(\vec\bsh_{k, i}) \Big)\Big( \sgn(\contr^{\pi^\star}_{\vec{\lambda}, k}(\vec\bsh_{k, i})) - \sgn(\hat \contr^\star_{\vec{\lambda}, k} (\vec\bsh_{k, i}; B_{-j_i})) \Big)}_{\RN{2}}
  \bigg].
\end{align*}
It is clear that Term $\RN{1}\lesssim |\contr^{\pi^\star}_{\vec{\lambda}, k}(\vec\bsh_{k, i}) - \hat \contr^\star_{\vec{\lambda}, k} (\vec\bsh_{k, i}; B_{-j_i})|$. Note that Term \RN{2} can be non-zero only if the sign of $\contr^{\pi^\star}_{\vec{\lambda}, k}(\vec\bsh_{k, i})$ and $\hat \contr^\star_{\vec{\lambda}, k} (\vec\bsh_{k, i}; B_{-j_i})$ disagree, in which case we have $|\hat \contr^\star_{\vec{\lambda}, k} (\vec\bsh_{k, i}; B_{-j_i})| \leq |\contr^{\pi^\star}_{\vec{\lambda}, k}(\vec\bsh_{k, i}) - \hat \contr^\star_{\vec{\lambda}, k} (\vec\bsh_{k, i}; B_{-j_i})|$. Thus, Term $\RN{2}$ also satisfies $ \RN{2}\lesssim |\contr^{\pi^\star}_{\vec{\lambda}, k}(\vec\bsh_{k, i}) - \hat \contr^\star_{\vec{\lambda}, k} (\vec\bsh_{k, i}; B_{-j_i})|$. Thus, the right-hand side of the above display can be upper bounded by a constant multiple of $|\contr^{\pi^\star}_{\vec{\lambda}, k}(\vec\bsh_{k, i}) - \hat \contr^\star_{\vec{\lambda}, k} (\vec\bsh_{k, i}; B_{-j_i})|$, which further gives
\begin{align*}
  \textnormal{LHS of \eqref{eq:coupling}} & \lesssim \frac{1}{n} \sum_{i=1}^n |\contr^{\pi^\star}_{\vec{\lambda}, k}(\vec\bsh_{k, i}) - \hat \contr^\star_{\vec{\lambda}, k} (\vec\bsh_{k, i}; B_{-j_i})|\\
  & = \frac{1}{n} \sum_{j=1}^m n_j \cdot \frac{1}{n_j}\sum_{i\in B_j} |\contr^{\pi^\star}_{\vec{\lambda}, k}(\vec\bsh_{k, i}) - \hat \contr^\star_{\vec{\lambda}, k}(\vec\bsh_{k, i}, B_{-j})|
\end{align*}
Since each $|\contr^{\pi^\star}_{\vec{\lambda}, k}(\vec\bsh_{k, i}) - \hat \contr^\star_{\vec{\lambda}, k}(\vec\bsh_{k, i}, B_{-j})|$ is bounded in $[0, C(K-k+1)]$ for some absolute constant $C>0$ and $\hat \contr^\star_{\vec{\lambda}, k}(\vec\bsh_{k, i}, B_{-j})$ is independent of $\{\vec\bsh_{k, i}: i\in B_j\}$, invoking Hoeffding's inequality and recalling Assumption \ref{assump:contrast_estimation_oracle} give that for any $j\in[m]$, 
\begin{align*}
  \frac{1}{n_j}\sum_{i\in B_j} |\contr^{\pi^\star}_{\vec{\lambda}, k}(\vec\bsh_{k, i}) - \hat \contr^\star_{\vec{\lambda}, k}(\vec\bsh_{k, i}, B_{-j})|
  & \lesssim  \bbE |\contr^{\pi^\star}_{\vec{\lambda}, k}(\vec\bsH_{k}^\obs) - \hat \contr^{\star}_{\vec{\lambda}, k} (\vec\bsH_{k}^\obs; B_{-j})|  + \frac{(K-k+1)\sqrt{\log(2/\delta)}}{\sqrt{n_j}}\\
  & \leq \sfC_{k, \delta/2} \cdot  (n-n_j)^{-\alpha_{k}} + \frac{(K-k+1)\sqrt{\log(2/\delta)}}{\sqrt{n_j}}
\end{align*}  
with probability at least $1-\delta$. Recalling that $m\asymp 1$ and invoking a union bound over $j\in[m]$, we conclude that with probability at least $1-\delta$,
\begin{align*}
  \textnormal{LHS of \eqref{eq:coupling}} 
  & \lesssim  \frac{1}{n} \sum_{j=1}^n  n_j \cdot \bigg(\sfC_{k, \frac{\delta}{2m}} \cdot n_j^{-\alpha_k}  + \frac{(K-k+1)\sqrt{\log(2m/\delta)}}{\sqrt{n_j}}\bigg)\\
  & \lesssim \sfC_{k, \frac{\delta}{2m}} \cdot n^{-\alpha_j} + \frac{(K-k+1)\sqrt{\log(1/\delta)}}{\sqrt{n}}.
\end{align*}
% where the last inequality is by $n_j \asymp n/m$ and $m\asymp 1$. 
The proof is concluded.
\end{proof}

We are now ready to present the proof of Proposition \ref{prop:iv_opt_risk_bound}.
Recall that $\tilde\pi^\star_k$ is the regime that exactly minimizes $R_k(\pi_k)$ over $\pi_k \in \Pi_k$:
\begin{align*}
  \tilde\pi^\star_k \in \argmin_{\pi_k \in \Pi_k} \bbE\bigg[ |\contr^{\pi^\star}_{\vec{\lambda}, k}(\vec\bsH^\obs_k)| \cdot \Indc\bigg\{\sgn(\contr^{\pi^\star}_{\vec{\lambda}, k}(\vec\bsH_k^\obs)) \neq \pi_k(\vec\bsH^\obs_k)\bigg\} \bigg].
\end{align*}
% If $\tilde\pi^\star_k$ does not exist, we can take an approximating sequence and the proof will largely remain the same. 
Then we have
\begin{align*}
  R_k(\hat \pi^\star_k) - R_k(\tilde\pi^\star_k) = \mathscr{T}_{1} + \mathscr{T}_{2} + \mathscr{T}_{3} + \mathscr{T}_{4} + \mathscr{T}_{5},
\end{align*}
where
\begin{align*}
  \mathscr{T}_1 
  & = \bbE\bigg[ |\contr^{\pi^\star}_{\vec{\lambda}, k}(\vec\bsH_k^\obs)| \cdot \Indc\bigg\{\sgn(\contr^{\pi^\star}_{\vec{\lambda}, k}(\vec\bsH_k^|obs)) \neq \hat \pi^\star_k(\vec\bsH_k^\obs)\bigg\} \bigg] \\
  & \qquad - \frac{1}{n} \sum_{i=1}^n |\contr^{\pi^\star}_{\vec{\lambda}, k} (\vec\bsh_{k, i})| \cdot \Indc\bigg\{ \sgn(\contr^{\pi^\star}_{\vec{\lambda}, k} (\vec\bsh_{k, i})) \neq \hat \pi_k^\star(\vec\bsh_{k, i}) \bigg\},  \\
  \mathscr{T}_2
  & = \frac{1}{n} \sum_{i=1}^n |\contr^{\pi^\star}_{\vec{\lambda}, k} (\vec\bsh_{k, i})| \cdot \Indc\bigg\{ \sgn(\contr^{\pi^\star}_{\vec{\lambda}, k} (\vec\bsh_{k, i})) \neq \hat \pi_k^\star(\vec\bsh_{k, i}) \bigg\} \\
  & \qquad - \frac{1}{n} \sum_{i=1}^n |\hat \contr^\star_{\vec{\lambda}, k} (\vec\bsh_{k, i}; B_{-j_i})| \cdot \Indc\bigg\{ \sgn(\hat \contr^\star_{\vec{\lambda}, k} (\vec\bsh_{k, i}; B_{-j_i})) \neq \hat \pi^\star_k(\vec\bsh_{k, i}) \bigg\},\\
  \mathscr{T}_3
  & = \frac{1}{n} \sum_{i=1}^n |\hat \contr^\star_{\vec{\lambda}, k} (\vec\bsh_{k, i}; B_{-j_i})| \cdot \Indc\bigg\{ \sgn(\hat \contr^\star_{\vec{\lambda}, k} (\vec\bsh_{k, i}; B_{-j_i})) \neq \hat \pi^\star_k(\vec\bsh_{k, i}) \bigg\} \\
  & \qquad - \frac{1}{n} \sum_{i=1}^n |\hat \contr^\star_{\vec{\lambda}, k} (\vec\bsh_{k, i}; B_{-j_i})| \cdot \Indc\bigg\{ \sgn(\hat \contr^\star_{\vec{\lambda}, k} (\vec\bsh_{k, i}; B_{-j_i})) \neq \tilde\pi^\star_k(\vec\bsh_{k, i}) \bigg\},\\
  \mathscr{T}_4
  & = \frac{1}{n} \sum_{i=1}^n |\hat \contr^\star_{\vec{\lambda}, k} (\vec\bsh_{k, i}; B_{-j_i})| \cdot \Indc\bigg\{ \sgn(\hat \contr^\star_{\vec{\lambda}, k} (\vec\bsh_{k, i}; B_{-j_i})) \neq \tilde\pi^\star_k(\vec\bsh_{k, i}) \bigg\} \\
  & \qquad - \frac{1}{n} \sum_{i=1}^n |\contr^{\pi^\star}_{\vec{\lambda}, k} (\vec\bsh_{k, i})| \cdot \Indc\bigg\{ \sgn(\contr^{\pi^\star}_{\vec{\lambda}, k} (\vec\bsh_{k, i})) \neq \tilde\pi^\star_k(\vec\bsh_{k, i}) \bigg\}, \\
  \mathscr{T}_5
  & = 
  % \frac{1}{n} \sum_{i=1}^n |\contr^{\pi^\star}_{\vec{\lambda}, k} (\vec\bsh_{k, i})| \cdot \Indc\bigg\{ \sgn(\hat \contr^\star_{\vec{\lambda}, k} (\vec\bsh_{k, i}; B_{-j_i})) \neq \pi_k^\dagger(\vec\bsh_{k, i}) \bigg\} 
  \frac{1}{n} \sum_{i=1}^n |\contr^{\pi^\star}_{\vec{\lambda}, k} (\vec\bsh_{k, i})| \cdot \Indc\bigg\{ \sgn(\contr^{\pi^\star}_{\vec{\lambda}, k} (\vec\bsh_{k, i})) \neq \tilde\pi^\star_k(\vec\bsh_{k, i}) \bigg\}
  \\
  & \qquad - \bbE\bigg[ |\contr^{\pi^\star}_{\vec{\lambda}, k}(\vec\bsH_k^\obs)| \cdot \Indc\bigg\{\sgn(\contr^{\pi^\star}_{\vec{\lambda}, k}(\vec\bsH_k^\obs)) \neq \tilde\pi^\star_k(\vec\bsH_k^\obs)\bigg\} \bigg].
\end{align*}
Since $\hat\pi^\star$ is an approximate minimizer in the sense of \eqref{eq:cross_fit_approx_min}, we have $\mathscr{T}_3 \leq \ep_{\opt}$. The term $\mathscr{T}_1+\mathscr{T}_5$ can be controlled by Lemma \ref{lemma:unif_concentration} and the term $\mathscr{T}_2 + \mathscr{T}_4$ can be controlled by Lemma \ref{lemma:coupling}. Now taking a union bound gives the desired result.

\subsection{Proof for Part 2 of Theorem \ref{thm: DTR performance}: Performance of the Estimated IV-Improved DTR}
\label{subsec: appendix proof main theorem part II}

We start with several useful lemmas. The following lemma is analogous to Lemma \ref{lemma:perf_diff}.

\begin{lemma}[Performance difference lemma for policy improvement]
\label{lemma:perf_diff_improvement}
Let Assumption \ref{assump:bdd_concentration_coef} hold. For any three DTRs $\pi, \pi'$ and $\pibase$, we have
\begin{equation*}
%   \label{eq:perf_diff_improvement}
  \bbE[V^{\pi\rel\pibase}_{1}(\bsX_1^\obs) - V^{\pi'\rel\pibase}_{1}(\bsX_1^\obs)] 
  \leq \sum_{k  = 1}^K \sfc_k \cdot \bbE[V_{k}^{\pi_{k:K}\rel\pibase}(\vec\bsH_k^\obs) - V^{\pi'_k, \pi_{(k+1):K}\rel\pibase}_{k}(\vec\bsH_k^\obs)].
\end{equation*}  
\end{lemma}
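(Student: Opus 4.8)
The statement is the policy-improvement analogue of Lemma~\ref{lemma:perf_diff}, so the plan is to mirror that proof almost verbatim, replacing the weighted value functions $V^{\pi}_{\vec\lambda, k}$ by the relative value functions $V^{\pi\rel\pibase}_k$ throughout. The first step is to telescope the left-hand side over the stages at which we swap from $\pi$ to $\pi'$: writing $\pi'_{1:(k-1)}\pi_{k:K}$ for the DTR that follows $\pi'$ on stages $1,\dots,k-1$ and $\pi$ thereafter, we have
\begin{equation*}
  \bbE[V^{\pi\rel\pibase}_{1}(\bsX_1^\obs) - V^{\pi'\rel\pibase}_{1}(\bsX_1^\obs)]
  = \bbE\bigg[\sum_{k=1}^K V_{1}^{\pi'_{1:(k-1)}\pi_{k:K}\rel\pibase}(\bsX_1^\obs) - V_{1}^{\pi'_{1:k}\pi_{(k+1):K}\rel\pibase}(\bsX_1^\obs)\bigg],
\end{equation*}
so that it suffices to bound each summand by $\sfc_k \cdot \bbE_{\vec\bsH_k\sim q^\obs_k}[V_{k}^{\pi_{k:K}\rel\pibase}(\vec\bsH_k) - V^{\pi'_k\pi_{(k+1):K}\rel\pibase}_{k}(\vec\bsH_k)]$.

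The second step is to establish the one-step ``peeling'' inequality that plays the role of Lemma~\ref{lemma:peeling}: for any three DTRs $\pi,\pi',\pi''$, any stage $k$, and any history $\vec\bsh_k$,
\begin{equation*}
  V^{\pi_k\pi'_{(k+1):K}\rel\pibase}_{k}(\vec\bsh_k) - V^{\pi_k\pi''_{(k+1):K}\rel\pibase}_{k}(\vec\bsh_k)
  \leq \sup_{p_{\pi_k(\vec\bsh_k)}\in\calP_{\vec\bsh_k,\pi_k(\vec\bsh_k)}} \bbE_{(R_k,\bsX_{k+1})\sim p_{\pi_k(\vec\bsh_k)}}\big[V^{\pi'_{(k+1):K}\rel\pibase}_{k+1}(\vec\bsH_{k+1}) - V^{\pi''_{(k+1):K}\rel\pibase}_{k+1}(\vec\bsH_{k+1})\big].
\end{equation*}
The key observation is that in the definition of $Q^{\pi\rel\pibase}_k$ the inner double infimum over $(p_{a_k},p_{a_k'})$ decouples: the reward terms $\bbE[R_k-R_k']$ do not depend on the future policy, so the only policy-dependent piece is $\inf_{p_{a_k}}\bbE_{(R_k,\bsX_{k+1})\sim p_{a_k}}[V^{\pi'_{(k+1):K}\rel\pibase}_{k+1}(\vec\bsH_{k+1})]$, with $a_k=\pi_k(\vec\bsh_k)$. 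Taking the difference of two such infima and using the elementary bound $\inf_p f(p) - \inf_p g(p) \le \sup_p\{f(p)-g(p)\}$ yields exactly the displayed inequality, precisely as in the proof of Lemma~\ref{lemma:peeling} (there the combination of worst-case and best-case $Q$-functions through $\lambda_k$ needed slightly more care; here the single infimum makes it shorter).

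The third step is to iterate: starting from the $k$-th telescoping summand, apply the peeling inequality at stages $1,2,\dots,k-1$ in turn (always peeling with the common prefix $\pi'$), which produces
\begin{equation*}
  \bbE[V_{1}^{\pi'_{1:(k-1)}\pi_{k:K}\rel\pibase}(\bsX_1^\obs) - V_{1}^{\pi'_{1:k}\pi_{(k+1):K}\rel\pibase}(\bsX_1^\obs)]
  \leq \bbE_{\bsX_1}\sup_{p_{\pi'_1(\bsX_1)}}\bbE \cdots \sup_{p_{\pi'_{k-1}(\vec\bsH_{k-1})}}\bbE\,[V_{k}^{\pi_{k:K}\rel\pibase}(\vec\bsH_k) - V^{\pi'_k\pi_{(k+1):K}\rel\pibase}_{k}(\vec\bsH_k)].
\end{equation*}
The nested suprema over $p_{\pi'_t(\vec\bsH_t)}\in\calP_{\vec\bsh_t,\pi'_t(\vec\bsH_t)}$ describe an arbitrary distribution $q_k\in\scrH_k$ on stage-$k$ histories generated by the prefix policy $\pi'_{1:(k-1)}$; Assumption~\ref{assump:bdd_concentration_coef} then gives $dq_k/dq_k^\obs \le \sfc_k$, so the right-hand side is at most $\sfc_k\cdot\bbE_{\vec\bsH_k\sim q^\obs_k}[V_{k}^{\pi_{k:K}\rel\pibase}(\vec\bsH_k) - V^{\pi'_k\pi_{(k+1):K}\rel\pibase}_{k}(\vec\bsH_k)]$. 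Summing over $k$ completes the proof. I do not anticipate a genuine obstacle: the only point requiring attention is the decoupling in the definition of $Q^{\pi\rel\pibase}_k$ that makes the peeling step go through (and noting that the reward-difference terms, being independent of the future policy, cancel in the telescoping difference), but this is a minor bookkeeping matter rather than a real difficulty.
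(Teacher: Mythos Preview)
Your proposal is correct and follows exactly the paper's approach: the paper first proves the peeling inequality (Lemma~\ref{lemma:peeling_improvement}) and then states that the remainder is identical to the proof of Lemma~\ref{lemma:perf_diff}, which is precisely your telescoping plus iterated peeling plus Assumption~\ref{assump:bdd_concentration_coef}. The only cosmetic difference is that the paper's proof of the peeling step does an explicit case split on whether $\pi_k(\vec\bsh_k)=\pibase_k(\vec\bsh_k)$ (in which case the two distributions coincide and $R_k-R_k'$ vanishes identically) or not (in which case the double infimum decouples as you describe); your ``decoupling'' sketch slightly blurs these two cases and the policy-dependent piece is really $\inf_{p_{a_k}}\bbE[R_k+V^{\pi'\rel\pibase}_{k+1}]$ rather than $\inf_{p_{a_k}}\bbE[V^{\pi'\rel\pibase}_{k+1}]$, but after applying $\inf f-\inf g\le\sup(f-g)$ the $R_k$'s cancel and you land on the same inequality either way.
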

\begin{proof}
See Appendix \ref{prf:lemma:perf_diff_improvement}.
\end{proof}

The following lemma is analogous to Lemma \ref{lemma:risk_and_regret}.
\begin{lemma}[Duality between risk and value for policy improvement]
\label{lemma:risk_and_regret_improvement}
For any two DTRs $\pi, \pibase$, any $k\in[K]$, and any historical information $\vec\bsh_k$, we have
$$
  V^{\pi\rel\pibase}_k(\vec\bsh_k) = \max_{a_k\in\{\pm 1\}} Q^{\pi\rel\pibase}_k(\vec\bsh_k, a_k)  - |\contr^{\pi\rel\pibase}_k(\vec\bsh_k)| \cdot \Indc\bigg\{ \pibase_k(\vec\bsh_k)\cdot \sgn(\contr^{\pi\rel\pibase}_k(\vec\bsh_k)) \neq \pi_k(\vec\bsh_k) \bigg\}.
$$
\end{lemma}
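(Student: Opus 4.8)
The plan is to prove Lemma~\ref{lemma:risk_and_regret_improvement} by mimicking the algebraic decomposition used in the proof of Lemma~\ref{lemma:risk_and_regret}, now tracking the extra sign factor $\pibase_k(\vec\bsh_k)$ that distinguishes the policy-improvement objective from the policy-estimation objective. First I would recall that, by Definition of the relative $Q$-function and the recursive construction in Corollary~\ref{cor:struc_of_iv-imp_dtr}, the quantities $Q^{\pi\rel\pibase}_k(\vec\bsh_k, +1)$ and $Q^{\pi\rel\pibase}_k(\vec\bsh_k, -1)$ differ by exactly $\pm\,\contr^{\pi\rel\pibase}_k(\vec\bsh_k)$ depending on which of the two actions coincides with $\pibase_k(\vec\bsh_k)$: specifically $Q^{\pi\rel\pibase}_k(\vec\bsh_k, a_k) = 0$ when $a_k = \pibase_k(\vec\bsh_k)$ (one is comparing $\pibase$ to itself) and $Q^{\pi\rel\pibase}_k(\vec\bsh_k, a_k) = -\contr^{\pi\rel\pibase}_k(\vec\bsh_k)$ when $a_k = -\pibase_k(\vec\bsh_k)$. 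This is precisely the structure exploited in the last display chain of the proof of Corollary~\ref{cor:struc_of_iv-imp_dtr}.

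Next I would write $V^{\pi\rel\pibase}_k(\vec\bsh_k) = Q^{\pi\rel\pibase}_k(\vec\bsh_k, \pi_k(\vec\bsh_k))$ and split into the two cases $\pi_k(\vec\bsh_k) = \pibase_k(\vec\bsh_k)$ and $\pi_k(\vec\bsh_k) = -\pibase_k(\vec\bsh_k)$. Writing $\vec b = \pibase_k(\vec\bsh_k)$ and $\vec c = \contr^{\pi\rel\pibase}_k(\vec\bsh_k)$ for brevity, one has $V^{\pi\rel\pibase}_k(\vec\bsh_k) = \indc{\pi_k(\vec\bsh_k) = -\vec b}\cdot(-\vec c)$. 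On the other hand, $\max_{a_k} Q^{\pi\rel\pibase}_k(\vec\bsh_k, a_k) = \max\{0, -\vec c\} = \indc{\vec c < 0}\cdot(-\vec c) = \indc{\vec c \le 0}\cdot |\vec c|$ (modulo the measure-zero boundary $\vec c = 0$, which does not affect the identity since both sides vanish there), i.e.\ the best relative $Q$-value is attained by flipping $\pibase$ exactly when $\vec c > 0$. Then I would verify, by checking the four sign combinations of $\vec b$ and $\sgn(\vec c)$ against the definition of $\pi_k(\vec\bsh_k)$, that
\[
  \indc{\pi_k(\vec\bsh_k) = -\vec b}\cdot(-\vec c)
  = \indc{\vec c \le 0}\cdot|\vec c| - |\vec c|\cdot\indc{\vec b \cdot \sgn(\vec c) \ne \pi_k(\vec\bsh_k)},
\]
which is exactly the claimed identity $V^{\pi\rel\pibase}_k(\vec\bsh_k) = \max_{a_k} Q^{\pi\rel\pibase}_k(\vec\bsh_k, a_k) - |\vec c|\cdot\indc{\vec b\cdot\sgn(\vec c) \ne \pi_k(\vec\bsh_k)}$. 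The bookkeeping here is the step most prone to sign errors, so I would lay it out as an explicit short case table rather than a one-line manipulation; the cleanest route is to observe that $\vec b\cdot\sgn(\vec c) = \pi_k(\vec\bsh_k)$ means $\pi_k$ agrees with the ``IV-improved prescription'' $\piimp_k$ of \eqref{eq:struc_of_iv-imp_dtr}, and then both sides reduce to the same two-case expression.

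I do not anticipate a genuine obstacle: the argument is a direct analogue of Lemma~\ref{lemma:risk_and_regret}, with the only subtlety being that the contrast $\contr^{\pi\rel\pibase}_k$ already encodes the comparison against $\pibase$, so the indicator must involve $\pibase_k(\vec\bsh_k)\cdot\sgn(\contr^{\pi\rel\pibase}_k(\vec\bsh_k))$ rather than $\sgn(\contr^{\pi^\star}_{\vec\lambda, k})$ alone. The main place to be careful is ensuring consistency of the convention at $\contr^{\pi\rel\pibase}_k(\vec\bsh_k) = 0$ (where $\piimp_k$ follows $\pibase_k$), but this is handled exactly as in the corresponding single-stage proofs of Propositions~\ref{prop:itr_policy_imp_bayes_policy_classification} and~\ref{prop:itr_policy_imp_bayes_policy_value}. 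Once the identity is established pointwise in $\vec\bsh_k$, the lemma statement follows immediately.
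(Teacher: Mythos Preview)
Your overall strategy---reducing to the algebraic identity $\indc{\pi_k = -b}(-c) = \indc{c \le 0}\,|c| - |c|\,\indc{b\cdot\sgn(c) \neq \pi_k}$ via the relation $\contr^{\pi\rel\pibase}_k = Q^{\pi\rel\pibase}_k(\cdot, a_k') - Q^{\pi\rel\pibase}_k(\cdot, -a_k')$---is exactly the paper's approach. However, your claim that $Q^{\pi\rel\pibase}_k(\vec\bsh_k, a_k) = 0$ when $a_k = \pibase_k(\vec\bsh_k)$ is incorrect for $k < K$. Setting $a_k = a_k'$ does force $R_k$ and $R_k'$ to share a distribution so that $\bbE[R_k - R_k'] = 0$, but the future term $V^{\pi\rel\pibase}_{k+1}(\vec\bsH_{k+1})$ survives; one obtains
\[
  Q^{\pi\rel\pibase}_k(\vec\bsh_k, a_k') \;=\; \inf_{p_{a_k'}\in\calP_{\vec\bsh_k, a_k'}} \bbE_{(R_k,\bsX_{k+1})\sim p_{a_k'}}\big[V^{\pi\rel\pibase}_{k+1}(\vec\bsH_{k+1})\big],
\]
which is generally nonzero. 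The display in the proof of Corollary~\ref{cor:struc_of_iv-imp_dtr} that you cite is an $\argmax$ computation, so this constant-in-$a_k$ term has already been silently dropped there; it is not asserting $Q(a_k')=0$.

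This does not actually break your argument: the missing term is a constant $q_0 := Q^{\pi\rel\pibase}_k(\vec\bsh_k, a_k')$ that enters additively in both $V^{\pi\rel\pibase}_k$ and $\max_{a_k} Q^{\pi\rel\pibase}_k$, and therefore cancels in the identity you check. You should write $V^{\pi\rel\pibase}_k = q_0 + \indc{\pi_k = -b}(-c)$ and $\max_{a_k} Q^{\pi\rel\pibase}_k = q_0 + \indc{c \le 0}\,|c|$, after which your four-case verification goes through verbatim and matches the paper's computation. (One further small slip: in your parenthetical ``the best relative $Q$-value is attained by flipping $\pibase$ exactly when $c > 0$'', the correct condition is $c < 0$, since $c = Q(a_k') - Q(-a_k')$.)
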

\begin{proof}
See Appendix \ref{prf:lemma:risk_and_regret_improvement}.
\end{proof}

Now, applying Lemma \ref{lemma:perf_diff_improvement}, we have
  \begin{align*}
    \bbE [V^{\piimp\rel\pibase}_1(\bsX_1^\obs) - V^{\hpiimp \rel\pibase}_1(\bsX_1^\obs)]
    & \leq \sum_{k  = 1}^K \sfc_k \cdot \bbE[V_{k}^{\piimp_{k:K}\rel\pibase}(\vec\bsH_k^\obs) - V^{\hpiimp_k, \piimp_{(k+1):K}\rel\pibase}_{k}(\vec\bsH_k^\obs)]\\
    % & \overset{(*)}{\leq}  \sum_{k=1}^K \sfc_k \cdot \bbE_{\vec\bsH_k^\obs \sim q^{\obs}_k}[V_{\vec{\lambda}, k}^{\pi^\star_{k:K}}(\vec\bsH_k^\obs) - V^{\hat\pi^\star_k, \pi^\star_{(k+1):K}}_{\vec{\lambda}, k}(\vec\bsH_k^\obs)] \\
    & =  \calE_\prox'
    + \sum_{k  = 1}^K \sfc_k \cdot \bbE[V_{k}^{\tpiimp_k\piimp_{(k+1):K}\rel\pibase}(\vec\bsH_k^\obs) - V^{\hpiimp_k, \piimp_{(k+1):K}\rel\pibase}_{k}(\vec\bsH_k^\obs)].
  \end{align*}
  % where $(*)$ is by Assumption \ref{assump:bdd_concentration_coef}. 
  By Lemma \ref{lemma:risk_and_regret_improvement}, we have
  \begin{align*}
    & \bbE[V_{k}^{\tpiimp_k\piimp_{(k+1):K}\rel\pibase}(\vec\bsH_k^\obs) - V^{\hpiimp_k, \piimp_{(k+1):K}\rel\pibase}_{k}(\vec\bsH_k^\obs)] \\
    & = \bbE\bigg[ 
    \max_{a_k\in\{\pm 1\}} Q^{\piimp\rel\pibase}_k(\vec\bsH_k^\obs, a_k)  - |\contr^{\piimp\rel\pibase}_k(\vec\bsH_k^\obs)| \cdot \Indc\bigg\{ \pibase_k(\vec\bsH_k^\obs)\cdot \sgn(\contr^{\piimp\rel\pibase}_k(\vec\bsH_k^\obs)) \neq \tpiimp_k(\vec\bsH_k^\obs) \bigg\}\\
    & \qquad - 
    \max_{a_k\in\{\pm 1\}} Q^{\piimp\rel\pibase}_k(\vec\bsH_k^\obs, a_k)  + |\contr^{\piimp\rel\pibase}_k(\vec\bsH_k^\obs)| \cdot \Indc\bigg\{ \pibase_k(\vec\bsH_k^\obs)\cdot \sgn(\contr^{\piimp\rel\pibase}_k(\vec\bsH_k^\obs)) \neq \hpiimp_k(\vec\bsH_k^\obs) \bigg\}
    \bigg]\\
    & = \bbE \bigg[|\contr^{\piimp\rel\pibase}_k(\vec\bsH_k^\obs)| \cdot \Indc\bigg\{ \pibase_k(\vec\bsH_k^\obs)\cdot \sgn(\contr^{\piimp\rel\pibase}_k(\vec\bsH_k^\obs)) \neq \hpiimp_k(\vec\bsH_k^\obs) \bigg\}\bigg] \\
    & \qquad -
    \bbE \bigg[|\contr^{\piimp\rel\pibase}_k(\vec\bsH_k^\obs)| \cdot \Indc\bigg\{ \pibase_k(\vec\bsH_k^\obs)\cdot \sgn(\contr^{\piimp\rel\pibase}_k(\vec\bsH_k^\obs)) \neq \tpiimp_k(\vec\bsH_k^\obs) \bigg\}\bigg] \\
    & = \mathscr{T}_1 + \mathscr{T}_2 +\mathscr{T}_3+\mathscr{T}_4+\mathscr{T}_5,
  \end{align*}
  where
  \begin{align*}
    \mathscr{T}_1 & = \bbE \bigg[|\contr^{\piimp\rel\pibase}_k(\vec\bsH_k^\obs)| \cdot \Indc\bigg\{ \pibase_k(\vec\bsH_k^\obs)\cdot \sgn(\contr^{\piimp\rel\pibase}_k(\vec\bsH_k^\obs)) \neq \hpiimp_k(\vec\bsH_k^\obs) \bigg\}\bigg]\\
    & \qquad - \frac{1}{n}\sum_{i=1}^n |\contr^{\piimp\rel\pibase}_k(\vec\bsh_{k, i})| \cdot \Indc\bigg\{ \pibase_k(\vec\bsh_{k, i}) \cdot \sgn(\contr^{\piimp\rel\pibase}_k(\vec\bsH_k^\obs)) \neq \hpiimp_k(\vec\bsh_{k, i}) \bigg\}\\
    \mathscr{T}_2  & = \frac{1}{n}\sum_{i=1}^n |\contr^{\piimp\rel\pibase}_k(\vec\bsh_{k, i})| \cdot \Indc\bigg\{ \pibase_k(\vec\bsh_{k, i}) \cdot \sgn(\contr^{\piimp\rel\pibase}_k(\vec\bsH_k^\obs)) \neq \hpiimp_k(\vec\bsh_{k, i}) \bigg\}\\
    & \qquad - \frac{1}{n}\sum_{i=1}^n |\hat C^{\uparrow}_k(\vec\bsh_{k, i}; B_{-j_i})| \cdot \Indc\bigg\{ \pibase_k(\vec\bsh_{k, i}) \cdot \sgn(\hat C^{\uparrow}_k(\vec\bsh_{k, i}; B_{-j_i})) \neq \hpiimp_k(\vec\bsh_{k, i}) \bigg\}\\
    \mathscr{T}_3 & = \frac{1}{n}\sum_{i=1}^n |\hat C^{\uparrow}_k(\vec\bsh_{k, i}; B_{-j_i})| \cdot \Indc\bigg\{ \pibase_k(\vec\bsh_{k, i}) \cdot \sgn(\hat C^{\uparrow}_k(\vec\bsh_{k, i}; B_{-j_i})) \neq \hpiimp_k(\vec\bsh_{k, i}) \bigg\}\\
    & \qquad - \frac{1}{n}\sum_{i=1}^n |\hat C^{\uparrow}_k(\vec\bsh_{k, i}; B_{-j_i})| \cdot \Indc\bigg\{ \pibase_k(\vec\bsh_{k, i}) \cdot \sgn(\hat C^{\uparrow}_k(\vec\bsh_{k, i}; B_{-j_i})) \neq \tpiimp_k(\vec\bsh_{k, i}) \bigg\}\\
    \mathscr{T}_4 & = \frac{1}{n}\sum_{i=1}^n |\hat C^{\uparrow}_k(\vec\bsh_{k, i}; B_{-j_i})| \cdot \Indc\bigg\{ \pibase_k(\vec\bsh_{k, i}) \cdot \sgn(\hat C^{\uparrow}_k(\vec\bsh_{k, i}; B_{-j_i})) \neq \tpiimp_k(\vec\bsh_{k, i}) \bigg\}\\
    & \qquad - \frac{1}{n}\sum_{i=1}^n |\contr^{\piimp\rel\pibase}_k(\vec\bsh_{k, i})| \cdot \Indc\bigg\{ \pibase_k(\vec\bsh_{k, i}) \cdot \sgn(\contr^{\piimp\rel\pibase}_k(\vec\bsh_{k, i})) \neq \tpiimp_k(\vec\bsh_{k, i}) \bigg\}\\
    \mathscr{T}_5 & = \frac{1}{n}\sum_{i=1}^n |\contr^{\piimp\rel\pibase}_k(\vec\bsh_{k, i})| \cdot \Indc\bigg\{ \pibase_k(\vec\bsh_{k, i}) \cdot \sgn(\contr^{\piimp\rel\pibase}_k(\vec\bsh_{k, i})) \neq \tpiimp_k(\vec\bsh_{k, i}) \bigg\}\\
    & \qquad - \bbE \bigg[|\contr^{\piimp\rel\pibase}_k(\vec\bsH_k^\obs)| \cdot \Indc\bigg\{ \pibase_k(\vec\bsH_k^\obs)\cdot \sgn(\contr^{\piimp\rel\pibase}_k(\vec\bsH_k^\obs)) \neq \tpiimp_k(\vec\bsH_k^\obs) \bigg\}\bigg].
  \end{align*}
  Since $\hpiimp$ is an approximate minimizer in the sense of \eqref{eq:cross_fit_approx_min_improvement}, we have $\mathscr{T}_3 \leq \ep^\opt_k$. By a nearly identical argument as that appeared in the proof of Lemma \ref{lemma:unif_concentration}, we get
  $$
  \mathscr{T}_1 + \mathscr{T}_5 \lesssim (K-k+1) \cdot \sqrt{\frac{\vc(\Pi_k) + \log(1/\delta)}{n}} 
  $$
  with probability at least $1-\delta$. Meanwhile, by a similar argument as that appeared in the proof of Lemma \ref{lemma:coupling}, we get
  $$
    \mathscr{T}_2 + \mathscr{T}_4 \lesssim \sfC_{k, \frac{\delta}{2m}} \cdot n^{-\alpha_k} + (K-k+1)\sqrt{\frac{\log(1/\delta)}{n}}.
  $$
  The desired result follows from a union bound over $[K]$.

\subsubsection{Proof of Lemma \ref{lemma:perf_diff_improvement}} \label{prf:lemma:perf_diff_improvement}

The following lemma is analogous to Lemma \ref{lemma:peeling}.
\begin{lemma}
\label{lemma:peeling_improvement}
For any four DTRs $\pi, \pi', \pi'', \pibase$, any $k\in[K]$, and any historical information $\vec\bsh_k$, we have
\begin{align*}
  & V_k^{\pi_k\pi'_{(k+1):K} \rel\pibase}(\vec\bsh_k) - V_k^{\pi_k\pi''_{(k+1):K} \rel\pibase}(\vec\bsh_k) \\
  & \leq \sup_{p_{\pi_k(\vec\bsh_k)} \in \calP_{\vec\bsh_k, \pi_k(\vec\bsh_k)}} \bbE_{(R_k, \bsX_{k+1})\sim p_{\pi_k(\vec\bsh_k)}}[V^{\pi'_{(k+1):K} \rel \pibase}(\vec\bsH_{k+1}) - V^{\pi''_{(k+1):K} \rel \pibase}(\vec\bsH_{k+1})].
\end{align*}  
\end{lemma}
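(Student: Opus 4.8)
\textbf{Proof proposal for Lemma~\ref{lemma:peeling_improvement}.}
The plan is to imitate the proof of Lemma~\ref{lemma:peeling}, but the argument is in fact shorter here because the relative $Q$-function is a pure worst-case (infimum) object, with no $\lambda$-weighting between an infimum and a supremum. Throughout, write $a_k = \pi_k(\vec\bsh_k)$ and $a_k' = \pibase_k(\vec\bsh_k)$, and recall that $\vec\bsH_{k+1} = (\vec\bsh_k, a_k, R_k, \bsX_{k+1})$ is a function of the pair $(R_k, \bsX_{k+1})$ only --- in particular it does not depend on the draw $(R_k', \bsX_{k+1}')$ from the baseline arm. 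The degenerate case $k = K$ holds trivially under the convention $V_{K+1}^{\cdot\rel\pibase}\equiv 0$ (both sides equal $0$, since $\pi_k\pi'_{(k+1):K}$ and $\pi_k\pi''_{(k+1):K}$ agree at stage $K$), so I would assume $k\le K-1$.

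First I would unfold the definition of the relative value function and exploit the separability of the objective. Since $R_k - R_k' + V_{k+1}^{\pi'_{(k+1):K}\rel\pibase}(\vec\bsH_{k+1}) = \big(R_k + V_{k+1}^{\pi'_{(k+1):K}\rel\pibase}(\vec\bsH_{k+1})\big) + (-R_k')$ is a sum of a term depending only on the draw from $p_{a_k}$ and a term depending only on the draw from $p_{a_k'}$, the joint infimum over the two (independent) distributions decouples:
\[
V_k^{\pi_k\pi'_{(k+1):K}\rel\pibase}(\vec\bsh_k) = \inf_{p_{a_k}\in\calP_{\vec\bsh_k, a_k}}\bbE_{(R_k,\bsX_{k+1})\sim p_{a_k}}\big[R_k + V_{k+1}^{\pi'_{(k+1):K}\rel\pibase}(\vec\bsH_{k+1})\big] \;-\; \sup_{p_{a_k'}\in\calP_{\vec\bsh_k, a_k'}}\bbE_{(R_k',\bsX_{k+1}')\sim p_{a_k'}}[R_k'].
\]
The identical identity holds with $\pi''$ in place of $\pi'$, and the subtracted term $\sup_{p_{a_k'}}\bbE[R_k']$ depends only on $\vec\bsh_k$ and $\pibase_k(\vec\bsh_k)$, hence is common to both DTRs and cancels in the difference. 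Therefore
\[
V_k^{\pi_k\pi'_{(k+1):K}\rel\pibase}(\vec\bsh_k) - V_k^{\pi_k\pi''_{(k+1):K}\rel\pibase}(\vec\bsh_k) = \inf_{p_{a_k}}\bbE_{p_{a_k}}\big[R_k + V_{k+1}^{\pi'_{(k+1):K}\rel\pibase}(\vec\bsH_{k+1})\big] - \inf_{p_{a_k}}\bbE_{p_{a_k}}\big[R_k + V_{k+1}^{\pi''_{(k+1):K}\rel\pibase}(\vec\bsH_{k+1})\big].
\]

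Then I would apply the elementary bound $\inf_x f(x) - \inf_x g(x) \le \sup_x\big(f(x)-g(x)\big)$ with $f(p_{a_k}) = \bbE_{p_{a_k}}[R_k + V_{k+1}^{\pi'_{(k+1):K}\rel\pibase}(\vec\bsH_{k+1})]$ and $g$ the analogous quantity for $\pi''$. The $R_k$ terms cancel inside $f-g$, leaving exactly
\[
V_k^{\pi_k\pi'_{(k+1):K}\rel\pibase}(\vec\bsh_k) - V_k^{\pi_k\pi''_{(k+1):K}\rel\pibase}(\vec\bsh_k) \le \sup_{p_{\pi_k(\vec\bsh_k)}\in\calP_{\vec\bsh_k, \pi_k(\vec\bsh_k)}}\bbE_{(R_k,\bsX_{k+1})\sim p_{\pi_k(\vec\bsh_k)}}\big[V_{k+1}^{\pi'_{(k+1):K}\rel\pibase}(\vec\bsH_{k+1}) - V_{k+1}^{\pi''_{(k+1):K}\rel\pibase}(\vec\bsH_{k+1})\big],
\]
which is the claim. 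I do not anticipate a genuine obstacle; the only point requiring care is the separation step --- verifying that $\vec\bsH_{k+1}$ is measurable with respect to the $p_{a_k}$-draw alone and that the two arms' draws are under a product measure, so that the joint infimum really does split into the difference of a one-sided infimum and a one-sided supremum. Once this lemma is in hand, Lemma~\ref{lemma:perf_diff_improvement} follows by exactly the telescoping-then-iterate argument used to derive Lemma~\ref{lemma:perf_diff} from Lemma~\ref{lemma:peeling}, invoking Assumption~\ref{assump:bdd_concentration_coef} at the last step.
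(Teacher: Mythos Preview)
Your approach is essentially the paper's: reduce the left-hand side to a difference of two infimums over $p_{\pi_k(\vec\bsh_k)}\in\calP_{\vec\bsh_k,\pi_k(\vec\bsh_k)}$ and then apply $\inf_x f(x)-\inf_x g(x)\le\sup_x(f(x)-g(x))$. The paper arrives at exactly the same intermediate expression and uses exactly the same final inequality.

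The one subtlety is the decoupling step, and you correctly flag it as ``the only point requiring care.'' In the paper's notation, $p_{a_k}$ and $p_{a_k'}$ are indexed by the \emph{action}; when $\pi_k(\vec\bsh_k)=\pibase_k(\vec\bsh_k)$, the two symbols coincide and the infimum is over a \emph{single} distribution, not a product of two independent choices. In that case your splitting identity $\inf_{p,p'}\{\cdots\}=\inf_p\bbE[R_k+V_{k+1}^{\pi'\rel\pibase}]-\sup_{p'}\bbE[R_k']$ is not literally what the definition gives; instead one has $Q_k^{\pi'\rel\pibase}(\vec\bsh_k,a_k')=\inf_{p_{a_k'}}\bbE[V_{k+1}^{\pi'\rel\pibase}(\vec\bsH_{k+1})]$ directly (the $R_k-R_k'$ contribution vanishes because the two draws share the same mean). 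The paper handles this with an explicit case split on $\indc{a_k=a_k'}$ versus $\indc{a_k=-a_k'}$ before collapsing to the common difference-of-infimums form. Your argument goes through verbatim once you insert that one-line case distinction; in the case $a_k=a_k'$ the difference is $\inf_p\bbE[V_{k+1}^{\pi'\rel\pibase}]-\inf_p\bbE[V_{k+1}^{\pi''\rel\pibase}]$, and the same $\inf-\inf\le\sup$ bound gives the claim immediately.
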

\begin{proof}
By definition, we have
\begin{align*}
  & V_k^{\pi_k\pi'_{(k+1):K} \rel\pibase}(\vec\bsh_k) - V_k^{\pi_k\pi''_{(k+1):K} \rel\pibase}(\vec\bsh_k)\\
  & = Q^{\pi'_{(k+1):K} \rel\pibase}_k(\vec\bsh_k, \pi_k(\vec\bsh_k))  - Q^{\pi''_{(k+1):K} \rel\pibase}_k(\vec\bsh_k, \pi_k(\vec\bsh_k))\\
  & = \inf_{\substack{p_{\pi_k(\vec\bsh_k)}\\ p_{\pibase_k(\vec\bsh_k)}}} \bbE_{\substack{(R_k, \bsX_{k+1})\sim p_{\pi_k(\vec\bsH_{k})} \\ (R_k', \bsX_{k+1}')\sim p_{\pibase_k(\vec\bsh_k)}  }}[R_k - R_k' + V^{\pi'_{(k+1):K} \rel\pibase}_{k+1}(\vec\bsH_{k+1})]\\
  & \qquad - \inf_{\substack{p_{\pi_k(\vec\bsh_k)}\\ p_{\pibase_k(\vec\bsh_k)}}} \bbE_{\substack{(R_k, \bsX_{k+1})\sim p_{\pi_k(\vec\bsH_{k})} \\ (R_k', \bsX_{k+1}')\sim p_{\pibase_k(\vec\bsh_k)}  }}[R_k - R_k' + V^{\pi''_{(k+1):K} \rel\pibase}_{k+1}(\vec\bsH_{k+1})]\\
  & = \indc{\pi_k(\vec\bsh_k) = \pibase_k(\vec\bsh_k)} 
  \cdot 
  \bigg(
    \inf_{p_{\pibase_k(\vec\bsh_k)}} \bbE_{(R_k, \bsX_{k+1})\sim p_{\pibase_k(\vec\bsh_k)}} [V^{\pi'_{(k+1):K}\rel\pibase}_{k+1}(\vec\bsH_{k+1})]\\
    & \qquad\qquad\qquad\qquad\qquad\qquad\qquad -
    \inf_{p_{\pibase_k(\vec\bsh_k)}} \bbE_{(R_k, \bsX_{k+1})\sim p_{\pibase_k(\vec\bsh_k)}} [V^{\pi''_{(k+1):K}\rel\pibase}_{k+1}(\vec\bsH_{k+1})]
  \bigg)\\
  & \qquad + 
  \indc{\pi_k(\vec\bsh_k) \neq \pibase_k(\vec\bsh_k)} 
  \cdot
  \bigg(
    \inf_{p_{-\pibase_k(\vec\bsh_k)}} \bbE_{(R_k, \bsX_{k+1})\sim p_{-\pibase_k(\vec\bsh_k)}} [R_k + V^{\pi'_{(k+1):K}\rel\pibase}_{k+1}(\vec\bsH_{k+1})]\\
    & \qquad\qquad\qquad\qquad\qquad\qquad\qquad\qquad - 
    \sup_{p_{\pibase_k(\vec\bsh_k)}} \bbE_{(R_k', \bsX_{k+1}')}[R_k']\\
    & \qquad\qquad\qquad\qquad\qquad\qquad\qquad\qquad -
    \inf_{p_{-\pibase_k(\vec\bsh_k)}} \bbE_{(R_k, \bsX_{k+1})\sim p_{-\pibase_k(\vec\bsh_k)}} [R_k + V^{\pi''_{(k+1):K}\rel\pibase}_{k+1}(\vec\bsH_{k+1})]\\
    & \qquad\qquad\qquad\qquad\qquad\qquad\qquad\qquad + 
    \sup_{p_{\pibase_k(\vec\bsh_k)}} \bbE_{(R_k', \bsX_{k+1}')}[R_k']
  \bigg)\\
  & = \inf_{p_{\pi_k(\vec\bsh_k)}} \bbE_{(R_k, \bsX_{k+1})\sim p_{\pi_k(\vec\bsh_k)}} [R_k + V^{\pi'_{(k+1):K}\rel\pibase}_{k+1}(\vec\bsH_{k+1})]\\
  & \qquad - \inf_{p_{\pi_k(\vec\bsh_k)}} \bbE_{(R_k, \bsX_{k+1})\sim p_{\pi_k(\vec\bsh_k)}} [R_k + V^{\pi''_{(k+1):K}\rel\pibase}_{k+1}(\vec\bsH_{k+1})]\\
  & \leq \sup_{p_{\pi_k(\vec\bsH_{k})}} \bbE_{(R_k, \bsX_{k+1})\sim p_{\pi_k(\vec\bsh_k)}} [V^{\pi'_{(k+1):K}\rel\pibase}_{k+1}(\vec\bsH_{k+1}) - V^{\pi''_{(k+1):K}\rel\pibase}_{k+1}(\vec\bsH_{k+1})],
\end{align*}
which is the desired result.
\end{proof}

Given Lemma \ref{lemma:peeling_improvement}, the proof is similar to the proof of Lemma \ref{lemma:perf_diff}. We omit the proof.

\subsubsection{Proof of Lemma \ref{lemma:risk_and_regret_improvement}} \label{prf:lemma:risk_and_regret_improvement}
  In this proof, we let $a'_k = \pibase_k(\vec\bsh_k)$.
  We first show that 
  \begin{equation}
    \label{eq:imp_contrast_alt_expression}
    Q^{\pi\rel\pibase}_k(\vec\bsh_k, +1 ) -  Q^{\pi\rel\pibase}_k(\vec\bsh_k, -1) = a_k' \cdot \contr^{\pi\rel\pibase_k}_k(\vec\bsh_k).
  \end{equation}
  To show this, note that
  \begin{align*}
    Q^{\pi\rel\pibase}_k(\vec\bsh_k, a_k) & = \inf_{ \substack{
      p_{a_k}\in\calP_{\vec\bsh_k, a_k}
      \\ p_{a_k'}\in\calP_{\vec\bsh_k, a_k'} }
      }
    \bigg\{
    \bbE_{\substack{(R_k, \bsX_{k+1})\sim p_{a_k}\\ (R_k', \bsX'_{k+1})\sim p_{a_k'}}} 
    [R_k - R_k' + \Vfunc_{k+1}^{\pi\rel\pibase}(\vec\bsH_{k+1})]
    \bigg\} \\
    & = \indc{a_k = a_k'} \cdot \inf_{p_{a'_k}\in\calP_{\vec\bsh_k, a_k'}} \bbE_{(R_k, \bsX_{k+1})\sim p_{a_k}'} [V^{\pi\rel\pibase}_{k+1}(\vec\bsH_{k+1})]  \\
    & \qquad + \indc{a_k = - a_k'} \cdot \inf_{ \substack{p_{-a_k'}\in\calP_{\vec\bsh_k, -a_k'} \\ p_{a_k'}\in\calP_{\vec\bsh_k, a_k'}}}
    \bbE_{\substack{(R_k, \bsX_{k+1})\sim p_{-a_k'}\\ (R_k', \bsX'_{k+1})\sim p_{a_k'}}} 
    [R_k - R_k' + \Vfunc_{k+1}^{\pi\rel\pibase}(\vec\bsH_{k+1})]\\
    & = \indc{a_k = a_k'} \cdot \inf_{p_{a'_k}\in\calP_{\vec\bsh_k, a_k'}} \bbE_{(R_k, \bsX_{k+1})\sim p_{a_k}'} [V^{\pi\rel\pibase}_{k+1}(\vec\bsH_{k+1})]  \\
    & \qquad + \indc{a_k = -a_k'} \cdot \bigg(
    \inf_{p_{-a_k'}\in \calP_{\vec\bsh_k, -a_k'}} \bbE_{(R_k, \bsX_{k+1})\sim p_{-a_k'}}[R_k + V^{\pi\rel\pibase}_{k+1}(\vec\bsH_{k+1})] \\
    & \qquad \qquad \qquad\qquad\qquad\qquad - \sup_{p_{a_k'}\in\calP_{\vec\bsh_k, a_k'}} \bbE_{(R_k', \bsX_{k+1}')\sim p_{a_k'}}[R_k']
    \bigg)
    \end{align*}
    Recalling the definition of $\contr^{\pi\rel\pibase}_k$ in \eqref{eq:iv-imp_cate}, one readily checks that
    $$
      \contr^{\pi\rel\pibase}_k(\vec\bsh_k) = Q^{\pi\rel\pibase}_k(\vec\bsh_k, a_k') - Q^{\pi\rel\pibase}_k(\vec\bsh_k, -a_k'),
    $$
    from which \eqref{eq:imp_contrast_alt_expression} follows. Now, we have
    \begin{align*}
      V^{\pi\rel\pibase}_k(\vec\bsh_k) & = Q^{\pi\rel\pibase}_k(\vec\bsh_k, +1) \indc{\pi_k(\vec\bsh_k) = +1} + Q^{\pi\rel\pibase}_k(\vec\bsh_k, -1) \indc{\pi_k(\vec\bsh_k) = -1}\\
      & \overset{(*)}{=} Q^{\pi\rel\pibase}_k(\vec\bsh_k, +1) - \indc{\pi_k(\vec\bsh_k) = -1} \cdot a_k' \cdot \contr^{\pi\rel\pibase}_k(\vec\bsh_k) \\
      & =  Q^{\pi\rel\pibase}_k(\vec\bsh_k, +1) - \indc{\pi_k(\vec\bsh_k) = -1 , a_k' \cdot \sgn(\contr^{\pi\rel\pibase}_k(\vec\bsh_k)) = +1} \cdot |\contr^{\pi\rel\pibase}_k(\vec\bsh_k)|\\
      & \qquad + \indc{\pi_k(\vec\bsh_k) = -1 , a_k' \cdot \sgn(\contr^{\pi\rel\pibase}_k(\vec\bsh_k)) = -1} \cdot |\contr^{\pi\rel\pibase}_k(\vec\bsh_k)|\\
      & = Q^{\pi\rel\pibase}_k(\vec\bsh_k, +1) - \indc{\pi_k(\vec\bsh_k) = -1 , a_k' \cdot \sgn(\contr^{\pi\rel\pibase}_k(\vec\bsh_k)) = +1} \cdot |\contr^{\pi\rel\pibase}_k(\vec\bsh_k)|\\
      & \qquad + \indc{a_k' \cdot \sgn(\contr^{\pi\rel\pibase}_k(\vec\bsh_k)) = -1} \cdot |\contr^{\pi\rel\pibase}_k(\vec\bsh_k)| \\
      & \qquad - \indc{\pi_k(\vec\bsh_k) = +1 , a_k' \cdot \sgn(\contr^{\pi\rel\pibase}_k(\vec\bsh_k)) = -1} \cdot |\contr^{\pi\rel\pibase}_k(\vec\bsh_k)|\\
      & = Q^{\pi\rel\pibase}_k(\vec\bsh_k, +1)+ \indc{a_k' \cdot \sgn(\contr^{\pi\rel\pibase}_k(\vec\bsh_k)) = -1} \cdot |\contr^{\pi\rel\pibase}_k(\vec\bsh_k)| \\
      & \qquad + \indc{a_k' \cdot \sgn(\contr^{\pi\rel\pibase}_k(\vec\bsh_k)) \neq  \pi_k(\vec\bsh_k)} \\
      & = \max_{a_k\in\{\pm 1\}} Q^{\pi\rel\pibase}_k(\vec\bsh_k) + \indc{a_k' \cdot \sgn(\contr^{\pi\rel\pibase}_k(\vec\bsh_k)) \neq  \pi_k(\vec\bsh_k)}
    \end{align*}
    where $(*)$ follows from \eqref{eq:imp_contrast_alt_expression}. The proof is concluded.

\section{Additional Details on Estimation Algorithms} 

\subsection{Details on Estimating the IV-Optimal DTR} \label{subappend:alg_for_iv_opt_dtr}
We provide more details on how to estimate the weighted $Q$-function by $Q$-learning at a generic stage $1 \leq k\leq K-1$. Analogous to \eqref{eq:mp_bound_psi}, let us define
\begin{align*}
% \begin{split}
    \psi_{k}(\vec{\bsh}_k, a_k, z_k; C)  & = C \cdot \bbP(A_k^\obs = -a_k \mid Z_k = z_k, \vec{\bsH}_k^\obs = \vec\bsh_k) \nonumber \\
    &\qquad + \mathbb{E}[PO_k(R^\obs_k, \bsX_{k+1}^\obs|\vec\bsh_k, a_k) \mid \vec{\bsH}_k^\obs = \vec\bsh_k, Z_k = z_k, A_k^\obs = a_k] \\
    % \label{eq:mp_bound_psi_k}
    & \qquad \qquad\qquad\qquad
    \times \bbP(A_k^\obs = a_k \mid \vec{\bsH}_k^\obs = \vec\bsh_k, Z_k = z_k).
    % & \qquad + C \cdot \bbP(A_K^\obs = -a_k \mid Z_K = z_k, \vec{\bsH}_K^\obs = \vec\bsh_k)
% \end{split}
\end{align*}
Manski-Pepper bounds give that $\bbE_{(R_k, \bsX_{k+1}) \sim p^\star_{a_k}(\cdot, \cdot|\vec\bsh_k)}[PO_k(R_k, \bsX_{k+1}|\vec\bsh_k, a_k)]$ can be lower and upper bounded by 
\begin{align*}
    & \bbP(Z_k = -1 |\vec{\bsH}_k^\obs= \vec\bsh_k) \cdot \psi_{k}\bigg(\vec{\bsh}_k, a_k, -1; \sum_{t\geq k}\underline{C}_t\bigg) \\
    \label{eq:mp_lower_bound_k}
    & ~~+ \bbP(Z_k = +1 |\vec{\bsH}_k^\obs = \vec\bsh_k)\cdot 
    \bigg[\psi_{k}\bigg(\vec{\bsh}_k, a_k, -1; \sum_{t\geq k}\underline{C}_t\bigg)
    \lor
    \psi_{k}\bigg(\vec{\bsh}_k, a_k, +1; \sum_{t\geq k}\underline{C}_t\bigg)\bigg]\numberthis\\
    \text{and}& \\
    & \bbP(Z_k = -1 |\vec{\bsH}_k^\obs = \vec\bsh_k) 
    \cdot 
    \bigg[ 
    \psi_{k}\bigg(\vec{\bsh}_k;, a_k, -1; \sum_{t\geq k}\overline{C}_t \bigg)
    \land 
    \psi_{k}\bigg(\vec{\bsh}_k, a_k, +1; \sum_{t\geq k}\overline{C}_k \bigg)\bigg]\\
    % & \bbP(Z_K = -1 |\vec{\bsH}_K^\obs= \vec\bsh_k) \cdot \psi_{K}(\vec{\bsh}_K, a_K, -1; \underline{C}_K) \\
    \label{eq:mp_upper_bound_k}
    & ~~ + \bbP(Z_k = +1|\vec{\bsH}_k^\obs = \vec\bsh_k)\cdot \psi_{k}\bigg(\vec\bsh_k, a_k, +1 ; \sum_{t\geq k}\overline{C}_t \bigg),\numberthis
\end{align*}
respectively.

Similar to the situation at the final stage, as long as we take $\calF_k\supseteq \{PO_k(\cdot, \cdot|\vec\bsh_k, a_k)\}$ when defining $\calP_{\vec\bsh_k, a_k}$ in \eqref{eq:IV_contrained_set_general}, then the worst-case and best-case $Q$ functions at stage $k$ defined in \eqref{eq:worst_Q_func_k}--\eqref{eq:best_Q_func_k} can be set to \eqref{eq:mp_lower_bound_k} and \eqref{eq:mp_upper_bound_k}, respectively. Along with the specification of $\lambda_k(\vec\bsh_k, a_k)$, we finish the construction of the weighted $Q$-function $Q_{\vec\lambda, k}(\vec\bsh_k, a_k)$ at stage $k$.

\clearpage
\subsection{Algorithm for Estimating the IV-Improved DTR} \label{subappend:alg_for_iv_imp_dtr}
Let $a'_k = \pibase_k(\vec{\bsh}_k),~1 \leq k \leq K$. Recall that we have shown the following relationships concerning the relative $Q$-function, contrast function, and relative value function in Supplementary Material \ref{subsec: appendix proof main theorem part II}:
\begin{align*}
 \label{eq:relative Q-function stage K}
&Q^{\pi\rel\pibase}_K(\vec\bsh_K, a_K) = \indc{a_K = -a_K'} \cdot \bigg(
    \inf_{p_{-a_K'}\in \calP_{\vec\bsh_K, -a_K'}} \bbE_{R_K\sim p_{-a_K'}}[R_K] - \sup_{p_{a_K'}\in\calP_{\vec\bsh_K, a_K'}} \bbE_{R_K'\sim p_{a_K'}}[R_K']
    \bigg), \numberthis\\
     \label{eq:relative Q-function stage k}
 &Q^{\pi\rel\pibase}_k(\vec\bsh_k, a_k)
    = \indc{a_k = a_k'} \cdot \inf_{p_{a'_k}\in\calP_{\vec\bsh_k, a_k'}} \bbE_{(R_k, \bsX_{k+1})\sim p_{a_k}'} [V^{\pi\rel\pibase}_{k+1}(\vec\bsH_{k+1})]  \numberthis\\
    & \qquad + \indc{a_k = -a_k'} \cdot \bigg(
    \inf_{p_{-a_k'}\in \calP_{\vec\bsh_k, -a_k'}} \bbE_{(R_k, \bsX_{k+1})\sim p_{-a_k'}}[R_k + V^{\pi\rel\pibase}_{k+1}(\vec\bsH_{k+1})] \\
    & \qquad \qquad \qquad\qquad\qquad\qquad - \sup_{p_{a_k'}\in\calP_{\vec\bsh_k, a_k'}} \bbE_{(R_k', \bsX_{k+1}')\sim p_{a_k'}}[R_k']
    \bigg),~1 \leq k \leq K - 1, \\
    %  \label{eq:relative contrast function}
    &\contr^{\pi\rel\pibase}_k(\vec\bsh_k) = Q^{\pi\rel\pibase}_k(\vec\bsh_k, a_k') - Q^{\pi\rel\pibase}_k(\vec\bsh_k, -a_k'),~1 \leq k \leq K, \\
    %  \label{eq:relative value function}
    &\Vfunc^{\pi\rel\pibase}_k(\vec\bsh_k) = \Qfunc^{\pi\rel\pibase}_{k}\big(\vec\bsh_k, \pi_k(\vec\bsh_k)\big),~1 \leq k \leq K .
\end{align*}
To estimate the relative $Q$-function, contrast function, and the relative value function, it suffices to estimate the upper and lower limits of their corresponding partial identification intervals subject to the IV identification assumptions. For instance, one may use Balke-Pearl bounds for a binary outcome and Manski-Pepper bounds for a continuous outcome, similar to the estimation strategy described in Section \ref{subsec: estimate IV-optimal DTR} and Supplementary Material \ref{subappend:alg_for_iv_opt_dtr}. Algorithm \ref{alg:IV-improved_DTR} summarizes the estimation procedure.

\begin{algorithm}[h] 
\SetAlgoLined
\caption{Estimation of the IV-Improved DTR} %\label{alg IV-improved DTR}
\vspace*{0.12 cm}
\KwIn{Trajectories and instrument variables $\{(\bsx_{k, i}, a_{k, i}, r_{k, i}, z_{k, i}): k \in[K], i\in[n]\}$, weighting functions $\{\lambda(\vec\bsh_k, a_k)\}$, policy class $\Pi$, forms of partial identification intervals.}%, a sequence of constants $\{\lambda_k(\vec\bsh_{k}, a_k), ~1 \leq k \leq K\}$, \text{and} IV identification assumptions;}
\KwOut{Estimated IV-Improved DTR $\hpiimpstar$.}
\texttt{{\# Step \RN{1}: Q-learning}} \\
% Construct $\underline{Q}_K, \overline{Q}_K$ and estimate them by $\hat{\underline{Q}}_K, \hat{\overline{Q}}_K$ using $\{\vec\bsh_{K, i}\}_{i=1}^n$\;
Set $a'_{K} = \pibase_K(\vec{\bsh}_K)$\;
Obtain an estimate $\hat Q^{\pi\rel\pibase}_{K}$ of $Q^{\pi\rel\pibase}_{K}$ using $(\vec\bsh_{K, i}, r_{K, i}, z_{K, i})_{i=1}^n$ according to \eqref{eq:relative Q-function stage K}\;
Estimate $\contr^{\pi\rel\pibase}_{K}$ by $\hat \contr^{\pi\rel\pibase}_{K}(\vec\bsh_K) = \hat Q^{\pi\rel\pibase}_{K}(\vec\bsh_K, a'_K) - \hat Q^{\pi\rel\pibase}_{K}(\vec\bsh_K, -a'_K)$\;
Set $\hpiimp_K(\vec\bsh_K) = \pibase_K(\vec{\bsh}_K) \cdot \sgn(\hat \contr^{\pi\rel\piimp}_{K}(\vec\bsh_K))$ and $\hat V^{\pi\rel\pibase}_{K}(\vec\bsh_K) = \hat Q_{K}^{\pi\rel\pibase}(\vec\bsh_K, \hpiimp_K(\vec\bsh_K))$\;
\For{$k = K-1, \hdots, 1$}{
  Set $a'_{k} = \pibase_k(\vec{\bsh}_k)$\;
  Obtain an estimate $\hat Q^{\pi\rel\pibase}_{k}$ of $Q^{\pi\rel\pibase}_{k}$ using $(\vec\bsh_{k, i}, r_{k, i}, \hat V^{\pi\rel\pibase}_{k+1}(\vec\bsh_{k+1, i}), z_{k, i})_{i=1}^n$ according to \eqref{eq:relative Q-function stage k}\;
  Estimate $\contr^{\pi\rel\pibase}_{k}$ by $\hat \contr^{\pi\rel\pibase}_{k}(\vec\bsh_k) = \hat Q^{\pi\rel\pibase}_{k}(\vec\bsh_k, a'_k) - \hat Q^{\pi\rel\pibase}_{k}(\vec\bsh_k, -a'_k)$ \;
  Set $\hpiimp_k(\vec\bsh_k) = \pibase_k(\vec{\bsh}_k) \cdot \sgn(\hat \contr^{\pi\rel\piimp}_{k}(\vec\bsh_k))$ and $\hat V^{\pi\rel\pibase}_{k}(\vec\bsh_k) = \hat Q_{k}^{\pi\rel\pibase}(\vec\bsh_k, \hpiimp_k(\vec\bsh_k))$\;
}
\texttt{{\# Step \RN{2}: Weighted classification}} \\
\For{$k = K, \hdots, 1$}{
  Solve the weighted classification problem in \eqref{eq:cross_fit_improvement}
%   \begin{equation}
%   \label{eq:wted_classification_dtr_imp}
%   \hpiimp_k(\vec\bsh_k) \in \argmin_{a_k\in\{\pm 1\}} |\contr^{\pi\rel\pibase}_{k}(\vec\bsh_{k})| \cdot \indc{\pibase_k(\vec{\bsh}_h)\cdot\sgn(\contr^{\pi\rel\pibase}_{k}(\vec\bsh_{k})) \neq a_k}
% \end{equation}
to obtain $\hpiimp$
}
\Return{$\hpiimp$}
\label{alg:IV-improved_DTR}
\end{algorithm}

%!TEX root = main.tex

\clearpage
\section{Additional Simulation Results} 
\label{sec: appendix simu results}
Table \ref{tbl: simulation n=1000} is analogous to the Table \ref{tbl: simulation n=500} in the main article ad summarizes the performance of each estimated IV-improved and IV-optimal DTR when all relevant conditional expectations are estimated using simple parametric models and $n_{\textsf{train}} = 1000$. Table \ref{tbl: simulation rf n=500} and \ref{tbl: simulation rf n=1000} summarizes the results when all relevant conditional expectations are estimated via random forests (\citealp{breiman2001random}) implemented in the \textsf{R} package \textsf{randomForest}. The estimated DTRs seemed not sensitive to the model specifications of conditional expectations involved in the partial identification intervals. Figure \ref{fig: C=4 xi = 1 IV-optimal rules} and \ref{fig: C=4 xi = 3 IV-optimal rules} further plot the CDFs of the value functions of the SRA-optimal policy and three IV-optimal policies.

\begin{table}[ht]
\caption{Simulation results: all relevant conditional probabilities were estimated using parametric models and $n_{\textsf{train}} = 500$.}
\label{tbl: simulation n=500}
\centering
\resizebox{\textwidth}{!}{\begin{tabular}{cccccccccc}
  \hline
$n_{\textsf{train}} = 500$ & $\pibase_{\textsf{std}}$ & $\piimp_{\textsf{std}}$  & $\pibase_{\textsf{prosp}}$ & $\piimp_{\textsf{prosp}}$  & $\pibase_{\sra}$ & $\piimp_{\sra}$ & $\pi_{\textsf{IV},1}$ & $\pi_{\textsf{IV},0}$ & $\pi_{\textsf{IV},1/2}$ \\ \\
   
   \multicolumn{10}{c}{$\xi = 1$} \\
   
 \multirow{2}{*}{$C_1 = 3$}& 1.00 & 1.08 & 1.03 & 1.19 & 1.12 & 1.15 & 1.09 & 1.14 & 1.14 \\ 
   & [1.00,1.00] & [1.06,1.09] & [1.03,1.03] & [1.17,1.20] & [1.09,1.16] & [1.13,1.17] & [1.09,1.09] & [1.13,1.14] & [1.13,1.14] \\ 
  \multirow{2}{*}{$C_1 = 4$} & 1.00 & 1.09 & 1.03 & 1.18 & 1.12 & 1.15 & 1.09 & 1.14 & 1.14 \\  
  & [1.00,1.00] & [1.08,1.09] & [1.03,1.03] & [1.17,1.20] & [1.08,1.15] & [1.13,1.17]  & [1.08,1.09] & [1.14,1.14] & [1.13,1.14] \\ 
  \multirow{2}{*}{$C_1 = 5$} & 1.00 & 1.09 & 1.03 & 1.18 & 1.12 & 1.15 & 1.09 & 1.14 & 1.14 \\ 
   & [1.00,1.00] & [1.08,1.09] & [1.03,1.03] & [1.17,1.20] & [1.08,1.15] & [1.13,1.17] & [1.08,1.09] & [1.14,1.14] & [1.13,1.14] \\ 
   
   \multicolumn{10}{c}{$\xi = 2$} \\
   
  \multirow{2}{*}{$C_1 = 3$} & 1.00 & 1.03 & 0.94 & 1.13 & 1.09 & 1.11 & 1.08 & 1.12 & 1.12 \\ 
   & [1.00,1.00] & [1.00,1.06] & [0.94,0.94] & [1.13,1.14] & [1.05,1.11] & [1.10,1.13] & [1.08,1.08] & [1.12,1.12] & [1.12,1.13] \\ 
  \multirow{2}{*}{$C_1 = 4$} & 1.00 & 1.07 & 0.94 & 1.13 & 1.09 & 1.12 & 1.08 & 1.12 & 1.12 \\ 
  & [1.00,1.00] & [1.05,1.08] & [0.94,0.94] & [1.13,1.14] & [1.05,1.11] & [1.10,1.12] & [1.07,1.08] & [1.12,1.13] & [1.12,1.12] \\ 
  \multirow{2}{*}{$C_1 = 5$} & 1.00 & 1.08 & 0.94 & 1.13 & 1.09 & 1.12 & 1.08 & 1.12 & 1.12 \\ 
  & [1.00,1.00] & [1.06,1.08] & [0.94,0.94] & [1.13,1.14] & [1.05,1.11] & [1.10,1.12] & [1.06,1.08] & [1.12,1.13] & [1.12,1.12] \\  \\
  
   \multicolumn{10}{c}{$\xi = 3$} \\
   
  \multirow{2}{*}{$C_1 = 3$} & 1.00 & 1.00 & 0.88 & 0.98 & 1.06 & 1.08 & 1.07 & 1.10 & 1.11 \\ 
  & [1.00,1.00] & [1.00,1.01] & [0.88,0.88] & [0.90,1.03] & [1.03,1.08] & [1.05,1.09] & [1.07,1.07] & [1.06,1.11] & [1.10,1.11] \\ 
  \multirow{2}{*}{$C_1 = 4$} & 1.00 & 1.03 & 0.88 & 1.09 & 1.06 & 1.09 & 1.07 & 1.09 & 1.11 \\ 
  & [1.00,1.00] & [1.00,1.06] & [0.88,0.88] & [1.09,1.10] & [1.03,1.08] & [1.07,1.10] & [1.06,1.07] & [1.02,1.10] & [1.10,1.11] \\ 
  \multirow{2}{*}{$C_1 = 5$} & 1.00 & 1.05 & 0.88 & 1.10 & 1.06 & 1.09 & 1.06 & 1.09 & 1.10 \\ 
  & [1.00,1.00] & [1.02,1.07] & [0.88,0.88] & [1.09,1.10] & [1.02,1.08] & [1.07,1.10] & [1.04,1.07] & [1.01,1.10] & [1.10,1.11] \\ 
   \hline
\end{tabular}}
\end{table}

\begin{table}[h]
\caption{Simulation results: all relevant conditional probabilities were estimated using parametric models and $n_{\textsf{train}} = 1000$.}
\label{tbl: simulation n=1000}
\centering
\resizebox{\textwidth}{!}{\begin{tabular}{cccccccccc}
  \hline
$n_{\textsf{train}} = 1000$ & $\pibase_{\textsf{std}}$ & $\piimp_{\textsf{std}}$  & $\pibase_{\textsf{prosp}}$ & $\piimp_{\textsf{prosp}}$  & $\pibase_{\sra}$ & $\piimp_{\sra}$ & $\pi_{\textsf{IV},1}$ & $\pi_{\textsf{IV},0}$ & $\pi_{\textsf{IV},1/2}$ \\ \\
   
   \multicolumn{10}{c}{$\xi = 1$} \\
   
 \multirow{2}{*}{$C_1 = 3$}& 1.00 & 1.09 & 1.03 & 1.19 & 1.14 & 1.16 & 1.09 & 1.14 & 1.14 \\
  & [1.00,1.00] & [1.07,1.09] & [1.03,1.03] & [1.18,1.20] & [1.10,1.17] & [1.14,1.18] & [1.09,1.09] & [1.14,1.14] & [1.14,1.14] \\ 
  \multirow{2}{*}{$C_1 = 4$} & 1.00 & 1.09 & 1.03 & 1.19 & 1.13 & 1.16 & 1.09 & 1.14 & 1.14 \\
  & [1.00,1.00] & [1.09,1.09] & [1.03,1.03] & [1.18,1.20] & [1.10,1.16] & [1.14,1.18] & [1.09,1.09] & [1.14,1.14] & [1.14,1.14] \\ 
  \multirow{2}{*}{$C_1 = 5$} & 1.00 & 1.09 & 1.03 & 1.19 & 1.13 & 1.16 & 1.08 & 1.12 & 1.12 \\ 
  & [1.00,1.00] & [1.09,1.09] & [1.03,1.03] & [1.18,1.20] & [1.09,1.17] & [1.13,1.18] & [1.08,1.08] & [1.12,1.13] & [1.12,1.13] \\   \\
   
   \multicolumn{10}{c}{$\xi = 2$} \\
   
  \multirow{2}{*}{$C_1 = 3$} & 1.00 & 1.03 & 0.94 & 1.14 & 1.11 & 1.12 & 0.88 & 1.12 & 1.07 \\ 
  & [1.00,1.00] & [1.00,1.06] & [0.94,0.94] & [1.13,1.14] & [1.08,1.12] & [1.11,1.13] & [0.88,0.88] & [1.12,1.13] & [0.98,1.13] \\ 
  \multirow{2}{*}{$C_1 = 4$} & 1.00 & 1.08 & 0.94 & 1.13 & 1.11 & 1.12 & 1.08 & 1.12 & 1.12 \\ 
  & [1.00,1.00] & [1.06,1.08] & [0.94,0.94] & [1.13,1.14] & [1.08,1.12] & [1.11,1.13] & [1.08,1.08] & [1.12,1.13] & [1.12,1.13] \\ 
  \multirow{2}{*}{$C_1 = 5$} & 1.00 & 1.08 & 0.94 & 1.13 & 1.10 & 1.12 & 1.08 & 1.12 & 1.12 \\ 
  & [1.00,1.00] & [1.07,1.08] & [0.94,0.94] & [1.13,1.14] & [1.07,1.12] & [1.11,1.13] & [1.07,1.08] & [1.12,1.13] & [1.12,1.13] \\   \\
  
   \multicolumn{10}{c}{$\xi = 3$} \\
   
  \multirow{2}{*}{$C_1 = 3$} & 1.00 & 1.00 & 0.88 & 0.98 & 1.08 & 1.08 & 1.07 & 1.10 & 1.11 \\
  & [1.00,1.00] & [1.00,1.00] & [0.88,0.88] & [0.92,1.02] & [1.06,1.09] & [1.06,1.09] & [1.07,1.07] & [1.10,1.11] & [1.10,1.11] \\ 
  \multirow{2}{*}{$C_1 = 4$} & 1.00 & 1.03 & 0.88 & 1.09 & 1.07 & 1.09 & 1.07 & 1.10 & 1.11 \\ 
  & [1.00,1.00] & [1.00,1.06] & [0.88,0.88] & [1.09,1.10] & [1.05,1.09] & [1.08,1.10] & [1.07,1.07] & [1.06,1.11] & [1.10,1.11] \\ 
  \multirow{2}{*}{$C_1 = 5$} & 1.00 & 1.05 & 0.88 & 1.10 & 1.07 & 1.09 & 1.07 & 1.10 & 1.11 \\ 
  & [1.00,1.00] & [1.03,1.07] & [0.88,0.88] & [1.09,1.10] & [1.05,1.09] & [1.08,1.10] & [1.05,1.07] & [1.05,1.11] & [1.10,1.11] \\ 
   \hline
\end{tabular}}
\end{table}

\begin{table}[h]
\caption{Simulation results: all relevant conditional probabilities were estimated using random forests implemented in the \textsf{R} package \textsf{randomForest} with node size equal to $5$ and $n_{\textsf{train}} = 500$.}
\label{tbl: simulation rf n=500}
\centering
\resizebox{\textwidth}{!}{\begin{tabular}{cccccccccc}
  \hline
$n_{\textsf{train}} = 500$ & $\pibase_{\textsf{std}}$ & $\piimp_{\textsf{std}}$  & $\pibase_{\textsf{prosp}}$ & $\piimp_{\textsf{prosp}}$  & $\pibase_{\sra}$ & $\piimp_{\sra}$ & $\pi_{\textsf{IV},1}$ & $\pi_{\textsf{IV},0}$ & $\pi_{\textsf{IV},1/2}$ \\ \\
   
   \multicolumn{10}{c}{$\xi = 1$} \\
  \multirow{2}{*}{$C_1 = 3$} & 1.00 & 1.12 & 1.03 & 1.21 & 1.11 & 1.19 & 1.11 & 1.16 & 1.16 \\ 
   & [1.00,1.00] & [1.09,1.15] & [1.03,1.03] & [1.19,1.21] & [1.08,1.15] & [1.15,1.22] & [1.09,1.15] & [1.13,1.22] & [1.13,1.22] \\ 
  \multirow{2}{*}{$C_1 = 4$} & 1.00 & 1.13 & 1.03 & 1.21 & 1.11 & 1.19 & 1.11 & 1.16 & 1.16 \\ 
   & [1.00,1.00] & [1.09,1.15] & [1.03,1.03] & [1.19,1.24] & [1.08,1.15] & [1.15,1.22] & [1.09,1.14] & [1.14,1.22] & [1.13,1.22] \\ 
  \multirow{2}{*}{$C_1 = 5$} & 1.00 & 1.13 & 1.03 & 1.21 & 1.11 & 1.19 & 1.11 & 1.16 & 1.16 \\  
   & [1.00,1.00] & [1.09,1.15] & [1.03,1.03] & [1.19,1.21] & [1.08,1.15] & [1.15,1.21] & [1.09,1.14] & [1.14,1.21] & [1.13,1.21] \\ \\

   \multicolumn{10}{c}{$\xi = 2$} \\
   
  \multirow{2}{*}{$C_1 = 3$} & 1.00 & 1.08 & 0.94 & 1.14 & 1.08 & 1.12 & 1.09 & 1.13 & 1.13 \\ 
   & [1.00,1.00] & [1.07,1.11] & [0.94,0.94] & [1.13,1.14] & [1.05,1.11] & [1.11,1.13] & [1.08,1.11] & [1.12,1.15] & [1.12,1.16] \\ 
  \multirow{2}{*}{$C_1 = 4$} & 1.00 & 1.09 & 0.94 & 1.14 & 1.07 & 1.12 & 1.08 & 1.13 & 1.13 \\ 
   & [1.00,1.00] & [1.07,1.11] & [0.94,0.94] & [1.12,1.14] & [1.04,1.11] & [1.11,1.14] & [1.07,1.10] & [1.12,1.13] & [1.12,1.16] \\ 
  \multirow{2}{*}{$C_1 = 5$} & 1.00 & 1.09 & 0.94 & 1.14 & 1.07 & 1.12 & 1.08 & 1.13 & 1.13 \\ 
   & [1.00,1.00] & [1.08,1.11] & [0.94,0.94] & [1.13,1.14] & [1.04,1.11] & [1.11,1.14] & [1.07,1.10] & [1.12,1.13] & [1.12,1.15] \\    \\
  
   \multicolumn{10}{c}{$\xi = 3$} \\
   
  \multirow{2}{*}{$C_1 = 3$} & 1.00 & 1.05 & 0.88 & 1.10 & 1.05 & 1.09 & 1.07 & 1.09 & 1.10 \\  
   & [1.00,1.00] & [1.03,1.07] & [0.88,0.88] & [1.09,1.10] & [1.03,1.08] & [1.08,1.10] & [1.07,1.07] & [1.09,1.10] & [1.10,1.11] \\ 
  \multirow{2}{*}{$C_1 = 4$} & 1.00 & 1.05 & 0.88 & 1.10 & 1.05 & 1.09 & 1.06 & 1.09 & 1.10 \\  
   & [1.00,1.00] & [1.04,1.07] & [0.88,0.88] & [1.10,1.11] & [1.02,1.08] & [1.08,1.10] & [1.06,1.07] & [1.09,1.10] & [1.10,1.11] \\ 
  \multirow{2}{*}{$C_1 = 5$} & 1.00 & 1.06 & 0.88 & 1.10 & 1.05 & 1.09 & 1.05 & 1.09 & 1.10 \\ 
   & [1.00,1.00] & [1.04,1.07] & [0.88,0.88] & [1.10,1.11] & [1.03,1.08] & [1.08,1.10] & [1.05,1.07] & [1.09,1.10] & [1.10,1.11] \\ 
   \hline
\end{tabular}}
\end{table}

\begin{table}[h]
\caption{Simulation results for the cross-fitting version of the algorithm. All relevant conditional probabilities were estimated using random forests implemented in the \textsf{R} package \textsf{randomForest} with node size equal to $5$ and $n_{\textsf{train}} = 500$.}
\label{tbl: simulation cross fit n=500}
\centering
\resizebox{\textwidth}{!}{\begin{tabular}{cccccccccc}
  \hline
$n_{\textsf{train}} = 500$ & $\pibase_{\textsf{std}}$ & $\piimp_{\textsf{std}}$  & $\pibase_{\textsf{prosp}}$ & $\piimp_{\textsf{prosp}}$  & $\pibase_{\sra}$ & $\piimp_{\sra}$ & $\pi_{\textsf{IV},1}$ & $\pi_{\textsf{IV},0}$ & $\pi_{\textsf{IV},1/2}$ \\ \\
   
   \multicolumn{10}{c}{$\xi = 1$} \\
  \multirow{2}{*}{$C_1 = 3$} & 1.00 & 1.10 & 1.03 & 1.21 & 1.12 & 1.18 & 1.11 & 1.16 & 1.17 \\ 
   & [1.00,1.00] & [1.08,1.13] & [1.03,1.03] & [1.18,1.21] & [1.08,1.15] & [1.14,1.20] & [1.09,1.15] & [1.14,1.22] & [1.14,1.22] \\ 
  \multirow{2}{*}{$C_1 = 4$} & 1.00 & 1.11 & 1.03 & 1.21 & 1.11 & 1.18 & 1.10 & 1.17 & 1.17 \\ 
   & [1.00,1.00] & [1.08,1.14] & [1.03,1.03] & [1.18,1.21] & [1.08,1.15] & [1.14,1.21] & [1.09,1.14] & [1.14,1.22] & [1.14,1.22] \\ 
  \multirow{2}{*}{$C_1 = 5$} & 1.00 & 1.11 & 1.03 & 1.21 & 1.11 & 1.18 & 1.11 & 1.16 & 1.17 \\ 
   & [1.00,1.00] & [1.08,1.14] & [1.03,1.03] & [1.18,1.25] & [1.08,1.15] & [1.14,1.21] & [1.09,1.14] & [1.14,1.22] & [1.13,1.22] \\  \\
   
   \multicolumn{10}{c}{$\xi = 2$} \\
   
  \multirow{2}{*}{$C_1 = 3$} & 1.00 & 1.06 & 0.94 & 1.14 & 1.08 & 1.12 & 1.09 & 1.13 & 1.14 \\ 
   & [1.00,1.00] & [1.03,1.09] & [0.94,0.94] & [1.13,1.14] & [1.05,1.11] & [1.11,1.13] & [1.08,1.11] & [1.12,1.16] & [1.12,1.17] \\ 
  \multirow{2}{*}{$C_1 = 4$} & 1.00 & 1.07 & 0.94 & 1.14 & 1.07 & 1.12 & 1.08 & 1.13 & 1.13 \\ 
   & [1.00,1.00] & [1.06,1.10] & [0.94,0.94] & [1.13,1.14] & [1.05,1.11] & [1.11,1.13] & [1.07,1.10] & [1.12,1.14] & [1.12,1.17] \\ 
  \multirow{2}{*}{$C_1 = 5$} & 1.00 & 1.08 & 0.94 & 1.14 & 1.07 & 1.12 & 1.08 & 1.13 & 1.13 \\ 
   & [1.00,1.00] & [1.06,1.10] & [0.94,0.94] & [1.13,1.14] & [1.05,1.11] & [1.11,1.14] & [1.07,1.10] & [1.12,1.16] & [1.12,1.17] \\     \\
  
   \multicolumn{10}{c}{$\xi = 3$} \\
   
  \multirow{2}{*}{$C_1 = 3$} & 1.00 & 1.03 & 0.88 & 1.09 & 1.05 & 1.09 & 1.07 & 1.08 & 1.10 \\ 
   & [1.00,1.00] & [1.00,1.06] & [0.88,0.88] & [1.09,1.10] & [1.04,1.08] & [1.08,1.10] & [1.07,1.07] & [1.08,1.10] & [1.10,1.11] \\ 
  \multirow{2}{*}{$C_1 = 4$} & 1.00 & 1.04 & 0.88 & 1.10 & 1.05 & 1.09 & 1.06 & 1.08 & 1.10 \\ 
   & [1.00,1.00] & [1.00,1.06] & [0.88,0.88] & [1.09,1.10] & [1.03,1.08] & [1.08,1.10] & [1.05,1.07] & [1.08,1.10] & [1.10,1.11] \\ 
  \multirow{2}{*}{$C_1 = 5$} & 1.00 & 1.04 & 0.88 & 1.10 & 1.05 & 1.09 & 1.05 & 1.08 & 1.10 \\ 
   & [1.00,1.00] & [1.00,1.06] & [0.88,0.88] & [1.10,1.10] & [1.02,1.08] & [1.08,1.10] & [1.04,1.07] & [1.08,1.10] & [1.10,1.11] \\ 
   \hline
\end{tabular}}
\end{table}

\begin{table}[h]
\caption{Simulation results for the cross-fitting version of the algorithm. All relevant conditional probabilities were estimated using random forests implemented in the \textsf{R} package \textsf{randomForest} with node size equal to $5$ and $n_{\textsf{train}} = 1000$.}
\label{tbl: simulation cross fit n=1000}
\centering
\resizebox{\textwidth}{!}{\begin{tabular}{cccccccccc}
  \hline
$n_{\textsf{train}} = 500$ & $\pibase_{\textsf{std}}$ & $\piimp_{\textsf{std}}$  & $\pibase_{\textsf{prosp}}$ & $\piimp_{\textsf{prosp}}$  & $\pibase_{\sra}$ & $\piimp_{\sra}$ & $\pi_{\textsf{IV},1}$ & $\pi_{\textsf{IV},0}$ & $\pi_{\textsf{IV},1/2}$ \\ \\
   
   \multicolumn{10}{c}{$\xi = 1$} \\
  \multirow{2}{*}{$C_1 = 3$} & 1.00 & 1.13 & 1.03 & 1.23 & 1.13 & 1.21 & 1.13 & 1.19 & 1.19 \\ 
   & [1.00,1.00] & [1.12,1.15] & [1.03,1.03] & [1.20,1.29] & [1.10,1.17] & [1.17,1.25] & [1.09,1.15] & [1.14,1.23] & [1.14,1.23] \\ 
  \multirow{2}{*}{$C_1 = 4$} & 1.00 & 1.14 & 1.03 & 1.23 & 1.13 & 1.21 & 1.13 & 1.18 & 1.18 \\ 
   & [1.00,1.00] & [1.13,1.15] & [1.03,1.03] & [1.20,1.29] & [1.10,1.16] & [1.18,1.25] & [1.09,1.15] & [1.14,1.23] & [1.14,1.23] \\ 
  \multirow{2}{*}{$C_1 = 5$} & 1.00 & 1.14 & 1.03 & 1.23 & 1.13 & 1.21 & 1.12 & 1.18 & 1.18 \\ 
   & [1.00,1.00] & [1.13,1.15] & [1.03,1.03] & [1.20,1.29] & [1.10,1.16] & [1.18,1.26] & [1.09,1.15] & [1.14,1.23] & [1.14,1.23] \\  \\
   
   \multicolumn{10}{c}{$\xi = 2$} \\
   
  \multirow{2}{*}{$C_1 = 3$} & 1.00 & 1.09 & 0.94 & 1.14 & 1.10 & 1.13 & 1.10 & 1.15 & 1.15 \\ 
   & [1.00,1.00] & [1.08,1.11] & [0.94,0.94] & [1.13,1.14] & [1.08,1.12] & [1.12,1.15] & [1.08,1.11] & [1.12,1.17] & [1.12,1.17] \\ 
  \multirow{2}{*}{$C_1 = 4$} & 1.00 & 1.10 & 0.94 & 1.14 & 1.10 & 1.13 & 1.10 & 1.14 & 1.15 \\ 
   & [1.00,1.00] & [1.09,1.11] & [0.94,0.94] & [1.13,1.17] & [1.08,1.12] & [1.12,1.16] & [1.08,1.11] & [1.12,1.17] & [1.12,1.17] \\ 
  \multirow{2}{*}{$C_1 = 5$} & 1.00 & 1.10 & 0.94 & 1.14 & 1.09 & 1.14 & 1.10 & 1.14 & 1.15 \\ 
   & [1.00,1.00] & [1.09,1.11] & [0.94,0.94] & [1.13,1.17] & [1.08,1.12] & [1.12,1.16] & [1.08,1.11] & [1.12,1.17] & [1.12,1.17] \\     \\
  
   \multicolumn{10}{c}{$\xi = 3$} \\
   
  \multirow{2}{*}{$C_1 = 3$} & 1.00 & 1.05 & 0.88 & 1.10 & 1.07 & 1.10 & 1.07 & 1.10 & 1.11 \\ 
   & [1.00,1.00] & [1.04,1.07] & [0.88,0.88] & [1.10,1.10] & [1.06,1.09] & [1.09,1.10] & [1.07,1.07] & [1.10,1.11] & [1.11,1.11] \\ 
  \multirow{2}{*}{$C_1 = 4$} & 1.00 & 1.06 & 0.88 & 1.10 & 1.07 & 1.10 & 1.07 & 1.10 & 1.11 \\ 
   & [1.00,1.00] & [1.05,1.07] & [0.88,0.88] & [1.10,1.11] & [1.05,1.09] & [1.09,1.11] & [1.06,1.07] & [1.10,1.11] & [1.10,1.11] \\ 
  \multirow{2}{*}{$C_1 = 5$} & 1.00 & 1.06 & 0.88 & 1.10 & 1.06 & 1.10 & 1.06 & 1.10 & 1.10 \\ 
   & [1.00,1.00] & [1.05,1.07] & [0.88,0.88] & [1.10,1.11] & [1.05,1.09] & [1.09,1.10] & [1.05,1.07] & [1.10,1.11] & [1.10,1.11] \\ 
   \hline
\end{tabular}}
\end{table}

\clearpage
\begin{figure}[ht]
    \centering
    \includegraphics[width = 0.85\textwidth]{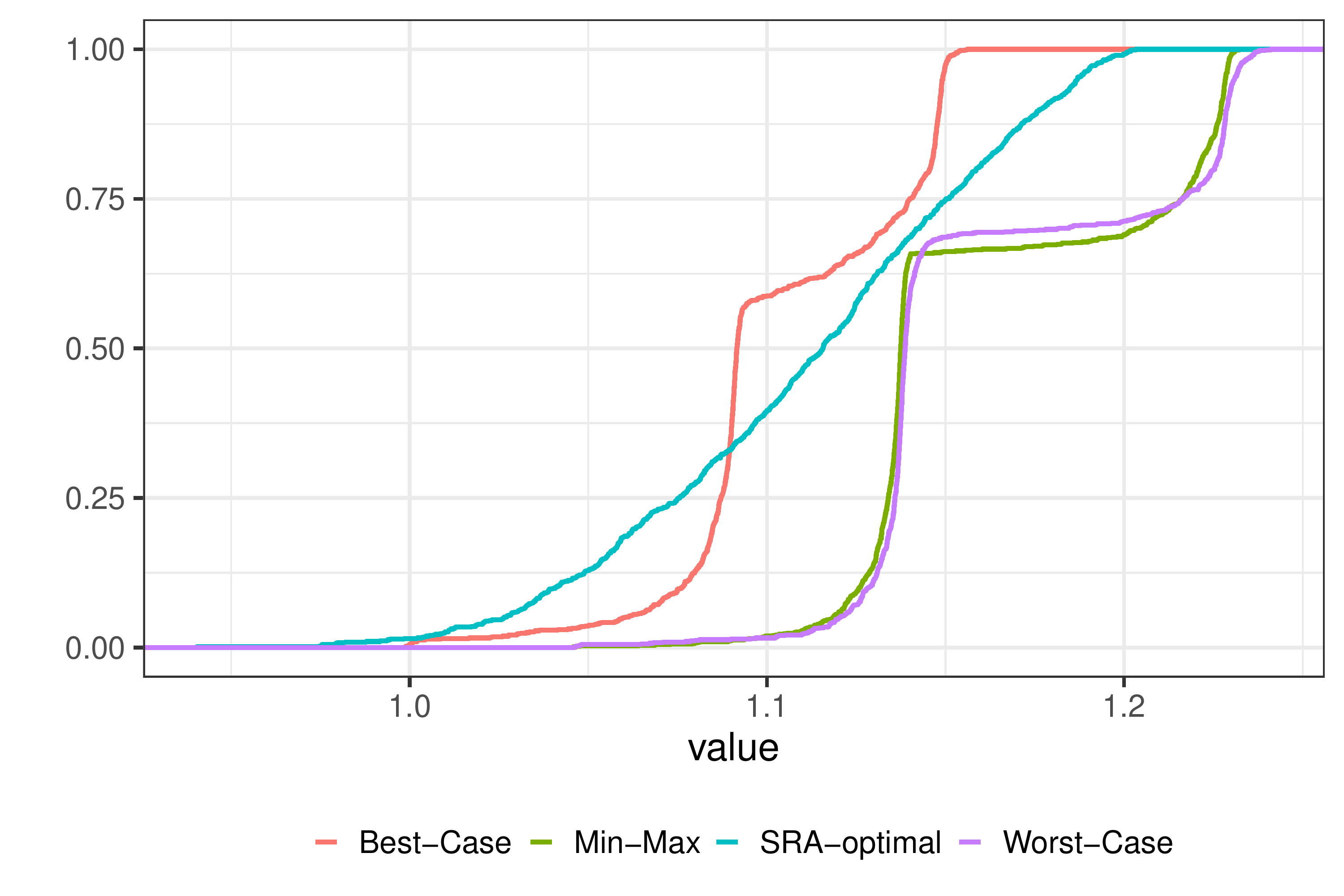}
    \caption{Cumulative distribution functions of value functions of SRA-optimal policy $\pibase_\sra$ and three IV-optimal policies $\pi_{\textsf{IV},0}$ (worst-case), $\pi_{\textsf{IV},1}$ (best-case), and $\pi_{\textsf{IV},1/2}$ (min-max) when $C_1 = 4$ and $\xi = 1$ and all relevant conditional expectations are estimated via random forests.}
    \label{fig: C=4 xi = 1 IV-optimal rules}
\end{figure}

\begin{figure}[ht]
    \centering
    \includegraphics[width = 0.85\textwidth]{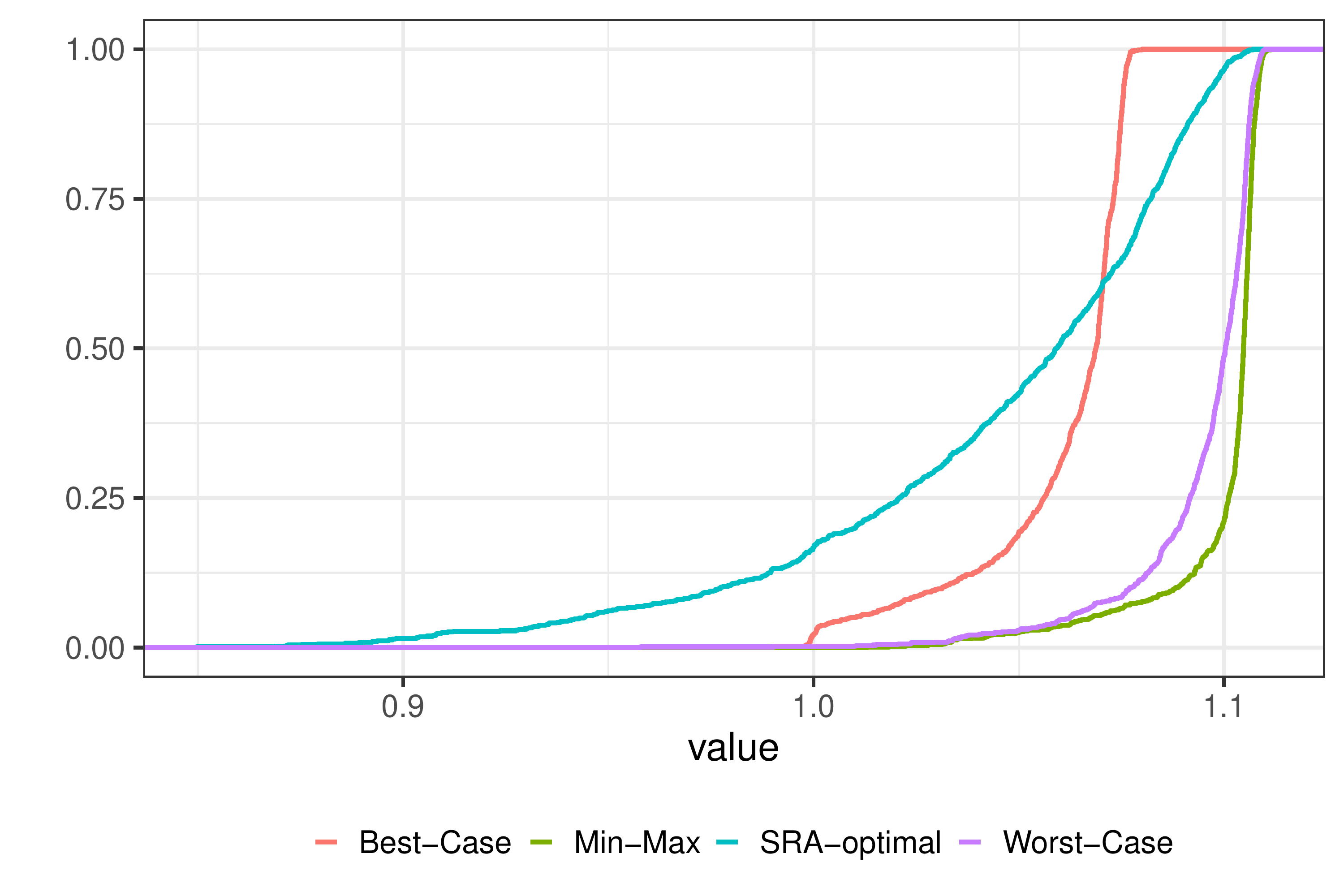}
    \caption{Cumulative distribution functions of value functions of SRA-optimal policy $\pibase_\sra$ and three IV-optimal policies $\pi_{\textsf{IV},0}$ (worst-case), $\pi_{\textsf{IV},1}$ (best-case), and $\pi_{\textsf{IV},1/2}$ (min-max) when $C_1 = 4$ and $\xi = 3$ and all relevant conditional expectations are estimated via random forests.}
    \label{fig: C=4 xi = 3 IV-optimal rules}
\end{figure}

\clearpage
\section{Details on the Real Data}
\subsection{Causal Direct Acyclic Graph (DAG)}
\label{subsec: appendix DAG}
We provide more details on the causal direct acyclic graph (DAG) with a time-varying IV that helps understand when a time-varying IV is necessary to identify relevant causal effects. Figure \ref{fig: time-varying IV DAG} exhibits a two-stage DAG: $Z_1$ and $Z_2$ are two time-varying IVs, $A_1$ and $A_2$ treatment received, $R_1$ and $R_2$ outcomes, and $U_1$, $U_2$ and $U$ unmeasured confounders. We omit observed covariates for clearer presentation. We explicitly differentiate between two types of unmeasured confounding in the DAG: $U_1$ and $U_2$ are unmeasured confounding specific to the first stage ($A_1$ and $R_1$) and the second stage ($A_2$ and $R_2$), and $U$ represents unmeasured confounding shared between both stages. When studying the effect of $A_2$ on $R_2$, $Z_2$ is a valid IV for $A_2$; however, in the presence of shared unmeasured confounding $U$, $Z_1$ is not a valid IV for $A_2$. This is because conditioning on $A_1$ induces association between $Z_1$ and $U$ (represented by the dashed line in the DAG) and hence the IV unconfoundedness assumption is violated. The association between $Z_1$ and $U$ induced by conditioning on $A_1$ is known as the \emph{collider bias} in the DAG literature (\citealp{hernan2004structural}).

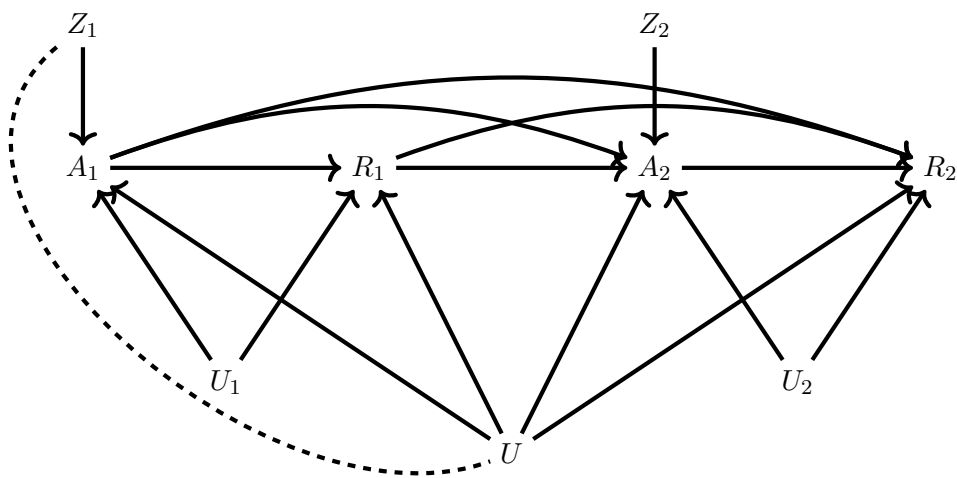
\begin{figure}[h]
\centering
\begin{tikzpicture}[scale=0.95, every text node part/.style={align=center}
]
%Nodes
\node at (0,0) (A1) {$A_1$};
\node at (4,0) (R1) {$R_1$};
\node at (8,0) (A2) {$A_2$};
\node at (12,0) (R2) {$R_2$};

\node at (2,-3) (U1) {$U_1$};
\node at (10,-3) (U2) {$U_2$};
\node at (6,-4) (U) {$U$};

\node at (0,2) (Z1) {$Z_1$};
\node at (8,2) (Z2) {$Z_2$};

%Lines
\draw[ultra thick, ->] (A1) -- (R1);
\draw[ultra thick, ->] (R1) -- (A2);
\draw[ultra thick, ->] (A2) -- (R2);
\draw[ultra thick, ->] (A1) to [out=20, in=160] (A2);
\draw[ultra thick, ->] (A1) to [out=20, in=160](R2);
\draw[ultra thick, ->] (R1) to [out=20, in=160](R2);

\draw[ultra thick, ->] (U1) -- (A1);
\draw[ultra thick, ->] (U1) -- (R1);

\draw[ultra thick, ->] (U2) -- (A2);
\draw[ultra thick, ->] (U2) -- (R2);

\draw[ultra thick, ->] (U) -- (A1);
\draw[ultra thick, ->] (U) -- (R1);
\draw[ultra thick, ->] (U) -- (A2);
\draw[ultra thick, ->] (U) -- (R2);

\draw[ultra thick, ->] (Z1) -- (A1);
\draw[ultra thick, ->] (Z2) -- (A2);

\draw[ultra thick, -, dashed] (Z1) to [out=220, in=200](U);

\end{tikzpicture}
    \caption{\small A two-stage DAG with a time-varying instrumental variable. In the DAG, $U_1$ (and analogously $U_2$) represents stage-$1$ (and stage-$2$) unmeasured confounders. $U$ represents shared unmeasured confounding. $Z_1$ is an IV for $A_1$ and $Z_2$ an IV for $A_2$. Note that in the presence of the shared unmeasured confounding $U$, $Z_1$ is \emph{not} a valid for $A_2$ even after $A_1$ and $R_1$ are controlled for. This is because conditioning on $A_1$ induces association between $Z_1$ and $U$ so that $Z_1$ is no longer independent of the unmeasured confounder $U$.}
    \label{fig: time-varying IV DAG}
\end{figure}

\newpage
\subsection{More Details on the NICU Application}
% \label{subsec: appendix real data details}
Our raw data consist of all births in the Commonwealth of Pennsylvania between 1995 and 2009; there is one ID associated with each delivery and one associated with each mother, from which deliveries of multiple babies by the same mother were identified. The data combine information from birth and death certificates and the UB-92 form that hospitals provide. The American Academy of Pediatrics recognizes six levels of neonatal intensive care units (NICUs) of increasing technical expertise and capability, namely 1, 2, 3A, 3B, 3C, 3D and regional centers (\citealp{baiocchi2010building}). We followed \citet{baiocchi2010building} and defined an NICU as low-level if its designation is below $3A$ and high-level otherwise. Travel time was determined using the software ArcView as the time from the centroid of mothers' zip code to the closest low and high level hospitals. 

We considered mothers who delivered exactly two babies during the period under consideration, and excluded less than $1\%$ deliveries that missed at least one outcome (death) at two deliveries. There are a total of $183,487$ mothers and $183,487 \times 2 = 366,974$ deliveries in our final study cohort. Approximately $8.40\%$ of these $366,974$ babies were premature (less than $37$ weeks in gestational age). We considered the following covariates in our analysis: percentage of people having a college degree in mother's neighborhood (\textsf{college}), poverty rate of the neighborhood (\textsf{poverty}), home value of the neighborhood (\textsf{homeval}), percentage of people renting in the neighborhood (\textsf{rent}), median income of the neighborhood (\textsf{medincome}), and urban/rural of the neighborhood (\textsf{urban}), whether the mother is white (\textsf{white}), mother's age at delivery (\textsf{ageM}), mother's years of education (\textsf{educyrM}), how many months of prenatal care received (\textsf{precare}), gestational age in weeks (\textsf{GA}), and seven congenital anomalies. There are many meaningful causal questions concerning individualized treatment rules that can be answered with the NICU data. In this analysis, we are most interested in developing a system that assigns mothers who are about to deliver to an appropriate NICU so we included only covariates that are observed prior to the delivery. Covariates concerning mother's neighborhood are included in the study to ensure that the excess travel time IV is more likely to be a valid one.

Though $183,487 \times 2 = 366,974$ deliveries considered in the analysis were complete in the IV, treatment, and outcome data, some of them missed covariates data. We imputed the missing covariates data using the widely-used \emph{multiple imputation by chain equations} method (\citealp{buuren2010mice}) and repeated our analysis on each of the five imputed dataset. The estimated tree-structured DTRs were nearly identical for each imputed dataset.
\end{appendices}

\end{document}